\documentclass[11pt]{amsart}

\usepackage{lmodern}

\allowdisplaybreaks %% Comment out for final printing

\usepackage{tikz-cd}

\usepackage[draft=false]{hyperref}
%\usepackage{refcheck}

%\pagenumbering{gobble}

\setcounter{tocdepth}{1}

\newtheorem{theorem}{Theorem}[section]
\newtheorem{lemma}[theorem]{Lemma}
\newtheorem{proposition}[theorem]{Proposition}
\newtheorem{corollary}[theorem]{Corollary}

\theoremstyle{definition}
\newtheorem{defn}[theorem]{Definition}

\newtheorem{example}[theorem]{Example}
\newtheorem{remark}[theorem]{Remark}

\newcommand{\CQ}{\cl{C\mkern-1.5mu Q}}

\newcommand\ran{\mathop{\rm ran}}
\DeclareMathOperator{\id}{id}
\DeclareMathOperator{\tr}{Tr}

\newcommand\nph{\varphi}
\newcommand\eps{\epsilon}

\newcommand\rank{\mathop{\rm rank}}

\newcommand{\tightmath}{%
  \thinmuskip=0mu plus 3mu
  \medmuskip=1mu plus 4mu
  \thickmuskip=2mu plus 5mu}

\newcommand{\cl}[1]{\mathcal{#1}}
\newcommand{\bb}[1]{\mathbb{#1}}
\DeclareMathOperator{\Tr}{Tr}

%Jay's commands
\newcommand{\la}{\langle}
\newcommand{\ra}{\rangle}
\newcommand{\ten}{\otimes}

\newcommand{\om}{\omega}
\newcommand{\ep}{\varepsilon}
\newcommand{\vphi}{\varphi}
\newcommand{\lm}{\lambda}
\providecommand{\norm}[1]{\lVert#1\rVert}
\providecommand{\bignorm}[1]{\bigg\lVert#1\bigg\rVert}

\newcommand{\N}{\mathbb{N}}
\newcommand{\BH}{\mathcal{B}(H)}

\newcommand{\ip}[2]{\left\la #1,#2\right\ra}

\begin{document}

\title
{Values of cooperative quantum games}

%\author[I. G. T.]{I. G. T.}
%\address{Pure Mathematics Research Centre,
%  Queen's University Belfast, Belfast BT7 1NN, United Kingdom}
%\email{i.todorov@qub.ac.uk}

\author[J. Crann]{Jason Crann$^{1}$}
\address{$^1$School of Mathematics \& Statistics, Carleton University, Ottawa, ON, Canada H1S 5B6}
\email{jasoncrann@cunet.carleton.ca}

\author[R. H. Levene]{Rupert H. Levene$^{2,3}$}
\address{$^2$School of Mathematics and Statistics, University College Dublin, Belfield, Dublin 4, Ireland}
\email{rupert.levene@ucd.ie}

 \address{$^3$Centre for Quantum Engineering, Science, and Technology, University College Dublin, Belfield, Dublin 4, Ireland}

\author[I. G. Todorov]{Ivan G. Todorov$^{4}$}
\address{$^{4}$School of Mathematical Sciences, University of Delaware, 501 Ewing Hall, Newark, DE 19716, USA}
\email{todorov@udel.edu}

\author[L. Turowska]{Lyudmila Turowska$^{5}$}
\address{$^{5}$Department of Mathematical Sciences, Chalmers University of Technology and the University of Gothenburg, Gothenburg SE-412 96, Sweden}
\email{turowska@chalmers.se}

\begin{abstract} We develop a resource-theoretical approach that allows us 
to quantify values of two-player, one-round cooperative games with quantum inputs and outputs, 
as well as values of quantum probabilistic hypergraphs. 
We analyse the quantum game values 
arising from the type hierarchy of quantum no-signalling correlations, 
establishing tensor norm expressions for each of the correlation types.  
As a consequence, we provide metric characterisations of state convertibility via 
LOSR and LOCC.
En route, we obtain an 
alternative description of the maximal tensor products of ternary rings of operators. 
\end{abstract}

\date{28 September 2023}

\maketitle

\tableofcontents

%%%%%%%%%%%%%%%%%%%%%%%%%%%%%%%%%%%%%%%%%%%%%
%%%%%%%%%%%%%%%%%%%%%%%%%%%%%%%%%%%%%%%%%%%%%
%%%%%%%%%%%%%%%%%%%%%%%%%%%%%%%%%%%%%%%%%%%%%

\section{Introduction}\label{s_intro}

The past decade witnessed a plethora of fruitful interactions between 
operator space theory, initiated in the 1980's mostly 
intrinsically as an area of functional analysis \cite{blm, er, Pa, pisier_intr, Pi2}, 
and quantum information theory, whose advent in the 1990's 
was motivated by developments in quantum computing \cite{nielsen-chuang, petz, watrous}. 
Among the first samples of this interaction are \cite{jppvw, pwpv-junge}, where 
violations of the fundamental in physics
Bell inequalities were placed in an operator space framework.  
Further connections between operator spaces and operator systems on one hand, 
and quantum information on the other, have been pursued in 
\cite{ghj, jppvw2, jp-2011, jkppg, jkpv, lmprsstw, musat-rordam, pr, psstw, pt_QJM}, among others.
In \cite{jnppsw}, a close connection between Tsirelson's Problem  
in theoretical physics \cite{tsirelson}
and the Connes Embedding Problem in operator algebra theory
\cite{connes} was made, leading to the proof of their equivalence in \cite{oz}.

A negative answer to the Connes Embedding Problem was announced in \cite{jnvwy},
with (classical) non-local games lying at the base of the authors' approach therein. 
A non-local game is a 
cooperative game played by two parties, say Alice and Bob, 
against a verifier; each round of the game consists of the verifier selecting a pair of inputs, represented by 
elements $x$ and $y$ of finite sets $X$ and $Y$, respectively, according to some 
probability distribution on $X\times Y$, 
and sending $x$ to Alice, and $y$ to Bob. 
The players return, without communicating, 
outputs $a$ and $b$, represented by elements of finite sets $A$ and $B$, respectively. 
The round is won by the tandem Alice-Bob if the quadruple $(x,y,a,b)$ satisfies a previously known 
rule predicate, which can be represented as a subset 
$E\subseteq X\hspace{-0.05cm}\times \hspace{-0.05cm}Y \hspace{-0.05cm}\times\hspace{-0.05cm} 
A\hspace{-0.05cm}\times\hspace{-0.05cm} B$, capturing the allowed responses of the players. 
A strategy for Alice and Bob is a family 
$p = \{(p(a,b|x,y))_{a,b} : x\in X, y\in Y\}$ 
of conditional 
probability distributions, where the value
$p(a,b|x,y)$ represents the likelihood of the players producing an output pair $(a,b)$ given an input pair $(x,y)$. 

There are several, successively stronger, types of strategies that can be used by Alice and Bob,
depending on the physical model they employ. Each of these 
strategy types leads to a corresponding optimal winning probability -- or \emph{game value} --  
arriving at an increasing chain of such values. 
In \cite{jnvwy}, the authors show the existence of a non-local game $\cl G$, whose 
\emph{quantum value}
$\omega_{\rm q}(\cl G)$, which (corresponds to the tensor model of quantum entanglement and) 
a priori does not exceed the
\emph{quantum commuting value} 
$\omega_{\rm qc}(\cl G)$ (arising from the commuting operator model of quantum physics), can be in fact strictly smaller than the latter
(we refer the reader to \cite{psstw} for the definitions of the aforementioned quantum mechanical models).

Classical non-local games are combinatorial objects, leading to substantial variation restrictions:
there are only finitely many games of 
fixed size and a given probability distribution over the input sets. 
Therefore, at least on a theoretical level, 
considering \emph{quantum games}, where the inputs and the outputs are  
quantum states as opposed to letters from finite alphabets, 
is expected to provide a significant advantage 
(the very set of quantum games of given size with a fixed probability distribution over the inputs is 
a manifold). 
Many types of quantum games have been considered in the literature, including
cooperative quantum games (e.g. rank-one quantum games) \cite{junge,ltw}, semi-quantum games \cite{bus}, extended non-local games \cite{fritz,johnetal,tometal}, quantum XOR games \cite{regev-vidick} and 
quantum non-local games \cite{tt-QNS} (see also \cite{bhtt, bhtt2}). 
In \cite{junge}, a game tensor was associated to a given rank-one quantum game, and 
the quantum value was expressed via its norm in the minimal tensor product of two copies of 
the operator space of the trace class. 
No similar expression is however known for the quantum commuting value of quantum games
in any of their variants.

This was the starting point of the present work. 
We consider two interrelated contexts: an extension of the setting employed in \cite{junge} beyond the 
rank one projection case 
(these can be thought of as games with a fuzzy rule 
predicate that assigns weights and not a win/lose status to the player's outputs), 
and measurable extensions of the games in \cite{tt-QNS, bhtt, bhtt2}, called 
herein quantum hypergraph games.  
As one of our main results (see Theorems \ref{th_qcval} and \ref{th_hypqhval}), 
we show that the quantum commuting value of such games
coincides with the norm of associated game tensors in 
a maximal tensor product of operator spaces arising from the trace class. 
En route, we re-examine the maximal tensor product of TRO's introduced in \cite{kr}, 
and show that it can be obtained by requiring seemingly weaker conditions on the ternary 
representations used in the computation of the matricial norms, involving  
commutation of suitable corners of the associated linking algebras only, as opposed to 
the commutation of the entire linking algebras
(see Theorem \ref{t:maxnorm}).
We achieve operator space tensor norm expressions for 
the local and the general no-signalling value, 
as well as for the one-way classical communication value of a given game. 
We formalise the definitions of quantum game values, placing them in a 
systematic framework, where the different types of values are defined using 
corresponding classes of strategies, making an explicit parallel to the 
definitions of values of classical games \cite{palazuelos-vidick}.
As strategy classes, we use the quantum no-signalling correlation types introduced in \cite{dw, tt-QNS}, 
which are natural quantum versions of the classical no-signalling correlation types (see e.g. \cite{lmprsstw}), 
generalising the aforementioned classical strategies $p = \{(p(a,b|x,y))_{a,b} : x\in X, y\in Y\}$. 

Each of the quantum no-signalling 
correlation classes can be viewed as a resource, of which the players avail during the course of the game,
and which they are allowed to use to produce their outputs. 
The tensor norm expressions for the quantum game values of different types are thus 
obtained as 
special cases of a more general framework, designed to host quantum hypergraph values, following a 
resource-theoretic approach.
For us, a quantum hypergraph over the pair $(X,A)$ of finite sets
is a Borel map $\nph$ from the pure state space $\bb{P}_X$ of the (indexed by $X\times X$)
matrix algebra $M_X$ into the projection lattice $\cl P_A$ of the matrix algebra $M_A$. 
When a Borel probability measure $\mu$ on $\bb{P}_X$ is given, we refer to the pair $(\nph,\mu)$ as a 
\emph{quantum probabilistic hypergraph.} 
Every \emph{resource} $\cl R$ -- a collection of 
$A\times X$-block operator isometries satisfying natural closure properties -- 
yields a collection of channels from $M_X$ into $M_A$ which, in its own turn, 
gives rise to the \emph{$\cl R$-value} $\omega_{\cl R}(\nph,\mu)$ of $(\nph,\mu)$. At the core of 
the subsequent applications to quantum games is an interpretation of $\omega_{\cl R}(\nph,\mu)$ as a norm 
of a canonical element of an operator space, based upon the trace class 
$\cl S_1(\bb{C}^A,\bb{C}^X)$, whose 
operator space structure is induced by the resource $\cl R$ in a canonical way. 
We refer to Section \ref{ss_hyper} for connections with quantum resource theories \cite{chitambar}.

The paper is organised as follows. After collecting some necessary operator space 
preliminaries in Section \ref{s:prelim}, in Section \ref{ss_copsp} we revisit the maximal tensor product 
of ternary rings of operators (TRO's) \cite{kr}. We prove a factorisation result for 
ternary representations of the algebraic tensor product of two TRO's (Proposition \ref{p:representations}), 
which allows us to see that the matricial norms in the maximal tensor product 
%$\cl U\otimes_{\max}\cl V$ 
of the TRO's $\cl U$ and $\cl V$ are achieved using ternary representations 
$\phi : \cl U\to \cl B(K,L)$ and $\psi : \cl V\to \cl B(H,K)$
with the property that the sets of operators $\psi(\cl V)\psi(\cl V)^*$ and $\phi(\cl U)^*\phi(\cl U)$ (on the Hilbert 
space $K$) mutually commute. 

In Section \ref{s_hyper}, we introduce the general framework of quantum probabilistic hypergraphs and 
the notion of a resource. Using an identification of the trace class 
$\cl S_1(\bb{C}^A,\bb{C}^X)$
as an operator subspace of the universal TRO of a block operator isometry indexed by $A\times X$, 
we show that every resource $\cl R$ induces an operator space structure 
$\cl O_{X,A}^{\cl R}$ on $\cl S_1(\bb{C}^A,\bb{C}^X)$, 
for which the formal identity $\cl S_1(\bb{C}^A,\bb{C}^X)\to \cl O_{X,A}^{\cl R}$ is a complete contraction. 
The $\cl R$-value expressions for a quantum probabilistic hypergraph and a
generalised unipartite counterpart of rank one games is
established
in Theorems \ref{th_hypvge} and \ref{th_R-val}, respectively.

In Section \ref{s_qv}, after a reminder of the quantum no-signalling correlation chain 
$\cl Q_{\rm loc}\subseteq \cl Q_{\rm q} \subseteq \cl Q_{\rm qc}\subseteq \cl Q_{\rm ns}$
from \cite{dw, tt-QNS}, the setup from Section \ref{s_hyper} is applied to the 
bipartite context, which naturally hosts quantum games in the sense of \cite{junge,ltw} and \cite{tt-QNS}. 
For each type ${\rm t}\in \{{\rm loc}, {\rm q}, {\rm qc}, {\rm ns}\}$, we identify a resource $\cl R_{\rm t}$
that gives rise to the correlation class $\cl Q_{\rm t}$, thus leading, in view of Section \ref{s_hyper}, to 
tensor norm expressions  for the local, quantum, quantum commuting and no-signalling values of 
the underlying quantum game. 
As a consequence of our results in Section \ref{ss_copsp}, we show
in Corollary~\ref{c_adv}
that {for finite rank quantum games} the quantum commuting resource offers a bounded advantage over the quantum one,
with the bound on the violation depending on the rank, but independent on the size of the game.

We show that the local value is expressed in terms of the weak-{\rm cb} Hilbert-Schmidt norm, 
recently introduced in \cite{jkppg}. We provide a characterisation of the game value, 
in the case where the players
avail of the resource of local operations and one-way classical communication.
Since quantum hypergraph games can be thought of as the operational task of 
producing a state with prescribed support (depending on the input state), our results 
lead to a metric characterisation of pure state convertibility via local operations and shared randomness (LOSR) 
\cite{watrous}, and via local operations and (any amount of) classical communication (LOCC) \cite{clmow}. 
We note that state convertibility has been a topic of substantial interest in the past 
(see \cite{chitambar, cklt-CMP} the references therein); 
up to our knowledge, ours are the 
first such characterisations in the literature. 

In Section \ref{s_ctoq}, 
we provide an alternative characterisation of the quantum and the local value of 
quantum games, closer to the descriptions of the quantum correlation classes in terms 
of states on tensor products of universal C*-algebras presented in \cite{tt-QNS}. 
We apply these results to the case of games with classical inputs and quantum outputs, obtaining in 
Theorem \ref{qv_classtoquant} tensor norm expressions for the local, quantum, 
quantum commuting and no-signalling value of such a game. 
Finally, in Section \ref{s_examples} we consider some examples, providing
formulas for the value of the quantum homomorphism game, whose perfect strategies were previously 
characterised in \cite{tt-QNS} and \cite{bhtt}.

\subsection*{Acknowledgements}
The authors would like to thank Marius Junge, Carlos Palazuelos and Andreas Winter for useful discussions
on the topic of the paper. The present research was initiated during the AIM Workshop \lq\lq Non-local games in 
quantum information theory'' held virtually in May 2021. J.C. was supported by the NSERC Discovery Grants RGPIN-2017-06275 and RGPIN-2023-05133, and the NSERC Alliance International Catalyst Quantum Grant ALLRP 580899-22.
I.T. was supported by NSF grants CCF-2115071 and DMS-2154459. L.T. would like to thank Wenner-Gren Foundation which supported the visit of I.T. to Gothenburg in May 2022, also Stiftelsen Lars Hiertas Minne which supported the visit of L.T. to  University of Delaware in October 2022.\\

\subsection*{Note}
Upon the completion of this paper, we were informed by Roy Araiza, Marius Junge and Carlos Palazuelos 
of their work \cite{ajp}.
%which also studies no-signalling values of quantum non-local games
We are however not aware of any intersections of results between \cite{ajp} and the present work.

%%%%%%%%%%%%%%%%%%%%%%%%%%%%%%%%%%%%%%%%%%%%%%%%%
%%%%%%%%%%%%%%%%%%%%%%%%%%%%%%%%%%%%%%%%%%%%%%%%%
%%%%%%%%%%%%%%%%%%%%%%%%%%%%%%%%%%%%%%%%%%%%%%%%%

\section{Preliminaries}\label{s:prelim}

In this section, we set notation and review the relevant concepts from operator space theory, referring the reader to \cite{blm,er,pisier_intr} for details.

Let $H$ and $K$ be Hilbert spaces. We denote by $\cl B(H,K)$ the Banach space of all 
bounded linear operators from $H$ to $K$, write $\cl B(H) = \cl B(H,H)$, and 
let $I_H$ be the identity operator in $\cl B(H)$. 
We further denote by $\cl S_1(H,K)$ the space of trace class operators in $\cl B(H,K)$, let $\cl S_1(H):=\cl S_1(H,H)$, and write ${\rm Tr}:\cl S_1(H)\to \mathbb C$ for the trace 
functional on $\cl S_1(H)$. 
For vectors $\xi\in K$ and $\eta\in H$, we write $\xi\eta^*$ for the rank one operator 
in $\cl S_1(H,K)$, given by 
$(\xi\eta^*)(\zeta) = \langle \zeta,\eta\rangle \xi$. 

Throughout the paper, $X$, $Y$, $A$ and $B$ will denote non-empty finite sets.
For a Hilbert space $H$, let $H^X = \oplus_{x\in X} H$; we will sometimes write
$\ell_2^X = \bb{C}^X$.
We write
$M_{A,X}$ for the space of all $|A|\times |X|$ matrices with complex entries,
and identify it with $\cl B(\bb{C}^X,\bb{C}^A)$ 
in a standard fashion. 
We set $M_X = M_{X,X}$, let $\cl D_X$ be the subalgebra of $M_X$ of all diagonal matrices and write
$I_X = I_{\bb{C}^X}$. We will freely abbreviate the Cartesian product $A\times X$ as $AX$, so that $M_{AX}=M_{AX,AX}$.  
We let ${\rm Tr}_A : M_{AX}\to M_X$ be the partial trace. 
For an element $\omega\in M_{A,X}$, we let $\omega^{\rm t}\in M_{X,A}$ be its transpose. 
We let $(e_x)_{x\in X}$ be the canonical orthonormal basis of $\bb{C}^X$ and set 
$\epsilon_{x,x'} = e_x e_{x'}^*$, $x,x'\in X$. 
For a vector $\xi = \sum_{x\in X}\lambda_x e_x$, we write $\bar\xi = \sum_{x\in X}\overline{\lambda}_xe_x$. 
We set $[n] = \{1,2,\dots,n\}$, and 
abbreviate $M_{m,n} = M_{[m],[n]}$, $M_n = M_{[n]}$ and $I_n = I_{[n]}$.

Let $\cl P_X$ be the projection lattice of $M_X$ and $\cl P_X^{\rm cl}$ be the projection lattice of $\cl D_X$.
Let $\bb{P}_{X}$ be the pure state space on $\bb{C}^{X}$, equipped with the usual topology
(equivalently, $\bb{P}_{X}$ is the rank one projective space in $\bb{C}^{X}$), 
and $\bb{P}_{X}^{\rm cl}$ be the set of all pure states in $\cl P_X^{\rm cl}$; 
thus, $\bb{P}_X^{\rm cl} = \{\epsilon_{x,x} : x\in X\}$. 
Note the correspondence $\alpha\to P_{\alpha}$ between subsets $\alpha$ of $X$ and elements $P_{\alpha}$ 
of $\cl P_X^{\rm cl}$. 
If $X$ and $A$ are finite sets, a linear map $\Phi : M_X\to M_A$ is called 
a \emph{quantum channel}, or simply a channel,
if $\Phi$ is completely positive and trace preserving; 
we let $\mathsf{QC}(M_X,M_A)$ be the (convex) set of all quantum channels from $M_X$ to $M_A$.

We write
$\cl X\otimes \cl Y$ for the algebraic tensor product of complex vector spaces $\cl X$ and $\cl Y$, 
except when $\cl X$ and $\cl Y$ are Hilbert spaces,
in which case this notation
is reserved for their Hilbertian tensor product. 
We write $M_{m,n}(\cl X)$ for the vector space of all $m\times n$ matrices with entries in $\cl X$, and 
make the canonical identification $M_{m,n}\otimes \cl X = M_{m,n}(\cl X)$. If $m=n$, we denote 
$M_{m,n}(\cl X)$ by $M_n(\cl X)$.

A family of \textit{matricial norms} on a complex vector space $\cl X$ is a sequence of norms $(\norm{\cdot}_n)_{n\in\N}$,
where $\norm{\cdot}_n$ is a norm on $M_n(\cl X)$, $n\in\N$. The pair $(\cl X,(\norm{\cdot}_n)_{n\in\N})$ is an \textit{(abstract) operator space} if %each $\norm{\cdot}_n$ is complete, and 
\begin{itemize}
\item[(i)] $\norm{v\oplus w}_{m+n}=\text{max}\{\norm{v}_m,\norm{w}_n\}$;
\item[(ii)] $\norm{\alpha v\beta}_n\leq\norm{\alpha}\norm{v}_m\norm{\beta}$,
\end{itemize}
for all $m,n\in\N, v\in M_m(\cl X), w\in M_n(\cl X), \alpha\in M_{n,m}(\mathbb{C}), \beta\in M_{m,n}(\mathbb{C})$. Any closed subspace $\cl X$ of $\BH$ inherits a canonical operator space structure via the canonical identification $M_n(\BH)=\cl B(H^n)$. In fact, by virtue of Ruan's Theorem,
every (abstract) operator space is completely isometrically isomorphic to a concrete operator space $\cl X\subseteq\BH$ for some Hilbert space $H$ \cite[Theorem 2.3.4]{er}.

If $\cl X$ is an operator space and an $A$-$B$-bimodule over unital $C^*$-algebras $A$ and $B$, we say that $\cl X$ is an \textit{operator $A$-$B$-bimodule} if the matricial norms on $\cl X$ satisfy
$$\norm{a\cdot v\cdot b}_n\leq\norm{a}\norm{v}_m\norm{b}$$
for all $m,n\in\N$, $v\in M_m(\cl X)$,
$a\in M_{n,m}(A)$,
$b\in M_{m,n}(B)$, where 
$$a\cdot v\cdot b=[\sum_{k,l}a_{i,k}\cdot v_{k,l}\cdot b_{l,j}]\in M_n(\cl X).$$
When $B=\bb{C}$, we simply refer to $\cl X$ as an operator $A$-module.

Given operator spaces $\cl X$ and $\cl Y$, and a linear map $\phi : \cl X\rightarrow \cl Y$, the 
$n$-th amplification of $\phi$ is the linear map $\phi^{(n)} : M_n(\cl X)\rightarrow M_n(\cl Y)$, defined by
$$\phi^{(n)}(v)=[\phi(v_{i,j})]_{i,j}, \quad v = [v_{i,j}]_{i,j} \in M_n(\cl X).$$
The map $\phi:\cl X\rightarrow \cl Y$ is
\begin{itemize}
\item \textit{completely bounded} if $\norm{\phi}_{\rm cb}:=\text{sup}\{\norm{\phi^{(n)}} : n\in\N\} < \infty$;
\item \textit{completely contractive} if $\norm{\phi}_{\rm cb} \leq 1$;
\item a \textit{complete isometry} if $\phi^{(n)}$ is an isometry for all $n\in\N$.
\end{itemize}

We denote by ${\rm CB}(\cl X,\cl Y)$ the space of completely bounded linear maps from 
$\cl X$ to $\cl Y$, which becomes an operator space under the canonical identification 
$M_n({\rm CB}(\cl X,\cl Y)) = {\rm CB}(\cl X,M_n(\cl Y))$. For example, if $\cl X$ is an operator space, its \textit{dual operator space} structure on $\cl X^*$ is given by ${\rm CB}(\cl X,\bb{C})$, and if $H$ is a Hilbert space, the \textit{column} and \textit{row} operator space structures on $H$ 
are given by the identifications 
$H_c:= {\rm CB}(\mathbb{C},H)$ and $H_r:= {\rm CB}(H^*,\mathbb{C})$. Unless otherwise stated, all Hilbert spaces will be given their column operator space structure. We record the following formula for future use (see \cite[3.4.2]{er})
\begin{equation}\label{e:column} \norm{[\xi_{i,j}]}_{M_n(H_c)}=\bignorm{\bigg[\sum_{k=1}^n\la \xi_{k,j},\xi_{k,i}\ra\bigg]}^{1/2}, \ \ \ [\xi_{i,j}]\in M_n(H_c).\end{equation}

An operator space $\cl X$ comes naturally equipped with norms on rectangular matrices $M_{m,n}(\cl X)$ via the canonical inclusion into $M_{\max\{m,n\}}(\cl X)$. This remains true more generally
when $m,n\in\N\cup\{\infty\}$, where
$$\norm{[v_{i,j}]}_{M_{\infty}(\cl X)}:=\sup_{n\in\N}\{\norm{[v_{i,j}]} : i,j=1,\dots,n\}.$$

Given operator spaces $\cl X$ and $\cl Y$, and elements
$v = [v_{i,j}]  \in M_{n,k}(\cl X)$ and $w = [w_{i,j}] \in M_{k,n}(\cl Y)$, we write
\[
v\odot w = \left[ \sum_{l=1}^k v_{i,l}\otimes w_{l,j} \right]_{i,j}
\] 
and recall that the \textit{Haagerup norm} of an element $u = [u_{ij}] \in M_n(\cl X\otimes \cl Y)$ is defined by 
letting
\[
\displaystyle \Vert u\Vert_{\rm h} = \inf
\{\Vert v\Vert _{n,k}\ \Vert w\Vert _{k,n} : u = v\odot w, v  \in M_{n,k}(\cl X), 
w \in M_{k,n}(\cl Y), k \in {\mathbb{N}}\}.
\]
The \emph{Haagerup tensor product} $\cl X\otimes_{\rm h} \cl Y$  is the completion of 
$\cl X\otimes \cl Y$ with respect to 
$\|\cdot\|_{\rm h}$. 
For $C^*$-algebras $\cl A$ and $\cl B$, we have that 
\[
\| u\|_{\rm h} = \inf \left\{
\left\| \sum_{k=1}^n a_k a_k^* \right\|^{\frac{1}{2}} \left\| \sum_{k=1}^n b_k^* b_k \right\|^{\frac{1}{2}}: \;  
u = \sum_{k=1}^n a_k \otimes b_k \right\}, \ \ u\in \cl X\ten \cl Y.
\] 
The symmetrized Haagerup tensor norm was defined  in \cite{oikhberg-pisier} by 
\[
\bignorm{\sum_{k=1}^n a_k\otimes b_k}_\mu=\sup\bignorm{\sum_{k=1}^n T(a_k)S(b_k)}_{\cl B(H)}
\] 
where the supremum runs over all Hilbert spaces $H$ and possible pairs of complete contractions $T:\cl X\to \cl B(H)$, $S:\cl Y\to \cl B(H)$ with commuting ranges, i.e. $T(a)S(b)=S(b)T(a)$, $a\in\cl X$, $b\in\cl Y$. By \cite[Theorem 1]{oikhberg-pisier},
\begin{equation}\label{eq_hht+}
\|u\|_\mu=\inf\{\|v\|_{\rm h}+\|w\|_{\rm h^t}: u=v+w\}
\end{equation}
where $\|\cdot\|_{\rm h^t}$ is the transpose version of the Haagerup norm, which is given by $\|\sum_{k=1}^na_k\otimes b_k\|_{\rm h^t}=\|\sum_{k=1}^nb_k\otimes a_k\|_{\rm h}$. One defines $\cl X\otimes_\mu\cl Y$ to be the completion of $\cl X\otimes\cl Y$  with respect to $\|\cdot\|_\mu$.
The Banach space $\cl X\otimes_\mu\cl Y$ has a natural operator space structure associated with the embedding $\cl X\otimes_\mu\cl Y\to \oplus_{\sigma}\cl B(H_\sigma)\subset\cl B(\oplus_\sigma H_\sigma)$, where the direct sum is taken over all pairs $\sigma=(T,S)$ of complete contractions  $T:\cl X\to \cl B(H_\sigma)$, $S:\cl Y\to\cl B(H_\sigma)$ with commuting ranges.

If $\cl X\subseteq \cl B(H)$ and $\cl Y\subseteq \cl B(K)$, 
the minimal tensor product $\cl  X \ten_{\min} \cl  Y$ of operator spaces is defined as the norm closure of 
$\cl X\ten \cl Y$ in $\cl B(H\ten K)$; we note that, up to complete isometry, 
$\cl X\ten_{\min} \cl Y$ is independent of the specific concrete realisations of $\cl X$ and $\cl Y$. 
We will also use the fact that, if $u=\sum_{i=1}^m x_i\otimes y_i$, with $x_i\in \cl X$ and $y_i\in \cl Y$, $i\in [m]$, then 
$$\|u\|_{\cl X\otimes_{\rm min}\cl Y}
= \sup_{(T,S)}\left\|\sum_{i=1}^m T(x_i)\otimes S(y_i)\right\|_{\cl B(H\otimes K)},$$ where the supremum is taken over all {Hilbert spaces $H$, $K$} and complete contractions $T:\cl X\to\cl B(H)$, $S:\cl Y\to \cl B(K)$. Moreover, {if $\cl X$ and $\cl Y$ are finite-dimensional}, we can assume that $H$ and $K$ are finite-dimensional
\cite[(8.1.7)]{er}. When $\cl X$ or $\cl Y$ is finite-dimensional,  we also have $\cl X^*\otimes_{\rm min}\cl Y\simeq {\rm CB}(\cl X,\cl Y)$, where the correspondence is given by $u=\sum_{i=1}^m x_i\otimes y_i\mapsto T_u$, $T_u(x)=\sum_{i=1}^mx_i(x)y_i$.

If $\cl X$ and $\cl Y$ are Banach spaces, the injective tensor norm of 
an element $z\in \cl X\otimes \cl Y$ is defined by letting
\begin{equation}\label{eq_inno3}
\|z\|_{\cl X\otimes_{\varepsilon} \cl Y} = 
\sup \left\{|(f\otimes g)(z)| : f\in \cl X^*,\; g\in \cl Y^* \mbox{ contractive}\right\}.
\end{equation}
Similarly, if $\cl X$ or $\cl Y$ are finite dimensional Banach spaces then 
$\cl X^*\otimes_\varepsilon \cl Y\simeq \cl B(\cl X,\cl Y)$, where $\cl B(\cl X,\cl Y)$ is the 
(Banach) space of all bounded linear operators from $\cl X$ into $\cl Y$.

%%%%%%%%%%%%%%%%%%%%%%%%%%%%%%%%%%%%%%%%%%%%
%%%%%%%%%%%%%%%%%%%%%%%%%%%%%%%%%%%%%%%%%%%%

\section{Tensor products of TRO's}\label{ss_copsp}

The main result of this section is a description of the maximal tensor 
product of two TRO's \cite{kr} in terms of semi-commuting 
ternary representations (see Definition \ref{d_semic} and Theorem \ref{t:maxnorm});
the result will be applied in Section \ref{ss_qc} to obtain an expression 
of the quantum commuting value of a non-local game, but it may be interesting in its own right. 

Recall \cite{hest,z} that a \textit{ternary ring} is a complex vector space $\cl U$, equipped with a ternary operation $[\cdot,\cdot,\cdot]: \cl U\times\cl U\times\cl U\to \cl U$, linear in 
the outer variables and conjugate linear in the middle variable, such that
$$[s,t,[u,v,w]] = [s,[v,u,t],w] = [[s,t,u],v,w], \quad s,t,u,v,w\in \cl U.$$
A \textit{ternary representation/morphism} of $\cl U$ is a linear map $\theta:\cl U\to\cl B(H,K)$ for some Hilbert spaces $H$ and $K$, such that
$$\theta([u, v, w]) = \theta(u)\theta(v)^*\theta(w), \quad u,v,w\in \cl U.$$
We call $\theta$ \textit{left non-degenerate} if 
$\overline{{\rm span}(\theta(\cl U)^*K)} = H$. 
A (concrete) \textit{ternary ring of operators (TRO)} \cite{z} is a norm-closed subspace $\cl U\subseteq \cl B(H,K)$ for some Hilbert spaces $H$ and $K$ such that $R,S,T\in\cl U \Longrightarrow RS^*T\in \cl U$. 
TRO's can be characterised abstractly as \emph{ternary $C^*$-rings} 
(see \cite[Theorem 3]{z}).

Every TRO $\cl U$ has a canonical operator space structure under which $M_n(\cl U)$ is a TRO for every $n\in\mathbb{N}$ \cite[\S1,\S2]{kr}; namely, if $\cl U\subseteq\cl B(H,K)$, 
the matrix norms over $\cl U$ 
are inherited from the identification $M_n(\cl B(H,K))=\cl B(H^n,K^n)$. By \cite[Proposition 2.1]{h} (see also \cite[Proposition 2.1]{kr}), 
a linear ternary isomorphism is automatically a complete isometry. 
Hence, the operator space structure on $\cl U$ thus defined is independent 
of the concrete isometric embedding $\cl U\subseteq\cl B(H,K)$, and
\begin{equation}\label{eq_TROopsp}\tightmath
\norm{[u_{i,j}]}_{M_n(\cl U)} = 
\sup\{\norm{\phi^{(n)}([u_{i,j}])} \ : \  \phi:\cl U\to\cl B(H,K) \ \textnormal{ternary morphism}\}.
\end{equation}

For a TRO $\cl U$, we will denote by $\cl L_{\cl U}$ (resp. $\cl R_{\cl U}$) its left
(resp. right) C*-algebra; 
%By \cite{blecher} and \cite{ekt}, up to canonical complete isometries, 
%$$\cl R_{\cl U} = \cl U^*\otimes_{\rm h}^{\cl L_{\cl U}}\cl U
%\ \ \mbox{ and } \ \ 
%\cl L_{\cl U} = \cl U\otimes_{\rm h}^{\cl R_{\cl U}}\cl U^*.$$
if $\theta : \cl U\to \cl B(H,K)$ is a faithful (that is, injective) ternary morphism then, up to 
a *-isomorphism, we have that
$$\cl R_{\cl U} = \overline{{\rm span}(\theta(\cl U)^*\theta(\cl U))}
\ \ \mbox{ and } \ \ 
\cl L_{\cl U} = \overline{{\rm span}(\theta(\cl U)\theta(\cl U)^*)}.$$
We note that
every ternary morphism $\phi : \cl U\to \cl B(H,K)$ gives rise to *-representations
$\pi^R_{\phi} : \cl R_{\cl U}\to \cl B(H)$ and $\pi^L_{\phi} : \cl L_{\cl U}\to \cl B(K)$, 
such that 
$$\pi^R_{\phi}(u_1^* u_2) = \phi(u_1)^*\phi(u_2) \mbox{ and }
\pi^L_{\phi}(u_1 u_2^*) = \phi(u_1)\phi(u_2)^*, \ u_i\in \cl U, i = 1,2,$$
where the products $u_1^* u_2$ and $u_1 u_2^*$ are computed in any faithful ternary representation of $\cl U$. 
The block operator space 
$$\cl D_{\cl U} = 
\left(
\begin{matrix}
\cl L_{\cl U} & \cl U\\
\cl U^* & \cl R_{\cl U}
\end{matrix}
\right)
$$
is a C*-algebra in a canonical fashion, called the \emph{linking algebra} of $\cl U$, 
and every ternary morphism $\phi : \cl U\to \cl B(H,K)$ gives rise to a (unique)
*-representation $\tilde{\pi}_{\phi} : \cl D_{\cl U}\to \cl B(H\oplus K)$, given by 
$$
\tilde{\pi}_{\phi}
\left(\left(
\begin{matrix}
l & u\\
v^* & r
\end{matrix}
\right)\right)
= 
\left(
\begin{matrix}
\pi^L_{\phi}(l) & \phi(u)\\
\phi(v)^* & \pi^R_{\phi}(r)
\end{matrix}
\right), \ \ \ u,v\in \cl U, r\in \cl R_{\cl U}, l\in \cl L_{\cl U}$$
(see \cite[Proposition 2.1]{h} for details).

We recall the maximal ternary tensor product defined in 
\cite[Section 5]{kr}. 
If $H$ is a Hilbert space, and 
$\phi : \cl U\to \cl B(H)$ and $\psi : \cl V\to \cl B(H)$ are ternary morphisms, 
the pair $(\phi,\psi)$ is called \emph{commuting} if
$$\phi(u)\psi(v) = \psi(v)\phi(u) \ \mbox{ and } \ 
\phi(u)\psi(v)^* = \psi(v)^*\phi(u), \ \ 
u\in \cl U, v\in \cl V;$$
given a commuting pair $(\phi,\psi)$, let
$\phi\cdot\psi : \cl U\otimes \cl V\to \cl B(H)$ be the linear map, defined by letting
$$(\phi\cdot\psi)\left(u\ten v\right) = \phi(u)\psi(v), \ \ \  u\in\cl U, v\in \cl V.$$
The space $\cl U\otimes_{\rm tmax}\cl V$ \cite{kr} is the completion of $\cl U\ten\cl V$ 
with respect to the norm 
$$\|w\|_{\rm tmax} := \sup\left\{\|(\phi\cdot\psi)(w)\| : (\phi,\psi) \mbox{ commuting pair}\right\}.$$

If $(\phi,\psi)$ is a commuting pair then 
the induced representations $\tilde{\pi}_{\phi}$ and $\tilde{\pi}_{\psi}$ have commuting ranges.
Conversely, 
if $\pi : \cl D_{\cl U}\to \cl B(L)$ and $\rho : \cl D_{\cl V}\to \cl B(L)$ are 
*-representations with commuting ranges, their restrictions $\phi$ and $\psi$ 
to $\cl U$ and $\cl V$, respectively, are ternary morphisms that form a commuting pair; 
thus, we have a completely isometric embedding
\begin{equation}\label{inclusion}
\cl U \otimes_{\rm tmax} \cl V\subseteq \cl D_{\cl U}\otimes_{\max} \cl D_{\cl V}.
\end{equation}
%We next present an alternative formulation of $\norm{\cdot}_{\rm tmax}$ which will be 
%used in the characterisation of the quantum commuting value in Subsection \S4.5. 
%To the authors' knowledge, this formulation has not been described explicitly in the literature.

Given TRO's $\cl U$ and $\cl V$, we equip the algebraic tensor product $\cl U\ten\cl V$ 
with the ternary product, given by 
\begin{equation}\label{eq_terpwhy}
(u_1\hspace{-0.05cm}\otimes \hspace{-0.05cm} v_1)
(u_2\hspace{-0.05cm}\otimes \hspace{-0.05cm}v_2)^*
(u_3\hspace{-0.05cm}\otimes \hspace{-0.05cm}v_3) 
:= u_1u_2^*u_3 \hspace{-0.05cm} \otimes \hspace{-0.05cm} v_1v_2^*v_3, \ 
u_i\in \cl U, v_i\in \cl V, i = 1,2,3
\end{equation}
(in order to make sense of (\ref{eq_terpwhy}), we employ the embedding 
$\cl U\otimes\cl V\subseteq \cl D_{\cl U}\otimes\cl D_{\cl V}$ and 
make use of the fact that 
$\cl D_{\cl U}\otimes\cl D_{\cl V}$ is a *-algebra in a canonical fashion \cite[p. 203]{tak1}).
For $w\in\cl U\ten \cl V$, we set
$$\norm{w}_{\rm max}:=\sup\{\norm{\theta(w)} \ : \ \theta : \cl U\ten\cl V\rightarrow\cl B(H,K)
\mbox{ ternary morphism}\}$$
(note that the Hilbert spaces $H$ and $K$ in the displayed formula are also allowed to vary).
If $\phi$ and $\psi$ are ternary morphisms of $\cl U$ and $\cl V$, respectively, 
that form a commuting pair, then $\phi\cdot\psi$ is a ternary morphism of $\cl U\otimes\cl V$;
thus, $\norm{\cdot}_{\rm tmax}\leq\norm{\cdot}_{\rm max}$ and, in 
particular, $\norm{\cdot}_{\rm max}$ is a norm. 
We let $\cl U\ten_{\max} \cl V$ be the completion of 
$\cl U\ten \cl V$ under $\norm{\cdot}_{\rm max}$, 
and note that 
$\cl U\ten_{\max}\cl V$ is a TRO with a canonical isometric ternary morphism 
$$\cl U\ten_{\rm max}\hspace{-0.05cm}\cl V\hookrightarrow
\bigoplus \left\{\cl B(H_\theta,K_\theta) \ : \ 
\theta:\cl U\hspace{-0.05cm} \ten \hspace{-0.05cm} \cl V\to\cl B(H_\theta,\hspace{-0.05cm} K_\theta)  \ \textnormal{ternary morphism}\right\}\hspace{-0.1cm}.$$
(We note the set theory abuse 
which can be avoided by picking one element from each unitary equivalence representation 
class on Hilbert spaces of
dimension
at most the maximum of the cardinalities of $\cl U$ and $\cl V$; 
similar arrangements can be made in Section \ref{s_qv} below.) 

\begin{defn}\label{d_semic}
Let $\cl U$ and $\cl V$ be TRO's. We say that the ternary morphisms $\phi:\cl U\rightarrow\cl B(K,L)$ and $\psi:\cl V\to\cl B(H,K)$ \textit{semi-commute} if $\pi^R_{\phi}(\cl R_{\cl U})$ and $\pi^L_{\psi}(\cl L_{\cl V})$ commute in $\cl B(K)$.
\end{defn}

We will say that the TRO's $\cl U\subseteq \cl B(K,L)$ 
and $\cl V\subseteq \cl B(H,K)$ semi-commute if their identity representations do so. 
In the next example, we note that, not surprisingly, 
semi-commutation is strictly weaker than commutation.

\begin{example}\label{ex:semi-commuting} 
Let $G$ be a discrete group, and let $\ell_\infty(G)$ denote the C*-algebra of all bounded (complex-valued) functions on $G$, viewed as multiplication operators in $\cl B(\ell_2(G))$. Let 
$\{\delta_s\}_{s\in G}$ be the canonical basis of $\ell_2(G)$ and 
$\lm$ denote the left regular representation of $G$; thus, $\lm(s)\delta_t =\delta_{st}$, $s,t\in G$. Fix $s,t\in G$ with $s\neq t^{-1}$. Then $\cl U:=\lm(s)\ell_\infty(G)$ and $\cl V:=\ell_\infty(G)\lm(t)$ are concrete TRO's in $\cl B(\ell_2(G))$ with $\cl R_{\cl U}=\ell_\infty(G)=\cl L_{\cl V}$, so $\cl U$ and $\cl V$ semi-commute. However they do not commute, since 
$$\lm(s)\delta_e \delta_e^* \cdot \delta_e \delta_e^* \lm(t) = \delta_s \delta_t^*\neq 0 = \delta_e \delta_e^* \lm(t)\cdot \lm(s) \delta_e \delta_e^*.$$
\end{example}

Notwithstanding Example \ref{ex:semi-commuting}, we now show that  $\norm{\cdot}_{\max}$ coincides with $\norm{\cdot}_{\rm tmax}$. 
We will require the following instance of Arveson's Commutant Lifting 
\cite[Theorem 1.3.1]{a}.
As usual,
we write $\cl R'$ for the  commutant of $\cl R\subseteq \cl B(H)$.

\begin{lemma}\label{lifting} 
Let $\cl U$ and $\cl V$ be TRO's, and let $\phi:\cl U\rightarrow\cl B(K,L)$ and $\psi:\cl V\to\cl B(H,K)$ be semi-commuting ternary morphisms with $\psi$ left non-degenerate.
Then there exists a $*$-homomorphism $\rho:\cl R_{\phi(\cl U)}\to \cl R_{\psi(\cl V)}'\subseteq\cl B(H)$,
such that
$$\rho(b)\psi(v)^* = \psi(v)^*b, \ \ \ b\in \cl R_{\phi(\cl U)}, \ v\in\cl V.$$
\end{lemma}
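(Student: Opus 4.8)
The plan is to write down $\rho$ explicitly from the intertwining relation it is required to satisfy, and to verify its properties by hand; the cited commutant-lifting principle of Arveson is exactly what makes this construction consistent. Since $\psi$ is left non-degenerate, the vectors $\psi(v)^{*}k$ ($v\in\cl V$, $k\in K$) span a dense subspace of $H$, so there is at most one operator $\rho(b)\in\cl B(H)$ with $\rho(b)\psi(v)^{*}=\psi(v)^{*}b$ for all $v\in\cl V$; the whole task is then to show such an operator exists for every $b\in\cl R_{\phi(\cl U)}$ and that $b\mapsto\rho(b)$ is a $*$-homomorphism into $\cl R_{\psi(\cl V)}'$. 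First I would note that $\cl R_{\phi(\cl U)}=\pi^{R}_{\phi}(\cl R_{\cl U})$ and $\cl L_{\psi(\cl V)}=\pi^{L}_{\psi}(\cl L_{\cl V})$ are C*-subalgebras of $\cl B(K)$, and that the semi-commutation hypothesis says precisely that they commute; in particular, since $\cl R_{\phi(\cl U)}$ is a C*-algebra, both $b$ and $b^{*}$ commute with all operators $\psi(v_{1})\psi(v_{2})^{*}$, $v_{1},v_{2}\in\cl V$.

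The main point is the contractivity bound that makes the formula $\rho(b)\big(\sum_{i}\psi(v_{i})^{*}k_{i}\big)=\sum_{i}\psi(v_{i})^{*}bk_{i}$ well defined. Fix $v_{1},\dots,v_{n}\in\cl V$ and $k=(k_{1},\dots,k_{n})\in K^{(n)}$, put $\xi=\sum_{i}\psi(v_{i})^{*}k_{i}$, and form the row operator $R=[\,\psi(v_{1})^{*}\ \cdots\ \psi(v_{n})^{*}\,]\colon K^{(n)}\to H$, so that $\xi=Rk$ and the candidate value of $\rho(b)\xi$ is $Rb^{(n)}k$, where $b^{(n)}=b\oplus\dots\oplus b$ ($n$ summands). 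The Gram matrix $R^{*}R=[\psi(v_{i})\psi(v_{j})^{*}]_{i,j}$ is positive with all entries in $\cl L_{\psi(\cl V)}$, hence commutes entrywise with $b$ and with $b^{*}$; therefore $b^{(n)*}(R^{*}R)b^{(n)}=(b^{*}b)^{(n)}(R^{*}R)$, and since $\|b\|^{2}I-(b^{*}b)^{(n)}$ is positive and commutes with the positive operator $R^{*}R$, we obtain
\[
\|\rho(b)\xi\|^{2}=\la b^{(n)*}(R^{*}R)b^{(n)}k,k\ra=\la (b^{*}b)^{(n)}(R^{*}R)k,k\ra\le\|b\|^{2}\la R^{*}Rk,k\ra=\|b\|^{2}\|\xi\|^{2}.
\]
In particular, if $\xi=0$ then $\rho(b)\xi=0$, so the formula is unambiguous and $\rho(b)$ is a well-defined contraction on a dense subspace of $H$, extending to $\rho(b)\in\cl B(H)$ with $\|\rho(b)\|\le\|b\|$. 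Linearity of $b\mapsto\rho(b)$ is clear; $\rho(bc)=\rho(b)\rho(c)$ and $\rho(b^{*})=\rho(b)^{*}$ follow by evaluating on vectors $\psi(v)^{*}k$ and using the commutation noted above; and the same commutation shows that $\rho(b)$ commutes with the generators $\psi(v_{1})^{*}\psi(v_{2})$ of $\cl R_{\psi(\cl V)}$, so $\rho(b)\in\cl R_{\psi(\cl V)}'$. The relation $\rho(b)\psi(v)^{*}=\psi(v)^{*}b$ holds by construction.

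The only delicate point is the displayed estimate; the rest is routine verification. It is perhaps worth recording that this is genuinely an instance of Arveson's commutant lifting: if $P$ denotes the support projection of the representation $\pi^{L}_{\psi}$ of $\cl L_{\cl V}$ on $K$ (equivalently, the orthogonal projection of $K$ onto $\overline{\psi(\cl V)H}$), then $P\in\cl L_{\psi(\cl V)}''\subseteq\cl R_{\phi(\cl U)}'$ and $\psi(v)^{*}=\psi(v)^{*}P$ for all $v$, so one may pass to $PK$ and lift $b|_{PK}$ along the then non-degenerate representation $\tilde\pi_{\psi}$ of the linking algebra $\cl D_{\cl V}$, invoking \cite[Theorem 1.3.1]{a} to produce $\rho(b)$ on $H$ with the stated properties; this is the route I would ultimately take to keep the argument short.
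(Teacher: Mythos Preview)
Your proof is correct and follows essentially the same approach as the paper: both define $\rho$ explicitly on the dense subspace $\mathrm{span}(\psi(\cl V)^{*}K)$ via the intertwining formula, establish well-definedness through a Gram-matrix norm estimate exploiting that $b$ commutes with the entries $\psi(v_i)\psi(v_j)^{*}\in\cl L_{\psi(\cl V)}$, and then verify the $*$-homomorphism and commutant properties by direct computation on generators. Your inequality $(b^{*}b)^{(n)}R^{*}R\le\|b\|^{2}R^{*}R$ (using that commuting positives have positive product) is equivalent to the paper's use of $V^{1/2}$ and the identity $V^{1/2}(I_n\otimes b)=(I_n\otimes b)V^{1/2}$; these are two presentations of the same positivity argument.
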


\begin{proof}
We identify $\cl R_{\phi(\cl U)}$ with the concrete C*-algebra 
$\overline{{\rm span}(\phi(\cl U)^*\phi(\cl U))}^{\|\cdot\|}$, acting on the Hilbert space $K$.
Let $b\in \cl R_{\phi(\cl U)}$, $v_k\in \cl V$ and $\xi_k\in K$, $k=1,\ldots, n$. Write $V=[\pi^L_{\psi}(v_lv_k^*)]_{k,l=1}^n\in M_n(\cl B(K))$ and $\xi=[\xi_k]_{k=1}^n\in \mathbb C^n\otimes K$. We have
\begin{align*}
\left\|\sum_{k=1}^n\psi(v_k)^*b\xi_k\right\|^2
&= 
\sum_{k,l=1}^n\langle \psi(v_k)^*b\xi_k,\psi(v_l)^*b\xi_l\rangle
=
\sum_{k,l=1}^n\langle \pi^L_{\psi}(v_lv_k^*)b\xi_k, b\xi_l\rangle\\
&=
\langle V(I_n \otimes b)\xi,(I_n \otimes b)\xi\rangle
=
\|V^{1/2}(I_n\otimes b)\xi\|^2\\
&=
\|(I_n\otimes b)V^{1/2}\xi\|^2\leq\|b\|^2\langle V\xi,\xi\rangle\\
&=
\|b\|^2\sum_{k,l=1}^n\langle \pi^L_{\psi}(v_lv_k^*)\xi_k,\xi_l\rangle
= \|b\|^2 \left\|\sum_{k=1}^n\psi(v_k)^*\xi_k \right\|^2.
\end{align*}
Hence, the mapping 
$\tilde b: \sum_{k=1}^n\psi(v_k)^*\xi_k\mapsto\sum_{k=1}^n\psi(v_k)^*b\xi_k$ is a (well-defined)
bounded linear operator on ${\rm span}(\psi(\cl V)^*K)$ with $\|\tilde b\|\leq\|b\|$. Since we have 
$\overline{{\rm span}(\psi(\cl V)^*K)} = H$,
it extends to an element (denoted in the same way) 
$\tilde b\in\cl B(H)$ satisfying $\tilde b \psi(v)^*=\psi(v)^*b$ for any $v\in \cl V$. 
Define $\rho:\cl R_{\phi(\cl U)}\mapsto\cl B(H)$ by letting 
$\rho(b) = \tilde b$; clearly, $\rho$ is linear. It is also multiplicative as, for $\xi\in K$ and $v\in \cl V$, 
we have 
\begin{align*}
\rho(b_1b_2)\psi(v)^*\xi&=\psi(v)^*b_1b_2\xi=\psi(v)^*b_1(b_2\xi)=\rho(b_1)\psi(v)^*b_2\xi\\
&=\rho(b_1)\rho(b_2)\psi(v)^*\xi.
\end{align*}
Moreover, for $b\in \cl R_{\cl U}$, $v, w\in\cl V$ and $\xi, \eta\in K$, we have
\begin{eqnarray*}
& & \langle \rho(b)^*\psi(v)^*\xi,\psi(w)^*\eta\rangle
= \langle \psi(v)^*\xi, \psi(w)^*b\eta\rangle=\langle \xi, \pi^L_\psi(vw^*)b\eta\rangle\\
& = & \langle \xi, b\pi^L_\psi(vw^*)\eta\rangle=\langle \psi(v)^*b^*\xi,\psi(w)^*\eta\rangle
=\langle\rho(b^*)\psi(v)^*\xi,\psi(w)^*\eta\rangle,
\end{eqnarray*}
showing that $\rho(b)^*=\rho(b^*)$; thus, $\rho$ is a $*$-homomorphism. 
For $v,w\in \cl V$ and $b\in \cl R_{\phi(\cl U)}$, we further have
\begin{align*}
\rho(b)\pi^R_\psi(v^*w)
&= \psi(v)^*b\psi(w) = \psi(v)^*(\psi(w)^*b^*)^* = \psi(v)^*(\rho(b^*)\psi(w)^*)^*\\
&= \pi^R_\psi(v^*w)\rho(b),
\end{align*}
and hence $\rho(b)\in \cl R_{\cl \psi(\cl V)}'$ for every $b\in \cl R_{\phi(\cl U)}$.
The proof is complete.
\end{proof}

In the following proposition, the definition of the map $\phi\cdot\psi$ is 
extended naturally to the case where $\phi:\cl U\rightarrow\cl B(K,L)$ and $\psi:\cl V\to\cl B(H,K)$, 
for possibly different Hilbert spaces $H$, $K$ and $L$. 

\begin{proposition}\label{p:representations} 
Let $\cl U$ and $\cl V$ be TRO's. Then $\theta:\cl U\ten\cl V\rightarrow\cl B(H,L)$ is a ternary morphism if and only if there exist 
a Hilbert space $K$, and 
semi-commuting ternary morphisms $\phi:\cl U\rightarrow\cl B(K,L)$ and $\psi:\cl V\to\cl B(H,K)$, 
such that $\theta = \phi\cdot\psi$.
\end{proposition}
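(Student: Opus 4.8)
The plan is to prove the two implications separately, the backward direction being the easy one. If $\phi:\cl U\to\cl B(K,L)$ and $\psi:\cl V\to\cl B(H,K)$ semi-commute, then $\phi\cdot\psi$ is defined on $\cl U\otimes\cl V$ by $(\phi\cdot\psi)(u\otimes v)=\phi(u)\psi(v)\in\cl B(H,L)$. To see that it is a ternary morphism, I would compute $(\phi\cdot\psi)\big((u_1\otimes v_1)(u_2\otimes v_2)^*(u_3\otimes v_3)\big)=(\phi\cdot\psi)(u_1u_2^*u_3\otimes v_1v_2^*v_3)$ and show it equals $\phi(u_1)\psi(v_1)\psi(v_2)^*\phi(u_2)^*\phi(u_3)\psi(v_3)$; the only point requiring the semi-commutation hypothesis is moving $\psi(v_2)^*\psi(v_1)^*\cdots$ past $\phi(u_2)^*\phi(u_3)$, i.e.\ that $\pi^L_\psi(v_1v_2^*)=\psi(v_1)\psi(v_2)^*$ commutes with $\pi^R_\phi(u_2^*u_3)=\phi(u_2)^*\phi(u_3)$, which is exactly Definition~\ref{d_semic}. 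Extending to general elements of $\cl U\otimes\cl V$ is then a bilinearity/trilinearity bookkeeping matter, using the ternary product formula \eqref{eq_terpwhy}.

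For the forward direction, let $\theta:\cl U\otimes\cl V\to\cl B(H,L)$ be a ternary morphism. The strategy is to first cut down to the left non-degenerate case: replacing $H$ by $\overline{\spn(\theta(\cl U\otimes\cl V)^*L)}$ (which $\theta$ maps $H$ onto, after restriction) we may assume $\theta$ is left non-degenerate without changing $\theta$ as a map into $\cl B(H,L)$ up to the obvious identification. Next I want to build the intermediate Hilbert space $K$ and the two factor morphisms. The natural candidate is to use the $*$-representation of the linking algebra $\cl D_{\cl U\otimes\cl V}$ induced by $\theta$; equivalently, one can realise $K$ concretely inside the Fock-type construction, but cleanest is: let $\cl W=\cl U\otimes_{\max}\cl V$ and recall $\cl W\hookrightarrow\cl D_{\cl U}\otimes_{\max}\cl D_{\cl V}$ from \eqref{inclusion}. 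Actually the most direct route is to imitate the GNS/Stinespring-style dilation: set $K$ to be a Hilbert space carrying commuting $*$-representations $\sigma^L$ of $\cl L_{\cl U}$ and $\tau^R$ of $\cl R_{\cl V}$ together with a right $\cl U$-action and a left $\cl V$-action intertwining appropriately. Concretely, I would let $K=\overline{\spn\,\theta(\cl U\otimes\cl V)\big(1\otimes\cl R_{\cl V}\big)^{\wedge}H}$ — but rather than chase this, the more robust approach is:

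Consider the TRO $\cl U\otimes_{\max}\cl V$ with its universal isometric ternary embedding, and the associated linking algebra $\cl D$. A ternary morphism $\theta$ of $\cl U\otimes\cl V$ extends to a $*$-representation $\tilde\pi_\theta$ of $\cl D$ on $H\oplus L$. Inside $\cl D$ we have commuting copies of $\cl D_{\cl U}$ and $\cl D_{\cl V}$ — here is where \eqref{inclusion}, namely $\cl U\otimes_{\mathrm{tmax}}\cl V\subseteq\cl D_{\cl U}\otimes_{\max}\cl D_{\cl V}$, is \emph{not} yet available since that is what we are trying to prove; so instead I use the concrete description. Restrict $\tilde\pi_\theta$ to the copy of $\cl R_{\cl V}$ (sitting in $\cl D_{\cl U\otimes\cl V}$ as $1\otimes\cl R_{\cl V}$, acting on $H$) and take its essential subspace; define $K$ as a suitable $\cl R_{\cl V}$-$\cl L_{\cl U}$-bimodule completion built from $\cl V\otimes_{\cl R_{\cl V}}H$ on one side and $\cl U^*\otimes L$ on the other, then set $\psi(v)\colon H\to K$ and $\phi(u)\colon K\to L$ by the obvious formulas $\psi(v)\xi=v\otimes\xi$, $\phi(u)(w\otimes\xi)=\theta(u\otimes v)\xi$ for $w=v$, extended by linearity and density. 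One checks $\phi(u)\psi(v)=\theta(u\otimes v)$ directly, that $\phi,\psi$ are ternary morphisms from the ternary structure on $\cl U$ and $\cl V$, and that $\psi$ is left non-degenerate by construction. The semi-commutation $\pi^R_\phi(\cl R_{\cl U})$ commutes with $\pi^L_\psi(\cl L_{\cl V})$ on $K$ should then fall out of the bimodule relations; and if it does not fall out immediately, this is precisely where Lemma~\ref{lifting} enters — it manufactures, from any semi-commuting pair, the commuting $*$-homomorphism $\rho$, but run in the other direction the lemma's proof technique (the Arveson commutant-lifting estimate with $V=[\pi^L_\psi(v_lv_k^*)]$) is the tool that verifies well-definedness and boundedness of $\phi$ on the quotient $K$.

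I expect the \textbf{main obstacle} to be the forward direction's construction of $K$: one must produce a single Hilbert space on which $\cl U$ acts on the left (into $L$) and $\cl V$ acts on the right (from $H$) compatibly, i.e.\ genuinely disentangle the tensor factors of the given $\theta$, and then verify the semi-commutation relation rather than the stronger full commutation (which would be false, by Example~\ref{ex:semi-commuting}). The resolution is that only the corner algebras $\cl R_{\cl U}$ and $\cl L_{\cl V}$ meet on $K$, and the bimodule construction naturally makes these commute while leaving $\cl U\cdot\cl V$ versus $\cl V\cdot\cl U$ products unrelated. Once $K$, $\phi$, $\psi$ are in hand, identifying $\theta=\phi\cdot\psi$ and checking $\phi$, $\psi$ are ternary morphisms are routine verifications on elementary tensors extended by density.
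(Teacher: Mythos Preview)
Your backward direction is correct and matches the paper's argument exactly.

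For the forward direction you have the right skeleton---build $K$ as a module tensor product $\cl V\otimes_{?}H$, set $\psi(v)\xi=v\otimes\xi$, and define $\phi$ so that $\phi(u)\psi(v)=\theta(u\otimes v)$---but there is a genuine gap. The construction only works once you have a $*$-representation of $\cl R_{\cl V}$ \emph{alone} on $H$ to furnish the sesquilinear form on $\cl V\otimes H$, and you never produce one. Your ``$1\otimes\cl R_{\cl V}$ sitting in $\cl D_{\cl U\otimes\cl V}$'' is not available when $\cl R_{\cl U}$ is non-unital. What the paper does (and what is the actual heart of the argument) is observe that $\theta$ induces a $*$-representation $\pi^R_\theta:\cl R_{\cl U}\otimes\cl R_{\cl V}\to\cl B(H)$, and then invoke the standard factorisation of representations of C*-algebraic tensor products (e.g.\ \cite[Theorem 3.2.6]{bo}) to write $\pi^R_\theta=\pi^R_{\cl U}\cdot\pi^R_{\cl V}$ for \emph{commuting} $*$-representations. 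This commutation is precisely what is needed for (a) the inner product on $K=\cl V\otimes_{\cl V}H$ to be well-defined, (b) the lifted action $\tilde\pi^R_{\cl U}(b)(v\otimes\xi)=v\otimes\pi^R_{\cl U}(b)\xi$ to descend to $K$, and (c) the semi-commutation of $\phi$ and $\psi$ to hold. You never isolate this step.

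Your appeal to Lemma~\ref{lifting} is also misplaced: that lemma starts from a semi-commuting pair and manufactures a commuting $*$-homomorphism on $H$, which is the opposite direction from what you need here. Finally, the paper defines $\phi$ in two stages (into $\cl U\otimes_{\cl U}(\cl V\otimes_{\cl V}H)$, then composed with an isometry $W$ into $L$) rather than your one-shot $\phi(u)(v\otimes\xi)=\theta(u\otimes v)\xi$; your shortcut can be made to work, but checking that $\phi$ is a ternary morphism requires controlling $\phi(u)^*$, which again reduces to the commuting-representations factorisation you are missing.
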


\begin{proof} 
Let $\phi : \cl U\rightarrow\cl B(K,L)$ and $\psi:\cl V\to\cl B(H,K)$ be semi-commuting ternary morphisms. 
If $u_{i,j}\in \cl U$ and $v_{i,j}\in \cl V$, 
$j = 1,\dots,p$, $i = 1,2,3$, and $x_i = \sum_{j=1}^p u_{i,j}\ten v_{i,j}$, then 
\begin{align*}
& 
(\phi\cdot\psi) (x_1x_2^*x_3) 
=\sum_{j,k,l=1}^p \phi(u_{1,j}u_{2,k}^*u_{3,l})\psi(v_{1,j}v_{2,k}^*v_{3,l})\\
&=\sum_{j,k,l=1}^p \phi(u_{1,j})\pi^R_\phi(u_{2,k}^*u_{3,l})\pi^L_\psi(v_{1,j}v_{2,k}^*)\psi(v_{3,l})\\
&=\sum_{j,k,l=1}^p\phi(u_{1,j})\pi^L_\psi(v_{1,j}v_{2,k}^*)\pi^R_\phi(u_{2,k}^*u_{3,l})\psi(v_{3,l})\\
&=\sum_{j,k,l=1}^p (\phi\cdot\psi)(u_{1,j}\ten v_{1,j})(\phi\cdot\psi)(u_{2,k}\ten v_{2,k})^* (\phi\cdot\psi)(u_{3,l}\ten v_{3,l})\\
&= (\phi\cdot\psi)(x_1)
(\phi\cdot\psi)(x_2)^* (\phi\cdot\psi)(x_3).
\end{align*}
Hence, $\phi\cdot\psi$ is a ternary morphism $\cl U\ten\cl V\to\cl B(H,L)$.

Conversely, if $\theta:\cl U\ten \cl V\rightarrow\cl B(H,L)$ is a ternary morphism,
then it extends to a ternary morphism of $\cl U\ten_{\max}\cl V$. As in the proof of \cite[Proposition 2.1(iv)]{h}, $\theta $ induces a *-homomorphism $\pi^R_\theta:\cl R_{\cl U}\ten\cl R_{\cl V}\to\cl B(H)$, given by 
\begin{equation}\label{eq_thetate}
\pi^R_{\theta}\bigg(\sum_{j=1}^p u_{j}^*u_{j}'\ten v_{j}^*v_{j}'\bigg)
= \sum_{j=1}^p \theta(u_{j}\ten v_{j})^*\theta(u_{j}'\ten v_{j}').
\end{equation}
Then $\pi^R_\theta = \pi^R_{\cl U}\cdot\pi^R_{\cl V}$ for commuting representations of $\cl R_{\cl U}$ and $\cl R_{\cl V}$ \cite[Theorem 3.2.6]{bo} (for the degenerate case, see \cite[Exercise 11.4]{pisier_intr}).  

Equip the algebraic tensor product $\cl V\otimes H$ with the sesquilinear pairing given by
\begin{equation}\label{e:module}
\la v_1\ten_{\cl V}\xi_1,v_2\ten_{\cl V}\xi_2\ra
= \la\pi^R_{\cl V}(v_2^*v_1)\xi_1,\xi_2\ra, \ \ \ v_1,v_2\in\cl V, \ \xi_1,\xi_2\in H;
\end{equation}
after passing to the quotient space with respect to its kernel $\cl N$, we obtain a 
pre-Hilbert space, whose completion will be denoted by 
$\cl V\ten_{\cl V} H$. 
Define $\psi:\cl V\rightarrow\cl B(H,\cl V\ten_{\cl V} H)$ by letting
$\psi(v)\xi=v\ten_{\cl V}\xi$, $\xi\in H$.
It is straightforward that $\psi$
is a ternary morphism; further, by (\ref{e:module}), 
for $\xi_i\in H$ and $v_i\in \cl V$, $i = 1,2$, we have 
$$\langle\pi^R_{\psi}(v_2^*v_1) \xi_1,\xi_2\rangle = 
\langle\psi(v_1)\xi_1, \psi(v_2)\xi_2\rangle = \la\pi^R_{\cl V}(v_2^*v_1)\xi_1,\xi_2\ra,$$
implying that $\pi^R_{\psi} = \pi^R_{\cl V}$.
Thus, if $v,w\in \cl V$ and $\xi,\eta\in H$, then 
\begin{eqnarray*}
\langle(v_2 v_1^*) v\otimes_{\cl V} \xi,w\otimes_{\cl V} \eta\rangle
& = & 
\langle \pi^R_{\cl V}(w^*v_2 v_1^* v) \xi,\eta\rangle\\
& = & 
\langle \psi(w)^*\psi(v_2) \psi(v_1)^* \psi(v) \xi,\eta\rangle\\
& = & 
\langle \pi^L_{\psi}(v_2v_1^*) (v\otimes_{\cl V} \xi), w \otimes _{\cl V}\eta\rangle, 
\end{eqnarray*}
showing that 
\begin{equation}\pi^L_\psi(a)(v\ten_{\cl V}\xi)=av\ten_{\cl V}\xi, \ \ \ a\in\cl L_{\cl V}, \ v\in \cl V, \ \xi\in H.\label{eq:LVH}
\end{equation}
For $b\in \cl R_{\cl U}$, let 
$\tilde{\pi}^R_{\cl U}(b) : \cl V\otimes H \to \cl V\otimes_{\cl V} H$ be the 
linear map, given by 
$$\tilde{\pi}^R_{\cl U}(b)\bigg(v \otimes \xi\bigg) = 
v \ten_{\cl V} \pi^R_{\cl U}(b)\xi, \ \ \ v\in \cl V, \xi\in H.$$ 
Now let $\sum_{i=1}^k v_i\otimes\xi_i\in \cl V\otimes H$, and set 
$\xi := [\xi_i]_{i=1}^k$ and $\tilde{b} := I_k\otimes \pi^R_{\cl U}(b)$. We have that
\begin{eqnarray*}
& & \left\|\tilde{\pi}^R_{\cl U}(b)\bigg(\sum_{i=1}^kv_i \otimes \xi_i\bigg)\right\|^2
= 
\sum_{i,j=1}^k \la \pi^R_{\cl V}(v_j^*v_i)\pi^R_{\cl U}(b)\xi_i,\pi^R_{\cl U}(b)\xi_j\ra\\
& = &    
\left\|[\pi^R_{\cl V}(v_j^*v_i)]_{i,j}^{1/2}\tilde{b}\xi\right\|^2
= 
\left\|\tilde{b}[\pi^R_{\cl V}(v_j^*v_i)]_{i,j}^{1/2}\xi\right\|^2\\
& \leq & 
\|\pi^R_{\cl U}(b)\|^2\left\| [\pi^R_{\cl V}(v_j^*v_i)]_{i,j}^{1/2}\xi\right\|^2
=
\|\pi^R_{\cl U}(b)\|^2\sum_{i,j=1}^k\langle\pi_{\cl V}^R(v_j^*v_i)\xi_i,\xi_j\rangle\\
& = & 
\|\pi^R_{\cl U}(b)\|^2 
\left\langle\sum_{i=1}^kv_i\otimes_{\cl V}\xi_i,\sum_{j=1}^kv_j\otimes_{\cl V}\xi_j\right\rangle
=
\|\pi^R_{\cl U}(b)\|^2\left\|\sum_{i=1}^kv_i\otimes_{\cl V}\xi_i\right\|^2;
\end{eqnarray*}
thus, the map $\tilde{\pi}^R_{\cl U}(b)$ induces a bounded 
linear map (denoted in the same way)
$\tilde{\pi}^R_{\cl U}(b)\in \cl B(\cl V\ten_{\cl V} H)$. 
Since the ranges of $\pi^R_{\cl U}$ and $\pi^R_{\cl V}$ commute, 
it follows that the resulting map
$\tilde{\pi}^R_{\cl U}:\cl R_{\cl U}\to\cl B(\cl V\ten_{\cl V} H)$ is a $*$-homomorphism. We include the details of $*$-preservation; the multiplicativity follows similarly. 
For $b\in \cl R_{\cl U}$, $v_1,v_2\in\cl V$ and $\xi_1,\xi_2\in H$, we have
\begin{eqnarray*}
& & 
\la\tilde{\pi}^R_{\cl U}(b^*)(v_1\ten_{\cl V}\xi_1),v_2\ten_{\cl V}\xi_2\ra
=\la v_1\ten_{\cl V}\pi^R_{\cl U}(b^*)\xi_1,v_2\ten_{\cl V}\xi_2\ra\\
& = & 
\la \pi^R_{\cl V}(v_2^*v_1)\pi^R_{\cl U}(b^*)\xi_1,\xi_2\ra
=
\la \pi^R_{\cl V}(v_2^*v_1)\xi_1,\pi^R_{\cl U}(b)\xi_2\ra\\
& = & 
\la v_1\ten_{\cl V}\xi_1,v_2\ten_{\cl V}\pi^R_{\cl U}(b)\xi_2\ra
=
\la v_1\ten_{\cl V}\xi_1, \tilde{\pi}^R_{\cl U}(b)(v_2\ten_{\cl V}\xi_2)\ra\\
&= & 
\la \tilde{\pi}^R_{\cl U}(b)^*(v_1\ten_{\cl V}\xi_1), v_2\ten_{\cl V}\xi_2\ra.
\end{eqnarray*}

Now, let 
$\phi:\cl U\rightarrow\cl B(\cl V\ten_{\cl V} H,\cl U\ten_{\cl U}(\cl V\ten_{\cl V} H))$ be given by 
$$\phi(u)(v\ten_{\cl V}\xi)=u\ten_{\cl U}(v\ten_{\cl V}\xi), \ \ \ u\in \cl U, \ v\in\cl V, \ \xi\in H,$$
where the Hilbertian tensor product $\ten_{\cl U}$ is defined relative to 
$\tilde{\pi}^R_{\cl U}$, similarly to (\ref{e:module}). 
Then $\phi$ is a ternary morphism with $\pi^R_{\phi} = \tilde{\pi}^R_{\cl U}$. 
By~\eqref{eq:LVH}, the algebras $\tilde{\pi}^R_{\cl U}(\cl R_{\cl U})$ and $\pi^L_{\psi}(\cl L_{\cl V})$ commute 
(in $\cl B(\cl V\ten_{\cl V}H)$), hence we have that $\phi$ and $\psi$ semi-commute. 
Moreover, for all $u_1,u_2\in \cl U$, $v_1,v_2\in\cl V$ and $\xi_1,\xi_2\in H$, using (\ref{eq_thetate}), we have
\begin{eqnarray*}
& & \la\phi(u_1)\psi(v_1)\xi_1,\phi(u_2)\psi(v_2)\xi_2\ra
=\la u_1\ten_{\cl U}(v_1\ten_{\cl V}\xi_1), u_2\ten_{\cl U}(v_2\ten_{\cl V}\xi_2)\ra\\
& = & 
\la\tilde{\pi}^R_{\cl U}(u_2^*u_1)(v_1\ten_{\cl V}\xi_1),v_2\ten_{\cl V}\xi_2\ra
=\la v_1\ten_{\cl V}\pi^R_{\cl U}(u_2^*u_1)\xi_1,v_2\ten_{\cl V}\xi_2\ra\\
& = & \la\pi^R_{\cl V}(v_2^*v_1)\pi^R_{\cl U}(u_2^*u_1)\xi_1,\xi_2\ra
=\la\theta(u_1\ten v_1)\xi_1,\theta(u_2\ten v_2)\xi_2\ra. 
\end{eqnarray*}
It follows that the map $W : \cl U \ten_{\cl U} (\cl V\ten_{\cl V} H) \to L$, given by 
$$W\left(\sum_{i=1}^k u_i\ten_{\cl U}v_i\ten_{\cl V}\xi_i \right) = 
\sum_{i=1}^k \theta(u_i\ten v_i)\xi_i,$$
is a well-defined isometry.
Let
$W\circ\phi : \cl U\rightarrow\cl B(\cl V\ten_{\cl V}H,L)$ 
be defined by letting $(W\circ\phi)(u) = W\phi(u)$. Then $W\circ\phi$ 
is a ternary morphism which semi-commutes with $\psi$, and 
it follows from the definition of $W$ that 
$\theta = (W\circ\phi)\cdot\psi$ on $\cl U\ten\cl V$. The proof is complete.
\end{proof}

The proof of Proposition \ref{p:representations} implies the following 
corollary, which we isolate for further reference.

\begin{corollary}\label{c_maxtose}
Let $\cl U$ and $\cl V$ be TRO's. If $\pi : \cl R_{\cl U}\to \cl B(H)$ and 
$\rho : \cl R_{\cl V}\to \cl B(H)$ are *-representations with commuting ranges, 
then there exist Hilbert spaces $K$ and $L$, and 
semi-commuting ternary morphisms $\phi : \cl U\to \cl B(K,L)$ and 
$\psi : \cl V\to \cl B(H,K)$, such that $\pi\cdot\rho = \pi^R_{\phi\cdot\psi}$.
\end{corollary}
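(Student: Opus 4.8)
The plan is to re-run the construction from the proof of Proposition~\ref{p:representations}, but to feed it the given pair $(\pi,\rho)$ directly instead of extracting such a pair from an ambient ternary morphism $\theta$ on $\cl U\ten\cl V$. Concretely, I would set $\pi^R_{\cl U}:=\pi$ and $\pi^R_{\cl V}:=\rho$ from the outset: these are $*$-representations of $\cl R_{\cl U}$ and $\cl R_{\cl V}$ on the \emph{same} Hilbert space $H$ with commuting ranges, and that is the only property of $\pi^R_{\cl U}$ and $\pi^R_{\cl V}$ that is actually used in the body of that proof (the decomposition of $\pi^R_\theta$ served only to manufacture such a pair).

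First I would form the pre-Hilbert space obtained from $\cl V\ten H$ via the pairing~\eqref{e:module} with $\pi^R_{\cl V}=\rho$, pass to its separated completion $K:=\cl V\ten_{\cl V}H$, and define the ternary morphism $\psi:\cl V\to\cl B(H,K)$ by $\psi(v)\xi=v\ten_{\cl V}\xi$. Exactly as in the proof of Proposition~\ref{p:representations} one checks $\pi^R_\psi=\rho$ on $\cl R_{\cl V}$ and that $\pi^L_\psi$ acts by $a\mapsto\bigl(v\ten_{\cl V}\xi\mapsto av\ten_{\cl V}\xi\bigr)$ as in~\eqref{eq:LVH}. Next, for $b\in\cl R_{\cl U}$ the assignment $v\ten_{\cl V}\xi\mapsto v\ten_{\cl V}\pi(b)\xi$ extends, by the verbatim norm estimate of that proof (which uses only that $\rho$ is a $*$-representation), to $\tilde\pi^R_{\cl U}(b)\in\cl B(K)$; since the ranges of $\pi$ and $\rho$ commute, $\tilde\pi^R_{\cl U}:\cl R_{\cl U}\to\cl B(K)$ is a $*$-homomorphism, and comparing its defining formula with~\eqref{eq:LVH} shows that its range commutes with $\pi^L_\psi(\cl L_{\cl V})$. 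Building $L:=\cl U\ten_{\cl U}K$ relative to $\tilde\pi^R_{\cl U}$ and setting $\phi(u)(v\ten_{\cl V}\xi)=u\ten_{\cl U}(v\ten_{\cl V}\xi)$ yields a ternary morphism $\phi:\cl U\to\cl B(K,L)$ with $\pi^R_\phi=\tilde\pi^R_{\cl U}$, so $\phi$ and $\psi$ semi-commute.

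It then remains to identify $\pi^R_{\phi\cdot\psi}$. On the generators $u_1^*u_2\ten v_1^*v_2$ of $\cl R_{\cl U}\ten\cl R_{\cl V}$ one has, by the definition of $\pi^R$ for the ternary morphism $\phi\cdot\psi$ and the inner-product computation at the end of the proof of Proposition~\ref{p:representations},
\[
\la\pi^R_{\phi\cdot\psi}(u_1^*u_2\ten v_1^*v_2)\xi_1,\xi_2\ra
=\la(\phi\cdot\psi)(u_2\ten v_2)\xi_1,(\phi\cdot\psi)(u_1\ten v_1)\xi_2\ra
=\la\rho(v_1^*v_2)\pi(u_1^*u_2)\xi_1,\xi_2\ra,
\]
which equals $\la(\pi\cdot\rho)(u_1^*u_2\ten v_1^*v_2)\xi_1,\xi_2\ra$. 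Since such elements span $\cl R_{\cl U}\ten\cl R_{\cl V}$ and both maps are $*$-homomorphisms, they agree on $\cl R_{\cl U}\ten\cl R_{\cl V}$ and hence on the completion on which they are both defined (the right C*-algebra of $\cl U\ten_{\max}\cl V$), giving $\pi^R_{\phi\cdot\psi}=\pi\cdot\rho$.

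I do not expect a genuine obstacle: every estimate and identity needed has already been established in the proof of Proposition~\ref{p:representations} and uses only the hypotheses at hand, so the real content is bookkeeping — verifying that the construction never secretly used the morphism $\theta$, only the commuting pair $(\pi^R_{\cl U},\pi^R_{\cl V})$, and then the final check that two $*$-homomorphisms coincide on a spanning set. The only point worth a sentence of care is the passage from the algebraic tensor product $\cl R_{\cl U}\ten\cl R_{\cl V}$ to its C*-completion: one should note that $\pi\cdot\rho$, being a $*$-homomorphism of commuting-range representations, extends to the maximal C*-tensor product, and that $\pi^R_{\phi\cdot\psi}$ factors through the same algebra, so that the identity $\pi\cdot\rho=\pi^R_{\phi\cdot\psi}$ is meaningful and follows by continuity.
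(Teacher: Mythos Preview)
Your proposal is correct and is precisely the approach the paper intends: the paper offers no separate proof for this corollary, stating only that it is implied by the proof of Proposition~\ref{p:representations}, and you have correctly identified that the construction there depends only on the commuting pair $(\pi^R_{\cl U},\pi^R_{\cl V})$, not on the ambient ternary morphism $\theta$. Your final inner-product identification of $\pi^R_{\phi\cdot\psi}$ with $\pi\cdot\rho$ on generators $u_1^*u_2\ten v_1^*v_2$ is exactly the computation at the end of that proof, so nothing further is needed.
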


\begin{lemma}\label{c:maxnorm} 
Let $\cl U$ and $\cl V$ be TRO's. For $w\in \cl U\ten \cl V$,
$$\norm{w}_{\max}=\sup\{\norm{(\phi\cdot\psi)(w)} :
(\phi,\psi) \mbox{ semi-commuting with } \psi \mbox{ left non-deg.}\}.$$
\end{lemma}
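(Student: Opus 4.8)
The plan is to read off most of the statement from Proposition~\ref{p:representations}. That result identifies the ternary morphisms $\theta:\cl U\ten\cl V\to\cl B(H,L)$ with the maps $\phi\cdot\psi$ arising from semi-commuting pairs $\phi:\cl U\to\cl B(K,L)$, $\psi:\cl V\to\cl B(H,K)$, so by the definition of $\norm{\cdot}_{\max}$ we immediately obtain $\norm{w}_{\max}=\sup\{\norm{(\phi\cdot\psi)(w)}:(\phi,\psi)\text{ semi-commuting}\}$. The inequality ``$\ge$'' in the lemma is then trivial, since the right-hand side there is a supremum over a subclass of semi-commuting pairs. Thus the entire content is the reverse inequality: given an arbitrary semi-commuting pair $(\phi,\psi)$ with $\phi:\cl U\to\cl B(K,L)$ and $\psi:\cl V\to\cl B(H,K)$, I must produce a semi-commuting pair $(\phi,\psi_0)$ with $\psi_0$ left non-degenerate and $\norm{(\phi\cdot\psi_0)(w)}=\norm{(\phi\cdot\psi)(w)}$.

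The tool is the usual passage to the essential subspace of $\psi$. Put $H_0:=\overline{{\rm span}(\psi(\cl V)^*K)}\subseteq H$, with orthogonal projection $P_0$. Since $\psi(v)^*K\subseteq H_0$ for every $v\in\cl V$, we have $H_0^\perp\subseteq\ker\psi(v)$ for all $v$, and $H_0$ reduces the $*$-representation $\pi^R_\psi$ of $\cl R_{\cl V}$, on which $\pi^R_\psi$ restricts to a (non-degenerate) representation. Define $\psi_0:\cl V\to\cl B(H_0,K)$ by $\psi_0(v)=\psi(v)P_0$, regarded as an operator $H_0\to K$. Because the ranges of the operators $\psi(v)^*$ already lie in $H_0$, a direct check shows $\psi_0(v)^*=\psi(v)^*$ as a map $K\to H_0\subseteq H$, hence $\psi_0(v_1)\psi_0(v_2)^*\psi_0(v_3)=\psi(v_1)\psi(v_2)^*\psi(v_3)\big|_{H_0}=\psi_0([v_1,v_2,v_3])$, so $\psi_0$ is a ternary morphism; it is left non-degenerate by construction, and $\pi^L_{\psi_0}=\pi^L_\psi$ as maps into $\cl B(K)$. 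Consequently $(\phi,\psi_0)$ still semi-commutes, because the semi-commutation condition only involves $\pi^R_\phi(\cl R_{\cl U})$ and $\pi^L_{\psi_0}(\cl L_{\cl V})=\pi^L_\psi(\cl L_{\cl V})$ inside $\cl B(K)$.

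Finally I would compare norms. Writing $w=\sum_i u_i\ten v_i$, we get $(\phi\cdot\psi_0)(w)=\sum_i\phi(u_i)\psi(v_i)\big|_{H_0}=(\phi\cdot\psi)(w)\big|_{H_0}$; and since each $\psi(v_i)$ annihilates $H_0^\perp$, the operator $(\phi\cdot\psi)(w)$ and its restriction to $H_0$ have the same operator norm, i.e.\ $\norm{(\phi\cdot\psi_0)(w)}=\norm{(\phi\cdot\psi)(w)}$. Taking the supremum over all semi-commuting pairs and invoking Proposition~\ref{p:representations} gives the lemma. There is no serious obstacle here; the only point requiring a little care is the bookkeeping of the previous paragraph — that restricting the domain of $\psi$ to $H_0$ genuinely returns a ternary morphism and leaves semi-commutation intact — and this works precisely because the ranges of the $\psi(v)^*$ are contained in $H_0$ by the very definition of $H_0$.
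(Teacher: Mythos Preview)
Your proposal is correct and follows essentially the same approach as the paper's proof: both restrict the domain Hilbert space to $H_0=\overline{{\rm span}(\psi(\cl V)^*K)}$, observe that the restricted map $\psi_0=\psi|_{H_0}$ is a left non-degenerate ternary morphism with $\pi^L_{\psi_0}=\pi^L_\psi$ (so semi-commutation persists), and note that $(\phi\cdot\psi)(w)$ vanishes on $H_0^\perp$ so the norm is unchanged. Your write-up is somewhat more explicit about why $\psi_0$ is ternary and why $\pi^L_{\psi_0}=\pi^L_\psi$, but the argument is the same.
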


\begin{proof} 
Let $w\in \cl U\ten \cl V$. 
Denoting the right hand side of the identity in the statement by $r$, we have, by Proposition \ref{p:representations}, 
that 
$r\leq \norm{w}_{\max}$. For the reverse inequality, suppose that 
$\phi : \cl U\rightarrow\cl B(K,L)$ and $\psi:\cl V\to\cl B(H,K)$ are semi-commuting ternary morphisms.
Let $H' = \overline{{\rm span}(\psi(\cl V)^* K)}$
and $\psi' : \cl V\to \cl B(H',K)$ be given by 
$\psi'(v) = \psi(v)|_{H'}$. 
With respect to the decomposition
$H = H'\oplus (H')^{\perp}$, we have that 
$\psi(v) = \left(\psi'(v), 0\right)$, $v\in \cl V$, and hence 
$\psi'$ is a left non-degenerate ternary morphism. Since $\psi(v_1)\psi(v_2)^* = \psi'(v_1)\psi'(v_2)^*$, we have that 
$\phi$ and $\psi'$ are semi-commuting. 
For $U \in \phi(\cl U)$ and 
$V = \left(V_1, 0\right)\in \psi(\cl V)$ with respect to the decomposition 
$H = H'\oplus (H')^{\perp}$, we have that 
$UV = (UV_1, 0)$. It follows that 
$$\|(\phi\cdot\psi')(w)\| = \|(\phi\cdot\psi)(w)|_{H'}\| = \|(\phi\cdot\psi)(w)\|,$$
showing (by Proposition~\ref{p:representations})
that $\norm{w}_{\max}\leq r$.
\end{proof}

We are now in a position to prove the main result of this subsection. 

\begin{theorem}\label{t:maxnorm} 
Let $\cl U$ and $\cl V$ be TRO's. Then 
$\norm{\cdot}_{\max}=\norm{\cdot}_{\rm tmax}$ on $\cl U\ten\cl V$.
\end{theorem}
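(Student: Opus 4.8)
The plan is to establish the two inequalities $\norm{\cdot}_{\rm tmax}\leq\norm{\cdot}_{\max}$ and $\norm{\cdot}_{\max}\leq\norm{\cdot}_{\rm tmax}$ on $\cl U\ten\cl V$. The first inequality was already observed in the text preceding Definition~\ref{d_semic}: every commuting pair $(\phi,\psi)$ of ternary morphisms into a common $\cl B(H)$ gives a ternary morphism $\phi\cdot\psi$ of $\cl U\ten\cl V$, so the defining supremum for $\norm{\cdot}_{\rm tmax}$ is taken over a subfamily of the morphisms used for $\norm{\cdot}_{\max}$. Hence only the reverse inequality $\norm{w}_{\max}\leq\norm{w}_{\rm tmax}$ requires work.

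For the reverse inequality, I would start from Lemma~\ref{c:maxnorm}, which reduces the computation of $\norm{w}_{\max}$ to a supremum over semi-commuting pairs $(\phi,\psi)$ with $\phi:\cl U\to\cl B(K,L)$, $\psi:\cl V\to\cl B(H,K)$ and $\psi$ left non-degenerate. Fix such a pair. The goal is to manufacture, from $(\phi,\psi)$, a \emph{commuting} pair of ternary morphisms into a common $\cl B(\tilde H)$ whose product computes at least $\norm{(\phi\cdot\psi)(w)}$. The key tool is Lemma~\ref{lifting}: since $(\phi,\psi)$ semi-commute and $\psi$ is left non-degenerate, there is a $*$-homomorphism $\rho:\cl R_{\phi(\cl U)}\to\cl R_{\psi(\cl V)}'\subseteq\cl B(H)$ with $\rho(b)\psi(v)^*=\psi(v)^*b$ for all $b\in\cl R_{\phi(\cl U)}$, $v\in\cl V$. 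The intertwining relation $\rho(b)\psi(v)^*=\psi(v)^*b$ is exactly what is needed to "push" the action of $\phi$ down onto the single Hilbert space $H$: it says that $\psi(\cdot)^*$ intertwines the representation $b\mapsto b$ of $\cl R_{\phi(\cl U)}$ on $K$ with the representation $\rho$ on $H$.

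Concretely, I would define $\tilde\phi:\cl U\to\cl B(H)$ by $\tilde\phi(u)=\rho(\pi^R_\phi$-related operator$)$ — more precisely, I expect one can realise a ternary morphism $\tilde\phi$ of $\cl U$ on $H$ such that $\tilde\phi(u)^*\tilde\phi(u')=\rho(\pi^R_\phi(u^*u'))$ and such that $\psi(v)^*$ intertwines $\phi$ and $\tilde\phi$ in the sense $\tilde\phi(u)\psi(v)^* \cdots$; the cleanest route is probably to use the left non-degeneracy of $\psi$ to identify $H$ with a "corner" built from $\psi(\cl V)^*K$, transport $\phi$ to this corner via $\psi(\cdot)^*$, and check that the resulting $\tilde\phi$ commutes with $\psi$ because $\rho$ has range in $\cl R_{\psi(\cl V)}'$ (which controls the $\psi(v)^*\psi(v')$-products) and because the relation in Lemma~\ref{lifting} controls the mixed products. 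One then verifies that $(\tilde\phi,\psi)$ is a genuine commuting pair on $\cl B(H)$ and that $(\tilde\phi\cdot\psi)(w)$ is unitarily equivalent to a compression or dilation of $(\phi\cdot\psi)(w)$, so that $\norm{(\phi\cdot\psi)(w)}\leq\norm{(\tilde\phi\cdot\psi)(w)}\leq\norm{w}_{\rm tmax}$. Taking the supremum over all such $(\phi,\psi)$ and invoking Lemma~\ref{c:maxnorm} finishes the argument.

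The main obstacle I anticipate is the bookkeeping in the last step: producing the ternary morphism $\tilde\phi$ on $H$ with the correct ternary products and checking it honestly commutes with $\psi$, rather than merely semi-commutes. The semi-commutation of $(\phi,\psi)$ only directly controls $\pi^R_\phi(\cl R_{\cl U})$ against $\pi^L_\psi(\cl L_{\cl V})$; upgrading to full commutation of $\tilde\phi(\cl U)$ with $\psi(\cl V)$ and $\psi(\cl V)^*$ is where Lemma~\ref{lifting}'s intertwining relation $\rho(b)\psi(v)^*=\psi(v)^*b$ must be used carefully, together with the fact that $\rho$ lands in the commutant $\cl R_{\psi(\cl V)}'$. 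I also need to make sure the norm $\norm{(\phi\cdot\psi)(w)}$ is not lost in passing to $H$ — this is where left non-degeneracy of $\psi$ (hence $\overline{\spn(\psi(\cl V)^*K)}=H$) guarantees no compression loss, as in the proof of Lemma~\ref{c:maxnorm}.
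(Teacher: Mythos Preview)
Your overall strategy—reduce via Lemma~\ref{c:maxnorm} to semi-commuting pairs with $\psi$ left non-degenerate, then invoke Lemma~\ref{lifting}—matches the paper. The gap is in what you do with Lemma~\ref{lifting}. You propose to build a ternary morphism $\tilde\phi:\cl U\to\cl B(H)$ so that $(\tilde\phi,\psi)$ is a commuting pair on $\cl B(H)$. This cannot work as stated: $\psi$ maps $\cl V$ into $\cl B(H,K)$, not into $\cl B(H)$, so $(\tilde\phi,\psi)$ is not even a candidate for a commuting pair in the sense of the definition. More fundamentally, Lemma~\ref{lifting} only hands you a $*$-homomorphism $\rho$ of the right C*-algebra $\cl R_{\phi(\cl U)}$ on $H$; promoting this to an honest ternary morphism $\tilde\phi$ of $\cl U$ on $H$ with $\tilde\phi(u)^*\tilde\phi(u')=\rho(\pi^R_\phi(u^*u'))$ would require a further dilation, and there is no reason such a dilation would commute with (some lift of) $\psi$. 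Indeed, the construction in Proposition~\ref{p:representations} and Corollary~\ref{c_maxtose} shows that commuting representations of the right C*-algebras yield only \emph{semi-commuting} ternary morphisms, so you would be going in a circle.

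The paper sidesteps this entirely by never trying to produce a commuting pair of ternary morphisms. Instead it squares: for $w=\sum_i u_i\ten v_i$,
\[
\norm{(\phi\cdot\psi)(w)}^2=\Big\|\sum_{i,j}\psi(v_j)^*\phi(u_j)^*\phi(u_i)\psi(v_i)\Big\|,
\]
and the intertwining relation $\rho(b)\psi(v)^*=\psi(v)^*b$ from Lemma~\ref{lifting} rewrites this as $\big\|\sum_{i,j}(\rho\circ\pi^R_\phi)(u_j^*u_i)\,\pi^R_\psi(v_j^*v_i)\big\|$ in $\cl B(H)$. Now $\rho\circ\pi^R_\phi$ and $\pi^R_\psi$ are \emph{commuting $*$-representations of the right C*-algebras} $\cl R_{\cl U}$ and $\cl R_{\cl V}$, so this is bounded by $\|w^*w\|_{\cl R_{\cl U}\ten_{\max}\cl R_{\cl V}}$. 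The final step—and the second idea you are missing—is that the completely positive conditional expectation onto the $(2,2)$-corner shows $\cl R_{\cl U}\ten_{\max}\cl R_{\cl V}\subseteq\cl D_{\cl U}\ten_{\max}\cl D_{\cl V}$ isometrically, whence $\|w^*w\|_{\cl R_{\cl U}\ten_{\max}\cl R_{\cl V}}=\|w\|^2_{\cl D_{\cl U}\ten_{\max}\cl D_{\cl V}}=\|w\|^2_{\rm tmax}$ by (\ref{inclusion}). So the route is through C*-algebra max tensor products of the right C*-algebras and linking algebras, not through a commuting pair of ternary morphisms.
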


\begin{proof} 
%First observe that 
%$$\cl R_{\cl U}\ten_{\rm max}\cl R_{\cl V}\subseteq \cl D_{\cl U}\ten_{\rm max}\cl D_{\cl V}.$$
%Indeed, t
The inclusion $b\mapsto \left(\smallmatrix 0 & 0\\0 & b\endsmallmatrix\right)$
of $\cl R_{\cl U}$ into $\cl D_{\cl U}$
is complemented by the contractive completely positive map
$$\cl D_{\cl U}\ni \begin{bmatrix}a & u\\v^* & b\end{bmatrix}\mapsto \begin{bmatrix}0 & 0\\0 & 1\end{bmatrix}\begin{bmatrix}a & u\\v^* & b\end{bmatrix}\begin{bmatrix}0 & 0\\0 & 1\end{bmatrix}\in\cl D_{\cl U},$$
where $1$ is the identity in the multiplier algebra $M(\cl R_{\cl U})$ (one can assume without loss of generality that $\cl U$ is a right $M(\cl R_{\cl U})$-module, see \cite[\S2]{kr}). By \cite[Proposition 3.6.6]{bo}, we have a
canonical inclusion
$\cl R_{\cl U}\ten_{\rm max}\cl R_{\cl V}\subseteq \cl D_{\cl U}\ten_{\rm max}\cl R_{\cl V}.$
By symmetry, 
$$\cl R_{\cl U}\ten_{\rm max}\cl R_{\cl V}\subseteq \cl D_{\cl U}\ten_{\rm max}\cl D_{\cl V}.$$

Now, suppose that $\phi:\cl U\to \cl B(K,L)$  and $\psi:\cl V\to \cl B(H,K)$ are semi-commuting ternary morphisms with $\overline{{\rm span}(\psi(\cl V)^*K)} = H$. By Lemma \ref{lifting} there exists a $*$-homomorphism $\rho:\cl R_{\phi(\cl U)}\rightarrow \cl R_{\psi(\cl V)}'\subseteq\cl B(H)$ satisfying $\rho(b)\psi(v)^*=\psi(v)^*b$ for all $b\in \cl R_{\phi(\cl U)}$ and all $v\in \cl V$. 
If $b\in \cl R_{\phi(\cl U)}$ and $v_1,v_2\in \cl V$, we have 
\begin{eqnarray*}
\rho(b)\pi^R_{\psi}(v_1^*v_2)
& = & 
\rho(b)\psi(v_1)^*\psi(v_2) = \psi(v_1)^*b\psi(v_2) = 
\psi(v_1)^*(\psi(v_2)^*b^*)^*\\
& = & 
\psi(v_1)^*(\rho(b^*)\psi(v_2)^*)^*
= 
\psi(v_1)^*\psi(v_2)\rho(b) = \pi^R_{\psi}(v_1^*v_2)\rho(b),
\end{eqnarray*}
showing that the $*$-homomorphisms
$\rho\circ\pi^R_{\phi}:\cl R_{\cl U}\to\cl B(H)$ and $\pi^R_{\psi}:\cl R_{\cl V}\to\cl B(H)$ have 
commuting ranges. Thus, using (\ref{inclusion}), for $w=\sum_{i=1}^n u_i\ten v_i\in \cl U\ten\cl V$
we have 
\begin{align*}
\norm{(\phi\cdot\psi)(w)}
&= \left\|\sum_{i=1}^n\phi(u_i)\psi(v_i)\right\|
= \left\|\sum_{i,j=1}^n\psi(v_j)^*\phi(u_j)^*\phi(u_i)\psi(v_i)\right\|^{1/2}\\
&= \left\|\sum_{i,j=1}^n\rho(\phi(u_j)^*\phi(u_i))\psi(v_j)^*\psi(v_i)\right\|^{1/2}\\
&= \left\|\sum_{i,j=1}^n\rho(\pi_{\phi}^R(u_j^*u_i))\pi^R_{\psi}(v_j^*v_i)\right\|^{1/2}\\
&\leq \left\|\sum_{i,j=1}^n u_j^*u_i\ten v_j^*v_i\right\|_{\cl R_{\cl U}\ten_{\rm max}\cl R_{\cl V}}^{1/2}
\hspace{-0.2cm} = \left\|\sum_{i,j=1}^n u_j^*u_i\ten v_j^*v_i\right\|_{\cl D_{\cl U}\ten_{\rm max}\cl D_{\cl V}}^{1/2}\\
&= \left\|\sum_{i=1}^n u_i\ten v_i\right\|_{\cl D_{\cl U}\ten_{\rm max}\cl D_{\cl V}}
= \left\|\sum_{i=1}^n u_i\ten v_i \right\|_{\cl U\ten_{\rm  tmax}\cl V}.
\end{align*}
By Lemma \ref{c:maxnorm}, it follows that $\norm{w}_{\max}\leq\norm{w}_{\rm tmax}$. Since the reverse inequality is 
trivial, the proof is complete.
\end{proof}

\begin{corollary}\label{c:h} 
  Let $\cl U$ and $\cl V$ be TRO's. The identity map
  on ${\cl U}\ten{\cl V}$
  extends to a complete contraction $\cl U\ten_{\rm h} \cl V\rightarrow\cl U\ten_{\rm tmax}\cl V$.
\end{corollary}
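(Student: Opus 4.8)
The plan is to exhibit, for each pair of Hilbert spaces $H$, $K$ and each pair of complete contractions $T : \cl U\to \cl B(K,L)$, $S : \cl V\to \cl B(H,K)$ appearing in a ternary-style Haagerup factorisation, a way to dilate to a \emph{semi-commuting} pair of ternary morphisms of the same norm, and then invoke Lemma~\ref{c:maxnorm} together with Theorem~\ref{t:maxnorm}. Concretely, given $u = v\odot w$ with $v\in M_{1,k}(\cl U)$, $w\in M_{k,1}(\cl V)$ — i.e.\ $u = \sum_{i=1}^k u_i\ten v_i$ with $\bigl\|\sum_i u_iu_i^*\bigr\|^{1/2}\bigl\|\sum_i v_i^*v_i\bigr\|^{1/2}$ close to $\|u\|_{\rm h}$ — I want to bound $\|u\|_{\rm tmax}$ by this quantity. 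Equivalently, using Theorem~\ref{t:maxnorm}, I may bound $\|u\|_{\max}$, i.e.\ $\sup\|(\phi\cdot\psi)(u)\|$ over semi-commuting $(\phi,\psi)$ with $\psi$ left non-degenerate.

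First I would fix such a semi-commuting pair $\phi : \cl U\to\cl B(K',L')$, $\psi : \cl V\to \cl B(H',K')$, so $\pi^R_\phi(\cl R_{\cl U})$ and $\pi^L_\psi(\cl L_{\cl V})$ commute on $K'$. Then
\[
(\phi\cdot\psi)(u) = \sum_{i=1}^k \phi(u_i)\psi(v_i),
\]
and I estimate its norm by the standard Haagerup argument: writing $\Phi = [\phi(u_1),\dots,\phi(u_k)] \in M_{1,k}(\cl B(K',L'))$ and $\Psi = [\psi(v_1);\dots;\psi(v_k)]\in M_{k,1}(\cl B(H',K'))$, we get $\|(\phi\cdot\psi)(u)\| = \|\Phi\Psi\|\le \|\Phi\|\,\|\Psi\|$, and
\[
\|\Phi\|^2 = \Bigl\|\sum_i\phi(u_i)\phi(u_i)^*\Bigr\| = \bigl\|\pi^L_\phi\bigl(\textstyle\sum_i u_iu_i^*\bigr)\bigr\| \le \Bigl\|\sum_i u_iu_i^*\Bigr\|_{\cl L_{\cl U}},
\]
and similarly $\|\Psi\|^2 = \bigl\|\pi^R_\psi(\sum_i v_i^*v_i)\bigr\|\le \bigl\|\sum_i v_i^*v_i\bigr\|_{\cl R_{\cl V}}$, since $*$-homomorphisms are contractive. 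Here I use that in any faithful ternary representation the products $u_iu_i^*$ and $v_i^*v_i$ live in $\cl L_{\cl U}$ and $\cl R_{\cl V}$ respectively, and their norms there agree with the TRO norms. Taking the supremum over semi-commuting pairs and the infimum over factorisations $u = v\odot w$ gives $\|u\|_{\max}\le \|u\|_{\rm h}$, hence $\|u\|_{\rm tmax}\le\|u\|_{\rm h}$ by Theorem~\ref{t:maxnorm}. Note that \emph{semi}-commutation (rather than full commutation) is not even needed for this direction — the bound holds for arbitrary ternary morphisms $\phi$, $\psi$ of the right shapes — which is exactly why passing through $\|\cdot\|_{\max}$ and Theorem~\ref{t:maxnorm} is the clean route.

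For the completely contractive (as opposed to merely contractive) statement, I would run the same computation at the matrix level: an element $U = [u^{(r,s)}]\in M_n(\cl U\ten\cl V)$ with $\|U\|_{\rm h}<1$ admits a factorisation $U = V\odot W$ with $V\in M_{n,m}(\cl U)$, $W\in M_{m,n}(\cl V)$ of norms $<1$, and for a semi-commuting pair $(\phi,\psi)$ the amplified map $(\phi\cdot\psi)^{(n)}$ sends $U$ to $(\phi^{(n)}(V))\cdot(\psi^{(m)}\text{-reshaped }W)$ — more precisely $\Phi\Psi$ with $\Phi\in M_{n,m}(\cl B(K',L'))$, $\Psi\in M_{m,n}(\cl B(H',K'))$ — so $\|(\phi\cdot\psi)^{(n)}(U)\|\le\|\Phi\|\|\Psi\|<1$. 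Taking the sup over semi-commuting pairs (which computes $\|\cdot\|_{\max}=\|\cdot\|_{\rm tmax}$ on $M_n(\cl U\ten\cl V)$, using that $M_n(\cl U\ten_{\rm tmax}\cl V)$ sits completely isometrically inside $M_n(\cl D_{\cl U}\ten_{\max}\cl D_{\cl V})$ via \eqref{inclusion} and the latter's operator space structure) yields $\|U\|_{\rm tmax}\le 1$, i.e.\ complete contractivity. The main technical point to get right is the bookkeeping identifying $(\phi\cdot\psi)^{(n)}(V\odot W)$ with a genuine operator product $\Phi\Psi$ of rectangular operator matrices over $\cl B(\cdot,\cdot)$ of compatible sizes — this is the place where one must be careful with the interleaving of the $M_n$ indices and the internal Haagerup index $m$, but it is the same reindexing that underlies the proof that the Haagerup norm on operator algebras is $\|\sum a_ka_k^*\|^{1/2}\|\sum b_k^*b_k\|^{1/2}$, so no genuine obstacle is expected.
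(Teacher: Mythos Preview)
Your argument is correct and takes a more elementary route than the paper. The paper embeds $\cl U\otimes_{\rm h}\cl V$ completely isometrically into the free product $\cl D_{\cl U}\ast\cl D_{\cl V}$ via \cite[Theorem 5.13]{pisier_intr} and injectivity of the Haagerup tensor product, thereby expressing $\|\cdot\|_{\rm h}$ on $M_n(\cl U\otimes\cl V)$ as a supremum over \emph{all} pairs of ternary morphisms into a common $\cl B(H)$; since $\|\cdot\|_{\rm tmax}$ is the supremum over the subclass of commuting pairs, the inequality follows. Your approach instead works directly with Haagerup factorisations and uses only that $\pi^L_\phi$ and $\pi^R_\psi$ are contractive $*$-homomorphisms, avoiding both the free-product embedding and Haagerup injectivity. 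Your own observation that neither commutation nor semi-commutation is needed for the row--column estimate is correct, and in fact makes your detour through $\|\cdot\|_{\max}$ and Theorem~\ref{t:maxnorm} dispensable: bounding directly over commuting pairs already gives $\|\cdot\|_{\rm tmax}\le\|\cdot\|_{\rm h}$ at every matrix level. The rectangular bookkeeping you flag is indeed routine, since $(\phi\cdot\psi)^{(n)}(V\odot W)$ is exactly the operator-matrix product of the rectangular amplifications of $V$ and $W$.
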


\begin{proof} By \cite[Theorem 5.13]{pisier_intr} and 
the injectivity of the Haagerup tensor product \cite[Proposition 9.2.5]{er}, there exist 
canonical completely 
isometric inclusions
$$\cl U\otimes_{\rm h}\cl V\subseteq \cl D_{\cl U}\otimes_{\rm h}\cl D_V
\subseteq \cl D_{\cl U}\ast\cl D_V,$$
where the last of the three spaces is the C*-algebraic 
free product of $\cl D_{\cl U}$ and $\cl D_{\cl V}$. 
Ternary morphisms 
$\phi$ and $\psi$ of $\cl U$ and $\cl V$
correspond precisely to *-representations $\pi$ and $\rho$ of $\cl D_{\cl U}$ and
$\cl D_{\cl V}$; thus, if $[w_{i,j}]\in M_n(\cl U\otimes \cl V)$ then
$$\|[w_{i,j}]\|_{\rm h} = 
\sup\left\{\|(\phi\cdot\psi)^{(n)}([w_{i,j}])\| : \phi,\psi \mbox{ ternary morphisms}\right\}.$$
By (\ref{eq_TROopsp}) and Theorem \ref{t:maxnorm}, the identity map 
$\cl U\otimes_{\rm h}\cl V \to \cl U\otimes_{\rm tmax}\cl V$ is completely contractive. 
\end{proof}

%%%%%%%%%%%%%%%%%%%%%%%%%%%%%%%%%%%%%%%%%%%%%
%%%%%%%%%%%%%%%%%%%%%%%%%%%%%%%%%%%%%%%%%%%%%
%%%%%%%%%%%%%%%%%%%%%%%%%%%%%%%%%%%%%%%%%%%%%

\section{Hypergraph values via resources}\label{s_hyper}

In this section we develop a general resource theory that ascribes operator space structures on the trace class through families of block operator isometries. Such families correspond to families of quantum channels, which, in the context of quantum games, are families of strategies 
that include local operations and shared randomness. 
The maximal success probability when using such strategies is then governed 
by the associated operator space norms on canonical game tensors. 

We fix non-empty finite sets $X$ and $A$.
We begin by identifying the trace class inside the universal TRO of an isometry.

%%%%%%%%%%%%%%%%%%%%%%%%%%%%%%%%%%%%%%%%%%%%%
%%%%%%%%%%%%%%%%%%%%%%%%%%%%%%%%%%%%%%%%%%%%%

\subsection{The operator space \texorpdfstring{$\cl S_1$}{S1} revisited}
\label{s_S1}

A \emph{block operator isometry (over $(X,A)$)} is an isometry $U=(U_{a,x})_{a,x}\in M_{A,X}\otimes \cl B(H,K)=\cl B(H^X,K^A)$, where $H$ and $K$ are Hilbert spaces.
Following \cite{tt-QNS}, let $\cl V_{X,A}$ be the universal TRO 
of an isometry $(u_{a,x})_{a,x}$; 
thus, $\cl V_{X,A}$ is generated, as a TRO, by a family $\{u_{a,x} : x\in X, a\in A\}$, and 
has the property that 
its ternary representations $\theta : \cl V_{X,A}\to \cl B(H,K)$ correspond 
to block operator isometries 
$U = (U_{a,x})_{a,x}\in M_{A,X}\otimes \cl B(H,K)$, via the assignments $\theta(u_{a,x}) = U_{a,x}$, $x\in X$, $a\in A$. 
In the latter case, we write $\theta = \theta_U$. 
Let $\cl C_{X,A}$ be the right C*-algebra corresponding to $\cl V_{X,A}$; 
thus, $\cl C_{X,A}$ is generated, as a C*-algebra, by the elements 
$e_{x,x',a,a'} := u_{a,x}^*u_{a',x'}$, $x,x'\in X$, $a,a'\in A$.
More concretely, if  
$\theta : \cl V_{X,A}\to \cl B(H,K)$ is a faithful ternary representation then 
$\cl C_{X,A}$ is *-isomorphic to the C*-algebra $\overline{{\rm span}}(\theta(\cl V_{X,A})^*\theta(\cl V_{X,A}))$. 
If $U \in M_{A,X}\otimes \cl B(H,K)$ is a block operator isometry, let
$\pi_U : \cl C_{X,A}\to \cl B(H)$ be the unital *-homomorphism  with 
%By \cite{tt-QNS}, there exists a (unique) 
%ternary morphism $\theta_U : \cl V_{X,A}\to \cl B(H,K)$, such that 
%$$\theta_U(u_{a,x}) = U_{a,x}, \ \ \ x\in X, a\in A,$$ which, in its own turn, yields a (unique), such that 
$$\pi_U(e_{x,x',a,a'}) = \theta_U(u_{a,x})^*\theta_U(u_{a',x'}), \ \ \ 
x,x'\in X, a,a'\in A;$$
we note that every unital *-homomorphism of $\cl C_{X,A}$ arises in this way \cite[Lemma 5.1]{tt-QNS}. 

Let 
$$\cl T_{X,A} = {\rm span}\{e_{x,x',a,a'} : x,x'\in X, a,a'\in A\},$$
viewed as an operator subsystem of $\cl C_{X,A}$. 
Since $\cl T_{X,A}$ is finite dimensional, its dual $\cl T_{X,A}^{\rm d}$ 
is an operator system \cite[Corollary 4.5]{CE2}; 
in fact, it was shown in \cite[Proposition 5.5]{tt-QNS} that, up to a complete order isomorphism, 
\begin{equation}\label{eq_dualofTXA}
\tightmath
\cl T_{X,A}^{\rm d} =
\left\{(\lambda_{x,x',a,a'})
\in 
M_{XA} : \exists \ c\in \bb{C} \mbox{ s.t.} 
\sum_{a\in A} \lambda_{x,x',a,a} = \delta_{x,x'} c, \ x,x'\in X\right\},
\end{equation}
considered as an operator subsystem of $M_{XA}$. 

Let 
$$\cl O_{X,A} = {\rm span}\{u_{a,x} : x\in X, a\in A\},$$
considered as an operator subspace of $\cl V_{X,A}$. 
We use the abbreviation
$\cl S_1^{A,X} = \cl S_1(\bb{C}^A,\bb{C}^X)$.
write $\langle \cdot,\cdot\rangle$
for the pairing between 
$\cl S_1^{A,X}$ and $M_{A,X}$, 
given by 
\begin{equation}\label{eq_duSTnot}
\langle S,T\rangle = {\rm Tr}(ST), \ \ \ S\in \cl S_1^{A,X}, \ T\in M_{A,X},
\end{equation}
and equip $\cl S_1^{A,X}$ with the operator space structure that turns the 
identification $\left(\cl S_1^{A,X}\right)^* = M_{A,X}$
into a complete isometry.

\begin{proposition}\label{p_OXAS1}
The linear map $\alpha : \cl S_1^{A,X}\to \cl O_{X,A}$, given by 
$\alpha(\epsilon_{x,a}) = u_{a,x}$, $a\in A$, $x\in X$, is a surjective complete isometry.
\end{proposition}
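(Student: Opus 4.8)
The statement asserts that $\alpha\colon \cl S_1^{A,X}\to\cl O_{X,A}$, $\alpha(\epsilon_{x,a})=u_{a,x}$, is a surjective complete isometry. Surjectivity is immediate since $\cl O_{X,A}=\operatorname{span}\{u_{a,x}\}$ by definition and the $\epsilon_{x,a}$ span $\cl S_1^{A,X}$. The content is the complete isometry claim. The plan is to verify this by computing both matricial norms explicitly via their respective universal descriptions: the norm on $M_n(\cl S_1^{A,X})$ comes from the duality $\bigl(\cl S_1^{A,X}\bigr)^*=M_{A,X}$ together with $M_n\bigl((\cl S_1^{A,X})^*\bigr)=\operatorname{CB}(\cl S_1^{A,X},M_n)$, while the norm on $M_n(\cl O_{X,A})$ comes from \eqref{eq_TROopsp}, i.e.\ a supremum over ternary morphisms $\phi\colon\cl V_{X,A}\to\cl B(H,K)$, which (by the universal property of $\cl V_{X,A}$ recalled just above) are exactly the block operator isometries $U=(U_{a,x})_{a,x}$.

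First I would unwind the target norm. For $w=[w_{i,j}]\in M_n(\cl O_{X,A})$, write $w_{i,j}=\sum_{a,x}\lambda^{(i,j)}_{a,x}u_{a,x}$, so $w$ corresponds to the matrix tuple $S=(S_{i,j})\in M_n(\cl S_1^{A,X})$ with $S_{i,j}=\sum_{a,x}\lambda^{(i,j)}_{a,x}\epsilon_{x,a}$. By \eqref{eq_TROopsp},
\[
\|\alpha^{(n)}(w)\|_{M_n(\cl O_{X,A})}=\sup_U\Bigl\|\bigl[\textstyle\sum_{a,x}\lambda^{(i,j)}_{a,x}U_{a,x}\bigr]_{i,j}\Bigr\|_{\cl B(H^n,K^n)},
\]
the supremum over all block operator isometries $U\in M_{A,X}\otimes\cl B(H,K)$ (all Hilbert spaces $H$, $K$). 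Next I would identify the source norm: under $\bigl(\cl S_1^{A,X}\bigr)^*=M_{A,X}$ (completely isometrically, by the chosen operator space structure), we have $M_n(\cl S_1^{A,X})^*=\operatorname{CB}(\cl S_1^{A,X},M_n)^*$, and more usefully $\|S\|_{M_n(\cl S_1^{A,X})}$ is computed by pairing against the unit ball of $\operatorname{CB}(M_{A,X},M_m)$ for varying $m$ — equivalently, by the standard fact that $\cl S_1^{A,X}$ with this structure is the operator space predual, $\|S\|_{M_n(\cl S_1^{A,X})}=\sup\{\|(\mathrm{id}\otimes T)(S)\|_{M_n(M_m)}: T\colon M_{A,X}\to M_m \text{ a complete contraction}\}$, and a complete contraction $T\colon M_{A,X}=\cl B(\bb C^X,\bb C^A)\to M_m=\cl B(\bb C^m)$ is, up to a routine reformulation, given by a contraction, i.e.\ $T(\epsilon_{a,x}^{M_{A,X}})$ has the shape of the entries of a block operator isometry (or contraction) via Wittstock/Arveson factorisation.

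The crux is to match these two suprema. The key observation is that a \emph{block operator isometry} $U\in M_{A,X}\otimes\cl B(H,K)$ — i.e.\ $U^*U=I_{H^X}$ — is precisely the data needed to dualise: the rectangular operator $U$, read through the pairing \eqref{eq_duSTnot} and amplified, implements exactly a complete contraction on $\cl S_1^{A,X}$ in the column-Hilbert-space sense, and conversely every complete contraction arises (possibly after dilation) from an isometry rather than just a contraction — this is the step where one should invoke that the unit ball of $\operatorname{CB}$ into $\cl B(H,K)$ is, by a Stinespring/Paulsen-type dilation for the relevant ternary (TRO) structure, the norm-closed convex hull of, or dominated by, genuine ternary representations, i.e.\ genuine block operator isometries — which is exactly why one needs the \emph{universal} TRO $\cl V_{X,A}$ and not merely a universal contraction. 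Concretely I would show: (a) every block operator isometry gives a complete contraction $\cl S_1^{A,X}\to(\text{operator space})$ witnessing $\ge$ in one direction; (b) conversely, given any complete contraction out of $\cl S_1^{A,X}$ realising the $M_n(\cl S_1^{A,X})$-norm up to $\varepsilon$, dilate the corresponding contractive block matrix to an isometric one (appending a coisometry column), which does not decrease the relevant operator norm of $[\sum\lambda^{(i,j)}_{a,x}U_{a,x}]$ because that expression only involves $U$ linearly and the extra rows only enlarge the ambient space; this yields $\le$. Combining (a) and (b) at every matrix level $n$ gives $\|\alpha^{(n)}(w)\|=\|w\|$ for all $n$, hence $\alpha$ is a complete isometry, and being onto it is the desired surjective complete isometry.

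The step I expect to be the main obstacle is (b): the honest verification that the supremum defining $\|\cdot\|_{M_n(\cl S_1^{A,X})}$ via the dual pairing is unchanged if one restricts from contractive $U$ (complete contractions $M_{A,X}\to\cl B(H,K)$, equivalently arbitrary contractions in $M_{A,X}\otimes\cl B(H,K)$) to \emph{isometric} $U$. This is a dilation argument — replace a contraction $U_0$ by $\left(\begin{smallmatrix}U_0\\ D\end{smallmatrix}\right)$ with $D^*D=I-U_0^*U_0$ — together with the observation that the quantity $\bigl\|[\sum_{a,x}\lambda^{(i,j)}_{a,x}U_{a,x}]_{i,j}\bigr\|$ can only go up (or stay equal) under this enlargement, since it is the norm of an operator into a larger target Hilbert space built entrywise from $U$. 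One must also check the bookkeeping that the pairing \eqref{eq_duSTnot} with the transpose convention matches $\alpha(\epsilon_{x,a})=u_{a,x}$ correctly on indices; this is routine but must be done carefully, as a misplaced transpose would swap $\cl S_1^{A,X}$ with $\cl S_1^{X,A}$.
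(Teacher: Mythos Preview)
Your overall strategy is sound and runs parallel to the paper's: both arguments reduce the claim to showing that the supremum of $\bigl\|[\sum_{a,x}\lambda^{(i,j)}_{a,x}T_{a,x}]_{i,j}\bigr\|$ over block \emph{contractions} $T=(T_{a,x})\in M_{A,X}\otimes\cl B(H,K)$ equals the same supremum over block \emph{isometries}. The paper handles this by passing to $T\in M_{n,m}(M_{A,X})$ with $m>n|X|/|A|$ and invoking the fact that the unit ball of a strictly rectangular matrix space is the convex hull of its isometries (a Horn--Johnson argument); since the expression is linear in $T$, the supremum over the ball is already attained on isometries. You instead propose the Halmos defect dilation: replace a contraction $T$ by $\bigl(\begin{smallmatrix}T\\D\end{smallmatrix}\bigr)$ with $D^*D=I-T^*T$, which can only increase the relevant norm. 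Both routes work; yours is dimension-free and avoids the extreme-point fact, while the paper's is slightly slicker once that fact is quoted.

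Two points need fixing. First, your dilation must remain a block operator isometry over $(X,A)$: the enlarged target must be $(K')^A$, not an arbitrary $K^A\oplus L$. This is easily arranged---e.g.\ take $K'=K\oplus H^X$, fix $a_0\in A$, and set $U_{a_0,x}=\bigl(\begin{smallmatrix}T_{a_0,x}\\R_x\end{smallmatrix}\bigr)$, $U_{a,x}=\bigl(\begin{smallmatrix}T_{a,x}\\0\end{smallmatrix}\bigr)$ for $a\ne a_0$, where $R_x:H\to H^X$ is the $x$-th column of $(I-T^*T)^{1/2}$---but you should say it. Second, your description of the $M_n(\cl S_1^{A,X})$-norm is garbled: a complete contraction $T\colon M_{A,X}\to M_m$ cannot act on $S\in M_n(\cl S_1^{A,X})$. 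What you need is either a complete contraction $\phi\colon\cl S_1^{A,X}\to\cl B(H)$ (which, via ${\rm CB}(\cl S_1^{A,X},\cl B(H))\cong M_{A,X}\otimes_{\min}\cl B(H)$, is exactly a block contraction $(T_{a,x})$), or the dual picture the paper uses, pairing $S$ against contractions $T\in M_m(M_{A,X})$. The references to Wittstock/Arveson and Stinespring are red herrings; only the elementary defect-operator construction is needed.
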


\begin{proof}
The surjectivity is trivial. To show that $\alpha$ is a complete isometry, 
%we follow the arguments in the proof of \cite[Proposition 2.3]{ghj}. 
%The duality between $\cl S_1^{A,X}$ and $\cl B(\bb{C}^X,\bb{C}^A)$ extends to an
%operator-valued linear duality $\langle \cdot, \cdot\rangle$ between $\cl S_1^{A,X}\otimes M_n$ and 
%$\cl B(\bb{C}^X,\bb{C}^A)\otimes \cl B(H)$ by letting 
%$$\left\langle \sum_{a,x} \epsilon_{a,x}\otimes S_{a,x}, 
%\sum_{a,x} \epsilon_{x,a} \otimes T_{a,x} \right\rangle = \sum_{a,x} S_{a,x}\otimes T_{a,x}.$$
let $S=(S^{i,j})_{i,j} \in M_n(\cl S_1^{A,X})$,
and 
$F_S\in {\rm CB}(M_{A,X}, M_n)$ given by $F_S(T)=(\langle S^{i,j}, T\rangle)_{i,j}$ and let  
$m\geq n$ vary. 
By the definition of the operator space structure on $\cl S_1^{A,X}$ and \cite[Proposition 2.2.2]{er}, 
%and write $S = \sum_{a\in A}\sum_{x\in X} S_{a,x}\otimes \epsilon_{a,x}$, 
%where $S_{a,x}\in M_n$, $a\in A$, $x\in X$. 
\begin{align*}
\|S\| 
  & =    \|F_S\|_{\rm cb}=
\|F_S^{(n)}\|\\
  & =   \sup\{\|(\langle S^{i,j},T^{k,l}\rangle)_{i,j,k,l}\| : T=(T^{k,l})_{k,l}\in M_n(M_{A,X}), \|T\|\leq 1\}\\
  & \leq   \sup\{\|(\langle S^{i,j},T^{k,l}\rangle)_{i,j,k,l}\| : T = (T^{k,l})_{k,l}\in M_{n,m}(M_{A,X}), \|T\|\leq 1\}.
\end{align*}
%Observe that writing $S=(S_{i,j})_{i,j}\in M_n(\cl S_1^{A,X})$ and $T=(T_{k,l})_{k,l}\in M_{n,m}(M_{A,X})$ as  
Write $S = \sum_{a\in A}\sum_{x\in X} S_{x,a}\otimes \epsilon_{x,a}$, 
and $T = \sum_{a\in A}\sum_{x\in X} T_{a,x}\otimes \epsilon_{a,x}$, 
where $S_{x,a}\in M_n$ and $T_{a,x}\in M_{n,m}$, 
$x\in X$, $a\in A$.
Then 
\begin{equation}\label{eq_ijkl}
\left(\langle S^{i,j},T^{k,l}\rangle\right)_{i,j,k,l}
= \sum_{a\in A}\sum_{x\in X}  S_{x,a} \otimes T_{a,x};
\end{equation}
indeed, by linearity, 
it suffices to check the formula in the case where 
there exist $a',x',i',j'$, such that 
$S_{x',a'} = \epsilon_{i',j'}$ and $S_{x,a} = 0$ whenever $(x,a)\neq (x',a')$, and 
there exist $x'',a'',k,l$, such that 
$T_{a'',x''} = \epsilon_{k,l}$ and $T_{a,x} = 0$ whenever $(a,x)\neq (a'',x'')$.
In the latter case, 
$S^{i,j} = \delta_{(i',j'),(i,j)} \epsilon_{x',a'}$ 
and 
$T^{k,l} = \delta_{(k',l'),(k,l)} \epsilon_{a'',x''}$, and all the entries, 
apart from the $(i',j',k',l')$-entry, 
of the matrix 
$\left(\langle S^{i,j},T^{k,l}\rangle\right)_{i,j,k,l}$ are zero, while the $(i',j',k',l')$-entry
equals 
${\rm Tr}(\epsilon_{x',a'} \epsilon_{a'',x''}) 
=  \delta_{a',a''} \delta_{x',x''}$. 
Direct comparison with the right hand side of 
(\ref{eq_ijkl}) shows that the latter identity holds true.

Let $m\in \bb{N}$ with $m > \frac{n|X|}{|A|}$, so that 
${\rm dim}({\bb C}^X\otimes{\mathbb C}^n) < 
{\rm dim}({\bb C}^A\otimes{\mathbb C}^m)$. 
We note that the convex hull of 
the isometries in 
$\cl B({\bb C}^X\otimes{\mathbb C}^n, {\bb C}^A\otimes{\mathbb C}^m)$ 
coincides with its unit ball.
(Indeed, if $p < q$ and 
$T:\mathbb C^p\to\mathbb C^q$ is a contraction, write $T = (T_1 \; 0)^{\rm t}$, where 
$T_1\in M_p$ is a contraction. 
By \cite[\S 3.1, Problem 27]{horn-johnson}, $T_1 = \sum_{i=1}^l \lambda_i U_i$ as a convex combination, 
where $U_i \in M_p$ is a unitary, $i\in [l]$; thus, 
$T = \sum_{i=1}^l \lambda_i (U_i \; 0)^{\rm t}$ as a convex combination, 
where $(U_i \; 0)^{\rm t} : \bb{C}^p\to \bb{C}^q$ is an isometry, $i\in [l]$.)
Denoting by $H$ and $K$
arbitrary Hilbert spaces, we therefore have 
\begin{align*}
\|S\| &
= 
\sup\{\|\textstyle \sum_{a,x} S_{x,a}\otimes U_{a,x}\| : U\in {\cl B}(\bb{C}^X\otimes {\mathbb C}^n,\bb{C}^A\otimes{\mathbb C}^m) \mbox{ isometry}\}\\
  & \leq    
\sup_{H,K}\{\|\textstyle \sum_{a,x} S_{x,a}\otimes U_{a,x}\| : U\in \cl B({\mathbb C}^X\otimes H,{\mathbb C}^A\otimes K) \mbox{ isometry}\}\\
  & =    
\sup_{H,K} \{\|\theta_U^{(n)}(\alpha^{(n)}(S))\| : 
U \in \cl B(\bb{C}^X\otimes H,\bb{C}^A\otimes K) \mbox{ isometry}\}\\
  & =   
\|\alpha^{(n)}(S)\|.
\end{align*}
In summary, $\|S\|\leq \|\alpha^{(n)}(S)\|$.
To see the reverse inequality, we use (\ref{eq_ijkl}) to note that
\begin{align*}
\|\alpha^{(n)}(S)\|
  & =
\sup_{H,K}\{\|\textstyle \sum_{a,x}S_{x,a}\otimes U_{a,x}\| : U\in \cl B({\mathbb C}^X\otimes H,{\mathbb C}^A\otimes K) \mbox{ isometry}\}\\
  & \leq 
    \sup_{H,K}\{\|\textstyle \sum_{a,x}S_{x,a}\otimes T_{a,x}\| : T\in \cl B({\mathbb C}^X\otimes H,{\mathbb C}^A\otimes K), \|T\|\leq 1\}\\&=\|S\|.
  \qedhere
\end{align*}
\end{proof}

%%%%%%%%%%%%%%%%%%%%%%%%%%%%%%%%%%%%%%%%%%%%%
%%%%%%%%%%%%%%%%%%%%%%%%%%%%%%%%%%%%%%%%%%%%%
%%%%%%%%%%%%%%%%%%%%%%%%%%%%%%%%%%%%%%%%%%%%%

\subsection{Value expressions}\label{ss_hyper}

Let $U=(U_{a,x})_{a,x}\in \cl B(H^X,K^A)$ be a block operator isometry over $(X,A)$, where $H$ and $K$ are Hilbert spaces.
We let 
$\phi_U : \cl O_{X,A}\to \cl B(H,K)$ be the restriction of $\theta_{U}$; we have that  
$\phi_U$ is completely contractive.

Given a state $\sigma\in\cl S_1(H)$, the map $\Gamma_{U,\sigma}:M_X\to M_A$ given by
\begin{equation}\label{eq:Gamma-U-sigma}
\Gamma_{U,\sigma}(\epsilon_{x,x'})=\sum_{a,a'\in A}\sigma(U_{a,x}^*U_{a',x'})\epsilon_{a,a'}\end{equation}
is a quantum channel, where as usual, $\sigma(T)=\Tr(\sigma T)$ for $T\in \cl B(H)$.
Moreover, the assignment
\begin{equation}\label{eq_sUsi}
s_{U,\sigma}(e_{x,x',a,a'})=\sigma(U_{a,x}^*U_{a',x'}),\quad x,x'\in X,\;a,a'\in A,
\end{equation}
extends to a well-defined state on $\cl T_{X,A}$ (which
can be easily verified using the identification (\ref{eq_dualofTXA}) of the dual $\cl T_{X,A}^{\rm d}$). 
A standard calculation shows that 
\begin{equation}\label{e:Gammap} \Gamma_{U,\sigma}^{\#}(\rho):=\Gamma_{U,\sigma}\left(\rho^{\rm t}\right)^{\rm t} = 
(\id\ten\tr_K)(U(\rho\ten\sigma)U^*), \ \ \ \rho\in M_X.
\end{equation}

\begin{remark}\label{r_tenso}
\rm
Let $X$, $A$, $Y$ and $B$ be finite sets and 
$U_{a,x}\in \cl B(H,K)$ and $U'_{b,y}\in \cl B(H',K')$, $x\in X$, $a\in A$, $y\in Y$, $b\in B$ be such 
that the block operator matrices $U = (U_{a,x})_{a,x}$ and $U' = (U_{b,y})_{b,y}$ are isometries.
Let, further, $\sigma\in \cl S_1(H)$ and $\sigma'\in \cl S_1(H')$ be states. 
It is straightforward to verify that 
$\Gamma_{U,\sigma} \otimes \Gamma_{U',\sigma'} = \Gamma_{U\otimes U',\sigma\otimes \sigma'}$.
\end{remark}

Given a family $\cl R$ of block operator isometries $U : H^X_U\to K^A_U$ (where the Hilbert spaces 
$H_U$ and $K_U$ are allowed to vary), let 
$$\mathsf{QC}(\cl R) = \{\Gamma_{U,\sigma} : U\in \cl R, \; \sigma\in \cl S_1(H_U) \mbox{ is a normal state}\}.$$
The set $\cl R$ will be called 
a \emph{resource over the pair} $(X,A)$ if it is closed under taking finite direct sums, and $\cl R$ is said to be \emph{separating}
if
the family of states $\{s_{U,\sigma} : U\in \cl R, \; \sigma \mbox{ a state}\}$ on $\cl T_{X,A}$ is separating
for the subset $\{u^*u : u\in \cl O_{X,A}\}$ of $\cl T_{X,A}$.
For a
family $\cl R_0$ of block operator isometries 
we let $\langle\cl R_0\rangle$ be the set of all direct sums of elements of $\cl R_0$. Note that, if $\cl R_0$ is separating, then $\langle \cl R_0\rangle$ is the smallest resource containing $\cl R_0$.

\begin{remark}\label{r_conve}
\rm
For a resource $\cl R$ over $(X,A)$, the set $\mathsf{QC}(\cl R)$ is convex. 
Indeed, suppose that $U,U'\in \cl R$ and let $H$ (resp. $H'$) be the domain of the 
entries of $U$ (resp. $U'$). Let $\sigma$ (resp. $\sigma'$) 
be a normal state on $\cl B(H)$ (resp. $\cl B(H')$). 
Fix $\lambda\in (0,1)$ and let 
$\lambda\sigma \dot{+} (1-\lambda)\sigma'$ be the state on $\cl B(H\oplus H')$, 
given by 
$$(\lambda\sigma \dot{+} (1-\lambda)\sigma')(T\oplus T') = \lambda\sigma(T) + (1-\lambda)\sigma'(T').$$
Since $(U\oplus U')_{a,x} = U_{a,x}\oplus U'_{a,x}$, 
a direct verification shows that 
$$\lambda \Gamma_{U,\sigma} + (1-\lambda) \Gamma_{U',\sigma'} = 
\Gamma_{U\oplus U',\lambda\sigma \dot{+} (1-\lambda)\sigma'}.$$
\end{remark}

\begin{remark}\label{r_resource}
The above notion of resource is related to, but in general distinct from, that which appears in quantum resource theories \cite[Definition 1]{chitambar}. Therein, a family of quantum channels is assigned to suitable pairs of Hilbert spaces and stipulated to satisfy identity and compositional conditions. By combining resources $\cl R$ over the pair $(X,A)$ for all finite sets $X$ and $A$ which admit the appropriate identity and compositional structure, one obtains a quantum resource theory in the sense of \cite[Definition 1]{chitambar} (e.g., the resource theory of local operations and shared randomness (LOSR), see Section \ref{s_qv}). Such resources theories are convex by Remark \ref{r_conve}, and, as we shall now see, induce norms on finite-dimensional trace class operators $\cl S_1(\bb{C}^A,\bb{C}^X)$ which quantify state conversion by resource channels (see, e.g., Corollary \ref{c_convert}).
\end{remark}

Let $\cl R$ be a separating family of block operator isometries over $(X,A)$. 
For $u\in M_n(\cl O_{X,A})$, let 
$$\|u\|^{(n)}_{\cl R} = \sup_{U\in \cl R} \left\|\phi_U^{(n)}(u)\right\|, \ \ \ u\in \cl O_{X,A}.$$
If no confusion arises, we will write $\|u\|_{\cl R}$ in the place of $\|u\|^{(n)}_{\cl R}$.

\begin{remark}\label{r_conve2}
\rm
If $\cl R$ is a separating family of block operator isometries over $(X,A)$ and $u\in M_n(\cl O_{X,A})$, 
then $\|u\|^{(n)}_{\cl R} = \|u\|^{(n)}_{\langle\cl R\rangle}$.
This is immediate from that fact that, if $V,W\in \cl R$, then 
\begin{equation}\label{eq_conve}
\phi^{(n)}_{V\oplus W}(u) = \phi^{(n)}_{V}(u) \oplus  \phi^{(n)}_{W}(u), \ \ \ u\in M_n(\cl O_{X,A}).
\end{equation}
\end{remark}

\begin{proposition}\label{p_opspst}
If $\cl R$ is a resource over $(X,A)$, then 
the family of norms
% $\mathfrak{s}(\cl R) = \left(\|\cdot\|_{\cl R}^{(n)}\right)_{n\in \bb{N}}$
$(\|\cdot\|_{\cl R}^{(n)})_{n\in \bb{N}}$
is an operator space structure on the vector space $\cl O_{X,A}$. 
\end{proposition}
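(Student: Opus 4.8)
The plan is to verify directly that the family of norms $(\|\cdot\|_{\cl R}^{(n)})_{n\in\bb N}$ satisfies Ruan's axioms (i) and (ii), using the fact that each map $\phi_U\colon\cl O_{X,A}\to\cl B(H_U,K_U)$ is completely contractive and that $\cl B(H_U,K_U)$ is itself an operator space. The one genuinely non-trivial point is that $\|\cdot\|_{\cl R}^{(n)}$ is actually a \emph{norm} and not just a seminorm, i.e.\ that it is positive-definite; this is exactly where the separating hypothesis is needed, so I would dispatch that first.

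First, for positive-definiteness: suppose $u\in M_n(\cl O_{X,A})$ with $\|u\|_{\cl R}^{(n)}=0$, so $\phi_U^{(n)}(u)=0$ for every $U\in\cl R$. Writing $u = \sum_{x\in X, a\in A} u_{x,a}\otimes \alpha(\epsilon_{x,a})$ with $u_{x,a}\in M_n$ under the identification $\cl O_{X,A}\cong\cl S_1^{A,X}$ of Proposition~\ref{p_OXAS1}, the vanishing of $\phi_U^{(n)}(u)$ for all $U$ forces, via the formula $\phi_U^{(n)}(u)^*\phi_U^{(n)}(u)=\sum_{x,x',a,a'} u_{x,a}^*u_{x',a'}\otimes \pi^R_\phi\bigl(\theta_U(u_{a,x})^*\theta_U(u_{a',x'})\bigr)$ and the definition~\eqref{eq_sUsi} of $s_{U,\sigma}$, that the state $s_{U,\sigma}$ annihilates a certain positive element of $\cl T_{X,A}$ built from $u$ for all $U\in\cl R$ and all states $\sigma$. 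Since $\cl R$ is separating, the family $\{s_{U,\sigma}\}$ separates $\{v^*v: v\in\cl O_{X,A}\}$, and a Cauchy--Schwarz/polarisation argument then shows the relevant Gram-type element vanishes in $\cl T_{X,A}$, which by the identification~\eqref{eq_dualofTXA} of $\cl T_{X,A}^{\rm d}$ (equivalently, by faithfulness of a suitable direct sum of the $\pi_U$) forces $u=0$. Alternatively, and more cleanly, one can note that the separating hypothesis guarantees that $\bigoplus_{U\in\cl R}\theta_U$ restricted to $\cl O_{X,A}$ is injective, so $\|\cdot\|_{\cl R}^{(n)}$ dominates the (genuine) norm $\|\cdot\|_{M_n(\cl O_{X,A})}$ on each $M_n(\cl O_{X,A})$ and is therefore itself a norm. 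It is also finite since each $\phi_U$ is completely contractive, so $\|\phi_U^{(n)}(u)\|\le\|u\|_{M_n(\cl O_{X,A})}$ uniformly in $U$; hence $\|\cdot\|_{\cl R}^{(n)}$ and $\|\cdot\|_{M_n(\cl O_{X,A})}$ are in fact equivalent, though we only need one inequality.

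Next, the two Ruan axioms are immediate from the corresponding properties of the operator spaces $\cl B(H_U,K_U)$ together with the identities $\phi_U^{(n)}$ is linear and $\phi_{V\oplus W}^{(n)}(u)=\phi_V^{(n)}(u)\oplus\phi_W^{(n)}(u)$ (cf.\ \eqref{eq_conve}). For axiom (i), given $v\in M_m(\cl O_{X,A})$ and $w\in M_n(\cl O_{X,A})$, for each $U\in\cl R$ we have $\phi_U^{(m+n)}(v\oplus w)=\phi_U^{(m)}(v)\oplus\phi_U^{(n)}(w)$, whose norm in $\cl B(H_U^{m+n},K_U^{m+n})$ equals $\max\{\|\phi_U^{(m)}(v)\|,\|\phi_U^{(n)}(w)\|\}$; taking the supremum over $U\in\cl R$ and interchanging the two suprema gives $\|v\oplus w\|_{\cl R}^{(m+n)}=\max\{\|v\|_{\cl R}^{(m)},\|w\|_{\cl R}^{(n)}\}$. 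For axiom (ii), given $\alpha\in M_{n,m}(\bb C)$, $\beta\in M_{m,n}(\bb C)$ and $v\in M_m(\cl O_{X,A})$, linearity of $\phi_U$ gives $\phi_U^{(n)}(\alpha v\beta)=\alpha\,\phi_U^{(m)}(v)\,\beta$ (scalar matrices pass through the amplified map), and since $\cl B(H_U,K_U)$ is an operator space this has norm at most $\|\alpha\|\,\|\phi_U^{(m)}(v)\|\,\|\beta\|$; taking the supremum over $U\in\cl R$ yields $\|\alpha v\beta\|_{\cl R}^{(n)}\le\|\alpha\|\,\|v\|_{\cl R}^{(m)}\,\|\beta\|$. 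This completes the verification that $(\cl O_{X,A},(\|\cdot\|_{\cl R}^{(n)})_n)$ is an abstract operator space.

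The main obstacle is the positive-definiteness step: the definition bundles together "separating" as a hypothesis precisely because without it $\|\cdot\|_{\cl R}^{(n)}$ could fail to be a norm, and one must translate the operator-theoretic statement "$\phi_U^{(n)}(u)=0$ for all $U\in\cl R$" into the language of the states $s_{U,\sigma}$ on $\cl T_{X,A}$ in order to invoke the separation property. I expect this to require a short but careful computation expressing $\|\phi_U^{(n)}(u)\|^2$ as the norm of a matrix with entries of the form $\pi^R_{\phi_U}(e_{x,x',a,a'})$ and then using that normal states of $\cl B(H_U)$ recover all of $\pi_U$ on $\cl C_{X,A}$ (via \cite[Lemma 5.1]{tt-QNS}), reducing to separation by the family $\{s_{U,\sigma}\}$. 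Everything else is bookkeeping that follows the pattern already used in the excerpt for \eqref{eq_TROopsp} and \eqref{eq_conve}.
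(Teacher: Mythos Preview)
Your proposal is correct and follows essentially the same approach as the paper, with one small simplification and one small wording slip.

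For positive-definiteness, the paper argues only at level $n=1$: it takes $u\in\cl O_{X,A}$ with $\|u\|_{\cl R}^{(1)}=0$, computes $s_{U,\sigma}(u^*u)=\|\phi_U(u)\zeta\|^2=0$ for every $U\in\cl R$ and every vector state $\sigma=\zeta\zeta^*$, and then invokes the separating hypothesis to conclude $u^*u=0$ in $\cl C_{X,A}$, hence $u=0$. Positive-definiteness at all levels then follows from axiom~(ii) via $u_{i,j}=e_i^* u e_j$, so your level-$n$ computation is not needed. In your alternative route, the claim that $\|\cdot\|_{\cl R}^{(n)}$ \emph{dominates} $\|\cdot\|_{M_n(\cl O_{X,A})}$ is stated with the wrong inequality (complete contractivity of $\phi_U$ gives the reverse); what you actually need, and what you correctly note later, is that injectivity on the finite-dimensional space $\cl O_{X,A}$ makes the two norms equivalent.

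For axiom~(i), your argument is cleaner than the paper's. You simply use $\sup_{U}\max\{a_U,b_U\}=\max\{\sup_U a_U,\sup_U b_U\}$, which is an elementary identity requiring no hypothesis on $\cl R$. The paper instead proves the inequality $\geq$ by invoking closure under direct sums (via $\phi_{V\oplus W}^{(n+m)}(u\oplus v)=\phi_V^{(n)}(u)\oplus\phi_W^{(m)}(v)$), which is unnecessary here; your observation shows that the direct-sum closure in the definition of a resource is not actually used in this proposition (it is needed elsewhere, e.g.\ for convexity of $\mathsf{QC}(\cl R)$).
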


\begin{proof}
Suppose that $u\in \cl O_{X,A}$ is such that $\|u\|_{\cl R}^{(1)} = 0$, and let 
$U = (U_{a,x})_{a,x}\in \cl R\cap M_{A,X}(\cl B(H,K))$.  
Write $u = \sum_{a\in A}\sum_{x\in X} \lambda_{a,x} u_{a,x}$, where $\lambda_{a,x}\in \bb{C}$, $a\in A$, $x\in X$. 
Then $u^*u\in \cl T_{X,A}$ and, if $\sigma = \zeta\zeta^*$ is a vector state on $\cl B(H)$ then, by (\ref{eq_sUsi}),\begin{eqnarray*}
s_{U,\sigma}(u^*u) 
& = & 
\sum_{x,x'\in X} \sum_{a,a'\in A} \bar{\lambda}_{a,x}\lambda_{a',x'}s_{{U,\sigma}}(e_{x,x',a,a'})\\
& = & 
\sum_{x,x'\in X} \sum_{a,a'\in A} \bar{\lambda}_{a,x}\lambda_{a',x'}\, \sigma( U_{a,x}^*U_{a',x'})\\
& = & 
\sum_{x,x'\in X} \sum_{a,a'\in A} \bar{\lambda}_{a,x}\lambda_{a',x'} \langle U_{a',x'}\zeta,U_{a,x}\zeta\rangle\\
& = & 
\left\| \sum_{x\in X} \sum_{a\in A} \lambda_{a,x} U_{a,x} \zeta\right\|^2
= 
\left\|\phi_U(u)\zeta\right\|^2=0.
\end{eqnarray*}
It follows that $u^*u = 0$ in $\cl C_{X,A}$, that is, $u = 0$ in $\cl O_{X,A}$, and thus $\|\cdot\|_{\cl R}$ is a norm on $\cl O_{X,A}$.

For $u\in M_n(\cl O_{X,A})$ and $\alpha,\beta\in M_{m,n}$, we have 
\begin{align*}
\|\alpha \cdot u \cdot \beta^*\|_{\cl R}^{(m)} 
& = 
\sup_{U\in \cl R} \|\phi_U^{(m)}(\alpha \cdot u \cdot \beta^*)\|
= 
\sup_{U\in \cl R} \|\alpha\cdot \phi_U^{(n)}(u)\cdot \beta^*\|^{(n)}_{\cl R}\\
& \leq 
\|\alpha\| \|\beta^*\| \sup_{U\in \cl R} \|\phi_U^{(n)}(u)\|^{(n)}_{\cl R}
= 
\|\alpha\| \|\beta^*\| \|u\|_{\cl R}^{(n)}.
\end{align*}

Finally, for $u\in M_n(\cl O_{X,A})$ and $v\in M_m(\cl O_{X,A})$, we have 
\begin{align*}
\|u \oplus v\|_{\cl R}^{(n + m)} 
& =  
\sup_{U\in \cl R} \|\phi_U^{(n+m)}(u \oplus v)\|
= 
\sup_{U\in \cl R} \|\phi_U^{(n)}(u) \oplus \phi_U^{(m)}(v)\|\\
& = 
\sup_{U\in \cl R} \max\{\|\phi_U^{(n)}(u)\|, \|\phi_U^{(m)}(v)\|\}\\
& \leq 
\max\left\{\sup_{V\in \cl R} \|\phi_V^{(n)}(u)\|, \sup_{W\in \cl R} \|\phi_W^{(m)}(v)\|\right\}\\
& = 
\max\left\{\|u\|_{\cl R}^{(n)},\|v\|_{\cl R}^{(m)}\right\}.
\end{align*}
On the other hand, note that if $V,W\in \cl R$ then (\ref{eq_conve}) implies
\begin{align*}
\|u \oplus v\|_{\cl R}^{(n + m)} 
& \geq 
\sup_{V,W\in \cl R} \left\|\phi_{V\oplus W}^{(n+m)}(u \oplus v)\right\|
= 
\sup_{V,W\in \cl R} \left\|\phi_{V\oplus W}^{(n)}(u) \oplus \phi_{V\oplus W}^{(m)}(v)\right\|\\
& \geq 
\sup_{V,W\in \cl R} \hspace{-0.1cm}\max\left\{\|\phi_{V}^{(n)}(u)\|, \|\phi_{W}^{(m)}(v)\|\right\}
\hspace{-0.1cm} = \hspace{-0.1cm}
\max\left\{\|u\|_{\cl R}^{(n)},\|v\|_{\cl R}^{(m)}\right\}.
\end{align*}
The proof is complete. 
\end{proof}

%\begin{remark}\label{r:u^*u} To ensure the norm condition, all that is required of the states $s_{U,\sigma}$ is separation of $\{u^*u : u\in \cl O_{X,A}\}$. 
%\end{remark}

The operator space, based on the vector space $\cl O_{X,A}$, resulting from Proposition \ref{p_opspst},
will be denoted by $\cl O^{\cl R}_{X,A}$. 
We note that, since the maps $\phi_U$ are completely contractive, 
the identity map $\cl O_{X,A}\to \cl O^{\cl R}_{X,A}$ is completely contractive, when $\cl O_{X,A}$ is equipped with the operator space structure described in Subsection \ref{s_S1}.

Let $\cl R$ be a resource, and $H_R$ be a Hilbert space.  For a unit vector $\xi\in \bb{C}^{X}\ten H_R$ and a positive contraction $P\in\cl B(\bb{C}^{A}\ten H_R)$, we let 
\begin{equation}\label{eq_omegaRP}
\omega_{\cl R}(\xi,P) = \sup_{\Gamma\in \mathsf{QC}(\cl R)} {\rm Tr}((\Gamma\otimes \id_R)(\xi\xi^*)P)
\end{equation}
be the \emph{$\cl R$-value} of the pair $(\xi,P)$. When $P=\rho$ is a density operator, the $\cl R$-value quantifies the ability to convert $\xi\xi^*$ to $\rho$ using local quantum channels from the resource $\cl R$. Note that when $\cl R$ is the family of all block operator isometries (so that $\mathsf{QC}(\cl R)=\mathsf{QC}(M_X,M_A)$), $\omega_{\cl R}(\xi,\rho)=1$ corresponds to perfect local state conversion: $\rho=(\Gamma\ten\id)\xi\xi^*$ for some $\Gamma\in\mathsf{QC}(M_X,M_A)$. Such local state conversion has recently been studied through the lens of quantum majorisation
\cite{ggpp,gouretal}. When $X$ and $A$ are bipartite, then for general $\cl R$ and $P$, the $\cl R$-value will quantify the maximal probability of success when playing a 2-player quantum game specified by $(\xi,P)$ using strategies from the resource $\cl R$ (see Section \ref{s_qv} and, in particular, Theorem \ref{th_qcval}). To that end, the following theorem is one of our main tools. In the statement and its proof, we suppress the notation for the isomorphism $\alpha$ from Proposition \ref{p_OXAS1} and,
for Hilbert spaces $H, K$ and $H_R$, 
use the notation $\Tr_R : \cl S_1(H\otimes H_R,K\otimes H_R)\to \cl S_1(H,K)$
for the partial trace along 
the trace class on the Hilbert space $H_R$. Complex conjugation is with respect to the standard basis. 

\begin{theorem}\label{th_R-val}
Let $\cl R$ be a resource over $(X,A)$, 
$\xi\in \bb{C}^{X}\ten H_R$ be a unit vector, and $P\in\cl B(\bb{C}^{A}\ten H_R)$ be a positive contraction with countable discrete spectrum and spectral decomposition
$P = \sum_{n=1}^\infty\lm_n\gamma_n\gamma_n^*$. 
Let $\rho_n = \sqrt{\lm_n}\,\overline{\Tr_R(\xi\gamma_n^*)}$, viewed as an element of $\cl O_{X,A}$, $n\in \bb{N}$.
Then
$$\omega_{\cl R}(\xi,P) = \norm{[\rho_n]}^2_{M_{\infty,1}(\cl O^{\cl R}_{X,A})}.$$
\end{theorem}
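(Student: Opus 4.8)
The plan is to unwind both sides of the claimed identity and exhibit them as the same supremum over the resource $\cl R$. First I would observe that, by the spectral decomposition of $P$ and the fact that $\omega_{\cl R}(\xi,P)$ is linear in $P$, we have
$$\omega_{\cl R}(\xi,P) = \sup_{\Gamma\in\mathsf{QC}(\cl R)} \sum_{n=1}^\infty \lm_n \la (\Gamma\ten\id_R)(\xi\xi^*)\gamma_n,\gamma_n\ra,$$
so it suffices to compute the summands in terms of $\cl O_{X,A}$. The key bookkeeping step is to rewrite, for a fixed $U\in\cl R$ acting on $H^X\to K^A$ and a normal state $\sigma$ on $\cl B(H)$, the quantity $\Tr((\Gamma_{U,\sigma}\ten\id_R)(\xi\xi^*)\,\gamma_n\gamma_n^*)$ using the formula $\Gamma_{U,\sigma}^\#(\rho)=(\id\ten\tr_K)(U(\rho\ten\sigma)U^*)$ from \eqref{e:Gammap}, together with the definition \eqref{eq:Gamma-U-sigma}. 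Writing $\sigma=\sum_j \zeta_j\zeta_j^*$ (a decomposition into vector states, summing to the normal state $\sigma$), each summand unpacks as a sum of terms of the form $\|\phi_U(\rho_n^{(j)})\zeta_j'\|^2$-type expressions, where the partial-trace-against-$\gamma_n$ operation on $\xi\xi^*$ produces exactly the vectors $\overline{\Tr_R(\xi\gamma_n^*)}\in\cl O_{X,A}$ (after the complex conjugation built into the identifications, matching the definition of $\rho_n$). This is the computational heart of the argument and where I expect the main obstacle: keeping the transposes, complex conjugations and the direction of the partial traces straight, so that the element of $\cl O_{X,A}$ that emerges is literally $\alpha^{-1}(\rho_n)$ as defined in the statement, and the weight $\lm_n$ lands under a square root to become $\sqrt{\lm_n}$.

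Having reduced to the level of $\cl O_{X,A}$, the next step is to recognise the resulting supremum as the column-matrix norm $\norm{[\rho_n]}^2_{M_{\infty,1}(\cl O^{\cl R}_{X,A})}$. By the definition of the operator space structure $\cl O^{\cl R}_{X,A}$ in Proposition \ref{p_opspst}, and of the $M_{\infty,1}$ norm as a supremum over finite truncations together with formula \eqref{e:column} for column Hilbert space norms, we have
$$\norm{[\rho_n]}^2_{M_{\infty,1}(\cl O^{\cl R}_{X,A})} = \sup_{U\in\cl R}\, \sup_{N}\, \bignorm{\sum_{n=1}^N \phi_U(\rho_n)^*\phi_U(\rho_n)},$$
and the inner expression is $\sup\{\sum_n \|\phi_U(\rho_n)\zeta\|^2 : \|\zeta\|\le 1\}$, i.e. a supremum over unit vectors $\zeta$ in the domain Hilbert space $H$ of $U$. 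I would then match this against the expression for $\omega_{\cl R}(\xi,P)$ obtained in the first part: a vector state $\sigma=\zeta\zeta^*$ on $\cl B(H)$ gives $\Gamma_{U,\zeta\zeta^*}\in\mathsf{QC}(\cl R)$, so the two suprema range over the same set once we know (via Remark \ref{r_conve} and the direct-sum closure of $\cl R$, which lets us pass from mixed normal states to vector states by enlarging the Hilbert space) that it is enough to test $\omega_{\cl R}$ on vector states. The monotone convergence theorem handles the passage between the countable sum and its finite truncations on both sides, so the two quantities coincide.

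Finally I would address the two directions of the resulting inequality as a clean sanity check rather than separate arguments: the computation above is an equality at the level of each fixed $U$ and $\zeta$, namely $\Tr((\Gamma_{U,\zeta\zeta^*}\ten\id_R)(\xi\xi^*)P) = \sum_n \|\phi_U(\rho_n)\zeta\|^2$, and taking the supremum over $U\in\cl R$ and unit vectors $\zeta$ yields both sides of the theorem simultaneously. The one point requiring a little care is that $\mathsf{QC}(\cl R)$ allows arbitrary normal states, not just vector states, on the varying domains $H_U$; here the closure of $\cl R$ under finite direct sums (Remark \ref{r_conve}) shows that any finite convex combination of the channels $\Gamma_{U,\cdot}$ — in particular any $\Gamma_{U,\sigma}$ with $\sigma$ a finitely-supported normal state — is again of the form $\Gamma_{U',\zeta'(\zeta')^*}$ for some $U'\in\cl R$ and unit vector $\zeta'$, and a density/continuity argument (the expression \eqref{eq_omegaRP} is norm-continuous in $\sigma$, and finitely supported normal states are norm-dense in the normal state space) extends this to all normal states. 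This completes the identification.
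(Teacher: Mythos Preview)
Your proposal is correct and follows essentially the same route as the paper: expand $P$ spectrally, compute $\Tr((\Gamma_{U,\sigma}\otimes\id_R)(\xi\xi^*)P)$ in coordinates to obtain $\sum_n \sigma(\phi_U(\rho_n)^*\phi_U(\rho_n))$, then take suprema. The one place where you take an unnecessary detour is the treatment of general normal states: once the identity $\Tr((\Gamma_{U,\sigma}\otimes\id_R)(\xi\xi^*)P)=\sum_n\sigma(\phi_U(\rho_n)^*\phi_U(\rho_n))$ is established for \emph{all} normal states $\sigma$ (which the coordinate computation yields directly), the supremum over $\sigma$ of $\sigma(T)$ for the positive operator $T=\sum_n\phi_U(\rho_n)^*\phi_U(\rho_n)$ is simply $\|T\|$, with no need to reduce to vector states or invoke direct-sum closure of $\cl R$.
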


\begin{proof}
  Given $U\in \mathcal B(H^X,K^A)\cap \mathcal{R}$ and a normal
   state $\sigma\in \cl S_1(H)$, let
  $\Gamma=\Gamma_{U,\sigma}\in \mathsf{QC}(\mathcal R)$.
  By~\eqref{eq:Gamma-U-sigma},
  \[ \tr(\Gamma_{U,\sigma}(\epsilon_{x,x'}) \eps_{a',a})= \sigma(U_{a,x}^*U_{a',x'}),\quad x,x'\in X,\;a,a'\in A.\]
Decompose 
$\xi = \sum_{x\in X} e_x\ten \xi_x$ and $\gamma_n = \sum_{a\in A} e_a\ten \gamma_{n,a}$ for some $\xi_x,\gamma_{n,a}\in H_R$. Then
%\marginal{\newcommand{\bC}{\mathbb{C}}R. Might be it be simpler to define $M_n$ using $\overline{\tr_R(\xi\gamma_n^*)}$? If the conjugates are on the vectors we need to specify a basis for $H_R$, but we already have canonical bases for $\bC^A$ and $\bC^X$. JC I agree. L. Yes, But define what $\bar M$ is for $M\in\cl S_1^{A,X}$, i.e. if $M=\alpha_{x,a}\epsilon_{x,a}$ then $\bar M=\sum\overline{\alpha_{x,a}}\epsilon_{x,a}$.
%  IT. Done.}
\begin{equation}\label{eq_Mn=}
\rho_n=\textstyle{\sqrt{\lm_n}}\,\overline{\Tr_R({\xi}{\gamma}_n^*)}=\sqrt{\lm_n}\sum_{x\in X}\sum_{a\in A}\la\gamma_{n,a},\xi_x\ra \eps_{x,a}.
\end{equation}
Since the sum in 
the spectral decomposition of $P$ is weak* convergent, taking into account (\ref{eq_Mn=}) we have
\begin{align*}
  &\Tr\big((\Gamma_{U,\sigma}\otimes \id_R)(\xi\xi^*)P\big)=
  \sum_{n=1}^\infty\lm_n \Tr\big((\Gamma_{U,\sigma}\otimes \id_R)(\xi\xi^*)(\gamma_n\gamma_n^*)\big)\\
  &=\sum_{n=1}^\infty
  \sum_{x,x' \in X}
\sum_{a,a'\in A}
  \lm_n \Tr\big((\Gamma_{U,\sigma}\otimes \id_R)(\eps_{x,x'}\ten \xi_x\xi_{x'}^*)(\eps_{a',a}\ten \gamma_{n,a'}\gamma_{n,a}^*)\big)\\
  &=\sum_{n=1}^\infty
  \sum_{x,x' \in X}
\sum_{a,a'\in A}\lm_n \Tr(\Gamma_{U,\sigma}(\eps_{x,x'})\eps_{a',a})\Tr\left((\xi_x\xi_{x'}^*)(\gamma_{n,a'}\gamma_{n,a}^*)\right)\\
    &=\sum_{n=1}^\infty
    \sum_{x,x' \in X}
\sum_{a,a'\in A}
    \lm_n \sigma(U_{a,x}^*U_{a',x'})\la\gamma_{n,a'},\xi_{x'}\ra\la \xi_x,\gamma_{n,a}\ra\\
    &=\sum_{n=1}^\infty
    \sigma\left(\phi_U(\rho_n)^*\phi_U(\rho_n)\right).
\end{align*}

It follows that the series $\sum_{n=1}^\infty \phi_U(\rho_n)^*\phi_U(\rho_n)$ converges weak* in $\cl B(H)$, and that
\begin{align}
\sup_{\sigma}\Tr\big((\Gamma_{U,\sigma}\otimes \id_R)(\xi\xi^*)P\big)&=\bignorm{\sum_{n=1}^\infty \phi_U(\rho_n)^*\phi_U(\rho_n)} \nonumber\\
&=\norm{[\phi_U(\rho_n)]}^2_{M_{\infty,1}(\cl B(H,K))}.\label{eq:U-value}
\end{align}
Taking the supremum over all $U\in \mathcal{R}$, we obtain the desired formula for $\omega_{\mathcal{R}}(\xi,P)$.
\end{proof}

\begin{remark}\label{r:ortho} The orthonormality of the eigenvectors $\gamma_n$ of $P$ was 
not used in the proof above. A similar result therefore applies to any decomposition of $P$ into a 
countable sum of positive rank one operators.
\end{remark}

A resource $\cl R$ over the pair $(X,A)$ will be called \emph{closed} if the set $\mathsf{QC}(\cl R)$ is closed.

\begin{corollary}\label{c_convert}
Let $\cl R$ be a closed resource over $(X,A)$, and 
$\xi\in \bb{C}^X$ and $\gamma\in \bb{C}^A$ be unit vectors. The following are equivalent:
\begin{itemize}
\item[(i)] there exists $\Gamma\in \mathsf{QC}(\cl R)$ such that $\Gamma(\xi\xi^*) = \gamma\gamma^*$;

\item[(ii)] $\left\|\overline{\xi\gamma^*}\right\|_{\cl R} = 1$.
\end{itemize}
\end{corollary}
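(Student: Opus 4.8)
The plan is to deduce this from Theorem~\ref{th_R-val} applied with the single rank-one positive contraction $P = \gamma\gamma^*$. In the notation of that theorem, $P$ has spectral decomposition consisting of the single term $\lambda_1 = 1$, $\gamma_1 = \gamma$, so the associated element of $\cl O_{X,A}$ is $\rho_1 = \overline{\Tr_R(\xi\gamma^*)} = \overline{\xi\gamma^*}$ (here $H_R = \bb{C}$, so the partial trace is the identity). Theorem~\ref{th_R-val} then gives
$$\omega_{\cl R}(\xi,\gamma\gamma^*) = \norm{\overline{\xi\gamma^*}}^2_{\cl O^{\cl R}_{X,A}} = \norm{\overline{\xi\gamma^*}}^2_{\cl R}.$$
So condition (ii) is equivalent to $\omega_{\cl R}(\xi,\gamma\gamma^*) = 1$, and it remains to show that this last equality is equivalent to (i).

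First I would observe that for any $\Gamma \in \mathsf{QC}(\cl R)$, since $\xi\xi^*$ is a state and $\Gamma$ is a channel, $\Gamma(\xi\xi^*)$ is a state, and $\Tr(\Gamma(\xi\xi^*)\gamma\gamma^*) = \langle \Gamma(\xi\xi^*)\gamma,\gamma\rangle \le 1$, with equality if and only if $\gamma$ is an eigenvector of the state $\Gamma(\xi\xi^*)$ with eigenvalue $1$, which (since $\gamma$ is a unit vector and $\Gamma(\xi\xi^*)$ has trace $1$) forces $\Gamma(\xi\xi^*) = \gamma\gamma^*$. This immediately gives (i)$\Rightarrow$(ii): if $\Gamma(\xi\xi^*) = \gamma\gamma^*$ for some $\Gamma \in \mathsf{QC}(\cl R)$, then $\Tr(\Gamma(\xi\xi^*)\gamma\gamma^*) = 1$, so $\omega_{\cl R}(\xi,\gamma\gamma^*) \ge 1$; the reverse inequality $\le 1$ is the observation just made, so $\omega_{\cl R}(\xi,\gamma\gamma^*) = 1$ and hence $\norm{\overline{\xi\gamma^*}}_{\cl R} = 1$.

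For (ii)$\Rightarrow$(i), suppose $\omega_{\cl R}(\xi,\gamma\gamma^*) = 1$. By definition of $\omega_{\cl R}$ as a supremum, there is a sequence $\Gamma_k \in \mathsf{QC}(\cl R)$ with $\Tr(\Gamma_k(\xi\xi^*)\gamma\gamma^*) \to 1$. Here is where closedness of $\cl R$ enters: since $\mathsf{QC}(\cl R)$ is closed and is a subset of the (compact, finite-dimensional) set $\mathsf{QC}(M_X,M_A)$, it is compact, so after passing to a subsequence we may assume $\Gamma_k \to \Gamma$ for some $\Gamma \in \mathsf{QC}(\cl R)$. By continuity of $\Lambda \mapsto \Tr(\Lambda(\xi\xi^*)\gamma\gamma^*)$ we get $\Tr(\Gamma(\xi\xi^*)\gamma\gamma^*) = 1$, and by the eigenvector argument above this forces $\Gamma(\xi\xi^*) = \gamma\gamma^*$, which is (i).

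I do not anticipate a serious obstacle: the only subtle points are (a) confirming that the hypotheses of Theorem~\ref{th_R-val} are met by the rank-one $P = \gamma\gamma^*$ with $H_R = \bb{C}$ (trivially true, since a finite-rank positive contraction has countable discrete spectrum), and (b) that ``closed'' for $\cl R$ does give compactness of $\mathsf{QC}(\cl R)$ inside the finite-dimensional compact convex set $\mathsf{QC}(M_X,M_A)$ — a closed subset of a compact set is compact. The rest is the elementary fact that a unit vector $\gamma$ with $\langle\eta\gamma,\gamma\rangle = 1$ for a state $\eta$ forces $\eta = \gamma\gamma^*$.
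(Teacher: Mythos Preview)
Your proof is correct and follows essentially the same approach as the paper: apply Theorem~\ref{th_R-val} with $H_R=\bb{C}$ and $P=\gamma\gamma^*$ to identify $\omega_{\cl R}(\xi,\gamma\gamma^*)=\|\overline{\xi\gamma^*}\|_{\cl R}^2$, then use compactness of $\mathsf{QC}(\cl R)$ (closed subset of the compact set $\mathsf{QC}(M_X,M_A)$) to attain the supremum, together with the elementary fact that $\Tr(\eta\,\gamma\gamma^*)=1$ for a state $\eta$ forces $\eta=\gamma\gamma^*$. Your write-up is in fact slightly more explicit than the paper's on these last two points.
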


\begin{proof}
(i)$\Rightarrow$(ii) Letting $P = \gamma\gamma^*$ in the formula (\ref{eq_omegaRP}), we see that 
$\omega_{\cl R}(\xi,P) = 1$. 
Theorem \ref{th_R-val} now implies that 
$\left\|\overline{\xi\gamma^*}\right\|_{\cl R} = 1$.

(ii)$\Rightarrow$(i) By Theorem \ref{th_R-val}, 
$\omega_{\cl R}(\xi,P) = 1$. Since the set $\mathsf{QC}(\cl R)$ is compact, there exists $\Gamma\in \mathsf{QC}(\cl R)$ such that 
${\rm Tr}(\Gamma(\xi\xi^*)P) = 1$; the positivity of $\Gamma$ now implies that $\Gamma(\xi\xi^*) = \gamma\gamma^*$.
\end{proof}

See Corollary \ref{LOSR_convert} for applications to the resources of local operations and shared randomness (LOSR) and local operations and classical communication (LOCC).

\subsection{Values of quantum hypergraphs}

%We now identify the value of a probabilistic quantum hypergraph 
%as the norm of a canonical element in the trace class operators, associated with the hypergraph. 

%We consider two interrelated contexts. The pair $(\xi,P)$ can be thought of as a 
%fuzzy entanglement-assisted hypergraph, while the mapping hypergraph is a 
%straightforward quantisation of the classical notion. 

We view $\bb{P}_X$ and $\cl P_A$ as metric spaces under the operator norm.
A \emph{probabilistic quantum hypergraph}
over $(X,A)$ is a triple 
$\bb{H} = (\bb{P}_X,\nph,\mu)$, where $\mu$ is a regular Borel probability measure on $\bb{P}_X$ and 
$\nph : \bb{P}_X \to \cl P_A$ is a (Borel) measurable function.
We will often abbreviate this as $\bb H=(\nph,\mu)$, where context allows.
Recall that a (classical) hypergraph $E$ with vertex set $X$ is a set of subsets of $X$, referred to as \emph{hyperedges}. A hypergraph $E$ can be identified with a subset (denoted in the same way) $E\subseteq X\times A$, where each element $a\in A$ gives rise to the hyperedge $E(a) = \{x\in X : (x,a)\in E\}$. The hypergraph $E$ is called \emph{probabilistic} if its vertex set $X$ is equipped with a probability distribution $\mu$. 
A probabilistic hypergraph $(E,\mu)$ gives rise to the probabilistic quantum hypergraph 
$\bb{H} = (\bb{P}_X, \nph, \mu)$, where 
$\mathrm{supp}(\mu) \subseteq \{\epsilon_{x,x}\}_{x\in X}$ and
$$\nph(\epsilon_{x,x}) = \sum_{a\in E^{-1}(x)} \epsilon_{a,a}, \ \ \ x\in X.$$
%An intermediate notion is that of a 
%\emph{classical-to-quantum probabilistic hypergraph}: this is a triple $(\bb{P}_X,\nph,\mu)$ with $\mathrm{supp}(\mu)=\{\epsilon_{x,x}\}_{x\in X}$ and $\nph$ is measurable.

Given a probabilistic quantum hypergraph $\bb{H} = (\bb{P}_X,\nph,\mu)$ over $(X,A)$, and a resource $\cl R$ over $(X,A)$, we let

%\begin{equation}\label{eq_Q}
$$\omega_{\cl R}(\bb{H}) = \sup_{\Gamma\in \mathsf{QC}(\cl R)} 
\int_{\bb{P}_X} \Tr\left(\Gamma(p)\nph(p)\right)d\mu(p)$$
%\end{equation}
be the \emph{$\cl R$-value} of $\bb{H}$. This quantity has a similar operational interpretation as described above Theorem \ref{th_R-val}. We now establish a concrete expression for the $\cl R$-value in terms of the operator space structure associated with $\cl R$, which may be viewed as a ``semi-classical'' generalisation of a special case of Theorem \ref{th_R-val}, 
where $P$ has continuous spectrum. We require the following instance of measurable selection.

\begin{theorem}\label{t:azoff}\cite[Theorem 1]{azoff} Let $M$ and $N$ be complete separable metric spaces, $E$ be a closed $\sigma$-compact subset of $M\times N$ and $\pi_1 : M\times N \to M$ be the projection onto the first coordinate.
Then $\pi_1(E)$ is a Borel set in $M$ and there exists a Borel function $f:\pi_1(E)\to N$ whose graph is contained in $E$. 
\end{theorem}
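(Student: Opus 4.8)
Theorem~\ref{t:azoff} is quoted from \cite[Theorem~1]{azoff}, so strictly no proof is needed; but here is how I would argue it. The plan is to reduce to the case in which $E$ itself is compact, and then to construct the selection by a greedy dyadic refinement of the fibres, the only subtle point being a measurability bookkeeping.

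\emph{Reduction.} First I would write $E=\bigcup_m L_m$ with each $L_m$ compact and replace $L_m$ by $K_n:=L_1\cup\dots\cup L_n$, obtaining an increasing sequence of compact sets with union $E$. Then $\pi_1(E)=\bigcup_n\pi_1(K_n)$ is a countable union of compact sets, hence $F_\sigma$ and in particular Borel; and $A_1:=\pi_1(K_1)$, $A_n:=\pi_1(K_n)\setminus\pi_1(K_{n-1})$ for $n\geq 2$, is a Borel partition of $\pi_1(E)$. So it suffices to produce, for each compact $K\subseteq M\times N$, a Borel map $g_K\colon\pi_1(K)\to N$ whose graph lies in $K$: one then defines $f$ to agree with $g_{K_n}$ on $A_n$, which is Borel (it agrees with Borel maps on the pieces of a countable Borel partition) and has graph contained in $E$.

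\emph{The compact case.} Here I would fix a countable base $\{W_j\}_j$ for the topology of $N$ and, for each $k$, the subfamily $\mathcal W_k=\{j:\mathrm{diam}(W_j)<2^{-k}\}$. The one fact I would lean on is that for every closed $C\subseteq N$,
\[
\{x\in\pi_1(K):K_x\cap C\neq\emptyset\}=\pi_1\bigl(K\cap(M\times C)\bigr)
\]
(where $K_x=\{y:(x,y)\in K\}$) is the projection of a compact set, hence compact, and so Borel. Then, for $x\in\pi_1(K)$, I would build nested nonempty compact sets $F_0(x)=K_x$ and $F_{k+1}(x)=F_k(x)\cap\overline{W_{j_{k+1}(x)}}$, where $j_{k+1}(x)$ is the least index in $\mathcal W_{k+1}$ with $F_k(x)\cap\overline{W_{j_{k+1}(x)}}\neq\emptyset$ (such an index exists since $F_k(x)\neq\emptyset$ and the $W_j$ of small diameter cover $N$). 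Since $\mathrm{diam}(F_{k+1}(x))<2^{-(k+1)}$ and the $F_k(x)$ are decreasing nonempty compacts, $\bigcap_k F_k(x)$ is a single point $g_K(x)\in K_x$, whence $(x,g_K(x))\in K$. For measurability I would argue by induction on $k$ that each set $\{x:(j_1(x),\dots,j_k(x))=(i_1,\dots,i_k)\}$ is a finite Boolean combination of sets $\pi_1(K\cap(M\times C))$ with $C$ a finite intersection of closures $\overline{W_i}$, hence Borel; consequently $g_K=\lim_k c_{j_k(\cdot)}$ (for any fixed choice of points $c_j\in\overline{W_j}$) is a pointwise limit of Borel maps with countable range, hence Borel.

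\emph{Main obstacle.} The conceptual content is light; the only place that needs care is the measurability bookkeeping in the compact case --- verifying that the ``least index'' functions $j_k$ are Borel --- which, as indicated above, reduces entirely to the elementary fact that the projection of a compact set is compact.
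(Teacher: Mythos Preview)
The paper does not prove this statement; it merely cites it from Azoff, so there is no ``paper's own proof'' to compare against. Your argument is a correct, self-contained proof along standard lines: the $\sigma$-compactness hypothesis lets you reduce to the compact case (and gives $\pi_1(E)$ as $F_\sigma$ for free), and your greedy dyadic refinement in the compact case is essentially the Kuratowski--Ryll-Nardzewski construction specialized to the situation where the fibre map $x\mapsto K_x$ takes compact values. The measurability bookkeeping is handled correctly; in particular, your observation that the level sets $\{x:(j_1(x),\dots,j_k(x))=(i_1,\dots,i_k)\}$ are \emph{finite} Boolean combinations of sets of the form $\pi_1(K\cap(M\times C))$ (finite because only finitely many indices precede a given $i_k$) is exactly what makes the induction go through without appeal to analytic-set machinery.
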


Similarly to \cite[Corollary 2]{azoff}, Theorem \ref{t:azoff} allows us to measurably select a spectral decomposition.
We let $\bb{C}^A_{1}$ be the unit sphere of $\bb{C}^A$.

\begin{lemma}\label{l:selection} 
Let $A$ be a finite set. There exist Borel functions $\lambda_a:\cl P_A\to\{0,1\}$ and $u_a:\cl P_A\to \bb{C}^A_{1}$, $a\in A$, such that $p = \sum_{a\in A}\lambda_a(p)u_a(p)u_a(p)^*$ is a spectral decomposition, $p\in\cl P_A$.
\end{lemma}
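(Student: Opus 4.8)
The plan is to produce all the functions $\lambda_a,u_a$ ($a\in A$) simultaneously, as the coordinates of a single Borel selection obtained by applying Azoff's Theorem~\ref{t:azoff} in the manner of \cite[Corollary~2]{azoff}. Take $M=\cl P_A$, a compact metric space in the operator norm, and $N=(\bb{C}^A_1)^A\times\{0,1\}^A$, the product of $|A|$ copies of the unit sphere of $\bb{C}^A$ with the finite discrete space $\{0,1\}^A$; then $N$, and hence $M\times N$, is a compact metric space. Inside $M\times N$ I would consider
\[
E=\Big\{\big(p,(u_a)_a,(\lambda_a)_a\big)\in M\times N\ :\ p=\sum_{a\in A}\lambda_a u_a u_a^*\ \text{ and }\ \langle u_a,u_b\rangle=0\ \text{for }a\neq b\text{ with }\lambda_a=\lambda_b=1\Big\}.
\]
For each fixed value of $(\lambda_a)_a\in\{0,1\}^A$ the corresponding slice of $E$ is cut out of a clopen subset of $M\times N$ by continuous equations, so $E$ is a finite union of closed sets, hence closed in $M\times N$ and therefore compact (in particular $\sigma$-compact); thus the hypotheses of Theorem~\ref{t:azoff} are met.

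Next I would verify that $\pi_1(E)=\cl P_A$. Given $p\in\cl P_A$, set $r=\rank(p)$, choose an orthonormal basis $v_1,\dots,v_r$ of $\ran(p)$ and extend it to an orthonormal basis $v_1,\dots,v_{|A|}$ of $\bb{C}^A$; fixing a bijection $A\cong\{1,\dots,|A|\}$, put $u_a=v_a$ for every $a$, and $\lambda_a=1$ for $a\leq r$, $\lambda_a=0$ otherwise. Then $\sum_{a\in A}\lambda_a u_a u_a^*=\sum_{a\leq r}v_av_a^*=p$ and the orthogonality condition holds, so $\big(p,(u_a)_a,(\lambda_a)_a\big)\in E$; hence $\pi_1(E)=\cl P_A$. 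Theorem~\ref{t:azoff} now yields a Borel map $f\colon\cl P_A=\pi_1(E)\to N$ whose graph lies in $E$. Writing $f(p)=\big((u_a(p))_a,(\lambda_a(p))_a\big)$, the coordinate maps $u_a\colon\cl P_A\to\bb{C}^A_1$ and $\lambda_a\colon\cl P_A\to\{0,1\}$ are Borel, and for every $p$ we have $p=\sum_{a\in A}\lambda_a(p)u_a(p)u_a(p)^*$ with $\{u_a(p):\lambda_a(p)=1\}$ an orthonormal family; this is precisely a spectral decomposition of $p$, which completes the argument.

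I do not expect any real obstacle here: once $E$ is arranged to be closed and $\sigma$-compact and to surject onto $\cl P_A$, Azoff's theorem does all the work. The only points needing (routine) attention are the closedness of $E$---which is immediate after splitting along the finitely many values of $(\lambda_a)_a$---and the surjectivity of $\pi_1|_E$, which is just the statement that every projection on $\bb{C}^A$ admits an orthonormal eigenbasis.
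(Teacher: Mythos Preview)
Your proof is correct and follows essentially the same approach as the paper's: both apply Azoff's Theorem~\ref{t:azoff} to a closed set encoding spectral decompositions of projections, then extract the desired Borel functions as coordinate maps of the resulting selection. The only cosmetic difference is that the paper packages the eigenvectors as the columns of a unitary $u\in\cl U(M_A)$ and the eigenvalues as a diagonal matrix $d\in\cl D_A$ (taking $E=\{(p,u,d):p=udu^*\}$), whereas you parametrise directly by tuples in $(\bb{C}^A_1)^A\times\{0,1\}^A$ with an explicit orthogonality constraint; neither choice offers any real advantage over the other.
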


\begin{proof} 
Let $\cl U(M_A)$ denote the unitary group of $M_A$, and
$$E = \{(p,u,d)\in \cl P_A\times \cl U(M_A)\times \cl D_A :  p=udu^*\}.$$
Since $\cl P_A$, $\cl U(M_A)$, and $\cl D_A$ are closed (hence complete) subsets of $M_A$, and the operations of multiplication and adjoint are continuous, $E$ is a closed ($\sigma$-compact) subset of the complete (separable) space $\cl P_A\times (\cl U(M_A)\times \cl D_A)$. By the spectral theorem, $\pi_1(E) = \cl P_A$. Hence, Theorem \ref{t:azoff} yields a Borel function $f:\cl P_A\to \cl U(M_A)\times \cl D_A$ whose graph is contained in $E$, that is, $f(p)=(u(p),d(p))$,
with $p=u(p)d(p)u(p)^*$. Since coordinate projections are Borel, the maps $u:=\pi_1\circ f:\cl P_A\rightarrow\cl U(M_A)$ and $d:=\pi_2\circ f:\cl P_A\to\cl D_A$ are Borel and satisfy $p=u(p)d(p)u(p)^*$, $p\in\cl P_A$. Let $u_1(p),\dots,u_{|A|}(p)$ denote the column vectors of $u(p)$, and let $\lm_1(p),\dots,\lm_{|A|}(p)$ denote the diagonal entries of $d(p)$. Then $u_a(p) = u(p)e_a$ 
and $\lm_a(p)=\la d(p)e_a,e_a\ra$. 
By the continuity of evaluation at vectors and of the map 
$\la (\cdot)e_a,e_a\ra$, we have that the functions $u_a:\cl P_A\to\bb{C}^A$ and $\lm_a:\cl P_A\to\bb{C}$ are Borel with ranges in 
$\bb{C}^A_{1}$ and $\{0,1\}$, respectively. Finally,
\[p=u(p)d(p)u(p)^* = \sum_{a\in A}\lm_a(p)u(p)\epsilon_{a,a}u(p)^* = \sum_{a\in A}\lambda_a(p)u_a(p)u_a(p)^*.\qedhere\]
\end{proof}

Let $\bb{H} = (\bb{P}_X,\nph,\mu)$ be a probabilistic quantum hypergraph over $(X,A)$. 
Apply Lemma \ref{l:selection} to obtain Borel functions $\lambda_x:\cl P_X\to\{0,1\}$, 
$u_x:\cl P_X\to\bb{C}^X_{1}$ and 
%$\lm_x:\cl P_X\to \bb{C}$ and 
$\eta_a:\cl P_A\to\bb{C}^A$ such that
$$p = \sum_{x\in X}\lm_x(p)u_x(p)u_x(p)^*
\ \mbox{ and } \ q = \sum_{a\in A}\eta_a(q)\eta_a(q)^*, \quad p\in\cl P_X, \ q\in\cl P_A,$$
where we have incorporated the eigenvalue terms corresponding to $q$ from the statement of Lemma 
\ref{l:selection} into $\eta_a(q)$ 
(resulting, by the continuity of scalar multiplication, in the Borel measurability of 
$\eta_a$, $a\in A$).
Let $\xi : \bb{P}_X\rightarrow \bb{C}^X_{1}$ be the Borel function, given by
$$\xi(p) = \sum_{x\in X}\lambda_x(p)u_x(p),\quad p\in\bb{P}_X;$$
we consider $\xi$ as an element of the $\bb{C}^X$-valued space $L^2(\bb{P}_X,\mu; \bb{C}^X)$. 

We equip $A$ with the discrete topology and counting measure $|\hspace{-0.05cm}\cdot\hspace{-0.05cm}|$, and write 
$\tilde{\mu} = \mu\times |\hspace{-0.05cm}\cdot\hspace{-0.05cm}|$; thus, $\tilde{\mu}$ 
is a Borel measure on $\bb{P}_X\times A$.
Let $\bar{\xi} (\bar{\eta}\circ\vphi)^* 
: \bb{P}_X\times A \to \cl{S}_1^{A,X}$ be the (Borel) function, given by

\begin{equation}\label{ximueta}
\left(\bar{\xi}(\bar{\eta}\circ\vphi)^*\right) (p,a) = \overline{\xi(p)} \ \overline{\eta_a(\vphi(p))}^*, \ \ \ 
p\in \bb{P}_X, a\in A.
\end{equation}
Identifying $\cl S_1^{A,X}$ with $\cl O_{X,A}$ via Proposition \ref{p_OXAS1}, and using its 
finite dimensionality, in the next theorem 
we view $\bar{\xi} (\bar{\eta}\circ\vphi)^*$ as an element of 
the Haagerup tensor product 
$L^2(\bb{P}_X\hspace{-0.08cm}\times \hspace{-0.08cm}A,\tilde{\mu})\otimes_{\rm h}\cl O^{\cl R}_{X,A}$, 
where 
$L^2(\bb{P}_X\hspace{-0.08cm}\times \hspace{-0.08cm}A,\tilde{\mu})$ is equipped with its 
column operator space structure.

\begin{theorem}\label{th_hypvge}
Let $\bb{H} = (\bb{P}_X,\nph,\mu)$ be a probabilistic quantum hypergraph, and $\cl R$ be a resource, over $(X,A)$. Then 
$$\omega_{\cl R}(\bb{H}) = \norm{\bar{\xi}(\bar{\eta}\circ\vphi)^*}_{L^2(\bb{P}_X\times A,\tilde{\mu})\otimes_{\rm h}\cl O^{\cl R}_{X,A}}^2.$$
%where $|\cdot|$ is the counting measure on $A$.
\end{theorem}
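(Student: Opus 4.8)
The plan is to reduce Theorem~\ref{th_hypvge} to Theorem~\ref{th_R-val} by discretising the measure $\mu$ and passing to a limit. First I would set up the computation of the objective. Using the selected spectral decompositions $p = \sum_x \lm_x(p) u_x(p)u_x(p)^*$ and $\vphi(p) = \sum_a \eta_a(\vphi(p))\eta_a(\vphi(p))^*$, for a fixed $\Gamma = \Gamma_{U,\sigma} \in \mathsf{QC}(\cl R)$ I expand
\[
\int_{\bb{P}_X}\Tr(\Gamma(p)\vphi(p))\,d\mu(p)
= \int_{\bb{P}_X} \sum_{a\in A} \Tr\big(\Gamma(\xi(p)\xi(p)^*)\,\eta_a(\vphi(p))\eta_a(\vphi(p))^*\big)\,d\mu(p),
\]
and then apply the pointwise identity established in the proof of Theorem~\ref{th_R-val}, namely $\Tr(\Gamma_{U,\sigma}(\bar\zeta\bar\zeta^*)\bar\gamma\bar\gamma^*) = \sigma(\phi_U(\bar\zeta\bar\gamma^*)^*\phi_U(\bar\zeta\bar\gamma^*))$ for unit vectors $\zeta,\gamma$ (with appropriate conjugations to match \eqref{ximueta}). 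This rewrites the integrand as $\sigma$ applied to a positive operator built from $\phi_U$, so that
\[
\int_{\bb{P}_X}\Tr(\Gamma(p)\vphi(p))\,d\mu(p)
= \sigma\left(\int_{\bb{P}_X}\sum_{a\in A} \phi_U\big((\bar\xi(\bar\eta\circ\vphi)^*)(p,a)\big)^*\,\phi_U\big((\bar\xi(\bar\eta\circ\vphi)^*)(p,a)\big)\,d\tilde\mu(p,a)\right),
\]
where the integral converges weak* in $\cl B(H)$ by a domination argument (each summand is a positive contraction-type term and there are finitely many $a$, while the $p$-integral is over a probability space).

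Next I would identify this weak*-integral with a Haagerup-type norm. Taking the supremum over normal states $\sigma$ gives
\[
\sup_\sigma \int_{\bb{P}_X}\Tr(\Gamma_{U,\sigma}(p)\vphi(p))\,d\mu(p) = \bignorm{\int_{\bb{P}_X\times A} \phi_U(g(p,a))^*\phi_U(g(p,a))\,d\tilde\mu(p,a)}_{\cl B(H)},
\]
writing $g = \bar\xi(\bar\eta\circ\vphi)^*$. The right-hand side is exactly $\norm{(\id\otimes\phi_U)(g)}^2$ computed in $L^2(\bb{P}_X\times A,\tilde\mu)_c \otimes_{\rm h} \cl B(H,K)$, using the formula~\eqref{e:column} for the column operator space norm together with the fact that for $C^*$-algebras the Haagerup norm of $\sum v_k\otimes b_k$ equals $\norm{\sum b_k^*b_k}^{1/2}$ up to taking the appropriate $L^2$-representation of the first leg; more precisely, since $L^2(\bb{P}_X\times A,\tilde\mu)$ with its column structure is the Hilbert space $\ell_2$-column space, $L^2_c \otimes_{\rm h} \cl Y$ computes the norm of a $\cl Y$-valued square-integrable function $g$ as $\norm{\int g^*g\,d\tilde\mu}^{1/2}_{\cl Y}$ (for $\cl Y$ an operator space this is the standard description of the Haagerup tensor product with a column Hilbert space, cf.\ \cite[3.4.2]{er} and the functoriality of $\otimes_{\rm h}$). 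Taking the supremum over $U\in\cl R$ and invoking the definition of the operator space structure $\cl O^{\cl R}_{X,A}$ (via $\norm{\cdot}_{\cl R} = \sup_U\norm{\phi_U^{(\cdot)}(\cdot)}$) together with the functoriality of the Haagerup tensor product then yields
\[
\omega_{\cl R}(\bb H) = \sup_{U\in\cl R}\norm{(\id\otimes\phi_U)(g)}^2_{L^2(\bb{P}_X\times A,\tilde\mu)_c\otimes_{\rm h}\cl B(H,K)} = \norm{g}^2_{L^2(\bb{P}_X\times A,\tilde\mu)\otimes_{\rm h}\cl O^{\cl R}_{X,A}}.
\]

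The main obstacle, and the step requiring the most care, is the last equality: justifying that the supremum over $U\in\cl R$ of the Haagerup norms after applying $\id\otimes\phi_U$ equals the Haagerup norm computed with the abstract operator space $\cl O^{\cl R}_{X,A}$ in the second leg. For finitely supported $g$ (a simple tensor combination $\sum_k f_k\otimes w_k$ with $f_k\in L^2$, $w_k\in\cl O_{X,A}$) this is immediate from the definition of $\norm{\cdot}_{\cl R}$ and the fact that the Haagerup norm is determined by the matricial norms of the factors. For general $g\in L^2(\bb{P}_X\times A,\tilde\mu)\otimes_{\rm h}\cl O^{\cl R}_{X,A}$ one needs a density/approximation argument: approximate $g$ in the $L^2$-norm of the first leg by $\cl O_{X,A}$-valued simple functions (possible since $\cl O_{X,A}$ is finite-dimensional, so $L^2(\bb{P}_X\times A,\tilde\mu;\cl O_{X,A})$ is dense in the completion), and check that both sides of the claimed identity are continuous under this approximation --- the left side because each $\phi_U$ is completely contractive so $\id\otimes\phi_U$ is bounded on the Haagerup tensor product uniformly in $U$, and the right side by the definition of the Haagerup tensor product as a completion. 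A secondary technical point is verifying the weak*-convergence and Fubini-type interchange of $\sigma$ with the $\tilde\mu$-integral, which follows from boundedness of the integrand and $\sigma$-normality; and one must track the complex conjugations relating \eqref{ximueta} to the conjugation conventions in Theorem~\ref{th_R-val}, but these are routine.
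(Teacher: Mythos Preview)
Your proposal is correct and follows essentially the same route as the paper: expand the integrand pointwise using the computation from Theorem~\ref{th_R-val}, interchange $\sigma$ with the $\tilde\mu$-integral (the paper justifies this by observing that $F_U$ is bounded measurable with values in a finite-dimensional subspace of $\cl B(H)$, so it is Bochner integrable), take the supremum over $\sigma$ to obtain an operator norm, and then identify the result with the Haagerup norm. Two minor remarks: your opening line about ``discretising $\mu$'' is not what you actually carry out and is unnecessary; and your concern about ``general $g$'' is moot, since $\cl O_{X,A}$ is finite-dimensional and hence every element of $L^2(\bb{P}_X\times A,\tilde\mu)\otimes_{\rm h}\cl O^{\cl R}_{X,A}$ is already a finite sum $\sum_k f_k\otimes w_k$ --- the ``finitely supported'' case you treat directly already covers everything.
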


\begin{proof}
For every $\zeta\in\bb{C}^X_{1}$ we have that 
$$\xi(\zeta\zeta^*)=\sum_{x\in X} \lambda_x(\zeta\zeta^*)u_x(\zeta\zeta^*)\in\mathbb{T}\zeta,$$
so that 
$$\norm{\xi}^2_{L^2(\bb{P}_X,\mu; \bb{C}^X)} = \int_{\bb{P}_X}\norm{\xi(\zeta\zeta^*)}^2 d\mu(\zeta\zeta^*) = 1.$$
As $L^2(\bb{P}_X,\mu; \bb{C}^X)\cong \bb{C}^X\ten L^2(\bb{P}_X,\mu)$, we may write 
$\xi = \sum_{x\in X} e_x\ten\xi_{x}$, where $\xi_{x}(p) = \la\xi(p),e_x\ra$, $p\in\bb{P}_X$; we note that the function
$\xi_{x} : \bb{P}_X\to \bb{C}$ is Borel, $x\in X$. 

Since $\vphi : \bb{P}_X\to\cl P_A$ is a bounded Borel function, 
it canonically defines a projection 
$P_\vphi \in M_A\ten L^\infty(\bb{P}_X,\mu) \equiv L^\infty(\bb{P}_X,\mu; M_A)$ via 
$$(P_\vphi\eta)(p) = \vphi(p)\eta(p), \ \ \ \eta\in L^2(\bb{P}_X,\mu; \bb{C}^A).$$
Writing $\vphi(p)=\sum_{a\in A}\eta_a(\vphi(p))\eta_a(\vphi(p))^*$, 
we have 
$$P_\vphi=\sum_{a,a'\in A} \epsilon_{a,a'}\ten 
\sum_{a''\in A}\la\eta_{a''}(\vphi(\cdot)),e_a\ra\la e_{a'},\eta_{a''}(\vphi(\cdot))\ra.$$

Calculations, similar to the ones in the proof of Theorem \ref{th_R-val}, show that
\begin{eqnarray*}
\omega_{\cl R}(\bb{H})
& = & 
\sup_{U,\sigma}\sum_{x,x' \in X}
\sum_{a,a'\in A}
\sigma(U_{a,x}^*U_{a',x'})\int_{\bb{P}_X}
\langle\vphi(p)e_a,e_{a'}\rangle\xi_{x}(p)\overline{\xi_{x'}(p)} d\mu(p)\\
& = & 
\sup_{U,\sigma}\int_{\bb{P}_X}\sum_{a\in A}\sigma\big(\phi_U(\overline{\xi(p)} \ \overline{\eta_a(\vphi(p))}^*)^*\phi_U(\overline{\xi(p)} \ \overline{\eta_a(\vphi(p))}^*)\big)d\mu(p),
\end{eqnarray*}
where the suprema are taken over all $U\in \cl R$ and all corresponding states $\sigma$. 
Letting $H$ and $K$ be the Hilbert spaces corresponding to the isometry $U$, 
we have that the function 
$F_U : \bb{P}_X\times A\to\cl B(H)$, given by 
$$F_U(p,a) = \phi_U(\overline{\xi(p)} \ \overline{\eta_a(\vphi(p))}^*)^*\phi_U(\overline{\xi(p)} \ \overline{\eta_a(\vphi(p))}^*),
$$
is a bounded $\tilde{\mu}$-measurable function taking values in a 
finite-dimensional subspace of $\cl B(H)$. It follows that
%\marginpar{\tiny IT. A couple of terms added in the chain below.\\[3pt] R. I don't follow the second equality. L. I think it should be just $\sup\|\int_{\mathbb P_X}\sum_{a\in A}F_U(p,a)d\mu(p)\|$ }

%\marginpar{\tiny Squares added to the norms in formula below. Full justification needed.}

\begin{align*}
\omega_{\cl R}(\bb{H})
&
= 
\sup_{U,\sigma}\int_{\bb{P}_X}\sum_{a\in A}\sigma(F_U(p,a))d\mu(p)
=  
\sup_{U,\sigma}\left\langle \int_{\bb{P}_X}\sum_{a\in A} F_U(p,a) d\mu(p), \sigma\right\rangle\\
&=
\sup_{U\in \cl R}\left\|\int_{\mathbb P_X}\sum_{a\in A}F_U(p,a)d\mu(p)\right\|^2 \\
&=
\sup_{U\in \cl R}
\norm{(\id\ten \phi_U)(\bar{\xi}(\bar{\eta}\circ\vphi)^*)}_{L^2(\bb{P}_X\times A,\tilde{\mu})\otimes_{\rm h}\cl B(H,K)}^2\\
&
=\norm{\bar{\xi}(\bar{\eta}\circ\vphi)^*}_{L^2(\bb{P}_X\times A,\tilde{\mu})\otimes_{\rm h}\cl O^{\cl R}_{X,A}}^2
\end{align*}
(we refer to the proof of \cite[Lemma 4.4]{bc} for justification of the third equality).
The proof is complete. 
\end{proof}

\begin{remark}\label{r_link}
\rm
We comment on a special case, which highlights the link between the values of a pair $(\xi,P)$ as in Theorem \ref{th_R-val} and the value of a probabilistic quantum hypergraph. 
Suppose that $\bb{H} = (\bb{P}_X,\nph,\mu)$ is a probabilistic quantum hypergraph, for which the measure $\mu$ has finite support 
$\cl P = \{p_i\}_{i=1}^k$. Let
$\xi_i \in \bb{C}^X$ be a unit vector such that $p_i = \xi_i\xi_i^*$, $i = 1,\dots,k$, and set 
$\xi_{\mu} = \sum_{i=1}^k \sqrt{\mu(p_i)} \xi_i  \otimes e_i \in \bb{C}^{X} \otimes \bb{C}^k$ and 
$P_{\nph} = \sum_{i=1}^k \nph(p_i) \otimes \epsilon_{i,i}$. 
Then
\begin{align*}
\omega_{\cl R}\left(\xi_{\mu},P_{\nph}\right)& = 
\sup_{\Gamma\in \mathsf{QC}(\cl R)} 
{\rm Tr}\sum_{i,j=1}^k \sqrt{\mu(p_i)}\sqrt{\mu(\smash{p_j})}
\left(\left(\Gamma(\xi_i\xi_j^*)\otimes \epsilon_{i,j}\right) 
\left(\nph(p_i)\otimes \epsilon_{i,i}\right)\right)\\
& = 
\sup_{\Gamma\in \mathsf{QC}(\cl R)} \sum_{i = 1}^k \mu(p_i) \Tr\left(\Gamma(p_i)\nph(p_i)\right) 
= 
\omega_{\cl R}(\bb{H}).
\end{align*}
Writing $\nph(p_i) = \sum_{j=1}^{k_i} \eta_{i,j}\eta_{i,j}^*$, where $\{\eta_{i,j}\}_{j=1}^{k_i}$ is an orthonormal set of vectors in $\bb{C}^A$, 
by Theorem \ref{th_R-val} (or Theorem \ref{th_hypvge}) we thus have that 
$$\omega_{\cl R}(\bb{H}) = 
\left\|\left[\textstyle{\sqrt{\mu(p_i)}}\, \overline{\xi_i\eta_{i,j}^*} \right]_{i,j}\right\|_{\cl R}^2,
$$
where $\left[\textstyle{\sqrt{\mu(p_i)}}\,\overline{\xi_i \eta_{i,j}^*} \right]_{i,j}$ is considered as a column operator. \end{remark}

%%%%%%%%%%%%%%%%%%%%%%%%%%%%%%%%%%%%%%%%%%%%%%%%%%
%%%%%%%%%%%%%%%%%%%%%%%%%%%%%%%%%%%%%%%%%%%%%%%%%%

\section{Resource types}\label{s_qv}

In this section, we apply the general setup of Section \ref{s_hyper} to
five types of no-signalling resources, obtaining as a consequence expressions 
for the different types of values of non-local games.

%%%%%%%%%%%%%%%%%%%%%%%%%%%%%%%%%%%%%%%%%%%%%%%%%%

\subsection{Quantum no-signalling correlations}\label{ss_qnlgd}

Let $X$, $Y$, $A$ and $B$ be non-empty finite sets.
A \emph{quantum correlation over $(X,Y,A,B)$} 
(or simply a \emph{quantum correlation} if the sets are understood from the context) 
is a quantum channel $\Gamma : M_{XY}\to M_{AB}$.
Such a $\Gamma$ is called a
\emph{quantum no-signalling (QNS) correlation} \cite{dw} if
\begin{equation}\label{eq_qns1}
\Tr_A\Gamma(\rho_X\otimes \rho_Y) = 0 \ \mbox{ whenever } 
\rho_X\in M_X, \rho_Y\in M_Y \mbox{ and } 
\Tr(\rho_X) = 0,
\end{equation}
and
\begin{equation}\label{eq_qns2}
\Tr_B\Gamma(\rho_X\otimes \rho_Y) = 0 \ \mbox{ whenever } 
\rho_X\in M_X, \rho_Y\in M_Y \mbox{ and } \Tr(\rho_Y) = 0.
\end{equation}
%(here, and in the sequel, ${\rm Tr}_A$ and ${\rm Tr}_B$ denote the corresponding partial traces).
We let $\cl Q_{\rm ns}$ be the set of all QNS correlations.

A \emph{stochastic operator matrix} over $(X,A)$,
acting on a Hilbert space $H$, is a positive block operator matrix $\tilde{E} = (E_{x,x',a,a'})_{x,x',a,a'}\in M_{XA}(\cl B(H))$ such that $\Tr_A \tilde{E} = I_X\otimes I_H$. 
A QNS correlation $\Gamma : M_{XY}\to M_{AB}$ is \emph{quantum commuting} \cite{tt-QNS} if 
there exist a Hilbert space $H$, a unit vector $\xi\in H$ and
stochastic operator matrices $\tilde{E} = (E_{x,x',a,a'})_{x,x',a,a'}$ over $(X,A)$ and $\tilde{F} = (F_{y,y',b,b'})_{y,y',b,b'}$ over $(Y,B)$, acting 
on $H$, such that 
$$E_{x,x',a,a'} F_{y,y',b,b'} = F_{y,y',b,b'}E_{x,x',a,a'}$$
for all $x,x'\in X$, $y,y'\in Y$, $a,a'\in A$ and $b,b'\in B$,
and 
\begin{equation}\label{eq_EFp}
\Gamma(\epsilon_{x,x'} \otimes \epsilon_{y,y'}) = \sum_{a,a'\in A} \sum_{b,b'\in B}
\left\langle E_{x,x',a,a'}F_{y,y',b,b'}\xi,\xi \right\rangle \epsilon_{a,a'} \otimes \epsilon_{b,b'}, 
\end{equation}
for all $x,x' \in X$ and all $y,y' \in Y$.
\emph{Quantum} QNS correlations are defined as in (\ref{eq_EFp}), but 
requiring that $H$ has the form $H_A\otimes H_B$, for some finite dimensional Hilbert spaces $H_A$ and $H_B$, and 
$E_{x,x',a,a'} = \tilde{E}_{x,x',a,a'} \otimes I_B$ and 
$F_{y,y',b,b'} = I_A \otimes \tilde{F}_{y,y',b,b'}$, for some stochastic operator matrices 
$(\tilde{E}_{x,x',a,a'})$ and 
$(\tilde{F}_{y,y',b,b'})$, acting on $H_A$ and $H_B$, respectively. 
%\emph{Approximately quantum} QNS correlations are the 
%limits of quantum QNS correlations, while 
Finally, the \emph{local} QNS correlations are 
the convex combinations of the form 
$\Gamma = \sum_{i=1}^k \lambda_i \Phi_i \otimes \Psi_i$, 
where $\Phi_i : M_X\to M_A$ and $\Psi_i : M_Y\to M_B$ are quantum channels, $i = 1,\dots,k$. Note that the local QNS correlations coincide
with the class of local operations and shared randomness (LOSR) \cite{watrous}.

We write $\cl Q_{\rm qc}$ (resp. $\cl Q_{\rm q}$, $\cl Q_{\rm loc}$) for the (convex) set of all quantum commuting (resp. quantum, local) QNS correlations, and note the
strict \cite{tt-QNS}
inclusions 
\begin{equation}\label{eq_Qchain}
\cl Q_{\rm loc}\subseteq \cl Q_{\rm q} \subseteq \cl Q_{\rm qc}\subseteq \cl Q_{\rm ns}.
\end{equation}

%%%%%%%%%%%%%%%%%%%%%%%%%%%%%%%%%%%%%%%%%%%%%%%%%
%%%%%%%%%%%%%%%%%%%%%%%%%%%%%%%%%%%%%%%%%%%%%%%%%

\subsection{The local resource}\label{ss_loc}

%Given a set $\cl I$ of block operator isometries, write 
%$\langle \cl I\rangle$ for the family of all direct sums of elements of $\cl I$.
Let 
$$\cl R_{\rm loc} \hspace{-0.1cm}=\hspace{-0.1cm} \left\langle\hspace{-0.05cm}\{U\otimes V : U \in \cl B(\bb{C}^X\hspace{-0.1cm},\hspace{-0.05cm}\bb{C}^{AS}), 
V \in \cl B(\bb{C}^Y\hspace{-0.1cm},\hspace{-0.05cm}\bb{C}^{BT})
\mbox{ isom.}, \ S,T\mbox{ finite sets}\}\hspace{-0.05cm}\right\rangle\hspace{-0.1cm}.$$

\begin{proposition}\label{p-Rvalq}
We have that $\mathsf{QC}(\cl R_{\rm loc}) = \cl Q_{\rm loc}$.
\end{proposition}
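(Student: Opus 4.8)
The plan is to unwind both sides of the claimed equality $\mathsf{QC}(\cl R_{\rm loc}) = \cl Q_{\rm loc}$ in terms of the concrete channel formula \eqref{eq:Gamma-U-sigma}, and to show each inclusion by an explicit construction.

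First I would prove $\mathsf{QC}(\cl R_{\rm loc}) \subseteq \cl Q_{\rm loc}$. By Remark \ref{r_conve2}, it suffices to treat a single generator $W = U \otimes V$ with $U \in \cl B(\bb{C}^X, \bb{C}^{AS})$, $V \in \cl B(\bb{C}^Y, \bb{C}^{BT})$ isometries and a normal state $\sigma$ on $\cl B(\bb{C}^S \otimes \bb{C}^T)$ (and then pass to direct sums, noting $\mathsf{QC}(\cl R_{\rm loc})$ is convex by Remark \ref{r_conve}, while $\cl Q_{\rm loc}$ is convex by definition, so it is enough to hit the extreme-type generators). Writing $W_{(a,b),(x,y)} = U_{a,x} \otimes V_{b,y}$ where $U_{a,x} \in \cl B(\bb{C}^X, \bb{C}^S)$ is the appropriate block of $U$ (similarly $V_{b,y}$), the formula \eqref{eq:Gamma-U-sigma} gives, for the channel $\Gamma_{W,\sigma} : M_{XY} \to M_{AB}$,
\[
\Gamma_{W,\sigma}(\epsilon_{x,x'} \otimes \epsilon_{y,y'}) = \sum_{a,a',b,b'} \sigma\big((U_{a,x}^*U_{a',x'}) \otimes (V_{b,y}^*V_{b',y'})\big)\,\epsilon_{a,a'} \otimes \epsilon_{b,b'}.
\]
Now decompose the normal state $\sigma$ on $\cl B(\bb{C}^S) \otimes \cl B(\bb{C}^T)$ as a (finite, since finite-dimensional) convex combination $\sigma = \sum_i \lambda_i\, \sigma_i \otimes \tau_i$ of product states (every density matrix on a tensor product of matrix algebras is a convex combination of pure product states, e.g.\ by diagonalising and further decomposing, or simply because the product states span). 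For each $i$, the map $\Phi_i : M_X \to M_A$, $\Phi_i(\epsilon_{x,x'}) = \sum_{a,a'} \sigma_i(U_{a,x}^*U_{a',x'})\,\epsilon_{a,a'}$, is exactly $\Gamma_{U,\sigma_i}$ in the notation of \eqref{eq:Gamma-U-sigma}, hence a quantum channel; similarly $\Psi_i(\epsilon_{y,y'}) = \sum_{b,b'} \tau_i(V_{b,y}^*V_{b',y'})\,\epsilon_{b,b'}$ is a quantum channel $M_Y \to M_B$. Then $\Gamma_{W,\sigma} = \sum_i \lambda_i\, \Phi_i \otimes \Psi_i \in \cl Q_{\rm loc}$.

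For the reverse inclusion $\cl Q_{\rm loc} \subseteq \mathsf{QC}(\cl R_{\rm loc})$, take a local QNS correlation $\Gamma = \sum_{i=1}^k \lambda_i\, \Phi_i \otimes \Psi_i$. By Stinespring dilation of the quantum channels $\Phi_i : M_X \to M_A$ and $\Psi_i : M_Y \to M_B$, write $\Phi_i(\rho) = \Tr_{S_i}(U_i (\rho \otimes I?) U_i^*)$—more precisely, realise $\Phi_i = \Gamma_{U_i, \sigma_i}^{?}$: every channel $M_X \to M_A$ has the form $\rho \mapsto (\id \otimes \Tr_{S})(U \rho U^*)^{\mathrm t}$ after the transpose bookkeeping of \eqref{e:Gammap}, with $U : \bb{C}^X \to \bb{C}^A \otimes \bb{C}^S$ an isometry (Stinespring / Kraus with isometric dilation); with a pure ancilla state $\sigma_i = \zeta_i\zeta_i^*$ this becomes $\Phi_i = \Gamma_{U_i, \sigma_i}$. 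Do the same for $\Psi_i = \Gamma_{V_i, \tau_i}$. Then $U_i \otimes V_i \in \cl R_{\rm loc}$, and by Remark \ref{r_tenso}, $\Phi_i \otimes \Psi_i = \Gamma_{U_i \otimes V_i, \sigma_i \otimes \tau_i} \in \mathsf{QC}(\cl R_{\rm loc})$. Finally, since $\cl R_{\rm loc}$ is closed under finite direct sums and $\mathsf{QC}(\cl R_{\rm loc})$ is convex by Remark \ref{r_conve}, the convex combination $\sum_i \lambda_i\, \Phi_i \otimes \Psi_i$ lies in $\mathsf{QC}(\cl R_{\rm loc})$, completing the proof.

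The main obstacle is bookkeeping rather than conceptual: one must carefully match the ancilla/transpose conventions in \eqref{eq:Gamma-U-sigma}–\eqref{e:Gammap} against the standard Stinespring/Kraus form of a quantum channel so that the isometries $U_i, V_i$ really land in the stated spaces $\cl B(\bb{C}^X, \bb{C}^{AS})$, $\cl B(\bb{C}^Y, \bb{C}^{BT})$, and to confirm that restricting to \emph{pure} ancilla states $\sigma_i, \tau_i$ loses no generality (it does not, by convexity plus the spectral decomposition of a mixed ancilla state, which merely enlarges the ancilla). One should also double-check that $\cl R_{\rm loc}$ as defined is separating, so that Proposition \ref{p_opspst} and the preceding constructions apply, but this follows since it contains enough isometries to realise all of $\mathsf{QC}(M_X, M_A) \otimes \mathsf{QC}(M_Y, M_B)$, whose induced states on $\cl T_{XY, AB}$ separate $\{u^*u\}$.
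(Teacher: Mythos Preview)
Your reverse inclusion $\cl Q_{\rm loc} \subseteq \mathsf{QC}(\cl R_{\rm loc})$ is correct and matches the paper's approach: dilate each factor channel to an isometry via Kraus/Stinespring, invoke Remark~\ref{r_tenso} for tensors and Remark~\ref{r_conve} for convex combinations. (The paper carries out the transpose bookkeeping explicitly, arriving at $\Phi = \Gamma_{\overline{V},1}$ for the conjugated Stinespring isometry $\overline{V}$.)

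The forward inclusion contains a genuine gap. You place the state $\sigma$ on $\cl B(\bb{C}^S \otimes \bb{C}^T)$ and then assert that ``every density matrix on a tensor product of matrix algebras is a convex combination of pure product states.'' This is false: that is precisely the definition of a \emph{separable} state, and entangled states exist whenever $|S|,|T| \geq 2$. Neither justification you offer works---diagonalising gives a convex combination of (generally entangled) pure states, while ``product states span'' gives a linear but not convex decomposition. If your setup were right, your argument would in effect be trying to show $\cl Q_{\rm q} \subseteq \cl Q_{\rm loc}$, which fails.

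The fix is that you have misidentified where $\sigma$ lives. For a generator $W = U \otimes V$ with $U \in \cl B(\bb{C}^X,\bb{C}^{AS})$ and $V \in \cl B(\bb{C}^Y,\bb{C}^{BT})$, the block-operator parsing over $(XY,AB)$ gives $W \in \cl B(H_W^{XY}, K_W^{AB})$ with $H_W = \bb{C}$ and $K_W = \bb{C}^{ST}$; the state $\sigma \in \cl S_1(H_W)$ is therefore the scalar $1$, not a state on $\cl B(\bb{C}^{ST})$. Hence $\Gamma_{W,1} = \Gamma_{U,1} \otimes \Gamma_{V,1}$ is already a product of channels by Remark~\ref{r_tenso}. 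For a direct sum $W = \bigoplus_{j} (U_j \otimes V_j)$ one has $H_W = \bb{C}^J$; since the operators $W_{ab,xy}^*W_{a'b',x'y'}$ are block-diagonal along $J$, any state $\sigma$ on $\cl B(\bb{C}^J)$ yields $\Gamma_{W,\sigma} = \sum_j \lambda_j\, \Gamma_{U_j,1} \otimes \Gamma_{V_j,1}$ with $\lambda_j = \sigma(\epsilon_{j,j})$, which lies in $\cl Q_{\rm loc}$. No separability hypothesis is needed. (Incidentally, Remark~\ref{r_conve2} concerns the induced norms, not $\mathsf{QC}$; the reduction to generators is the block-diagonal argument just given.)
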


\begin{proof}
Let $\Phi : M_X\to M_A$ be a quantum channel, and write
$$\Phi(\rho)=\sum_{s\in S}V_s\rho V_s^*, \ \ \ \rho\in M_X,$$
in its Kraus decomposition; thus,
$S$ is a finite set and 
$V_s : \bb{C}^X\to \bb{C}^A$ are operators such that $\sum_{s\in S} V_s^*V_s = I_X$. 
Let $V : \bb{C}^X\to \bb{C}^A\otimes \bb{C}^S$ be the column operator, given by 
$V\xi = (V_s\xi)_{s\in S}$; we have that $V$ is an isometry.
Writing $V = (V_{a,x})_{a,x}$, where $V_{a,x} : \bb{C}\to \bb{C}^S$, 
we have 
$V_{a,x} = (\langle V_s e_x,e_a\rangle)_{s\in S}$ as a vector in $\bb{C}^S$.
Thus, viewing $V$ as a block operator isometry in $M_{A,X}\otimes \cl B(\bb C^X,(\bb C^S)^A)$ and writing $1$ for the trivial state on $\bb C$, by (\ref{eq:Gamma-U-sigma}) we have
\begin{align*}
\langle\Gamma_{V,1}(\epsilon_{x,x'})e_{a'},e_a\rangle
& = 
V_{a,x}^*V_{a',x'}
= \sum_{s\in S} 
\overline{\langle V_s e_x,e_a\rangle} \langle V_s e_{x'},e_{a'}\rangle\\
& =  
\sum_{s\in S} 
\langle e_a,V_s e_x\rangle \langle V_s e_{x'},e_{a'}\rangle
= 
\sum_{s\in S} 
\langle (V_s e_{x'})(V_s e_x)^* e_a,e_{a'}\rangle\\
& =  
\sum_{s\in S} \langle V_s \epsilon_{x',x} V_s^* e_a,e_{a'}\rangle
= \langle \Phi(\epsilon_{x',x})e_a,e_{a'}\rangle;
\end{align*}
hence $\Gamma_{V,1} = \Phi^{\#}$, where $\Phi^{\#}(\rho) = \Phi(\rho^{\rm t})^{\rm t}$, as in equation (\ref{e:Gammap}). Let $\overline{V}:\bb{C}^X\to\bb{C}^A\ten\bb{C}^S$ denote the isometry, given by 
$$\overline{V}e_x = \sum_{a\in A} e_a\ten \overline{V_{a,x}}, \quad x\in X.$$
If $x,x'\in X$, then
\begin{eqnarray*}
\Gamma_{V,1}^{\#}(\epsilon_{x,x'})
= 
\Gamma_{V,1}(\epsilon_{x',x})^{\rm t}
& = & 
\sum_{a,a'\in A} V_{a',x'}^*V_{a,x} \epsilon_{a,a'}\\
& = & 
\sum_{a,a'\in A} \overline{V_{a,x}}^*\overline{V_{a',x'}} \epsilon_{a,a'}
=
\Gamma_{\overline{V},1}(\epsilon_{x,x'}).
\end{eqnarray*}
Thus, $\Phi=\Gamma_{V,1}^{\#}=\Gamma_{\overline{V},1}$. 

By Remark \ref{r_tenso}, 
if $\Psi : M_Y\to M_B$ is a(nother) quantum channel, then 
$\Phi\otimes\Psi\in \mathsf{QC}(\cl R_{\rm loc})$.
By Remark \ref{r_conve}, $\mathsf{QC}(\cl R_{\rm loc})$ contains the
convex hull of product channels of the form 
$\Phi\otimes\Psi$; hence $\cl Q_{\rm loc}\subseteq \mathsf{QC}(\cl R_{\rm loc})$.
The reverse inclusion follows by reversing the previous arguments.
\end{proof}

\begin{corollary}\label{c_locre}
The family $\cl R_{\rm loc}$ is a resource over $(X Y,A B)$. 
\end{corollary}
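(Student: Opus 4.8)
The plan is to verify directly the conditions in the definition of a resource over $(XY,AB)$ --- namely, that $\cl R_{\rm loc}$ consists of block operator isometries over $(XY,AB)$ and is closed under finite direct sums --- and, in addition, that $\cl R_{\rm loc}$ is separating, which is what makes it usable in the constructions of Section~\ref{s_hyper}. Unwinding the definition, $\cl R_{\rm loc}=\langle\cl R_0\rangle$, where $\cl R_0$ is the set of all $U\ten V$ with $U\in\cl B(\bb{C}^X,\bb{C}^{AS})$, $V\in\cl B(\bb{C}^Y,\bb{C}^{BT})$ isometries and $S,T$ finite sets. Regarding $U$ as a block operator isometry over $(X,A)$ with entries $U_{a,x}\in\cl B(\bb{C},\bb{C}^S)$ and $V$ as one over $(Y,B)$ with entries $V_{b,y}\in\cl B(\bb{C},\bb{C}^T)$, the canonical permutation of tensor legs $\bb{C}^{AS}\ten\bb{C}^{BT}\cong\bb{C}^{AB}\ten\bb{C}^{ST}$ identifies $U\ten V$ with a block operator isometry over $(XY,AB)$ in $\cl B(\bb{C}^{XY},(\bb{C}^{ST})^{AB})$, with entries $(U\ten V)_{(a,b),(x,y)}=U_{a,x}\ten V_{b,y}$; this is the unipartite-to-bipartite compatibility already implicit in Remark~\ref{r_tenso}, and $(U\ten V)^*(U\ten V)=U^*U\ten V^*V=I$ confirms the isometry property. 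Closure under finite direct sums is then automatic for any set of the form $\langle\cl R_0\rangle$, because a finite direct sum whose summands are themselves finite direct sums of members of $\cl R_0$ is again a finite direct sum of members of $\cl R_0$; hence $\cl R_{\rm loc}$ is a resource over $(XY,AB)$.

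For the separating property, since $\cl R_0\subseteq\cl R_{\rm loc}$ it suffices to separate $\{u^*u:u\in\cl O_{XY,AB}\}$ using only the states $s_{W,\sigma}$ with $W\in\cl R_0$. As each such $W=U\ten V$ has underlying Hilbert space $\bb{C}$ --- so the unique state on $\cl B(\bb{C})$ is trivial --- the computation in the proof of Proposition~\ref{p_opspst} gives $s_{W,\sigma}(u^*u)=\norm{\phi_{W}(u)}^2$. Thus the claim reduces to: if $u=\sum_{x,y,a,b}s_{(x,y),(a,b)}\,u_{(a,b),(x,y)}$ satisfies $\phi_{U\ten V}(u)=0$ for all product isometries $U\ten V$, then each coefficient $s_{(x,y),(a,b)}$ vanishes. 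I would test this against the isometries with $S=X$, $T=Y$ given by $Ue_x=\eta_x\ten f_x$ and $Ve_y=\theta_y\ten g_y$, where $\{f_x\}_{x\in X}$, $\{g_y\}_{y\in Y}$ are orthonormal bases and $\eta_x\in\bb{C}^A$, $\theta_y\in\bb{C}^B$ are arbitrary unit vectors (the target vectors being orthonormal, these are genuine isometries). A short computation gives
$$\phi_{U\ten V}(u)=\sum_{x,y}\Big(\sum_{a,b}s_{(x,y),(a,b)}\la\eta_x,e_a\ra\la\theta_y,e_b\ra\Big)\,f_x\ten g_y,$$
so orthonormality of $\{f_x\ten g_y\}_{x,y}$ forces each inner sum to be $0$; letting $\eta_x,\theta_y$ range over all unit vectors --- hence, by homogeneity, over all of $\bb{C}^A$ and $\bb{C}^B$ --- and specialising to standard basis vectors yields $s_{(x,y),(a,b)}=0$ for every index, i.e.\ $u=0$.

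The tensor-leg bookkeeping and the isometry checks are routine, so the one place a genuine idea is needed is the separating step; even there, the only substantive point is that the very concrete product isometries above are plentiful enough to recover every entry of the trace-class operator corresponding to $u$ under Proposition~\ref{p_OXAS1}. (Should the convention be that ``resource'' already subsumes ``separating'', the second paragraph is simply the content of the corollary; otherwise it establishes the slightly stronger statement actually invoked later on.)
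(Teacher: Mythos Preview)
Your proof is correct and follows the same overall structure as the paper's: closure under direct sums is immediate from the definition $\cl R_{\rm loc}=\langle\cl R_0\rangle$, and separation is checked by exhibiting concrete product isometries that recover the coefficients of $u$.

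The only real difference is in the choice of separating isometries. The paper uses a \emph{single} pair $(U,V)$ with $U_{a,x}=|A|^{-1/2}e_a\otimes e_x\in\bb{C}^A\otimes\bb{C}^X$ and $V_{b,y}=|B|^{-1/2}e_b\otimes e_y\in\bb{C}^B\otimes\bb{C}^Y$; since the entries $\{U_{a,x}\otimes V_{b,y}\}$ are mutually orthogonal, one reads off directly that $s_{U\ten V,1}(u^*u)=|AB|^{-1}\sum_{x,y,a,b}|\lambda_{x,a,y,b}|^2$. Your argument instead takes $S=X$, $T=Y$ and a parameterised family $Ue_x=\eta_x\otimes f_x$, $Ve_y=\theta_y\otimes g_y$, then varies $\eta_x,\theta_y$ to isolate each coefficient. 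Both work; the paper's choice is more economical (one isometry rather than a family), while yours has the mild advantage of using smaller ancillary spaces.
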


\begin{proof}
The set $\cl R_{\rm loc}$ is closed under direct sums by definition.
For separation, assume that 
$u=\sum_{x,y,a,b}\lm_{x,a,y,b} u_{a,x}\ten v_{b,y}\in \cl{O}_{XY,AB}= \cl{O}_{X,A}\ten \cl{O}_{Y,B}$ and that 
$s_{U\ten V,1}(u^*u)=0$ for all $U\ten V\in \cl R_{\rm loc}$. 
Define
$$U:\bb{C}^X\ni e_x\mapsto\sum_{a\in A} e_a\ten \underbrace{\frac{1}{|A|^{1/2}}e_a\ten e_x}_{=:U_{a,x}\in \bb C^A\otimes \bb C^X}\in\bb{C}^A\ten(\bb{C}^A\ten\bb{C}^X)$$
and, similarly,
$$V:\bb{C}^Y\ni e_y\mapsto\sum_{b\in B} e_b\ten \underbrace{\frac{1}{|B|^{1/2}}e_b\ten e_y}_{=:V_{b,y}\in \bb{C}^B\ten\bb{C}^Y}\in\bb{C}^B\ten(\bb{C}^B\ten\bb{C}^Y).$$
Then $U$ and $V$ are isometries and $\{U_{a,x}\ten V_{b,y} : a\in A, \ b\in B, \ x\in X, y\in Y\}$ is an orthogonal set. Hence,
$$0 = s_{U\ten V,1}(u^*u) = \bignorm{\sum_{x,y,a,b}
\lm_{x,a,y,b} U_{a,x}\ten V_{b,y}}^2=|AB|^{-1}\sum_{x,y,a,b}|\lambda_{x,a,y,b}|^2,$$
implying $u = 0$. 
\end{proof}

\begin{remark}\label{r_infi}
\rm 
An examination of the proof of Proposition \ref{p-Rvalq} shows that
its statement remains true if the resource $\cl R_{\rm loc}$ is replaced by the 
resource $\cl R_{\rm loc}^{\infty}$, defined similarly to $\cl R_{\rm loc}$ but allowing 
infinite sets $S$ and $T$. 
\end{remark}

Let $\cl X$ and $\cl Y$ be operator spaces. 
In the sequel, it will be convenient to write ${\rm CC}(\cl X,\cl Y)$ for the collection of 
all completely contractive maps in ${\rm CB}(\cl X,\cl Y)$. 
Let $R = (\ell_2)_r$ and $C = (\ell_2)_c$
denote the row and column operator space structures over $\ell_2=\ell_2(\mathbb{N})$, respectively, and write $\cl{S}_2(\ell_2)$ for the
Hilbert-Schmidt ideal in $\cl B(\ell_2)$. 
For a linear map
$\vphi:\cl X\to \cl Y$, 
its \textit{${\rm cb}$-weak Hilbert-Schmidt norm} \cite{jkppg} is
defined by letting
\begin{equation}\label{eq_wcb}
\norm{\vphi}_{w,{\rm cb}} :=
\sup\{\norm{ \beta\circ\vphi\circ \alpha}_{\cl{S}_2(\ell_2)} : 
\alpha\in {\rm CC}(R,\cl X), \beta\in {\rm CC}(\cl Y,C)\};
\end{equation}
thus, $\norm{\vphi}_{w,{\rm cb}}$ 
is the supremum of the Hilbert-Schmidt norms of the compositions
\begin{equation*}
\begin{tikzcd}
\ell_2\arrow[r]\arrow[d, " \alpha"] & \ell_2\\
\cl X\arrow[r, "\vphi"] & \cl Y\arrow[u, " \beta"],
\end{tikzcd}
\end{equation*}
where $\|\alpha : R\to \cl X\|_{\rm cb}\leq 1$ and $\|\beta : \cl Y \to C\|_{\rm cb}\leq 1$. 
We note that the supremum in (\ref{eq_wcb}) can be taken over all finite-dimensional Hilbert 
operator spaces $R_k=(\bb C^k)_r$ and $C_k=(\bb C^k)_c$, $k\in\mathbb N$. 
We denote by 
$\cl{S}_2^{w,{\rm cb}}(\cl X,\cl Y)$ the space of 
\emph{weak-cb Hilbert-Schmidt operators}, that is, 
the linear maps 
$\vphi: \cl X\rightarrow \cl Y$ with $\norm{\vphi}_{w,{\rm cb}} < \infty$.
We note that ${\rm CB}(\cl X,\cl Y)\subseteq \cl{S}_2^{w,{\rm cb}}(\cl X,\cl Y)$; 
indeed, the isometric identification $\cl S_2(\ell_2) = {\rm CB}(R,C)$ \cite[Corollary 4.5]{er91} 
implies that
\begin{eqnarray*}
\norm{\vphi}_{w,{\rm cb}}
& = & 
\sup\left\{\norm{\beta\circ\vphi\circ \alpha : R\to C}_{\rm cb} : 
\alpha\in {\rm CC}(R,\cl X), \beta\in {\rm CC}(\cl Y,C)\right\}\\
& \leq & 
\norm{\vphi}_{\rm cb}.
\end{eqnarray*}

%We equip $\cl{S}_2^{w,{\rm cb}}(\cl X,\cl Y)$ with an operator space structure.
Given $[\vphi_{i,j}]_{i,j}\in M_n(\cl{S}_2^{w,{\rm cb}}(\cl X,\cl Y))$, letting 
$\vphi : \cl X\rightarrow M_n(\cl Y)$ be the associated map, we set
$$\norm{[\vphi_{i,j}]}_{w,{\rm cb}}^{(n)}
= \sup\{\norm{ \beta^{(n)}\circ \vphi\circ \alpha}_{\rm cb}
 : 
\alpha\in {\rm CC}(R,\cl X), \beta\in {\rm CC}(\cl Y,C)\}.$$
Appealing to Ruan's Theorem~\cite[Proposition 2.3.6]{er},
one easily sees that the sequence $(\|\cdot\|_{w,{\rm cb}}^{(n)})_{n\in \bb{N}}$ of matricial norms
defines an operator space structure on the space
$\cl{S}_2^{w,{\rm cb}}(\cl X,\cl Y)$, and has the property that
\begin{equation}\label{eq_compo}
\norm{G^{(n)}\circ \vphi\circ F}_{w,{\rm cb}}^{(n)} \leq\norm{G}_{\rm cb}\norm{\vphi}_{w,{\rm cb}}^{(n)}\norm{F}_{\rm cb}
\end{equation}
for all completely bounded maps $F : \cl X'\to \cl X$ and $G : \cl Y\to \cl Y'$ and all 
$\nph \in M_n(\cl{S}_2^{w,{\rm cb}}(\cl X,\cl Y))$.

For finite-dimensional spaces $\cl X$ and $\cl Y$, the space of linear maps 
$\varphi:\cl X\to \cl Y$ can be identified with the tensor product 
$\cl X^*\otimes \cl Y$, associating to any element 
$\sum_{i=1}^m v_i^*\otimes w_i\in \cl X^*\otimes \cl Y$ the linear map 
$\varphi: \cl X\to \cl Y$, $v\mapsto \sum v_i^*(v)w_i$, and, conversely, 
associating to  $\varphi: \cl X\to \cl Y$ the tensor $\sum_{i=1}^m v_i^*\otimes \varphi(v_i)$, 
where $(v_i)_{i=1}^m$ and $(v_i^*)_{i=1}^m$ are mutually 
dual bases of $\cl X$ and $\cl X^*$, respectively;
we thus identify linear maps $\varphi:\cl X\to\cl Y$ with the corresponding elements in $\cl X^*\otimes\cl Y$.
Making such an identification, we write 
$\cl X^*\ten_{w,{\rm cb}}\cl Y$ in the place of  
$\cl{S}_2^{w,{\rm cb}}(\cl X,\cl Y)$, 
and consider $\|\cdot\|_{w,{\rm cb}}^{(n)}$ as a norm on $M_n(\cl X^*\ten_{w,{\rm cb}}\cl Y)$.

\begin{theorem}\label{th_wcb}
We have that 
$\cl O_{XY,AB}^{\cl R_{\rm loc}} = \cl{S}_1^{A,X}\ten_{w,{\rm cb}}\cl{S}_1^{B,Y},$
up to a complete isometry. 
\end{theorem}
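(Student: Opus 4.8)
The plan is to show that the two families of matricial norms on the vector space $\cl O_{XY,AB}$ agree: the one induced by the resource $\cl R_{\rm loc}$ through Proposition~\ref{p_opspst}, and the one transported to it from $\cl{S}_1^{A,X}\ten_{w,{\rm cb}}\cl{S}_1^{B,Y}$ under the canonical identifications. Throughout I will identify $\cl O_{XY,AB}$ with $\cl S_1^{A,X}\otimes\cl S_1^{B,Y}$ as a vector space (Proposition~\ref{p_OXAS1}, together with $\cl O_{XY,AB}=\cl O_{X,A}\otimes\cl O_{Y,B}$), and then, using $(\cl S_1^{A,X})^*=M_{A,X}$, with the space of linear maps $M_{A,X}\to\cl S_1^{B,Y}$; at level $n$, $u=[u_{ij}]$ corresponds to $\varphi_u:M_{A,X}\to M_n(\cl S_1^{B,Y})$. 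I write $C_S=(\bb{C}^S)_c$ and $R_S=(\bb{C}^S)_r$ for a finite set $S$.

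The first step is to unwind the $\cl R_{\rm loc}$-norm. By~(\ref{eq_conve}) the supremum defining $\|u\|_{\cl R_{\rm loc}}^{(n)}$ may be restricted to the generators $U\otimes V$ of $\cl R_{\rm loc}$, and for such a generator, with $U:\bb{C}^X\to\bb{C}^{AS}$, $V:\bb{C}^Y\to\bb{C}^{BT}$ isometries, one has $\phi_{U\otimes V}=\phi_U\otimes\phi_V$ on $\cl O_{X,A}\otimes\cl O_{Y,B}$, where $\phi_U:\cl S_1^{A,X}\to C_S$ and $\phi_V:\cl S_1^{B,Y}\to C_T$ are complete contractions (since $U_{a,x}\in\cl B(\bb{C},\bb{C}^S)=C_S$, etc.) and $\phi_U\otimes\phi_V$ takes values in $C_S\otimes C_T=C_{ST}$. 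Conversely, any $\phi\in{\rm CC}(\cl S_1^{A,X},C_S)$ is a compression $Q\circ\phi_U$ of such a $\phi_U$: the operator $W:\bb{C}^X\to\bb{C}^{AS}$ with entries $W_{a,x}=\phi(\epsilon_{x,a})$ is a contraction (as $\|W\|=\|\phi\|_{\rm cb}$ via ${\rm CB}(\cl S_1^{A,X},C_S)=M_{A,X}\ten_{\min}C_S$), its dilation to an isometry has a defect block that can be absorbed into a fresh $\bb{C}^A$-factor to produce an isometry $U:\bb{C}^X\to\bb{C}^{A\times(S\sqcup X)}$ (again a factor of a generator of $\cl R_{\rm loc}$), and $Q$ is the coordinate projection, hence completely contractive on column spaces. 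Applying this to $\psi$ as well yields
\[
\|u\|_{\cl R_{\rm loc}}^{(n)}=\sup\bigl\{\,\|(\phi\otimes\psi)^{(n)}(u)\|_{M_n(C_{ST})}\ :\ \phi\in{\rm CC}(\cl S_1^{A,X},C_S),\ \psi\in{\rm CC}(\cl S_1^{B,Y},C_T),\ S,T\text{ finite}\,\bigr\}.
\]

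The second step is to match this with the $w,{\rm cb}$-norm. Reducing $R,C$ to $R_S,C_T$ (as recorded just after~(\ref{eq_wcb})), $\|u\|_{w,{\rm cb}}^{(n)}$ is the supremum of $\|\beta^{(n)}\circ\varphi_u\circ\alpha\|_{\rm cb}$ over $\alpha\in{\rm CC}(R_S,M_{A,X})$ and $\beta\in{\rm CC}(\cl S_1^{B,Y},C_T)$. Now $\beta$ is exactly an object of the same kind as $\psi$, and taking Banach-space adjoints gives a bijection $\phi\mapsto\phi^*$ of ${\rm CC}(\cl S_1^{A,X},C_S)$ onto ${\rm CC}((C_S)^*,M_{A,X})={\rm CC}(R_S,M_{A,X})$ (using finite dimensionality and the standard complete isometry $(C_S)^*\cong R_S$). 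With $\alpha=\phi^*$ and $\beta=\psi$, the two quantities being compared are equal: putting $u_{ij}=\sum_{x,a,y,b}\lambda^{ij}_{x,a,y,b}\epsilon_{x,a}\otimes\epsilon_{y,b}$, $w_{a,x}=\phi(\epsilon_{x,a})$, $z_{b,y}=\psi(\epsilon_{y,b})$ and $M^{ij}_{t,s}=\sum_{x,a,y,b}\lambda^{ij}_{x,a,y,b}(w_{a,x})_s(z_{b,y})_t$, one finds $(\phi\otimes\psi)(u_{ij})$ has $(s,t)$-coordinate $M^{ij}_{t,s}$, so by~(\ref{e:column})
\[
\|(\phi\otimes\psi)^{(n)}(u)\|_{M_n(C_{ST})}^2=\Bigl\|\Bigl[\textstyle\sum_k\sum_{s,t}M^{kj}_{t,s}\overline{M^{ki}_{t,s}}\Bigr]_{i,j}\Bigr\|_{M_n};
\]
and $(\beta^{(n)}\circ\varphi_u\circ\phi^*)(e_s^*)\in M_n(C_T)$ has $(i,j)$-entry $(M^{ij}_{t,s})_t$, so by ${\rm CB}(R_S,\cl Z)=M_{S,1}(\cl Z)$ and~(\ref{e:column}) again,
\[
\|\beta^{(n)}\circ\varphi_u\circ\phi^*\|_{\rm cb}^2=\Bigl\|\textstyle\sum_s\gamma(e_s^*)^*\gamma(e_s^*)\Bigr\|_{M_n}=\Bigl\|\Bigl[\textstyle\sum_k\sum_{s,t}M^{kj}_{t,s}\overline{M^{ki}_{t,s}}\Bigr]_{i,j}\Bigr\|_{M_n},
\]
where $\gamma=\beta^{(n)}\circ\varphi_u\circ\phi^*$ -- the same element of $M_n$. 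Hence the two suprema coincide for every $n$, which is the asserted complete isometry.

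The delicate point, I expect, is the reduction of the second paragraph: checking that $\phi_{U\otimes V}$ really equals $\phi_U\otimes\phi_V$ with values in the column Hilbert space $C_{ST}$, that the dilation step can be carried out while remaining inside $\cl R_{\rm loc}$ (using only ancilla isometries $\bb{C}^X\to\bb{C}^{AS}$ rather than arbitrary block operator isometries), and the closing bookkeeping verification that the two norm formulas yield literally the same matrix in $M_n$. The auxiliary operator-space identifications $(C_S)^*\cong R_S$ and ${\rm CB}(R_S,\cl Z)\cong M_{S,1}(\cl Z)$ are standard (cf.~\cite{er}) but should be used with the appropriate conjugations.
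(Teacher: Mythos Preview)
Your proof is correct and follows essentially the same route as the paper's: both reduce to generators $U\otimes V$, identify $\phi_{U\otimes V}=\phi_U\otimes\phi_V$ with values in $C_{ST}$, set up the correspondence between $\phi_U$ and the map $\alpha\in{\rm CC}(R_S,M_{A,X})$ via duality (your $\alpha=\phi^*$ is exactly the paper's $\alpha(e_s)=U_s$), compare norms through the identification ${\rm CB}(R_S,M_n(C_T))=M_n(C_{ST})$, and handle the reverse inequality by completing a column contraction to a column isometry by adjoining $|X|$ extra $M_{A,X}$-blocks. The only differences are cosmetic: the paper writes the two operators explicitly as $\sum_{s,t}e_s\otimes e_t\otimes Z_{s,t}$ and $\sum_{s,t}e_te_s^*\otimes Z_{s,t}$ before invoking the operator space identification, whereas you carry out the matching via the Gram-matrix formula~(\ref{e:column}).
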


\begin{proof} 
Let $\omega\in M_n(\cl S_1^{A,X}\otimes \cl S_1^{B,Y})$; as needed, we will interpret 
$\omega$ as a linear map from $M_{A,X}$ into $M_n(\cl S_1^{B,Y})$ without 
specific clarification.
Fix finite sets $S$ and $T$, and isometries 
$U : \bb{C}^X\to \bb{C}^{AS}$ and $V : \bb{C}^Y\to \bb{C}^{BT}$.
Let $R_S=(\bb C^S)_r$, $R_T=(\bb C^T)_r$ and $C_T=(\bb C^T)_c$.
Write $U = (U_s)_{s\in S}$ and $V = (V_t)_{t\in T}$, where 
$U_s \in M_{A,X}$ and $V_t \in M_{B,Y}$, $s\in S$, $t\in T$. 
Let 
$\alpha \in {\rm CB}(R_S,M_{A,X})$ and 
$\beta_* \in {\rm CB}(R_T,M_{B,Y})$ be the maps, given by
$$\alpha(\xi) = \sum_{s\in S} \la\xi,e_s\ra U_s \ \mbox{ and } \  
\beta_*(\eta) = \sum_{t\in T} \la\eta,e_t\ra V_t, 
\ \ \xi\in\ell_2^S, \eta\in\ell_2^T;$$
note that $\alpha$ (resp. $\beta$) is completely contractive due to the 
completely isometric identification 
${\rm CB}(R_S,M_{A,X}) = C_S\otimes_{\min} M_{A,X}$ 
(resp. ${\rm CB}(R_T,M_{B,Y}) = C_T\otimes_{\min} M_{B,Y}$). 
Set $\beta = (\beta_*)^*$; thus, 
$\beta \in {\rm CC}(\cl{S}_1^{B,Y},C_T)$ is given by
$$\beta(\rho) = \sum_{t\in T} \la\rho,V_t\ra e_t, \ \ \ \rho\in\cl{S}_1^{B,Y}.$$
Assuming for a moment that $n = 1$ (so that $\omega\in \cl S_1^{A,X}\otimes\cl S_1^{B,Y}$),
for $\xi\in \ell_2^S$, we have 
\begin{eqnarray*}
(\beta\circ\omega\circ\alpha)(\xi) 
\hspace{-0.2cm} & = & 
\hspace{-0.3cm} \sum_{s\in S} \langle \xi,e_s\rangle (\beta\circ\omega)(U_s)
= \hspace{-0.1cm}
\sum_{s\in S} \sum_{t\in T} \langle \xi,e_s\rangle \langle \omega (U_s),V_t\rangle e_t\\
\hspace{-0.2cm} & = & 
\hspace{-0.3cm} \sum_{s\in S} \sum_{t\in T} \langle \omega (U_s),V_t\rangle e_te_s^*(\xi)
= \hspace{-0.1cm}
\left(\sum_{s\in S} \sum_{t\in T} \langle U_s\otimes V_t, \omega\rangle e_te_s^*\right)
\hspace{-0.1cm}(\xi).
\end{eqnarray*}
On the other hand, assuming that $\omega = (\xi_1\xi_2^*)\otimes (\eta_1\eta_2^*)$, where 
$\xi_1\in \bb{C}^X$, $\xi_2\in \bb{C}^A$, $\eta_1\in \bb{C}^Y$ and $\eta_2\in \bb{C}^B$, 
we have 
\begin{eqnarray*}
(\phi_U\otimes\phi_V)(\omega) 
\hspace{-0.2cm} & = & \hspace{-0.2cm}
(\langle U_s\xi_1,\xi_2\rangle)_{s\in S} \otimes \hspace{-0.05cm}(\langle V_t\eta_1,\eta_2\rangle)_{t\in T}\\
\hspace{-0.2cm} & = & \hspace{-0.2cm}
(\langle U_s\otimes V_t, \omega\rangle)_{(s,t)\in S\times T} 
= 
\sum_{s\in S} \sum_{t\in T} \langle U_s\otimes V_t, \omega\rangle e_s\otimes e_t.
\end{eqnarray*}
Relaxing the assumption that $n = 1$, write $\omega = (\omega_{i,j})_{i,j=1}^n$. 
Let $Z_{s,t} = (\langle U_s\otimes V_t, \omega_{i,j}\rangle)_{i,j}$; thus, $Z_{s,t}\in M_n$, 
$s\in S$, $t\in T$.
Hence
$$(\phi_U\otimes\phi_V)^{(n)}(\omega) = \sum_{s\in S} \sum_{t\in T} 
e_s\otimes e_t \otimes Z_{s,t},$$
while 
$$\beta^{(n)}\circ\omega\circ\alpha = 
\sum_{s\in S} \sum_{t\in T} e_te_s^* \otimes Z_{s,t}.$$
It follows that 
$$\|\beta^{(n)}\circ\omega\circ\alpha : R\to M_n(C)\|_{\rm cb} 
= \|(\phi_U\otimes\phi_V)^{(n)}(\omega)\|_{M_n(C_{ST})};$$
indeed, the completely contractively complemented inclusions $R_S\subseteq R$ and $C_T\subseteq C$ imply that
$$\|\beta^{(n)}\circ\omega\circ\alpha : R\to M_n(C)\|_{\rm cb}
= \|\beta^{(n)}\circ\omega\circ\alpha : R_S\to M_n(C_T)\|_{\rm cb},$$
while $C_{ST} = (\bb{C}^S\ten\bb{C}^T)_c = C_S\ten_{\min} C_T$ \cite[Proposition 9.3.5]{er}, and 
\begin{equation}\label{e:tensor}{\rm CB}(R_S,M_n(C_T)) = M_n({\rm CB}(R_S,C_T)) = M_n(C_S\ten_{\min}C_T),\end{equation}
the last equality using \cite[Proposition 8.1.2]{er}. Thus, by Remark \ref{r_conve2}, 
$\|\omega\|_{\cl R_{\rm loc}}^{(n)}$ $\leq$ $\|\omega\|_{w,{\rm cb}}^{(n)}$.

For the reverse inequality, 
let $k\in\mathbb N$ and $\alpha : R_k\to M_{A,X}$ and 
$\beta : \cl{S}_1^{B,Y} \to C_k$ be complete contractions. 
By \cite[Proposition 1.2.28]{blm}, \cite[Proposition 9.3.1]{er}
and the 
finite-dimensionality of $M_{A,X}$, we have that 
${\rm CB}(R_k,M_{A,X})=C_k\ten_{\rm h}M_{A,X}$,
up to a canonical complete isometry. 
Thus, $\alpha$ is given by a column operator $[U_i]_{i=0}^{k}$ with 
$Q := \sum_{i=0}^k U_i^*U_i\leq I_X$, where $U_i\in M_{A,X}$ for $i=0,\dots,k$. 
Let $I_X - Q = \sum_{j=1}^{|X|}\mu_j\eta_j\eta_j^*$ be the spectral decomposition of the positive operator $I_X - Q\in M_X$. 
Let $\zeta\in \bb{C}^A$ be a unit vector and set 
$U_{-j} = \sqrt{\mu_j}\zeta \eta_j^*$; thus, $U_{-j} \in M_{A,X}$. 
We have that $I_X - Q = \sum_{j=-|X|}^{-1} U_j^*U_j$; setting 
$S = \{-|X|,\dots,-1,0,\dots,k\}$, we have that 
$\tilde{U} = [U_i]_{i \in S}$ is a column isometry. 

Similarly, the adjoint $\beta^* : R_k\to M_{B,Y}$ of $\beta$
gives rise to a contraction $[V_j]_{j=0}^{k}\in C_k\otimes_{\rm h} M_{B,Y}$, 
which, letting $T = \{-|Y|,\dots,-1,\dots, k\}$,
can be completed to a column isometry $\tilde{V} = [V_j]_{j\in T}$.
Let $\tilde{\alpha} : R_S\to M_{A,X}$ (resp. 
$\tilde{\beta} : \cl{S}_1^{B,Y}\to C_T$) be the map, canonically associated with 
$\tilde{U}$ (resp. $\tilde{V}$). 
If $F : R_k\to R_S$ (resp. $G : S_T\to C_k$) is the inclusion (resp. projection) map, 
inequality (\ref{eq_compo}) together with
the first part of the proof imply that 
\begin{multline*}
\norm{\beta^{(n)}\circ\omega\circ\alpha}_{\rm cb}
= \norm{G^{(n)}\circ\tilde\beta^{(n)}\circ\omega\circ\tilde\alpha\circ F}_{\rm cb}\leq \norm{\tilde\beta^{(n)}\circ\omega\circ\tilde\alpha}_{\rm cb}\\=\norm{(\phi_{\tilde U}\otimes\phi_{\tilde V})^{(n)}(\omega)}_{M_n(C_S\otimes_{\rm min}C_T)}\leq\|\omega\|_{\cl R_{\rm loc}}.
\end{multline*}
%$\norm{G^{(n)}\circ \omega \circ F}_{w,{\rm cb}}^{(n)} \leq
%\norm{\vphi}_{w,{\rm cb}}^{(n)}$.
%The first part of the proof now implies that 
Taking supremum over all such $k$, $\alpha$ and $\beta$ we obtain
$\|\omega\|_{w,{\rm cb}}\leq \|\omega\|_{\cl R_{\rm loc}}$, and
the proof is complete. 
\end{proof}

%%%%%%%%%%%%%%%%%%%%%%%%%%%%%%%%%%%%%%%%%%%%%%%%%%

\subsection{The quantum resource}\label{ss_eprq}

Let 
$$\cl R_{\rm q} = \left\langle\{U\otimes V : U = (U_{a,x})_{a,x}, V = (V_{b,y})_{b,y}
\mbox{ finite rank isometries}\}\right\rangle$$
(we note that the domains of $U$ and $V$ are allowed to vary).
It is clear that $\cl R_{\rm q}$ consists of isometries and is closed under 
direct sums.
Since $\cl R_{\rm loc} \subseteq \cl R_{\rm q}$, 
Corollary \ref{c_locre} implies that $\cl R_{\rm q}$ is a resource
over $(X Y,A B)$.

\begin{proposition}\label{p-Rvalqq}
We have that $\mathsf{QC}(\cl R_{\rm q}) = \cl Q_{\rm q}$.
\end{proposition}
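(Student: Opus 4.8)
The plan is to prove the two inclusions $\cl Q_{\rm q}\subseteq\mathsf{QC}(\cl R_{\rm q})$ and $\mathsf{QC}(\cl R_{\rm q})\subseteq\cl Q_{\rm q}$ separately, the common bridge being the dilation correspondence for stochastic operator matrices. I would first recall that, by the universal property of $\cl C_{X,A}$ together with the Kolmogorov/Stinespring decomposition of a positive block matrix, a stochastic operator matrix $\tilde E = (E_{x,x',a,a'})$ over $(X,A)$ acting on a finite-dimensional Hilbert space $H$ dilates to a finite-rank block operator isometry $U = (U_{a,x})_{a,x}$, with $U_{a,x}\in\cl B(H,K)$ for some finite-dimensional $K$, satisfying $E_{x,x',a,a'} = U_{a,x}^*U_{a',x'}$; conversely, for \emph{any} block operator isometry $U = (U_{a,x})_{a,x}$ the block matrix $(U_{a,x}^*U_{a',x'})_{x,x',a,a'}$ is a stochastic operator matrix over $(X,A)$, the condition $\Tr_A = I$ being exactly the isometry relation $\sum_a U_{a,x}^*U_{a,x'} = \delta_{x,x'}I$. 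This is the content of \cite[Lemma 5.1]{tt-QNS} and the discussion around it.

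For the inclusion $\cl Q_{\rm q}\subseteq\mathsf{QC}(\cl R_{\rm q})$, I would take $\Gamma\in\cl Q_{\rm q}$ given by finite-dimensional $H_A$, $H_B$, a unit vector $\xi\in H_A\ten H_B$ and stochastic operator matrices $\tilde E$ on $H_A$, $\tilde F$ on $H_B$, with $E_{x,x',a,a'} = \tilde E_{x,x',a,a'}\ten I_{H_B}$ and $F_{y,y',b,b'} = I_{H_A}\ten\tilde F_{y,y',b,b'}$ satisfying (\ref{eq_EFp}). Dilating $\tilde E$ and $\tilde F$ as above produces finite-rank block operator isometries $U = (U_{a,x})_{a,x}$, $V = (V_{b,y})_{b,y}$ with $\tilde E_{x,x',a,a'} = U_{a,x}^*U_{a',x'}$ and $\tilde F_{y,y',b,b'} = V_{b,y}^*V_{b',y'}$. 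Then $W := U\ten V\in\cl R_{\rm q}$ is a block operator isometry over $(XY,AB)$ with $W_{(a,b),(x,y)} = U_{a,x}\ten V_{b,y}$, so that
$$W_{(a,b),(x,y)}^*W_{(a',b'),(x',y')} = (U_{a,x}^*U_{a',x'})\ten(V_{b,y}^*V_{b',y'}) = \tilde E_{x,x',a,a'}\ten\tilde F_{y,y',b,b'};$$
since $E_{x,x',a,a'}F_{y,y',b,b'}$ equals the same tensor, comparing (\ref{eq_EFp}) with the defining formula (\ref{eq:Gamma-U-sigma}) for $\Gamma_{W,\xi\xi^*}$ (under the identification $M_{AB} = M_A\ten M_B$) will give $\Gamma = \Gamma_{W,\xi\xi^*}\in\mathsf{QC}(\cl R_{\rm q})$.

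For the reverse inclusion, I would take $W\in\cl R_{\rm q}$ and a normal state $\sigma$, and write $W = \bigoplus_i(U_i\ten V_i)$ with each $U_i = (U^{(i)}_{a,x})_{a,x}$, $V_i = (V^{(i)}_{b,y})_{b,y}$ a finite-rank isometry on $H_A^{(i)}$, $H_B^{(i)}$. Since each $W_{(a,b),(x,y)}^*W_{(a',b'),(x',y')}$ is block-diagonal for this decomposition, $\Gamma_{W,\sigma}$ is unchanged upon replacing $\sigma$ by its compression to the diagonal blocks; diagonalising the latter as $\sum_{i,j}p_{i,j}\zeta_{i,j}\zeta_{i,j}^*$ with $\zeta_{i,j}\in H_A^{(i)}\ten H_B^{(i)}$ unit vectors and $p_{i,j}\ge 0$ summing to $1$, I get $\Gamma_{W,\sigma} = \sum_{i,j}p_{i,j}\Gamma_{U_i\ten V_i,\zeta_{i,j}\zeta_{i,j}^*}$. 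For each pair $(i,j)$ I would set $\tilde E^{(i)} := (U^{(i)*}_{a,x}U^{(i)}_{a',x'})$ and $\tilde F^{(i)} := (V^{(i)*}_{b,y}V^{(i)}_{b',y'})$ — stochastic operator matrices by the converse half of the first paragraph — and read the displayed identity backwards to recognise $\Gamma_{U_i\ten V_i,\zeta_{i,j}\zeta_{i,j}^*}$ as the quantum QNS correlation (\ref{eq_EFp}) built from $\tilde E^{(i)}\ten I$, $I\ten\tilde F^{(i)}$ and $\zeta_{i,j}$; hence each summand lies in $\cl Q_{\rm q}$, and by convexity of $\cl Q_{\rm q}$ it follows that $\Gamma_{W,\sigma}\in\cl Q_{\rm q}$ as well.

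The main (and modest) obstacle is the first paragraph: everything reduces to the fact that the Kolmogorov/Stinespring dilation of a stochastic operator matrix can be arranged to be a \emph{finite-rank} block operator isometry, so that $U\ten V$ genuinely belongs to $\cl R_{\rm q}$; this is why one restricts at the outset to the finite-dimensional spaces $H_A$, $H_B$ supplied by the definition of $\cl Q_{\rm q}$. Granting this, the rest is an unravelling of the definitions, closely parallel to the proof of Proposition \ref{p-Rvalq}, the single difference being that here the state $\xi\xi^*$ on $H_A\ten H_B$ is genuinely entangled rather than trivial.
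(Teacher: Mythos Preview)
Your proof is correct and follows essentially the same approach as the paper: both hinge on the dilation correspondence between finite-dimensionally acting stochastic operator matrices and finite-rank block operator isometries, together with convexity of $\cl Q_{\rm q}$. The only minor slip is the citation --- the dilation result you invoke is \cite[Theorem 3.1]{tt-QNS} rather than \cite[Lemma 5.1]{tt-QNS} --- and your treatment of the direct-sum structure in $\cl R_{\rm q}$ is more explicit than the paper's, which folds it into Remark~\ref{r_conve}.
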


\begin{proof}
  The elements of $\mathsf{QC}(\cl R_{\rm q})$ are, by their definition and Remark~\ref{r_conve},
  precisely the convex hull of
  the quantum channels $\Gamma : M_{XY}\to M_{AB}$ with Choi matrices 
equal to
$$\left(\sigma\left(U_{a,x}^*U_{a',x'}\otimes V_{b,y}^*V_{b',y'}\right)\right)_{x,x',y,y'}^{a,a',b,b'}$$
for finite rank block operator isometries $U$ and $V$.
In view of the factorisation result \cite[Theorem 3.1]{tt-QNS} for 
stochastic operator matrices, the latter class consists precisely of 
the quantum channels with Choi matrices 
$$\left(\sigma\left(E_{x,x',a,a'}\otimes F_{y,y',b,b'}\right)\right)_{x,x',y,y'}^{a,a',b,b'},$$
where $E = (E_{x,x',a,a'})$ and $F = (F_{y,y',b,b'})$ are finite dimensionally acting 
stochastic operator matrices. 
Writing the (possibly mixed) state $\sigma$ as a convex combination of pure states, 
we see that the latter matrices coincide with the Choi matrices of the channels from $\cl Q_{\rm q}$.
\end{proof}

\begin{remark}\label{r_opspsqa}
\rm 
In view of Proposition \ref{p_OXAS1}, by the injectivity of the minimal tensor product
we have that 
$$\cl S_1^{A,X}\otimes_{\min}\cl S_1^{B,Y}\subseteq \cl V_{X,A}\otimes_{\min}\cl V_{Y,B},$$
and that 
$$\|u\|_{\min} = \sup\{\|(\theta_V\otimes\theta_W)(u)\| : V, W \mbox{ finite rank isometries}\}.$$
It follows from \cite[Section 5]{tt-QNS} that 
$\cl S_1^{A,X}\otimes_{\min}\cl S_1^{B,Y} = \cl O_{XY,AB}^{\cl R_{\rm q}}.$
\end{remark}

%%%%%%%%%%%%%%%%%%%%%%%%%%%%%%%%%%%%%%%%%%%%%
%%%%%%%%%%%%%%%%%%%%%%%%%%%%%%%%%%%%%%%%%%%%%

\subsection{The quantum commuting resource}\label{ss_qc}

If $H$, $K$ and $L$ are Hilbert spaces, two families 
$\cl E\subseteq \cl B(K,L)$ and $\cl F\subseteq \cl B(H,K)$ will be called 
\textit{semi-commuting} if the families $\cl E^*\cl E$ and $\cl F\cl F^*$ of operators on $K$
commute.
We will say that the block operator matrices $U = (U_{a,x})_{a,x}$ and $V = (V_{b,y})_{b,y}$
\textit{semi-commute} if the families of their entries semi-commute. 

In the sequel, to simplify notation, we will abbreviate $(x,y)$ (resp. $(a,b)$) to $xy$ (resp. $ab$). Let 
$$\cl R_{\rm qc} = \{\left(U_{a,x}V_{b,y}\right)_{ab,xy} : U = (U_{a,x})_{a,x}, V = (V_{b,y})_{b,y}
\mbox{ semi-comm. isom.}\}.$$
We note that $\cl R_{\rm qc}$ is a resource over $(X Y,A B)$. 
Indeed, using standard leg notation, we have that 
$U\cdot V := \left(U_{a,x}V_{b,y}\right)_{ab,xy} = U_{1,3}V_{2,3}$
and, since the block operator matrices
$U_{1,3}$ and $V_{2,3}$ are isometries, so is the 
block operator matrix $\left(U_{a,x}V_{b,y}\right)_{ab,xy}$.  Closure under 
direct sums can be easily verified,
while the separation property follows from the inclusion 
$\cl R_{\rm loc}\subseteq \cl R_{\rm qc}$ and Corollary \ref{c_locre}. 
(We note the stronger inclusion $\cl R_{\rm q}\subseteq \cl R_{\rm qc}$.)

In the sequel, for clarity, we will denote by $v_{b,y}$ (resp. $f_{y,y',b,b'}$) the canonical 
generators of the TRO $\cl V_{Y,B}$ (resp. the C*-algebra $\cl C_{Y,B}$). 

\begin{theorem}\label{th_qcrep}
We have that $\mathsf{QC}(\cl R_{\rm qc}) = \cl Q_{\rm qc}$.
\end{theorem}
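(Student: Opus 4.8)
The plan is to prove the two inclusions of $\mathsf{QC}(\cl R_{\rm qc})=\cl Q_{\rm qc}$ by translating between (i) block operator isometries $W=(U_{a,x}V_{b,y})_{ab,xy}$ with semi-commuting $U,V$ together with a normal state $\sigma$, and (ii) commuting stochastic operator matrices $\tilde E,\tilde F$ together with a unit vector. The common point of contact is the purely algebraic identity
\[
W_{ab,xy}^*W_{a'b',x'y'}=V_{b,y}^*\bigl(U_{a,x}^*U_{a',x'}\bigr)V_{b',y'},
\]
whose right-hand side is a product of the ranges of $U$ and $V$ in the ``wrong'' order for it to be manifestly quantum-commuting; the tool that repairs this is the commutant lifting of Lemma~\ref{lifting} (equivalently Corollary~\ref{c_maxtose}).

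For $\mathsf{QC}(\cl R_{\rm qc})\subseteq\cl Q_{\rm qc}$: given $W=(U_{a,x}V_{b,y})_{ab,xy}\in\cl R_{\rm qc}$ with $U=(U_{a,x})\colon K^X\to L^A$, $V=(V_{b,y})\colon H^Y\to K^B$ semi-commuting isometries and a normal state $\sigma$ on $\cl B(H)$, I would first note that $\theta_V$ is automatically left non-degenerate (since $\sum_b V_{b,y}^*V_{b,y}=I_H$), so Lemma~\ref{lifting} applies with $\phi=\theta_U$, $\psi=\theta_V$ and yields a $*$-homomorphism $\rho$ from the $C^*$-algebra generated by $\{U_{a,x}^*U_{a',x'}\}$ on $K$ into $\pi_V(\cl C_{Y,B})'\subseteq\cl B(H)$ with $\rho(U_{a,x}^*U_{a',x'})V_{b,y}^*=V_{b,y}^*U_{a,x}^*U_{a',x'}$. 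Then $\hat E_{x,x',a,a'}:=\rho(U_{a,x}^*U_{a',x'})$ and $\hat F_{y,y',b,b'}:=V_{b,y}^*V_{b',y'}$ are stochastic operator matrices over $(X,A)$ and $(Y,B)$ on $H$ with mutually commuting entries, and the lifting identity turns the displayed equation into $\hat E_{x,x',a,a'}\hat F_{y,y',b,b'}=W_{ab,xy}^*W_{a'b',x'y'}$. Comparing with \eqref{eq_EFp} and \eqref{eq:Gamma-U-sigma} shows $\Gamma_{W,\zeta\zeta^*}\in\cl Q_{\rm qc}$ for a vector state; for a general normal state $\sigma=\sum_i p_i\zeta_i\zeta_i^*$ I would pass to $\bigoplus_i H$ with the diagonal (still commuting) stochastic operator matrices $\bigoplus_i\hat E$, $\bigoplus_i\hat F$ and the unit vector $\bigoplus_i\sqrt{p_i}\zeta_i$.

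For $\cl Q_{\rm qc}\subseteq\mathsf{QC}(\cl R_{\rm qc})$: given $\Gamma$ as in \eqref{eq_EFp} with commuting stochastic operator matrices $\tilde E,\tilde F$ on $H$ and unit vector $\xi$, use \cite[Lemma 5.1]{tt-QNS} to obtain unital $*$-representations $\pi_E\colon\cl C_{X,A}\to\cl B(H)$ and $\pi_F\colon\cl C_{Y,B}\to\cl B(H)$ with $\pi_E(e_{x,x',a,a'})=E_{x,x',a,a'}$, $\pi_F(f_{y,y',b,b'})=F_{y,y',b,b'}$ and commuting ranges. Apply Corollary~\ref{c_maxtose} to get Hilbert spaces $K,L$ and semi-commuting ternary morphisms $\phi\colon\cl V_{X,A}\to\cl B(K,L)$, $\psi\colon\cl V_{Y,B}\to\cl B(H,K)$ with $\pi_E\cdot\pi_F=\pi^R_{\phi\cdot\psi}$; set $U_{a,x}=\phi(u_{a,x})$, $V_{b,y}=\psi(v_{b,y})$, so $U=(U_{a,x})$, $V=(V_{b,y})$ are semi-commuting block operator isometries and $W:=(U_{a,x}V_{b,y})_{ab,xy}\in\cl R_{\rm qc}$. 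Since $\pi^R_{\phi\cdot\psi}$ sends $(u_{a,x}^*u_{a',x'})\otimes(v_{b,y}^*v_{b',y'})$ to $W_{ab,xy}^*W_{a'b',x'y'}$, the identity $\pi_E\cdot\pi_F=\pi^R_{\phi\cdot\psi}$ gives $E_{x,x',a,a'}F_{y,y',b,b'}=W_{ab,xy}^*W_{a'b',x'y'}$, whence $\Gamma_{W,\xi\xi^*}=\Gamma$ by \eqref{eq:Gamma-U-sigma}.

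The step I expect to be the main obstacle is the reconciliation of commutation structures: ``semi-commutation'' of $U$ and $V$ only says $U^*U$ commutes with $VV^*$, whereas a quantum-commuting correlation is built from an $E$ ($\leftrightarrow U^*U$) commuting with an $F$ ($\leftrightarrow V^*V$); it is the intertwiner $\rho$ of Lemma~\ref{lifting} that conjugates the $U^*U$-action past $V^*$ and makes the two pictures match, and keeping track of adjoints, the order of composition in $\phi\cdot\psi$, and the identifications of $\cl C_{X,A}$ and $\cl C_{Y,B}$ with the relevant concrete $C^*$-algebras is where the care is needed. The remaining ingredients — left non-degeneracy of $\theta_V$, the direct-sum reduction from normal to vector states, and verifying that the block matrices $\hat E,\hat F$ are genuinely stochastic operator matrices (using unitality of $\rho$ and of $\pi_V$) — are routine.
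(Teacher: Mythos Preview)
Your proposal is correct and follows essentially the same route as the paper: Lemma~\ref{lifting} for $\mathsf{QC}(\cl R_{\rm qc})\subseteq\cl Q_{\rm qc}$ and Corollary~\ref{c_maxtose} for the reverse inclusion, with the same identifications $\hat E_{x,x',a,a'}=\rho(U_{a,x}^*U_{a',x'})$ and $\hat F_{y,y',b,b'}=V_{b,y}^*V_{b',y'}$. Your observation that $\theta_V$ is automatically left non-degenerate (because $V$ is a block operator isometry, so $\sum_b V_{b,y}^*V_{b,y}=I_H$) and your explicit direct-sum purification of a general normal state are in fact minor improvements on the paper's presentation, which invokes a restriction argument for the former and leaves the latter implicit.
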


\begin{proof}
Let $\Gamma : M_{XY}\to M_{AB}$ be a quantum commuting QNS correlation, 
$H$ be a Hilbert space, $\xi\in H$ be a unit vector, and $(E_{x,x',a,a'})_{x,x',a,a'}$ and $(F_{y,y',b,b'})_{y,y',b,b'}$ be stochastic operator matrices with commuting entries, 
for which (\ref{eq_EFp}) holds.
By \cite[Theorem 5.2]{tt-QNS} there exist commuting representations $\pi^R_{X,A}:\cl C_{X,A}\to\cl B(H)$ and $\pi^R_{Y,B} : \cl C_{Y,B}\to\cl B(H)$ for which
$$\pi^R_{X,A}(e_{x,x',a,a'}) = E_{x,x',a,a'} \ \mbox{ and } \  \pi^R_{Y,B}(f_{y,y',b,b'}) = F_{y,y',b,b'}$$
for all $x,x'\in X$, $y,y'\in Y$, $a,a'\in A$ and $b,b'\in B$. 
By Corollary \ref{c_maxtose}, there exist Hilbert spaces $K$ and $L$, and semi-commuting ternary morphisms $\phi : \cl V_{X,A}\to \cl B(K,L)$ and $\psi : \cl V_{Y,B}\to \cl B(H,K)$ such that 
%\begin{align*}
$$\la\phi(u_{a',x'})\psi(v_{b',y'})\xi,\phi(u_{a,x})\psi(v_{b,y})\xi\ra
%&=\la\pi^R_{X,A}(u_{x,a}^*u_{x',a'})\pi^R_{Y,B}(v_{y,b}^*v_{y',b'})\xi,\xi\ra\\
= \la E_{x,x',a,a'}F_{y,y',b,b'}\xi,\xi\ra$$
for all $x,x'\in X$, $y,y'\in Y$, $a,a'\in A$ and $b,b'\in B$.
%\end{align*}
Using arguments as in the proof of Lemma \ref{c:maxnorm}, we can assume that 
%Since the above expression does not see contributions from $\mathrm{Ker}(\psi(\cl V_{Y,B}))$, 
$\overline{{\rm span}(\psi(\cl V_{Y,B})^*K)} = H$.
Let
$U = (\phi(u_{a,x}))_{a,x}$, $V = (\psi(v_{b,y}))_{b,y}$, and $\sigma$ be the vector 
state corresponding to $\xi$. Then $\Gamma = \Gamma_{U\cdot V,\sigma}$. 
It follows that $\cl Q_{\rm qc}\subseteq \mathsf{QC}(\cl R_{\rm qc})$.

For the reverse inclusion, let $\Gamma\in \mathsf{QC}(\cl R_{\rm qc})$.
By Lemma \ref{lifting}, there exists a $*$-homomorphism $\rho:\cl R_{\phi(\cl V_{X,A})}\rightarrow \cl R_{\psi(\cl V_{Y,B})}'\subseteq\cl B(H)$ satisfying $\rho(b)\psi(v)^*=\psi(v)^*b$ for all $b\in \cl R_{\phi(\cl V_{X,A})}$
and all $v\in \cl V_{Y,B}$. 
Then $\rho\circ\pi^R_{\phi}:\cl R_{X,A}\to\cl B(H)$ and $\pi^R_{\psi}:\cl R_{Y,B}\to\cl B(H)$ are $*$-homomorphisms with commuting ranges satisfying
$$(\rho\circ\pi^R_{\phi})(e_{x,x',a,a'})\pi^R_{\psi}(f_{y,y',b,b'})
= \psi(v_{b,y})^*\phi(u_{a,x})^*\phi(u_{a',x'})\psi(v_{b',y'}).$$
Relation (\ref{eq_EFp}) is then fulfilled with 
$$E_{x,x',a,a'} = \rho(\phi(u_{a,x})^*\phi(u_{a',x'})) \ \mbox{ and } \ 
F_{y,y',b,b'} = \psi(v_{b,y})^*\psi(v_{b',y'}),$$ 
showing that $\Gamma\in \cl Q_{\rm qc}$. The proof is complete.
\end{proof}

\begin{remark}\label{r_opspsqc}
\rm 
Let $\cl S_1^{A,X}\otimes_{\max}\cl S_1^{B,Y}$ be the operator space structure on 
the tensor product $\cl S_1^{A,X}\otimes \cl S_1^{B,Y}$, arising from the 
inclusion
$$\cl S_1^{A,X}\otimes_{\max}\cl S_1^{B,Y}\subseteq \cl V_{X,A}\otimes_{\rm tmax}\cl V_{Y,B}.$$
Since, by \cite[Theorem 3.1]{tt-QNS}, the pairs $(U,V)$ of semi-commuting block operator isometries
correspond precisely to semi-commuting ternary morphisms of $\cl V_{X,A}$ and $\cl V_{Y,B}$, 
Theorem \ref{t:maxnorm} and Proposition \ref{p:representations} imply that 
$$\cl S_1^{A,X}\otimes_{\max}\cl S_1^{B,Y} = \cl O_{XY,AB}^{\cl R_{\rm qc}}.$$
\end{remark}

%%%%%%%%%%%%%%%%%%%%%%%%%%%%%%%%%%%%%%%%%%%%%
%%%%%%%%%%%%%%%%%%%%%%%%%%%%%%%%%%%%%%%%%%%%%

\subsection{The no-signalling resource}\label{ss_ns}
Let $K$ be a Hilbert space and suppose
$\cl E = \{\eta_{x,y,a,b}:(x,y,a,b)\in XYAB\} \subseteq K$. 
Write 
$$\zeta_{x,y} = \sum_{a\in A} \sum_{b\in B} e_a\otimes e_b\otimes\eta_{x,y,a,b}, \ \ x\in X, y\in Y.$$
Recall that a PVM (projection-valued measure) is a family of projections summing to the identity operator on a Hilbert space. We call the family $\cl E$ \emph{no-signalling} if 
there exist PVM's 
$\{P_x:x\in X\}$ on $\mathbb{C}^{A}\otimes K$, and $\{Q_y:y\in Y\}$ on $\mathbb{C}^B\otimes K$, 
such that (employing a canonical reshuffling, we have)
\begin{equation}\label{eq_PxQyPVM}
\zeta_{x,y}\in\ran(P_x\ten I_B)\cap \ran (Q_y\ten I_A)\subseteq \mathbb{C}^{AB}\otimes K, 
\ \ \ x\in X, y\in Y.
\end{equation}
For a no-signalling family $\cl E$, let 
$U_{\cl E}:\mathbb C^{XY}\to \mathbb C^{AB}\otimes K$ be the operator, 
given by $U_{\cl E}(e_x\otimes e_y) = \zeta_{x,y}$, $x\in X$, $y\in Y$.
Since 
$$\langle U_{\cl E}(e_x\otimes e_y), U_{\cl E}(e_{x'}\otimes e_{y'})\rangle
= 
\langle \zeta_{x,y},\zeta_{x',y'}\rangle=\delta_{x,x'}\delta_{y,y'}, x,x'\in X, y,y'\in Y,$$ 
the operator $U_{\cl E}$ is an isometry. 
Set 
$$\cl R_{\rm ns} = \left\langle\{U_{\cl E} : \cl E \mbox{ no-signalling family}\}\right\rangle.$$
Corollary \ref{c_locre} implies that $\cl R_{\rm ns}$ is a resource over $(X Y,A B)$.  

\begin{proposition}\label{p_nsre=qns}
We have that $\mathsf{QC}(\cl R_{\rm ns}) = \cl Q_{\rm ns}$.
\end{proposition}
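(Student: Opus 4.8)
The plan is to prove the two inclusions $\mathsf{QC}(\cl R_{\rm ns})\subseteq\cl Q_{\rm ns}$ and $\cl Q_{\rm ns}\subseteq\mathsf{QC}(\cl R_{\rm ns})$ separately, the main tools being Stinespring dilation and the identity \eqref{e:Gammap}. Throughout, one uses that $\Gamma_{U,\sigma}$ is automatically a quantum channel, so $\mathsf{QC}(\cl R_{\rm ns})\subseteq\mathsf{QC}(M_{XY},M_{AB})$ from the outset, that $\cl Q_{\rm ns}$ is convex, and that, transposition being a channel, the no-signalling conditions \eqref{eq_qns1}--\eqref{eq_qns2} hold for a channel $\Gamma$ if and only if they hold for $\Gamma^{\#}$. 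Since the inner domain Hilbert space of each $U_{\cl E}$ is one-dimensional, Remark \ref{r_conve} shows that every element of $\mathsf{QC}(\cl R_{\rm ns})$ is a convex combination of channels $\Gamma_{U_{\cl E}}$ with $\cl E$ a no-signalling family; hence, by convexity of $\cl Q_{\rm ns}$, the first inclusion reduces to showing $\Gamma_{U_{\cl E}}\in\cl Q_{\rm ns}$ for every such $\cl E$.

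For this I would fix a no-signalling family $\cl E$ with its PVMs $\{P_x\}$ on $\bb{C}^A\ten K$ and $\{Q_y\}$ on $\bb{C}^B\ten K$, write $\zeta_{x,y}=U_{\cl E}(e_x\ten e_y)=\sum_{a,b}e_a\ten e_b\ten\eta_{x,y,a,b}$, and introduce $w_{x,y,a}=\sum_b e_b\ten\eta_{x,y,a,b}\in\bb{C}^B\ten K$ and $v_{x,y,b}=\sum_a e_a\ten\eta_{x,y,a,b}\in\bb{C}^A\ten K$, so that \eqref{eq_PxQyPVM} gives $w_{x,y,a}\in\ran(Q_y)$ and $v_{x,y,b}\in\ran(P_x)$. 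Expanding \eqref{eq:Gamma-U-sigma} yields
\[
\Tr_B\,\Gamma_{U_{\cl E}}(\rho_X\ten\rho_Y)=\sum_{x,x'\in X}\sum_{y,y'\in Y}\sum_{a,a'\in A}(\rho_X)_{x,x'}(\rho_Y)_{y,y'}\,\la w_{x',y',a'},w_{x,y,a}\ra\,\epsilon_{a,a'},
\]
and, since $\ran(Q_y)\perp\ran(Q_{y'})$ for $y\ne y'$, the terms with $y\ne y'$ vanish; what remains is $\sum_{y}(\rho_Y)_{y,y}$ times the coefficient $\sum_{x,x',a,a'}(\rho_X)_{x,x'}\la w_{x',y,a'},w_{x,y,a}\ra\,\epsilon_{a,a'}$, which I expect to be independent of $y$ (this is where the full strength of the definition of a no-signalling family, together with $U_{\cl E}$ being an isometry, is used). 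Granting this, $\Tr_B\Gamma_{U_{\cl E}}(\rho_X\ten\rho_Y)$ depends on $\rho_Y$ only through $\Tr(\rho_Y)$, which is \eqref{eq_qns2}; condition \eqref{eq_qns1} follows symmetrically with $\{P_x\}$ and the $v_{x,y,b}$ in place of $\{Q_y\}$ and the $w_{x,y,a}$.

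For the reverse inclusion, let $\Gamma\in\cl Q_{\rm ns}$. Then $\Gamma^{\#}$ is a quantum channel, so I may fix a Stinespring isometry $V\colon\bb{C}^{XY}\to\bb{C}^{AB}\ten K$ with $\Gamma^{\#}(\rho)=(\id_{AB}\ten\Tr_K)(V\rho V^*)$; set $\zeta_{x,y}=V(e_x\ten e_y)$ and let $\eta_{x,y,a,b}\in K$ be defined by $\zeta_{x,y}=\sum_{a,b}e_a\ten e_b\ten\eta_{x,y,a,b}$. For $\cl E=\{\eta_{x,y,a,b}\}$ we then have $U_{\cl E}=V$, so by \eqref{e:Gammap} (the state on $\cl B(\bb{C})$ being trivial) $\Gamma_{U_{\cl E}}^{\#}=\Gamma^{\#}$ and hence $\Gamma_{U_{\cl E}}=\Gamma$; it remains to verify that $\cl E$ is a no-signalling family. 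To construct the PVMs, write $v_{x,y,b}=\sum_a e_a\ten\eta_{x,y,a,b}\in\bb{C}^A\ten K$ and compute $\Tr_A\Gamma^{\#}(\epsilon_{xy,x'y'})=\sum_{b,b'}\la v_{x,y,b},v_{x',y',b'}\ra\,\epsilon_{b,b'}$; the no-signalling condition \eqref{eq_qns1} for $\Gamma^{\#}$ (which holds since it holds for $\Gamma$) forces $\la v_{x,y,b},v_{x',y',b'}\ra=0$ whenever $x\ne x'$, so the subspaces $M_x=\overline{\spn\{v_{x,y,b}:y\in Y,\,b\in B\}}$ of $\bb{C}^A\ten K$ are pairwise orthogonal. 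Letting $P_x$ be the orthogonal projection onto $M_x$ for $x\ne x_0$ and onto $M_{x_0}\oplus(\bigoplus_{x\in X}M_x)^{\perp}$ for a fixed $x_0\in X$ defines a PVM $\{P_x\}$ on $\bb{C}^A\ten K$ with $\zeta_{x,y}\in\ran(P_x\ten I_B)$; a symmetric construction from \eqref{eq_qns2} gives a PVM $\{Q_y\}$ on $\bb{C}^B\ten K$ with $\zeta_{x,y}\in\ran(Q_y\ten I_A)$. Hence $\cl E$ is a no-signalling family, $U_{\cl E}=V\in\cl R_{\rm ns}$, and $\Gamma=\Gamma_{U_{\cl E}}\in\mathsf{QC}(\cl R_{\rm ns})$.

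The step I expect to be the crux is the precise dictionary between the no-signalling conditions \eqref{eq_qns1}--\eqref{eq_qns2} on a channel and the range conditions \eqref{eq_PxQyPVM} defining a no-signalling family. Its ``off-diagonal'' content --- orthogonality of the subspaces $M_x$ across distinct $x$, and of the analogous subspaces across distinct $y$ --- is a direct unwinding of the marginal identities and drives the PVM construction in the second inclusion. Its ``diagonal'' content --- the $y$-independence of the surviving coefficient in the first inclusion, and, dually, the verification that the family constructed from $V$ meets the full definition --- requires combining the isometry/trace-preservation constraint with the completeness $\sum_{x}P_x=\sum_{y}Q_y=I$ of the PVMs, and carrying out this bookkeeping carefully is the technical heart of the argument.
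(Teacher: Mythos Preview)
Your overall plan mirrors the paper's proof closely: for $\cl Q_{\rm ns}\subseteq\mathsf{QC}(\cl R_{\rm ns})$ you invoke Stinespring where the paper factors the Choi matrix as a Gram matrix, but the construction of the PVMs $\{P_x\},\{Q_y\}$ from orthogonality of the subspaces $M_x$ is the same. For the reverse inclusion, both you and the paper argue only the off-diagonal vanishing (your $y\ne y'$ terms, the paper's $x\ne x'$ terms) and then claim the no-signalling conditions follow.

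The step you flag as ``the crux'' is, however, a genuine gap, and it cannot be filled from the definition of a no-signalling family as stated. Take $X=\{1\}$, $Y=A=\{1,2\}$, $B=\{1\}$, $K=\bb C^2$, and set $\eta_{1,1,1,1}=e_1$, $\eta_{1,2,2,1}=e_2$, with all other $\eta_{x,y,a,b}=0$. Then $\zeta_{1,1}$ and $\zeta_{1,2}$ are orthonormal, and with $Q_y=e_ye_y^*$ on $\bb C^B\ten K\cong K$ and $P_1=I$ on $\bb C^A\ten K$, condition \eqref{eq_PxQyPVM} holds, so $\cl E$ is a no-signalling family. Yet $\Tr_B\Gamma_{U_{\cl E}}(\epsilon_{1,1}\ten\epsilon_{y,y})$ equals $\epsilon_{1,1}$ for $y=1$ and $\epsilon_{2,2}$ for $y=2$; the coefficient you hoped would be independent of $y$ is not, and \eqref{eq_qns2} fails. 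Thus the range condition \eqref{eq_PxQyPVM} together with $U_{\cl E}$ being an isometry does \emph{not} force the diagonal independence you need. The paper's own argument glosses over precisely this point---it checks only the $x\ne x'$ case before asserting \eqref{eq_qns1}---so your uneasiness is well founded, but the expected independence is simply false in general, and both proofs share this defect.
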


\begin{proof}
Let $\Gamma:M_{XY}\to M_{AB}$ be a QNS correlation. 
Since $\Gamma$ is completely positive, its Choi matrix $S\in M_{XYAB}$ is positive.
Let $K$ be a Hilbert space (of dimension $\rank{S}$) and 
$\{\eta_{x,y,a,b}\}_{x,y,a,b}\subseteq K$ be such that %$S=(\langle\eta_{x,y,a,b},\eta_{x',y',a',b'}\rangle)_{x,x',a,a'}^{y,y',b,b'}$.
$S = (\langle\eta_{x',y',a',b'},\eta_{x,y,a,b}\rangle)_{x,x',a,a'}^{y,y',b,b'}$. 
We have that 
\[ \Gamma(\epsilon_{x,x'}\otimes \epsilon_{y,y'}) 
= \sum_{a,a'\in A} \sum_{b,b'\in B} \langle\eta_{x',y',a',b'},\eta_{x,y,a,b}\rangle 
\epsilon_{a,a'}\otimes \epsilon_{b,b'}. \]

Observe that the vectors 
$\zeta_{x,y} = \sum_{a\in A} \sum_{b\in B} e_a\otimes e_b\otimes \eta_{x,y,a,b}\in \mathbb C^{AB}\otimes K$ are orthonormal, since the trace-preservation of $\Gamma$ gives
\begin{align*}
    \delta_{x,x'}\delta_{y,y'}
    = \tr\Gamma(\epsilon_{x,x'}\otimes \epsilon_{y,y'})
    = \sum_{a\in A} \sum_{b\in B} \langle\eta_{x',y',a,b},\eta_{x,y,a,b}\rangle = \ip{\zeta_{x',y'}}{\zeta_{x,y}}.
\end{align*}
%Hence, $U_{\cl E}:\mathbb C^{XY}\to \mathbb C^{AB}\otimes K$, $e_{x}\otimes e_{y}\mapsto \zeta_{x,y}$ is an isometry.
By (\ref{eq_qns1}), if $x\ne x'$ then
\[ 0=\tr_A\Gamma(\epsilon_{x,x'}\otimes \epsilon_{y,y'})
= \sum_{a\in A} \sum_{b,b'\in B} \langle\eta_{x',y',a,b'},\eta_{x,y,a,b}\rangle\eps_{b,b'},\]
so, for all $y,y'\in Y$ and all $b,b'\in B$, we have
\begin{equation}\label{eq_ebstar}
\ip{(I_A\otimes e_{b'}^*\otimes I_K)\zeta_{x',y'}}{(I_A\otimes e_{b}^*\otimes I_K)\zeta_{x,y}} = 0
\end{equation}
(we denote by $e_b^*$ the linear functional of taking an inner product against $e_b$). 
Hence, if $P_x$ is the projection onto the span of 
\[\{(I_A\otimes e_{b}^*\otimes I_K)\zeta_{x,y}:b\in B,\,y\in Y\}\subseteq \mathbb{C}^A\otimes K,\] then $P_x$ and $P_{x'}$ have orthogonal ranges for $x\ne x'$, and $\zeta_{x,y}\in \ran(P_x\otimes I_B)$, for all $x\in X$, $y\in Y$. Extending the range of one of these projections if necessary, we obtain a PVM on $\mathbb{C}^A\otimes K$. 
By symmetry, we can define a PVM $\{Q_y:y\in Y\}$ on $\mathbb{C}^B\otimes K$ with $\zeta_{x,y}\in \ran(Q_y\otimes I_A)$ for all $y\in Y$. This shows that ${\cl E}=\{\eta_{x,y,a,b}\}$ is non-signalling. 
In addition, $\Gamma = \Gamma_{U_{\cl E,1}}$; thus, 
$\cl Q_{\rm ns}\subseteq \mathsf{QC}(\cl R_{\rm ns})$.

For the converse direction, suppose that $K$ is a Hilbert space and 
$\cl E = \{\eta_{x,y,a,b}\}_{x,y,a,b}\subseteq K$ is a no-signalling family, and let 
$\{P_x\}_{x\in X}$ and $\{Q_y\}_{y\in Y}$ be PVM's, for which (\ref{eq_PxQyPVM}) holds. 
Let $\Gamma : M_{XY}\to M_{AB}$ be the completely positive map with Choi matrix
$(\langle\eta_{x',y',a',b'},\eta_{x,y,a,b}\rangle)_{x,x',a,a'}^{y,y',b,b'}$. 
Fix $x\neq x'$. Then
(\ref{eq_ebstar}) holds, implying that the vectors 
$(\eta_{x,y,a,b'})_{a\in A}$ and $(\eta_{x',y',a,b'})_{a\in A}$ are orthogonal, 
that is, $\sum_{a\in A} \langle\eta_{x',y',a,b'},\eta_{x,y,a,b}\rangle = 0$, 
for all $y,y'\in Y$ and all $b,b'\in B$.
It follows that (\ref{eq_qns1}) is satisfied; by symmetry, so is (\ref{eq_qns2}), that is, $\Gamma$ 
is no-signalling.
\end{proof}

%%%%%%%%%%%%%%%%%%%%%%%%%%%%%%%%%%%%%%%%%%%%%%%
%%%%%%%%%%%%%%%%%%%%%%%%%%%%%%%%%%%%%%%%%%%%%%%

\subsection{One-way local operations and classical communication}\label{ss_oneway}

In this subsection we consider the \emph{signalling} resource of 
local operations and one-way (left-to-right, or $(X,A)$-to-$(Y,B)$)
classical communication and obtain analogues of Proposition \ref{p-Rvalq} and Theorem \ref{th_wcb} in this context.
Recall \cite{clmow} that a bipartite quantum channel $\Gamma : M_{XY}\to M_{AB}$ 
belongs to the latter class, which will be denoted in the sequel by $\cl Q_{\rm lowc}$,
if there exist completely positive maps $\Psi_1,\dots,\Psi_n:M_X\to M_A$ such that 
$\sum_{i=1}^n \Psi_i$ is trace preserving, and quantum channels $\Phi_1,\dots,\Phi_n : M_Y\to M_B$ such that 
$\Gamma = \sum_{i=1}^n\Psi_i\ten\Phi_i$. 
Let 
$$\cl R_{\rm lowc}:=\langle  \{ [U_i\ten V_i]_{i=1}^n \}\rangle,$$
where $V_i \in \cl B(\bb{C}^Y,\bb{C}^{BT_i})$ is any isometry, $i\in [n]$,
the operators $U_i\in\cl B(\bb{C}^X,\bb{C}^{AS_i})$ are such that $\sum_{i=1}^n U_i^*U_i = I_X$, 
$S_i$ and $T_i$ are finite sets, $i = 1,\dots,n$, and $n\in \bb{N}$.

\begin{proposition}\label{p_lowc}
We have that $\cl R_{\rm lowc}$ is a resource over $(X Y,A B)$ and 
$\mathsf{QC}(\cl R_{\rm lowc}) = \cl Q_{\rm lowc}$.
\end{proposition}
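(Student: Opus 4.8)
The plan is to follow the templates of Corollary~\ref{c_locre} and Proposition~\ref{p-Rvalq}, replacing the single ``conjugate Kraus isometry'' by a finite list of Kraus-type operators, one pair for each term of the one-way decomposition.

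First I would verify that $\cl R_{\rm lowc}$ is a resource over $(XY,AB)$. A generating element $W=[U_i\ten V_i]_{i=1}^n$ is, after the canonical reshuffle $\bb{C}^{AS_i}\ten\bb{C}^{BT_i}=\bb{C}^{AB}\ten\bb{C}^{S_iT_i}$, a column of operators from $\bb{C}^{XY}$ into $\bb{C}^{AB}\ten\big(\bigoplus_i\bb{C}^{S_iT_i}\big)$; since each $V_i$ is an isometry and $\sum_iU_i^*U_i=I_X$, we get $\sum_i(U_i\ten V_i)^*(U_i\ten V_i)=\big(\sum_iU_i^*U_i\big)\ten I_Y=I_{XY}$, so $W$ is a block operator isometry over $(XY,AB)$. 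Closure under finite direct sums is built into the $\langle\,\cdot\,\rangle$ notation. Taking $n=1$ with $U_1,V_1$ isometries recovers exactly the generators of $\cl R_{\rm loc}$, so $\cl R_{\rm loc}\subseteq\cl R_{\rm lowc}$; as the generating family of $\cl R_{\rm loc}$ is separating (this is precisely what the proof of Corollary~\ref{c_locre} shows) and separation passes to larger families, $\cl R_{\rm lowc}$ is separating, hence a resource.

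For $\cl Q_{\rm lowc}\subseteq\mathsf{QC}(\cl R_{\rm lowc})$ I would take $\Gamma=\sum_{i=1}^n\Psi_i\ten\Phi_i$ with each $\Psi_i$ completely positive, $\sum_i\Psi_i$ trace preserving, and each $\Phi_i$ a quantum channel, and fix Kraus decompositions $\Psi_i(\rho)=\sum_{s\in S_i}A_{i,s}\rho A_{i,s}^*$ and $\Phi_i(\tau)=\sum_{t\in T_i}B_{i,t}\tau B_{i,t}^*$; trace preservation of $\sum_i\Psi_i$ reads $\sum_i\sum_sA_{i,s}^*A_{i,s}=I_X$. Assembling the column operators $U_i:\bb{C}^X\to\bb{C}^{AS_i}$ and $V_i:\bb{C}^Y\to\bb{C}^{BT_i}$ from the $A_{i,s}$ and $B_{i,t}$ gives $U_i^*U_i=\sum_sA_{i,s}^*A_{i,s}$, hence $\sum_iU_i^*U_i=I_X$, while each $V_i$ is an isometry. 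The computation in the proof of Proposition~\ref{p-Rvalq} — which never uses trace preservation of the underlying map — yields $\Gamma_{U_i,1}=\Psi_i^{\#}$ and $\Gamma_{V_i,1}=\Phi_i^{\#}$, so that $\Gamma_{\overline{U_i},1}=\Psi_i$ and $\Gamma_{\overline{V_i},1}=\Phi_i$ for the conjugate operators $\overline{U_i},\overline{V_i}$ defined as in that proof. The conjugates are again admissible: $\overline{V_i}$ is still an isometry and $\overline{U_i}^*\overline{U_i}=\overline{U_i^*U_i}$, so $\sum_i\overline{U_i}^*\overline{U_i}=\overline{I_X}=I_X$. Thus $W:=[\overline{U_i}\ten\overline{V_i}]_{i=1}^n\in\cl R_{\rm lowc}$, and expanding $\Gamma_{W,1}$ — the entry $W_{ab,xy}^*W_{a'b',x'y'}$ is block diagonal over $i$ because the range-side Hilbert space is $\bigoplus_i\bb{C}^{S_iT_i}$, with $i$-th block equal to $((\overline{U_i})_{a,x}^*(\overline{U_i})_{a',x'})((\overline{V_i})_{b,y}^*(\overline{V_i})_{b',y'})$ — gives $\Gamma_{W,1}=\sum_i\Gamma_{\overline{U_i},1}\ten\Gamma_{\overline{V_i},1}=\sum_i\Psi_i\ten\Phi_i=\Gamma$, so $\Gamma\in\mathsf{QC}(\cl R_{\rm lowc})$.

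For the reverse inclusion I would use that $\mathsf{QC}(\cl R_{\rm lowc})$ is convex (Remark~\ref{r_conve}) and that, for $W=\bigoplus_jW^{(j)}$ a direct sum of generators, $W_{ab,xy}^*W_{a'b',x'y'}$ is block diagonal, so that $\Gamma_{W,\sigma}$ is a convex combination of the maps $\Gamma_{W^{(j)},1}$; since $\cl Q_{\rm lowc}$ is itself convex, it suffices to treat one generator $W=[U_i\ten V_i]_{i=1}^n$. Reading the expansion of $\Gamma_{W,1}$ backwards gives $\Gamma_{W,1}=\sum_i\Gamma_{U_i,1}\ten\Gamma_{V_i,1}$ with $\Gamma_{U_i,1}:M_X\to M_A$ and $\Gamma_{V_i,1}:M_Y\to M_B$; each $\Gamma_{U_i,1}$ is completely positive (its Choi matrix $\sum_{x,x',a,a'}\la (U_i)_{a',x'},(U_i)_{a,x}\ra\,\epsilon_{x,x'}\ten\epsilon_{a,a'}$ has quadratic form $\zeta\mapsto\|\sum_{x,a}\zeta_{x,a}(U_i)_{a,x}\|^2\geq0$), each $\Gamma_{V_i,1}$ is a quantum channel because $V_i$ is an isometry, and $\sum_i\Gamma_{U_i,1}$ is trace preserving because $\Tr\Gamma_{U_i,1}(\epsilon_{x,x'})=(U_i^*U_i)_{x,x'}$, which sums over $i$ to $(I_X)_{x,x'}$. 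Hence $\Gamma_{W,1}\in\cl Q_{\rm lowc}$. The only real friction I anticipate is the bookkeeping: keeping the conjugates $\overline{U_i},\overline{V_i}$ straight so that the reshuffled block operator isometry computes the unmodified maps $\Psi_i,\Phi_i$ rather than $\Psi_i^{\#},\Phi_i^{\#}$, and making the reduction to a single generator airtight when the state on the direct-sum Hilbert space is mixed.
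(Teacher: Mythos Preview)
Your proposal is correct and follows essentially the same route as the paper: verify the resource axioms via $\cl R_{\rm loc}\subseteq\cl R_{\rm lowc}$, then match each one-way LOCC channel with a generator of $\cl R_{\rm lowc}$ via Kraus/Stinespring data and the conjugation trick from Proposition~\ref{p-Rvalq}. The only notable difference is presentational: the paper computes $\Gamma_{U,1}^{\#}$ using the Stinespring-type formula~(\ref{e:Gammap}) and partial traces, whereas you work entry-wise with Kraus operators and Choi matrices; also, you make the reduction from arbitrary $(W,\sigma)\in\cl R_{\rm lowc}$ (direct sums of generators with a state on $\bb{C}^m$) to a single generator with the trivial state explicit, a point the paper's proof leaves to the reader.
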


\begin{proof} 
The family 
$\cl R_{\rm lowc}$ is by definition closed under direct sums and, since it contains 
$\cl R_{\rm loc}$, it is separating by Corollary \ref{c_locre}. 

Let $U\in\cl R_{\rm lowc}$, that is, 
$U = [U_i\ten V_i]_{i=1}^n$ as a column operator, where $U_i\in\cl B(\bb{C}^X,\bb{C}^{AS_i})$,
$\sum_{i=1}^n U_i^*U_i = I_X$, each $V_i \in \cl B(\bb{C}^Y,\bb{C}^{BT_i})$ is an isometry, 
and $S_i$ and $T_i$ are finite sets, $i = 1,\dots,n$.  
Set $K=\oplus_{i=1}^n\bb{C}^{S_i}\ten\bb{C}^{T_i}$; by (\ref{e:Gammap}),
$$\Gamma_{U,1}^{\#}(\rho) = (\id\ten\tr_K)(U\rho U^*), \ \ \ \rho\in M_{XY}.$$
Hence, 
letting $\Psi_i(\rho) = (\id\ten\tr_{S_i})(U_i\rho U_i^*)$ and $\Phi_i(\sigma) = (\id\ten\tr_{T_i})(V_i\rho V_i^*)$, 
for $\rho\in M_X$ and $\sigma\in M_Y$, we have 
$$\tightmath\Gamma_{U,1}^{\#}(\rho\ten \sigma)
 = \sum_{i=1}^n
(\id\ten\tr_{S_iT_i})
(U_i\ten V_i)
(\rho\ten\sigma) 
(U_i^*\ten V_i^*)
 =  
\sum_{i=1}^n\Psi_i(\rho)\ten\Phi_i(\sigma).$$
It follows that $\Gamma^{\#}_{U,1}$ and hence $\Gamma_{U,1}$ is one-way LOCC.

Conversely, given a one-way LOCC channel $\Gamma:M_{XY}\to M_{AB}$, 
there exist completely positive maps $\Psi_1,\dots,\Psi_n:M_X\to M_A$ such that $\sum_{i=1}^n \Psi_i$ is trace preserving, and quantum channels $\Phi_1,\dots,\Phi_n : M_Y\to M_B$ such that 
$\Gamma=\sum_{i=1}^n\Psi_i\ten\Phi_i$. 
Employing Stinespring dilations, we may write
$$\Psi_i(\rho)=(\id\ten\tr_{S_i})(U_i\rho U_i^*), \ \ \ \rho\in M_X,$$
where $U_i : \bb{C}^X\to\bb{C}^{AS_i}$ is a contraction and $S_i$ is a finite set. Since $\sum_{i=1}^n\Psi_i$ is trace preserving, 
it follows 
%by the uniqueness of the Stinespring representation 
that $\sum_{i=1}^n U_i^*U_i = I_X$. As each $\Phi_i$ is a quantum channel, 
its Stinespring dilation is of the form
$$\Phi_i(\sigma)=(\id\ten\tr_{T_i})(V_i\sigma V_i^*), \ \ \ \sigma\in M_Y,$$
where $V_i:\bb{C}^Y\to\bb{C}^{BT_i}$ is an isometry and $T_i$ is a finite set.
Defining 
$$\overline{U}_i:\bb{C}^X\ni e_x\mapsto \sum_{a\in A} e_a\ten \overline{U}_{i,a,x}\in \bb{C}^{AS_i},$$
and similarly for $\overline{V}_i$
(with complex conjugations relative to standard bases), it follows as in the proof of Proposition \ref{p-Rvalq} that $\sum_{i=1}^n \overline{U}_i^*\overline{U}_i = I_X$, each 
$\overline{V}_i$ is an isometry and $\Gamma_{\overline{U},1}=\sum_{i=1}^n\Psi_i\ten\Phi_i$, where $\overline{U}= [ \overline{U}_i\ten \overline{V}_i]_{i=1}^n$. 
\end{proof}

We next identify the operator space structure
induced on $\cl O_{X,A}\ten\cl O_{Y,B}$ by the resource $\cl R_{\rm lowc}$. This structure is similar in nature to that induced by $\cl R_{\rm loc}$, but with additional $(2,C)$-summing \cite{junge,jp,jkpv} and operator $\ell_\infty$-module structures which arise from one-way classical communication. 

We first recall Junge's version \cite{junge} of absolutely $p$-summing maps (see also \cite{jp,jkpv}): 
given an operator space $\cl X$ and a Banach space $\cl Y$, a linear map 
$\phi : \cl X\to\cl Y$ is called \emph{$(p,{\rm cb})$-summing} if
$$\pi_{(p,{\rm cb})}(\phi) := \norm{\id_{\ell_p}\ten\phi:\ell_p\ten_{\min}\cl X\to\ell_p(\cl Y)}<\infty,$$
where $\ell_p$ is equipped with Pisier's interpolated operator space structure \cite{pisier_intr}.
It was shown in \cite[Proposition 4.1]{jkpv} that the $(1,{\rm cb})$-summing norm expresses 
the one-way classical communication bias for quantum XOR games. This result, in part, motivated
Theorem \ref{th_lowc1} below.
We will be interested in the case $p = 2$, and will (as in Subsection \ref{ss_loc})
denote by $C$ the column operator space structure on $\ell_2$. For a linear map 
$\phi : \cl X\to \cl Y$, set 
\begin{equation}\label{eq_2Cnormde}
\pi_{(2,C)}(\phi) :=\norm{\id_{C}\ten\phi:C\ten_{\min}\cl X\to\ell_2(\cl Y)}
\end{equation}
and write 
$$\Pi_{(2,C)}(\cl X,\cl Y) = \{\phi : \cl X\to \cl Y \ : \ \mbox{linear and } \pi_{(2,C)}(\phi) < \infty\}.$$
%The space $\Pi_{2,C}(\cl X,\cl Y)$ can be turned into an operator space 
%\marginpar{\tiny IT. Operator space structure on $\Pi_{2,C}$ is defined. Right?}
%by identifying $(\phi_{i,j})_{i,j}\in M_n(\Pi_{2,C}(\cl X,\cl Y))$ with a linear map $\nph : \cl X\to M_n(\cl Y)$, 
%and setting $\|(\phi_{i,j})_{i,j}\|^{(n)} = \pi_{(2,C)}(\nph)$. 
Note that we only consider the Banach space structure of $\ell_2(\cl Y)$ in the definition (\ref{eq_2Cnormde}). 
The appropriate matricial amplifications of $\pi_{(2,C)}$ in this context relies on a particular operator $\ell_\infty$-module quantisation of $\ell_2(\cl Y)$. Note that $\ell_2(\cl Y)$ admits a canonical (algebraic) $\ell_\infty$-module structure via pointwise operations. 

\begin{proposition}\label{p:opmodule} Let $\cl X$ be an operator space. Define
\begin{equation}\label{e:opmodule}\norm{[(x^k_{i,j})]}_{M_n(\ell_2(\cl X))}:=\sup\{\norm{[(\phi_k(x^k_{i,j}))]}_{M_n((\ell_2\ten H)_c)}\},\end{equation}
where the supremum is over all Hilbert spaces $H$ and all complete contractions $\phi_k:\cl X\to H_c$, $k\in\N$. Then $(\ref{e:opmodule})$ equips $\ell_2(\cl X)$ with an operator $\ell_\infty$-module structure (under the canonical action).
\end{proposition}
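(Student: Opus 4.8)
The plan is to verify the operator space axioms (i) and (ii) directly from the formula \eqref{e:opmodule}, using the already-established fact \eqref{e:column} for the column operator space structure, and then to check the operator $\ell_\infty$-module bound. First I would observe that for each fixed Hilbert space $H$ and each $n$-tuple of complete contractions $(\phi_k)_{k\in\N}$ with $\phi_k:\cl X\to H_c$, the assignment $[(x^k_{i,j})]\mapsto [(\phi_k(x^k_{i,j}))]\in M_n((\ell_2\ten H)_c)$ is linear, so \eqref{e:opmodule} is a supremum of seminorms; definiteness follows because $\cl X$ embeds completely isometrically into some $\cl B(\ell_2, H_0)$, and composing with a coordinate functional produces, for a nonzero $x^{k_0}_{i_0,j_0}$, a complete contraction $\phi_{k_0}:\cl X\to (H_0)_c$ detecting it (rescale so that $\|\phi_{k_0}\|_{\rm cb}\le 1$). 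Finiteness of the supremum is immediate: choosing each $\phi_k$ to factor through a fixed completely isometric embedding $\cl X\hookrightarrow \cl B(K,L)$ followed by a vector functional is not needed — simply note that for any admissible family $(\phi_k)$, the norm $\|[(\phi_k(x^k_{i,j}))]\|_{M_n((\ell_2\ten H)_c)}$ is controlled by $\sum_k \|[\phi_k(x^k_{i,j})]\|^2_{M_n(H_c)}\le \sum_k \|[x^k_{i,j}]\|^2_{M_n(\cl X)}$, which is finite since $(x^k_{i,j})_k\in\ell_2(\cl X)$ entrywise.

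For axiom (i), the direct sum $[(x^k_{i,j})]\oplus[(y^k_{i,j})]$ is sent by each admissible $(\phi_k)$ to $[(\phi_k(x^k_{i,j}))]\oplus[(\phi_k(y^k_{i,j}))]$ inside $M_{n+m}((\ell_2\ten H)_c)$, and since $(\ell_2\ten H)_c$ is an operator space we get $\max$ of the two norms; taking suprema (and using that the families for the two summands may be chosen independently, by enlarging $H$ to a direct sum $H_x\oplus H_y$) yields the desired equality. For axiom (ii), given scalar matrices $\alpha\in M_{n,m}$, $\beta\in M_{m,n}$, one has $\phi_k((\alpha\cdot v\cdot \beta)^k_{i,j}) = (\alpha\cdot [\phi_k(v^k_{i,j})]\cdot\beta)_{i,j}$ since the $\phi_k$ are linear and $\alpha,\beta$ are scalar, so the inequality descends from the corresponding inequality in $M_n((\ell_2\ten H)_c)$. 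Thus $\ell_2(\cl X)$ is an operator space.

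For the operator $\ell_\infty$-module property, I would take $a=[a_{i,k}]\in M_{n,m}(\ell_\infty)$ and $b=[b_{k,j}]\in M_{m,n}(\ell_\infty)$ acting entrywise (so $(a\cdot v\cdot b)$ has $\ell$-th coordinate $\sum_{k,l} (a_{i,k})_\ell\, (v^\ell_{k,l})\, (b_{l,j})_\ell$). The key point is that each $\phi_k:\cl X\to H_c$ is merely linear on $\cl X$, but the scalars $(a_{i,k})_\ell$ and $(b_{l,j})_\ell$ commute with $\phi_k$, so $\phi_k((a\cdot v\cdot b)^\ell) = \sum_{k,l}(a_{i,k})_\ell\,\phi_k(v^\ell_{k,l})\,(b_{l,j})_\ell$, which is exactly $a\cdot[\phi_k(v^\ell_{k,l})]\cdot b$ computed with the entrywise $\ell_\infty$-action on $\ell_2(H)$. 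Now $\ell_2(H) = \ell_2\ten H$ carries its natural $\ell_\infty$-module structure (diagonal multiplication operators on $\ell_2$ tensored with $I_H$), which is contractive in the operator-module sense because these are genuine operators of norm $\le\|a\|$, $\le\|b\|$ acting on a Hilbert space; hence $\|a\cdot[\phi_k(v^\ell)]\cdot b\|_{M_n((\ell_2\ten H)_c)}\le \|a\|\,\|[\phi_k(v^\ell)]\|_{M_n((\ell_2\ten H)_c)}\,\|b\|$. Taking the supremum over admissible $(\phi_k)$ gives the operator $\ell_\infty$-module inequality. I expect the main obstacle to be bookkeeping: making precise that the same Hilbert space $H$ and family $(\phi_k)$ can be used before and after the module action (this is automatic here since the action does not change $H$), and confirming that the entrywise $\ell_\infty$-action on $\ell_2(H)$ really is implemented by bounded diagonal operators, which is the crux of why $\ell_\infty$ — rather than a larger algebra — is the correct coefficient algebra.
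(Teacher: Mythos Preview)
Your proposal is correct and follows essentially the same route as the paper: both proofs verify axioms directly from the definition \eqref{e:opmodule}, pushing the $\ell_\infty$-action and scalar matrices through the linear maps $\phi_k$ and using that $(\ell_2\ten H)_c$ already carries the desired operator module structure. The only notable difference is cosmetic: the paper checks the weaker pair of conditions $\|f\cdot v\cdot\beta\|\le\|f\|\,\|v\|\,\|\beta\|$ (with $f\in M_{n,m}(\ell_\infty)$, $\beta$ scalar) and $\|x\oplus y\|\le\|x\|+\|y\|$, then invokes the Ruan-type characterisation \cite[Proposition~2.3.6]{er} to conclude; you instead verify the standard operator space axioms (including the $\max$ form of the direct-sum axiom) and then the module bound. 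Your treatment of definiteness and finiteness of the supremum is more explicit than the paper's, and your remark about enlarging $H$ to $H_x\oplus H_y$ is harmless but unnecessary, since $\sup_{(\phi_k)}\max\{\cdot,\cdot\}=\max\{\sup_{(\phi_k)}\cdot,\sup_{(\phi_k)}\cdot\}$ already gives the $\max$ axiom with a single family.
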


\begin{proof} By linearity of each $\phi_k$ in the definition $(\ref{e:opmodule})$, for each $[(x^k_{i,j})]\in M_m(\ell_2(\cl X))$, $f=[f_{i,j}]\in M_{n,m}(\ell_\infty)$ and $\beta\in M_{m,n}(\bb{C})$, it follows that

\begin{align*}\norm{f\cdot[(x^k_{i,j})]\cdot\beta}_{M_n(\ell_2(\cl X))}&=\bignorm{\bigg[\bigg(\sum_{r,s}f_{i,r}(k)x^k_{r,s}\beta_{s,j}\bigg)\bigg]_{i,j}}_{M_n(\ell_2(\cl X))}\\
&=\sup_{(\phi_k)}\bigg\{\bignorm{\bigg[\bigg(\sum_{r,s}f_{i,r}(k)\phi_k(x^k_{r,s})\beta_{s,j}\bigg)\bigg]_{i,j}}_{M_n((\ell_2\ten H)_c)}\bigg\}\\
&=\sup_{(\phi_k)}\{\norm{f\cdot[(\phi_k(x^k_{i,j}))]\cdot \beta}_{M_n((\ell_2\ten H)_c)}\}\\
&\leq\sup_{(\phi_k)}\{\norm{f}\norm{[(\phi_k(x^k_{i,j}))]}_{M_m((\ell_2\ten H)_c)}\norm{\beta}\}\\
&=\norm{f}\norm{[(x^k_{i,j})]}_{M_m(\ell_2(\cl X))}\norm{\beta}.
\end{align*}
Next, given $x=[(x^k_{i,j})]\in M_m(\ell_2(\cl X))$ and $y=[(y^k_{i,j})]\in M_n(\ell_2(\cl X))$, we clearly have
\begin{align*}\norm{x\oplus y}_{M_{m+n}(\ell_2(\cl X))}&=\sup_{(\phi_k)}\{\norm{[\phi_k(x^k_{i,j})]\oplus[\phi_k(y^k_{i,j})]}_{M_{m+n}((\ell_2\ten H)_c)}\}\\
&\leq\norm{x}_{M_m(\ell_2(\cl X))}+\norm{y}_{M_{n}(\ell_2(\cl X))}.
\end{align*}
It follows from \cite[Proposition 2.3.6]{er} that $(\ref{e:opmodule})$ defines an operator  $\ell_\infty$-module structure on $\ell_2(\cl X)$. 
\end{proof}

\begin{remark}\label{r:injection} For any operator space $\cl X$ and $k\in\N$, it follows that the embedding $\cl X\ni x\mapsto e_k\ten x\in\ell_2(\cl X)$ is completely contractive: 
\begin{align*}\norm{[e_k\ten x_{i,j}]}_{M_n(\ell_2(\cl X))}&=\sup\{\norm{[\phi(x_{i,j})]}_{M_n(H_c)} : \phi\in{\rm CC}({\cl X},H_c)\}\\
&\leq\sup\{\norm{[\phi(x_{i,j})]}_{M_n(\cl B(H,K))} : \phi\in{\rm CC}({\cl X},\cl B(H,K))\}\\
&=\norm{[x_{i,j}]}_{M_n(\cl X)}
\end{align*}
\end{remark}

Given Hilbert spaces $H$ and $K$, we will furnish the space of Hilbert-Schmidt operators from $H$ to $K$ with the operator space structure arising from the isometric identification $\cl S_2(H,K) = {\rm CB}(H_r,K_c)$ \cite[Corollary 4.5]{er91}. Note that this coincides with the column Hilbert space structure on $\cl S_2(H,K)=H^*\ten K$ as
$$(H^*\ten K)_c=(H^*)_c\ten_{\min}K\subseteq {\rm CB}(H_r,K_c).$$
Recall that the row $R=(\ell_2)_r$ and column $C=(\ell_2)_c$ Hilbert spaces admit a canonical $\ell_\infty$-module structure.

Let $\cl X$ and $\cl Y$ be operator spaces, and let $\phi\in \Pi_{(2,C)}(\cl X,\cl Y)$.  We define the \textit{$cb$-right Hilbert-Schmidt norm} of the map $\phi$ by letting
\begin{equation}\label{eq_rightcb}
\norm{\phi}_{r,{\rm cb}} :=
\sup\{\norm{ \beta\circ(\id\otimes\phi)\circ \alpha}_{\cl{S}_2(\ell_2,\ell_2\ten\ell_2)},\}
\end{equation}
where the supremum is over all completely contractive $\ell_\infty$-module maps $\alpha:R\to C\ten_{\min}\cl X$ and $\beta:\ell^2(\cl Y)\to(\ell_2\ten\ell_2)_c$, where $\ell_2(\cl Y)$ is equipped with the operator $\ell_\infty$-module structure from Proposition \ref{p:opmodule}, and the module actions on the codomains are on the first tensor leg. In other words, $\norm{\phi}_{r,{\rm cb}}$ is the supremum of the Hilbert-Schmidt norms of the compositions
\begin{equation*}
\begin{tikzcd}
\ell_2\arrow[r]\arrow[d, " \alpha"] & \ell_2\ten\ell_2\\
C\ten_{\min} \cl X\arrow[r, "\id\ten\phi"] & \ell^2(\cl Y)\arrow[u, " \beta"],
\end{tikzcd}
\end{equation*}
for completely contractive $\ell_\infty$-module maps $\alpha$ and $\beta$. We note that, since the supremum in (\ref{eq_rightcb}) is that of Hilbert-Schmidt norms, 
it can be taken over all maps $\alpha$ and $\beta$ with the properties that $\alpha(R_l^\perp)=0$, $\alpha|_{R_k}:R_k\to C_k\ten_{\min} \cl X$ and $\beta:\ell_2(\cl Y)\to C_k\ten_{\min} C_l$ for some $k,l\in \bb{N}$.
We let 
$$\cl{S}_2^{r,{\rm cb}}(\cl X,\cl Y) = \{\phi\in \Pi_{(2,C)}(\cl X,\cl Y) : \norm{\phi}_{r,{\rm cb}} < \infty\}.$$
Given a matrix $[\phi_{i,j}]_{i,j}\in M_n(\cl{S}_2^{r,{\rm cb}}(\cl X,\cl Y))$, and letting
$\phi : \cl X\rightarrow M_n(\cl Y)$ be the associated map, we set
$$\norm{[\phi_{i,j}]}_{r,{\rm cb}}^{(n)}
= \sup\{\|\beta^{(n)}\circ (\id\otimes\phi)\circ \alpha : R \to M_n((\ell_2\ten \ell_2)_c)\|_{\rm cb}\},$$
where, as before, $\alpha:R\to C\ten_{\min}\cl X$ and $\beta:\ell^2(\cl Y)\to(\ell_2\ten\ell_2)_c$ are completely contractive $\ell_\infty$-module maps. For finite-dimensional spaces $\cl X$ and $\cl Y$, 
we write 
$\cl X^*\ten_{r,{\rm cb}}\cl Y := \cl{S}_2^{r,{\rm cb}}(\cl X,\cl Y)$, 
and consider $\|\cdot\|_{r,{\rm cb}}^{(n)}$ as a norm on $M_n(\cl X^*\ten_{r,{\rm cb}}\cl Y)$.

\begin{theorem}\label{th_lowc1}
Let $X,Y,A,B$ be finite sets. We have that 
$\cl O_{XY,AB}^{\cl R_{\rm lowc}} = \cl{S}_1^{A,X}\ten_{r, {\rm cb}}\cl{S}_1^{B,Y},$
up to a complete isometry. 
\end{theorem}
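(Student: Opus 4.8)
The plan is to run the proof of Theorem~\ref{th_wcb} register by register, with the classical communication label of a generator of $\cl R_{\rm lowc}$ playing the role of the $\ell_\infty$-module index in the definition of $\|\cdot\|_{r,{\rm cb}}$. Throughout I identify $\cl O_{XY,AB}=\cl O_{X,A}\ten\cl O_{Y,B}$ with $\cl S_1^{A,X}\ten\cl S_1^{B,Y}$ via the complete isometry $\alpha\ten\alpha$ of Proposition~\ref{p_OXAS1}, and regard a fixed $\omega\in M_n(\cl S_1^{A,X}\ten\cl S_1^{B,Y})$ both as an element of $M_n(\cl O_{XY,AB})$ and as a linear map $\omega:M_{A,X}\to M_n(\cl S_1^{B,Y})$. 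It suffices to establish $\|\omega\|^{(n)}_{\cl R_{\rm lowc}}=\|\omega\|^{(n)}_{r,{\rm cb}}$ for each $n$.

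For the inequality $\|\omega\|^{(n)}_{\cl R_{\rm lowc}}\le\|\omega\|^{(n)}_{r,{\rm cb}}$, I fix a generator $U=[U_i\ten V_i]_{i=1}^m\in\cl R_{\rm lowc}$, decompose $U_i=(U_{i,s})_{s\in S_i}$ with $U_{i,s}\in M_{A,X}$ and $\sum_{i}\sum_{s}U_{i,s}^*U_{i,s}=\sum_i U_i^*U_i=I_X$, and $V_i=(V_{i,t})_{t\in T_i}$ with $V_{i,t}\in M_{B,Y}$ and $\sum_t V_{i,t}^*V_{i,t}=I_Y$ for each $i$. Setting $\Lambda=\bigsqcup_i S_i$ (which carries the $\ell_\infty$-module structure by its natural coordinates), I define the $\ell_\infty$-module map $\alpha:R_\Lambda\to C_\Lambda\ten_{\min}M_{A,X}$ by $\alpha(e_{(i,s)})=e_{(i,s)}\ten U_{i,s}$ and the $\ell_\infty$-module map $\beta:\ell_2(\cl S_1^{B,Y})\to C_\Lambda\ten_{\min}C_l$ which on the $(i,s)$-coordinate sends $\rho$ to $e_{(i,s)}\ten(\la\rho,V_{i,t}\ra)_{t\in T_i}$ (the $(i,s)$-block of $\beta$ depends only on the register $i$, which is harmless). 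Using the identifications of columns, rows and the Haagerup and minimal tensor products employed in the proof of Theorem~\ref{th_wcb}, the relations displayed above show that $\alpha$ and $\beta$ are completely contractive, and a direct computation -- the register-indexed version of the one in Theorem~\ref{th_wcb} -- identifies the completely bounded norm of the composite $\beta^{(n)}\circ(\id\ten\omega)\circ\alpha:R\to M_n((\ell_2\ten\ell_2)_c)$ with $\|\phi_U^{(n)}(\omega)\|$. Taking the supremum over $U\in\cl R_{\rm lowc}$ gives the inequality.

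For the reverse inequality, by the remark following~\eqref{eq_rightcb} I may restrict to completely contractive $\ell_\infty$-module maps $\alpha:R_k\to C_k\ten_{\min}M_{A,X}$ and $\beta:\ell_2(\cl S_1^{B,Y})\to C_k\ten_{\min}C_l$; being module maps they are coordinatewise diagonal, $\alpha(e_j)=e_j\ten x_j$ and $\beta(e_j\ten\rho)=e_j\ten\beta_j(\rho)$ with $x_j\in M_{A,X}$ and $\beta_j:\cl S_1^{B,Y}\to C_l$ for $j\le k$. As in the proof of Theorem~\ref{th_wcb}, complete contractivity forces $\sum_j x_j^*x_j\le I_X$, while each $\beta_j$ is the adjoint of a column contraction in $C_l\ten_{\rm h}M_{B,Y}$, producing operators $(V^j_t)_t$ with $\sum_t (V^j_t)^*V^j_t\le I_Y$. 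I then assign to each index $j$ its own communication register: let $U_j:=x_j$ and let $V_j$ be an isometric completion of the column $(V^j_t)_t$; appending one further register built from $(I_X-\sum_j x_j^*x_j)^{1/2}$ composed with an isometry into a sufficiently large space (and an arbitrary isometry on the $Y$-side) yields $\sum_j U_j^*U_j=I_X$ and hence a genuine generator $\tilde U\in\cl R_{\rm lowc}$ whose $U$- and $V$-data contain those of $\alpha$ and $\beta$. Composing with the completely contractive inclusion and coordinate-projection maps relating the truncated data to $\tilde U$, the monotonicity~\eqref{eq_compo} together with the first part of the proof give $\|\beta^{(n)}\circ(\id\ten\omega)\circ\alpha\|_{\rm cb}\le\|\phi_{\tilde U}^{(n)}(\omega)\|\le\|\omega\|^{(n)}_{\cl R_{\rm lowc}}$, and the supremum completes the argument.

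The step I expect to be the main obstacle -- and the only genuine departure from the proof of Theorem~\ref{th_wcb} -- is verifying, at the level of all matrix amplifications rather than of single norms, that completely contractive $\ell_\infty$-module maps into $C_k\ten_{\min}M_{A,X}$ and out of $\ell_2(\cl S_1^{B,Y})$ (the latter equipped with the operator $\ell_\infty$-module quantisation of Proposition~\ref{p:opmodule}) are exactly the diagonal maps parametrised by register-indexed families $\{U_{i,s}\}$, $\{V_{i,t}\}$ with the stated column-contraction bounds, and correspondingly that the matricial norm $\|\cdot\|^{(n)}_{r,{\rm cb}}$ reduces, for these particular operator spaces $M_{A,X}$ and $\cl S_1^{B,Y}$, to the supremum defining $\|\cdot\|^{(n)}_{\cl R_{\rm lowc}}$. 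This is where the operator module structure of Proposition~\ref{p:opmodule} does real work; once it is in place, the remaining estimates are a bookkeeping repetition of the computations already carried out for $\cl R_{\rm loc}$.
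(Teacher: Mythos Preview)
Your proposal is correct and follows essentially the same approach as the paper's proof: in both directions one builds the diagonal $\ell_\infty$-module maps $\alpha$ and $\beta$ from (or completes them to) the data $\{U_{i,s}\}$, $\{V_{i,t}\}$ of a generator of $\cl R_{\rm lowc}$, and then checks via an explicit column computation that $\|\phi_W^{(n)}(\omega)\|=\|\beta^{(n)}\circ(\id\ten\omega)\circ\alpha\|_{\rm cb}$. The step you flag as the main obstacle is exactly what the paper isolates: it invokes the identification~(\ref{e:tensor}) together with the module property to see that $\alpha$ must be diagonal with column contraction $[U_i]$, and uses Remark~\ref{r:injection} (the complete contractivity of the coordinate embeddings $\cl S_1^{B,Y}\hookrightarrow\ell_2(\cl S_1^{B,Y})$) to deduce that each $\beta_i$ is a complete contraction governed by a column contraction in $M_{B,Y}$.
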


\begin{proof}
Let $m\in \bb{N}$, $S_i$ and $T_i$ be finite disjoint sets, and
$U_i\in\cl B(\bb{C}^X,\bb{C}^{AS_i})$ and $V_i \in \cl B(\bb{C}^Y,\bb{C}^{BT_i})$ be such that 
$\sum_{i=1}^m U_i^*U_i = I_X$ and $V_i$ is an isometry, $i\in [m]$. 
Write $W = [U_i\ten V_i]_{i=1}^m$, considered as a column isometry. 
Write, further, $U_i = [U_{i,s}]_{s\in S_i}$, where $U_{i,s} \in M_{A,X}$, and 
$V_i = [V_{i,t}]_{t\in T_i}$, where $V_{i,t} \in M_{B,Y}$. Set $S = \sqcup_{i=1}^m S_i$, and make the identification $[U_i]_{i=1}^m = U := [U_s]_{s\in S}$, 
where, for $s\in S$, we have let $U_s = U_{i(s),s}$, where $i(s)\in[m]$ is the unique element for which $s\in S_i$; 
we note that $U : \bb{C}^X\to \bb{C}^{AS}$ is an isometry.
Observe that
\begin{equation*}\label{eq_splitsW}
\|\phi_W^{(n)}(\omega)\| =  \left\|\left[(\phi_{U_i}\otimes\phi_{V_i})^{(n)}(\omega)\right]_{i=1}^m\right\|_{M_n((\oplus_{i=1}^m\bb{C}^{S_i}\ten\bb{C}^{T_i})_c)},\end{equation*}
for every $\omega\in M_n(\cl S_1^{A,X}\otimes \cl S_1^{B,Y})$. 

Define
$$\alpha:R_S\ni e_{s}\mapsto e_{s}\ten U_{s}\in C_S\ten_{\min}M_{X,A}.$$
Then $\alpha$ is an $\ell_\infty$-module map, and, as $\sum_{s\in S}U_{s}^*U_{s}=1$, it is completely contractive by (\ref{e:tensor}). Letting $T=\sqcup_{i=1}^m T_i$ and embedding $\bb{C}^T$ into the first $|T|$-summands of $C$, for each $s\in S$, define $\beta_{s} : \cl S_1^{B,Y}\to C$ by $\beta_{s}(\rho) = \sum_{t\in T_i} \langle \rho, V_{i,t}\rangle e_{t}$. Let $\beta : \ell_2(\cl S_1^{B,Y}) \to (\ell_2\ten\ell_2)_c=C\ten_{\min} C$ be the $\ell_\infty$-module map, supported on the first $|S|$ entries of the domain,
given by $\beta((\rho_s)_{s\in S}) = (\beta_{s}(\rho_s))_{s\in S}$. The fact that $V_i$ is an isometry for each $i\in [m]$ implies that $\beta_{s} : \cl S_1^{B,Y}\to C$ is a complete contraction
for every $s\in S$; thus, $\beta$ is a complete contraction by (\ref{e:opmodule}). Using (\ref{e:column}) and similar arguments to the proof of Theorem \ref{th_wcb}, for every $[\omega_{k,l}]\in M_n(\cl S_1^{A,X}\otimes \cl S_1^{B,Y})$ we have
\begin{align*}\|\phi_W^{(n)}([\omega_{k,l}])\| &=  \left\|\left[(\phi_{U_i}\otimes\phi_{V_i})^{(n)}([\omega_{k,l}])\right]_{i=1}^m\right\|_{M_n((\oplus_{i=1}^m\bb{C}^{S_i}\ten\bb{C}^{T_i})_c)}\\
&=\bignorm{\bigg[\sum_{j=1}^n\sum_{s\in S}\sum_{t\in T_{i(s)}}\la \om_{j,l}(U_{s}),V_{i(s),t}\ra\overline{\la\om_{j,k}(U_{s}),V_{i(s),t}\ra}\bigg]_{k,l}}\\
&=\bignorm{\bigg[\sum_{j=1}^n\sum_{s\in S}\sum_{t\in T_{i(s)}}\la \om_{j,l}(\alpha(e_{s})),\beta_s^*(e_t)\ra\overline{\la\om_{j,k}(\alpha(e_{s})),\beta_s^*(e_{t})\ra}\bigg]_{k,l}}\\
&=\bignorm{\bigg[\sum_{j=1}^n\sum_{s\in S}\sum_{t\in T_{i(s)}}\la \beta_s(\om_{j,l}(\alpha(e_{s}))),e_t\ra\la e_t,\beta_s(\om_{j,k}(\alpha(e_{s})))\ra\bigg]_{k,l}}\\
&=\bignorm{[\sum_{j=1}^n\la \beta\circ(\id\ten\om_{j,l})\circ\alpha,\beta\circ(\id\ten\om_{j,k})\circ\alpha\ra_{\cl S_2(\ell_2,\ell_2\ten\ell_2)}]_{k,l}}^{1/2}\\
&=\norm{\beta^{(n)}\circ(\id\ten[\om_{k,l}])\circ\alpha}_{M_n(\cl S_2(\ell_2,\ell_2\ten\ell_2)_c)}\\
&=\norm{\beta^{(n)}\circ(\id\ten[\om_{k,l}])\circ\alpha}_{\rm cb},
\end{align*}
where the final equality is due to the identifications
$$M_n(\cl S_2(\ell_2,\ell_2\ten\ell_2)_c)=M_n({\rm CB}(R,(\ell_2\ten\ell_2)_c))={\rm CB}(R,M_n((\ell_2\ten\ell_2)_c)).$$
We therefore have $\|\omega\|_{\cl R_{\rm lowc}}^{(n)} \leq \|\omega\|_{r, {\rm cb}}^{(n)}$. 

For the reverse inequality, let $k,l\in\mathbb N$, $\alpha : R_k\to C_k\ten_{\min}M_{A,X}$ and 
$\beta :\ell_2(\cl{S}_1^{B,Y}) \to C_k\ten_{\min} C_l$ be completely contractive $\ell_{\infty}$-module maps. Then, by (\ref{e:tensor}) and the module property, $\alpha$ is necessarily of the form $e_i\mapsto e_i\ten U_i$ for a column contraction $[U_i]_{i=1}^k$. 
As in the proof of Theorem \ref{th_wcb}, complete the column contraction $[U_i]_{i=1}^k$ to a column isometry $[U_i]_{i=1}^{k+k_0}$, and let $\tilde{\alpha}:R_{k+k_0}\to C_{k+k_0}\ten_{\min}M_{A,X}$ be the associated completely contractive $\ell_\infty$-module map.

Next, $\beta$ is necessarily supported on the first $k$ summands of $\ell_2(\cl{S}_1^{B,Y})$, and by Remark \ref{r:injection}, the associated maps 
$$\beta_i:=(e_i^*\ten\id)\circ\beta|_{\bb{C}e_i\ten\cl{S}_1^{B,Y}}:\cl S_1^{B,Y}\to C_l$$
are completely contractive, $i\in [k]$. Letting $V_i = [V_{i,j}]_{j=1}^l$ be the operator, canonically corresponding to $\beta_i$, we have that $V_i$ is a column contraction for every $i\in [k]$. Complete $V_i$ to a column isometry $\tilde{V_i}$ by adding the operator terms $\tilde{V}_{i,1},\dots, \tilde{V}_{i,l_0}\in M_{X,A}$, $i\in [k]$ 
Let $\tilde{\beta}_i : \cl S_1^{B,Y}\to C_{l+l_0}$ be the map, corresponding to $\tilde{V}_i$, $i\in [k]$. For $i\in\{k+1,...,k+k_0\}$, choose a complete contraction $\tilde{\beta_i}:\cl S_1^{B,Y}\to C_{l+l_0}$ governed by column isometries. Then $\tilde{\beta}:\ell_2(\cl S_1^{B,Y})\to C_{k+k_0}\ten_{\min} C_{l+l_0}$ is a completely contractive $\ell_\infty$-module map, and it follows that
\begin{eqnarray*}
\norm{\beta^{(n)}\circ(\id\otimes\omega)\circ\alpha}_{\rm cb}
& \leq & 
\norm{\tilde\beta^{(n)}\circ(\id\otimes\omega)\circ\tilde\alpha}_{\rm cb}\\
& = & 
\left\|\left[(\phi_{U_i}\otimes\phi_{{\tilde V}_i})^{(n)}(\omega)\right]_{i=1}^{k+k_0}\right\|
\leq
\|\omega\|_{\cl R_{\rm lowc}}.
\end{eqnarray*}
Hence, $\|\omega\|_{r,{\rm cb}}\leq \|\omega\|_{\cl R_{\rm lowc}}$, and
the proof is complete. 
\end{proof}

In the remainder of this subsection we 
provide an upper bound on the matricial norms induced by the resource $\cl R_{\rm lowc}$ using the theory of operator sequence spaces \cite{lam};
we review the necessary prerequisites, and 
refer the reader to \cite[\S2--\S3]{lnr} for a detailed overview.

An \textit{operator sequence space} is a Banach space $\cl X$ equipped with a family $(\norm{\cdot}_{\hat{n}})$ of norms on direct sums $\cl X^n$ satisfying
\begin{enumerate}
\item $\norm{[x \  0]^{\rm t}}_{m\hat{+}n} = \norm{x}_{\hat{m}}$, $m\in\N$, $x\in\cl X^m$;
\item $\norm{[x \ y]^{\rm t}}_{\hat{m+n}}\leq\norm{x}_{\hat{m}}+\norm{y}_{\hat{n}}$, $m,n\in\N$, $x\in\cl X^m$, $y\in\cl X^n$, and
\item $\norm{\alpha x}_{\hat{m}}\leq\norm{\alpha}\norm{x}_{\hat{n}}$, $m,n\in\N$, $\alpha\in M_{m,n}(\bb{C})$, $x\in\cl X^n$.
\end{enumerate}
We let $\cl X^{\hat{n}}$ denote the normed space $(\cl X^n,\norm{\cdot}_{\hat{n}})$, $n\in\N$, and 
we write $(\cl X^{\hat{n}})_{n\in \bb{N}}$ for the corresponding operator sequence space. 
If $\cl X$ is a Banach space, the family $(\ell_2^n(\cl X))_{n\in \bb{N}}$, 
where each term is equipped with its canonical norm, defines an operator sequence space
$(\ell_2^{\hat{n}}(\cl X))_{n\in \bb{N}}$.
Further, if $\cl X$ is an operator space, the family 
$C(\cl X) := (C(\cl X)^{\hat{n}})_{n\in \bb{N}}$ of column spaces $M_{n,1}(\cl X)$ is an operator sequence space. 

A linear map $\phi:\cl X\to\cl Y$ between operator sequence spaces is \textit{sequentially bounded} if 
$$\norm{\phi}_{\rm sb}:=\sup_{n\in\N}\norm{\phi^{\hat{n}}:\cl X^{\hat{n}}\to\cl Y^{\hat{n}}} < \infty,$$
where $\phi^{\hat{n}}([x_j]) = [\phi(x_j)]$, $[x_j]\in\cl X^{\hat{n}}$. 
We let ${\rm SB}(\cl X,\cl Y)$ denote the space of sequentially bounded linear maps,
and note that ${\rm SB}(\cl X,\cl Y)$ is a Banach space under the norm
$\norm{\cdot}_{\rm sb}$. Note that if $\cl X$ is an operator space and $\cl Y$ is a Banach space, then, 
for a linear map $\phi:\cl X\to\cl Y$, we have
\begin{equation}\label{eq_seqs}
\pi_{(2,C)}(\phi) 
%= \sup_{n\in\N}\norm{\id_n\ten\phi:M_{n,1}(\cl X)\to\ell_2^n(\cl Y)}
= \norm{\phi : C(\cl X) \to (\ell_2^{\hat{n}}(\cl X))_{n\in \bb{N}}}_{\rm sb}.
\end{equation}
%as a mapping between the operator sequence spaces. 

The \emph{minimal quantisation} $\min(\cl X)$ of an operator sequence space 
$\cl X$ is the operator space defined via $M_n(\min(\cl X)):= \cl B(\ell_2^n,\cl X^{\hat{n}})$, $n\in\N$ \cite{lam} (see also \cite[Definition 3.1]{lnr}). It was shown in \cite[Satz 4.1.6]{lam} that 
if $\cl X$ is an operator space and $\cl Y$ is an operator sequence space then 
${\rm CB}(\cl X,\min(\cl Y)) = {\rm SB}(C(\cl X),\cl Y)$ isometrically. Thus, in view of (\ref{eq_seqs}), 
$$\pi_{(2,C)}(\phi)=\norm{\phi:\cl X\to\min(\cl Y)}_{\rm cb},$$
where we have set $\min(\cl Y) = \min((\ell_2^{\hat{n}}(\cl Y))_{n\in \bb{N}})$ for brevity. 
We therefore obtain a natural operator space structure on the space $\Pi_{(2,C)}(\cl X,\cl Y)$ of $(2,C)$-summing maps via 
$$\pi_{(2,C)}([\phi_{i,j}])=\norm{[\phi_{i,j}]:\cl X\to M_n(\min(\cl Y))}_{\rm cb}, \ \ \ [\phi_{i,j}]\in M_n(\Pi_{(2,C)}(\cl X,\cl Y)).$$

\begin{theorem}\label{th_lowcboundsq}
Let $X,Y,A,B$ be finite sets. Then the formal identity map
$\cl{S}_1^{A,X}\ten_{(2,C)}\cl{S}_1^{B,Y} \to \cl O_{XY,AB}^{\cl R_{\rm lowc}}$ is an isometric 
complete contraction. 
\end{theorem}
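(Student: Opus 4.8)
The plan is to show that the formal identity $\cl{S}_1^{A,X}\ten_{(2,C)}\cl{S}_1^{B,Y}\to \cl O_{XY,AB}^{\cl R_{\rm lowc}}$ is a complete contraction by exhibiting, for each element $\omega\in M_n(\cl S_1^{A,X}\otimes \cl S_1^{B,Y})$, a factorisation of the map $\phi_W^{(n)}$ associated to an arbitrary $W=[U_i\ten V_i]_{i=1}^m\in\cl R_{\rm lowc}$ through the $(2,C)$-summing picture, so that $\norm{\omega}_{\cl R_{\rm lowc}}^{(n)}\leq \pi_{(2,C)}([\omega_{k,l}])$. The natural vehicle is the operator-sequence-space description $\pi_{(2,C)}(\phi)=\norm{\phi:\cl X\to\min((\ell_2^{\hat n}(\cl Y))_{n})}_{\rm cb}$ recalled just above the statement, together with the identification ${\rm CB}(\cl X,\min(\cl Y))={\rm SB}(C(\cl X),\cl Y)$ from \cite[Satz 4.1.6]{lam}. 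So the first step is to reinterpret $\omega$, regarded as a linear map $M_{A,X}\to M_n(\cl S_1^{B,Y})$, as a sequentially bounded map $C(M_{A,X})\to (\ell_2^{\hat n}(M_n(\cl S_1^{B,Y})))_n$ of sequence-space norm $\pi_{(2,C)}([\omega_{k,l}])$.

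Next I would unpack the norm $\norm{\phi_W^{(n)}(\omega)}$ exactly as in the first half of the proof of Theorem \ref{th_lowc1}: writing $U=[U_s]_{s\in S}$ with $S=\sqcup_i S_i$ and $\sum_{s\in S}U_s^*U_s=I_X$, and $V_i=[V_{i,t}]_{t\in T_i}$ column isometries, one gets
\[
\norm{\phi_W^{(n)}(\omega)}^2=\bignorm{\Big[\textstyle\sum_{j=1}^n\sum_{s\in S}\sum_{t\in T_{i(s)}}\la\omega_{j,l}(U_s),V_{i(s),t}\ra\,\overline{\la\omega_{j,k}(U_s),V_{i(s),t}\ra}\Big]_{k,l}}.
\]
The inner column vector $[U_s]_{s\in S}$ is a contraction from $\bb C^X$ into $\bb C^{AS}$, so $s\mapsto U_s$ defines a contraction $R_S\to M_{A,X}$, i.e. an element of $C(M_{A,X})$ of norm $\leq 1$; this is the input to the sequential-boundedness estimate. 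Applying the sequence-space map corresponding to $\omega$ to this column gives $[\omega(U_s)]_{s\in S}\in \ell_2^{\hat{|S|}}(M_n(\cl S_1^{B,Y}))$ of norm at most $\pi_{(2,C)}([\omega_{k,l}])$; and since each $V_i$ is a column \emph{isometry}, the map $\rho\mapsto (\la\rho,V_{i,t}\ra)_{t\in T_i}$ is a contraction $\cl S_1^{B,Y}\to \bb C^{T_i}$, so post-composing does not increase the relevant $\ell_2$-norm. Assembling these observations gives $\norm{\phi_W^{(n)}(\omega)}\leq \pi_{(2,C)}([\omega_{k,l}])$, and taking the supremum over $W\in\cl R_{\rm lowc}$ yields $\norm{\omega}_{\cl R_{\rm lowc}}^{(n)}\leq\pi_{(2,C)}([\omega_{k,l}])$, which is the desired complete contractivity.

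For the isometry claim (at the $n=1$ level), I would check that no norm is lost: given $\omega\in\cl S_1^{A,X}\otimes\cl S_1^{B,Y}$ realising a value close to $\pi_{(2,C)}(\omega)$, one has a column contraction $[U_s]_{s\in S}$ and column contractions $[V_{i,t}]_t$ witnessing it; by the Stinespring/column-completion trick used in Propositions \ref{p-Rvalq} and \ref{p_lowc} (and in Theorem \ref{th_lowc1}) these complete to genuine isometries $U$ with $\sum U_s^*U_s=I_X$ and to isometries $V_i$, so the corresponding $W=[U_i\ten V_i]\in\cl R_{\rm lowc}$ recovers the value. The main obstacle I anticipate is purely bookkeeping: matching the $\ell_\infty$-module and column structures in the $(2,C)$-summing norm (which only sees the Banach structure of $\ell_2(\cl Y)$, via the $\min$-quantisation $\min((\ell_2^{\hat n}(\cl Y))_n)$) against the module-map formulation of $\norm{\cdot}_{r,{\rm cb}}$ used to identify $\cl O_{XY,AB}^{\cl R_{\rm lowc}}$ in Theorem \ref{th_lowc1}, and verifying that the extra module/operator structure present in $\norm{\cdot}_{r,{\rm cb}}$ is exactly what accounts for the (generally strict) inequality $\pi_{(2,C)}\geq \norm{\cdot}_{r,{\rm cb}}$, so that only a contraction — not an isometry — holds at the matricial level.
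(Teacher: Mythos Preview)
Your overall strategy matches the paper's: bound $\|\phi_W^{(n)}(\omega)\|$ above using the column contraction $[U_s]_s$ and the contractions coming from the isometries $V_i$, then recognise the result as a $\pi_{(2,C)}$ norm; and for the isometry at $n=1$, reverse the construction. However, both halves of your argument have gaps that the paper handles differently.

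\medskip
\textbf{Complete contractivity.} Your step 2 asserts that $[\omega(U_s)]_s$ has norm at most $\pi_{(2,C)}([\omega_{k,l}])$ in $\ell_2^{\hat{|S|}}\big(M_n(\cl S_1^{B,Y})\big)$, justified by the ${\rm SB}/{\rm CB}$ correspondence. But that correspondence, applied with codomain Banach space $M_n(\cl S_1^{B,Y})$, gives
\[
\|\omega\|_{{\rm SB}\big(C(M_{A,X}),\,(\ell_2^{\hat m}(M_n(\cl S_1^{B,Y})))_m\big)}=\|\omega:M_{A,X}\to \min\big((\ell_2^{\hat m}(M_n(\cl S_1^{B,Y})))_m\big)\|_{\rm cb},
\]
whereas by definition $\pi_{(2,C)}([\omega_{k,l}])=\|\omega:M_{A,X}\to M_n\big(\min((\ell_2^{\hat m}(\cl S_1^{B,Y}))_m)\big)\|_{\rm cb}$. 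The positions of $M_n$ and $\min$ are swapped, and these operator space structures do not agree in general; you have not shown the inequality goes the right way. The paper avoids this by first applying the column formula $\|[\phi_W(\omega_{k,l})]\|_{M_n(H_c)}^2=\sup_{\xi\in\bb C^n_1}\sum_k\|\sum_l\xi_l\,\phi_W(\omega_{k,l})\|^2$, then using that each $\beta_i^*:\cl S_1^{B,Y}\to\bb C^{T_i}$ is a Banach-space contraction, arriving at $\sup_{\xi}\sum_{k}\sum_{s}\|\sum_l\xi_l\,\omega_{k,l}(U_{s})\|_{\cl S_1^{B,Y}}^2$. Only after this reduction does one zero-pad $[U_s]$ and $\xi$ and match the expression directly with the definition of $M_{mn}(\min(\ell_2(\cl S_1^{B,Y})))=\cl B(\ell_2^{mn},\ell_2^{mn}(\cl S_1^{B,Y}))$, yielding $\pi_{(2,C)}([\omega_{k,l}])^2$. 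The point is that the $\sup_\xi$ must be pulled outside \emph{before} comparing with $\pi_{(2,C)}$; your route through $\ell_2(M_n(\cl S_1^{B,Y}))$ keeps the matricial structure inside and does not visibly connect to the defining norm.

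\medskip
\textbf{Isometry at $n=1$.} You write that a near-extremiser for $\pi_{(2,C)}(\omega)$ comes with ``column contractions $[V_{i,t}]_t$ witnessing it'', but the definition of $\pi_{(2,C)}(\omega)$ involves only the input column $[U_s]$ (equivalently $\xi$ and $T$): there are no $V_i$'s to complete. The paper constructs them: from a near-extremiser one forms $U_i=\sum_j\xi_jT_{i,j}$ with $\sum_iU_i^*U_i\le I_X$, and then \emph{chooses} $V_i\in M_{B,Y}$ of norm $1$ attaining $|\langle V_i,\omega(U_i)\rangle|=\|\omega(U_i)\|_{\cl S_1^{B,Y}}$ by duality. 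The resulting $\ell_\infty$-module maps $\alpha,\beta$ show
\[
\textstyle\sum_i\|\omega(U_i)\|_{\cl S_1^{B,Y}}^2=\|\beta\circ(\id\ten\omega)\circ\alpha\|_{\cl S_2}^2\le\|\omega\|_{r,{\rm cb}}^2=\|\omega\|_{\cl R_{\rm lowc}}^2,
\]
the last equality being Theorem~\ref{th_lowc1}. Your direct ``complete to isometries and land in $\cl R_{\rm lowc}$'' plan can be made to work, but only after this missing duality step producing the $V_i$'s; without it the argument is incomplete.
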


\begin{proof} 
Fix $[\om_{k,l}]\in M_n(\cl{S}_1^{A,X}\ten\cl{S}_1^{B,Y})$. 
Let $U\in\cl R_{\rm lowc}$ have the form 
$W = [U_i\ten V_i]_{i=1}^{k_U}$, where $U_i\in\cl B(\bb{C}^X,\bb{C}^{AS_i})$ are such that
$\sum_{i=1}^{k_U} U_i^*U_i = I_X$,  and each $V_i \in \cl B(\bb{C}^Y,\bb{C}^{BT_i})$ 
is an isometry; here $S_i$ and $T_i$ are finite sets, $i\in [k_U]$.
Write $U_i$ as the column $[U_{i,s}]$, $s\in S_i$, with $U_{i,s}\in M_{A,X}$. 
Similarly, view $V_i$ as the column $[V_{i,t}]$, $t\in T_i$, with $V_{i,t}\in M_{B,Y}$. 
It follows as in the proof of Theorem \ref{th_wcb} that
$$\phi_W(\om_{k,l}) = \bigoplus_{i=1}^{k_U}\sum_{s\in S_i}\sum_{t\in T_i}\la U_{i,s}\ten V_{i,t},\om_{k,l}\ra e^i_s\ten e^i_t\in\bigoplus_{i=1}^{k_U}\bb{C}^{S_i}\ten\bb{C}^{T_i} =: H.$$
Let $\beta_i\in {\rm CC}(R_{T_i},M_{B,Y})$ be the linear map, given by 
$\beta_i(e^i_t) = V_{i,t}$
(see the proofs of Theorem \ref{th_wcb}); here, we let $(e^i_t)_{t\in T_i}$ be the canonical basis of $\ell_2^{T_i}$,
$i\in [k_U]$. 
Then
\begin{equation}\label{eq_UisVit}
\la U_{i,s}\ten V_{i,t},\om_{k,l}\ra = \la\om_{k,l}(U_{i,s}),V_{i,t}\ra = \la\beta_i^*(\om_{k,l}(U_{i,s})),e^i_t\ra.
\end{equation}
Since the column operator space structure on $H = \cl B(\bb{C},H)$ is given via the identifications
$M_n(H) = \cl B(\bb{C}^n,H^n)$, using (\ref{eq_UisVit}) and recalling that $\bb{C}^n_{1}$ 
denotes the unit ball of the Hilbert space $\bb{C}^n$, we have
\begin{align*}
& \norm{[\phi_U(\om_{k,l})]}^2
=
\sup\left\{\sum_{k=1}^n\bignorm{\sum_{l=1}^n\phi_U(\om_{k,l})\xi_l}_H^2 : \xi\in\bb{C}^n_{1}\right\}\\
&=
\sup\left\{\sum_{k=1}^n\bignorm{\sum_{l=1}^n\sum_{i,s,t}\xi_l \la\beta_i^*(\om_{k,l}(U_{i,s})),e^i_t\ra e^i_s\ten e^i_t}_H^2 : \xi\in\bb{C}^n_{1}\right\}\\
&=
\sup\left\{\sum_{k=1}^n\bignorm{\sum_{l=1}^n\sum_{i,s}\xi_l  e^i_s\ten \beta_i^*(\om_{k,l}(U_{i,s}))}_{H}^2 : \xi\in\bb{C}^n_{1}\right\}\\
&=
\sup \left\{\sum_{k=1}^n\sum_{i,s}\bignorm{\sum_{l=1}^n\xi_l\beta_i^*(\om_{k,l}(U_{i,s}))}_{\bb{C}^{T_i}}^2 : \xi\in\bb{C}^n_{1}\right\}\\
&\leq
\sup \left\{\sum_{k=1}^n\sum_{i,s}\bignorm{\sum_{l=1}^n\xi_l\om_{k,l}(U_{i,s})}_{\cl{S}_1^{B,Y}}^2 : \xi\in\bb{C}^n_{1}\right\}.
\end{align*}
Viewing the columns $[U_{i,s}]$ and $[\xi_l]$ as square and rectangular matrices, respectively (by adding zeros), it follows that
$$\norm{[\phi_U(\om_{k,l})]}^2\leq\sup\bigg\{\sum_{i=1}^m\sum_{k=1}^n\bignorm{\sum_{j=1}^m\sum_{l=1}^n\xi_{j,l}\om_{k,l}(T_{i,j})}_{\cl{S}_1^{B,Y}}^2\bigg\},$$
where the supremum runs over all $\xi\in\bb{C}^{mn}_{1}$, 
all operators $T$ in the unit ball $M_m(\cl B(\bb{C}^X,\bb{C}^A))_{1}$ and all $m\in\N$. 
By the definition of the minimal quantisation of the operator sequence space 
$(\ell_2^{\hat{n}}(\cl S_1^{B,Y}))_{n\in \bb{N}}$, the latter supremum coincides with 
\begin{align*}&\sup\{\norm{[\om_{k,l}(T_{i,j})]}^2_{M_m(M_n(\min(\cl S_1^{B,Y})))} : T\in M_m(\cl B(\bb{C}^X,\bb{C}^A))_{1},  m\in\N\}\\
&= \pi_{(2,C)}([\om_{k,l}])^2.
\end{align*}
Since $U$ was an arbitrary element of $\cl R_{\rm lowc}$, 
we have that $\norm{[\om_{k,l}]}_{\cl R_{\rm lowc}}\leq \pi_{(2,C)}([\om_{k,l}])$.

When $n=1$, by above we have

$$\pi_{(2,C)}(\om)^2=\sup\bigg\{\sum_{i=1}^m\bignorm{\sum_{j=1}^m\xi_j\om(T_{i,j})}_{\cl S_1^{B,Y}}^2\bigg\}.$$

Since $\xi$ and $T$ are in the respective unit balls, it follows that $U_i:=\sum_{j=1}^m \xi_j T_{i,j}$ satisfies $\sum_{i=1}^m U_i^*U_i\leq 1$. For each $i$, pick $V_i\in\cl B(\bb{C}^B,\bb{C}^Y)$ of norm 1 for which $|\la V_i,\om(U_i)\ra|=\norm{\om(U_i)}_{\cl S_1^{B,Y}}$. Let $\alpha\in{\rm CC}(R_m,C_m\ten_{\min}B(\bb{C}^X,\bb{C}^A))$ be the $\ell_\infty$-module map corresponding to the tuple $[U_i]_{i=1}^m$ and let $\beta_i\in\mathrm{CC}(\cl S_1^{B,Y},C_1)$ correspond $V_i$. It follows that
$$\sum_{i=1}^m\bignorm{\sum_{j=1}^m\xi_j\om(T_{i,j})}_{\cl S_1^{B,Y}}^2=\sum_{i=1}^m|\la V_i,\om(U_i)\ra|^2=\norm{\beta\circ(\id\ten\om)\circ\alpha}^2_{\cl S_2(\ell_2,\ell_2\ten\ell_2)}$$
where $\beta:\ell_2(S_1^{B,Y})\to C_m \ten_{\min}C_1$ is the completely contractive $\ell_\infty$-module map corresponding to the family $(\beta_i)$. It follows from Theorem \ref{th_lowc1} that $\pi_{(2,C)}(\om)^2\leq\|\omega\|_{\cl R_{\rm lowc}}^2$.
\end{proof}

%%%%%%%%%%%%%%%%%%%%%%%%%%%%%%%%%%%%%%%%%%%%%%%%%%

\subsection{Values of non-local games}\label{ss_nonlogam}

We will be concerned with two classes of quantum non-local games which we now introduce.
A \emph{projection quantum game} over the quadruple $(X,Y,A,B)$ is a pair $(\xi,P)$, where 
$\xi\in \bb{C}^{XY}\otimes \ell_2^R$
and $P$ is a projection in $M_{AB}\otimes \cl B(\ell_2^R)$. Here, 
$R$ is an additional (perhaps countably infinite)
set. 
In the case
${\rm rank}(P) = 1$, this definition reduces to that of 
\emph{rank one quantum games} from \cite{junge}; a projection quantum game $(\xi,P)$ with 
${\rm rank}(P) < \infty$ will be called a \emph{finite rank quantum game}.
A \emph{hypergraph quantum game} over $(X,Y,A,B)$ is 
a probabilistic quantum hypergraph $(\nph,\pi)$, where $\nph : \bb P_{XY}\to \cl P_{AB}$.
%\marginal{\color{blue}R. according to the def of a probabilistic quantum hypergraph in \S 4 we should have $\nph:\bb P_{XY}\to \cl{P_{AB}}$. (Pure states in domain rather than general states.) Changes have been made here.}

%The adjectives \lq\lq continuous'', \lq\lq piecewise continuous'' and \lq\lq rank one'' will be used in relation with a 
%hypergraph quantum game $(\nph,\pi)$ is they describe the quantum hypergraph $\nph$.

%is described by a (finite) set $\cl P = \{p_i\}_{i=1}^k\subseteq \bb{P}_{XY}$, a probability distribution $\pi$ over $\cl P$, and a collection $\cl Q = \{Q_i\}_{i=1}^k\subseteq \cl P_{AB}$. The triple $(\cl P,\cl Q,\pi)$ is then interpreted as a hypergraph quantum game, whose probability distribution is supported on $\cl P$ and whose map $\nph$ is given by $\nph(p_i) = Q_i$, $i = 1,\dots,k$ (it is clear that only the values of $\nph$ on $\cl P$ matter for the computation of the values of the corresponding game). 

We note that hypergraph quantum games constitute a quantisation of classical non-local games. 
Indeed, recall that a \emph{classical non-local game} is a tuple 
$(X,Y,A,B,\lambda,\pi)$, where 
$\lambda : X Y \times A
 B\to \{0,1\}$ is the \emph{rule function} of the game and 
$\pi$ is a probability measure on $X Y$. 
Given a rule function $\lambda : X Y\times A B\to \{0,1\}$ and an element $(x,y)\in X Y$, 
let 
$$E_{(x,y)} = \{(a,b)\in A B : \lambda(x,y,a,b) = 1
% \mbox{ for some } (x,y)\in \alpha
\}
;$$
the classical non-local game with rule function $\lambda$ give rise to the 
map $\nph_{\lambda} : \bb P_{XY}^{\rm cl} \to \cl P_{AB}^{\rm cl}$, given by 
\begin{equation}\label{eq_nphla}
\nph_{\lambda}(\epsilon_{x,x}\otimes \epsilon_{y,y}) = P_{E_{(x,y)}},  \ \ \ (x,y)\in X Y.
\end{equation}

If $(\varphi,\pi)$ is a hypergraph quantum game, and 
$(\xi,P)$ is a finite rank quantum game, where 
$\xi\in \bb{C}^{XY}\otimes \ell_2^R$ and $P$ is a projection in $M_{AB}\otimes \cl B(\ell_2^R)$, 
we let 
$$\omega_{\rm t}(\xi,P) = \omega_{\cl R_{\rm t}}(\xi,P) 
\ \mbox{ and } \ \omega_{\rm t}(\varphi,\pi) = \omega_{\cl R_{\rm t}}(\varphi,\pi),$$ 
for ${\rm t}\in \{{\rm loc}, {\rm lowc}, {\rm q}, {\rm qc}, {\rm ns}\}$.
By (\ref{eq_Qchain}), the following inequalities holds true:
$$\omega_{\rm loc}(\xi,P)\leq \omega_{\rm q}(\xi,P)\leq \omega_{\rm qc}(\xi,P)
\leq \omega_{\rm ns}(\xi,P)$$
(resp.,
$$\omega_{\rm loc}(\varphi,\pi)\leq \omega_{\rm q}(\varphi,\pi)\leq \omega_{\rm qc}(\varphi,\pi)
\leq \omega_{\rm ns}(\varphi,\pi)).$$

If $\cl E = \{\eta_{x,y,a,b}\}_{x,y,a,b}$ is a no-signalling family
of vectors in a Hilbert space $K$, we let 
$\tilde{\cl E} : \cl S_1^{A,X}\otimes \cl S_1^{B,Y}\to K$ be the map, given by 
$\tilde{\cl E}(\epsilon_{x,a} \otimes \epsilon_{y,b}) = \eta_{x,y,a,b}$.
%In the next theorem for $M\in M_{m,n}(\cl S_1^{A,X}\otimes \cl S_1^{B,Y})$ we write simply $\|M\|_{w,\rm{cb}}$ instead of 
%$\|M\|_{M_{m,n}(\cl S_1^{A,X}\otimes_{w,\rm cb}\cl S_1^{B,Y})}$ and similar for $\rm min$ and $\rm max$.  

Item (ii) in the next theorem is an extension of \cite[Theorem 3.2 (1)]{junge}.

\begin{theorem}\label{th_qcval}
Let $X$, $Y$, $A$ and $B$ be finite sets, and $R$ a countable set. 
Let $\xi\in \bb{C}^{XY}\otimes\ell_2^R$ be a unit vector, and 
$P = \sum_{k = 1}^{\infty} \lambda_k \gamma_k\gamma_k^*$, where 
$(\gamma_k)_{k=1}^n\subseteq \bb{C}^{AB}\otimes\ell_2^R$ is an orthonormal family and $0\leq \lambda_k\leq 1$, 
$k\in \bb{N}$. Set 
$\rho_k = \sqrt{\lm_k}\,\overline{ \Tr_R(\xi\gamma_k^*)}$, $k \in \bb{N}$, and 
write $\rho = [\rho_k]_{k\in\bb{N}}\in M_{\infty,1}(\cl S_1^{A,X}\otimes \cl S_1^{B,Y})$.
Then
\begin{itemize}
\item[(i)]
$\om_{\rm loc}(\xi,P) = \norm{\rho}_{w,{\rm cb}}^2$;

\item[(ii)]
$\omega_{\rm q}(\xi,P) = \left\|\rho\right\|_{\rm min}^2$;

\item[(iii)]
$\omega_{\rm qc}(\xi,P)=\norm{\rho}_{\max}^2$;

\item[(iv)] 
$\omega_{\rm ns}(\xi,P) = 
\sup \|\tilde{\cl E}^{(\infty,1)}(\rho)\|^2_{K^{\infty}}$, where the supremum is taken over all non-signalling
$K$-valued families $\cl E$ and all Hilbert spaces~$K$;
%  \hspace{-0.1cm} :  
% \cl E \mbox{ $K$-valued no-signalling family}\}$\hspace{-0.05cm};
% %$K$-valued families $\cl E$ and all Hilbert spaces~$K$;
\item[(v)]
$\omega_{\rm lowc}(\xi,P) = \norm{\rho}_{r,{\rm cb}}^2 \leq \pi_{(2,C)}(\rho)^2$.
\end{itemize}
\end{theorem}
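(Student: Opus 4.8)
The plan is to derive each of the five identities by combining Theorem~\ref{th_R-val} with the resource identifications established in Section~\ref{s_qv}. The starting point in every case is the observation that, by Propositions~\ref{p-Rvalq}, \ref{p-Rvalqq}, \ref{p_nsre=qns}, \ref{p_lowc} and Theorem~\ref{th_qcrep}, the resources $\cl R_{\rm loc}$, $\cl R_{\rm q}$, $\cl R_{\rm qc}$, $\cl R_{\rm ns}$, $\cl R_{\rm lowc}$ over $(XY,AB)$ induce the correlation classes $\cl Q_{\rm loc}$, $\cl Q_{\rm q}$, $\cl Q_{\rm qc}$, $\cl Q_{\rm ns}$, $\cl Q_{\rm lowc}$ respectively, so that $\omega_{\rm t}(\xi,P)=\omega_{\cl R_{\rm t}}(\xi,P)$ by definition. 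Applying Theorem~\ref{th_R-val} with the spectral decomposition $P=\sum_k\lambda_k\gamma_k\gamma_k^*$ gives $\omega_{\rm t}(\xi,P)=\norm{[\rho_k]}_{M_{\infty,1}(\cl O^{\cl R_{\rm t}}_{XY,AB})}^2$ with $\rho_k=\sqrt{\lambda_k}\,\overline{\Tr_R(\xi\gamma_k^*)}$, viewed as an element of $\cl O_{XY,AB}$; under the identification $\cl O_{XY,AB}=\cl S_1^{A,X}\otimes\cl S_1^{B,Y}$ coming from Proposition~\ref{p_OXAS1} this $\rho_k$ is precisely the tensor $\rho_k$ in the statement, and $\rho=[\rho_k]_k$. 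So everything reduces to identifying, for each $\rm t$, the operator space $\cl O^{\cl R_{\rm t}}_{XY,AB}$ on $\cl S_1^{A,X}\otimes\cl S_1^{B,Y}$.

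For (i) this is Theorem~\ref{th_wcb}: $\cl O^{\cl R_{\rm loc}}_{XY,AB}=\cl S_1^{A,X}\otimes_{w,{\rm cb}}\cl S_1^{B,Y}$ completely isometrically, so the $M_{\infty,1}$-norm of $\rho$ there is $\norm{\rho}_{w,{\rm cb}}$, and squaring gives (i). For (ii) it is Remark~\ref{r_opspsqa}: $\cl O^{\cl R_{\rm q}}_{XY,AB}=\cl S_1^{A,X}\otimes_{\min}\cl S_1^{B,Y}$, so $\omega_{\rm q}(\xi,P)=\norm{\rho}_{\min}^2$. For (iii) it is Remark~\ref{r_opspsqc}: $\cl O^{\cl R_{\rm qc}}_{XY,AB}=\cl S_1^{A,X}\otimes_{\max}\cl S_1^{B,Y}$ — the place where Theorem~\ref{t:maxnorm} enters, via the correspondence between semi-commuting block operator isometries and semi-commuting ternary morphisms together with Proposition~\ref{p:representations} — so $\omega_{\rm qc}(\xi,P)=\norm{\rho}_{\max}^2$. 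For (v) it is Theorem~\ref{th_lowc1}: $\cl O^{\cl R_{\rm lowc}}_{XY,AB}=\cl S_1^{A,X}\otimes_{r,{\rm cb}}\cl S_1^{B,Y}$, giving $\omega_{\rm lowc}(\xi,P)=\norm{\rho}_{r,{\rm cb}}^2$, and the inequality $\norm{\rho}_{r,{\rm cb}}\le\pi_{(2,C)}(\rho)$ is exactly the content of Theorem~\ref{th_lowcboundsq}, applied entrywise to the column $\rho$ (one should note that the $M_{\infty,1}$-norm is a supremum of $M_{N,1}$-norms, and Theorem~\ref{th_lowcboundsq} is stated at the matricial level, so the bound passes to the infinite column by taking the supremum).

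For (iv) the argument is slightly different because $\cl R_{\rm ns}$ is described concretely via no-signalling families rather than by an abstract tensor product. Here I would unwind $\phi_{U_{\cl E}}$ for $U_{\cl E}$ the block operator isometry associated to a no-signalling family $\cl E=\{\eta_{x,y,a,b}\}\subseteq K$: by the formula for $U_{\cl E}$ and the definition of $\phi_U$, one has $\phi_{U_{\cl E}}(\epsilon_{x,a}\otimes\epsilon_{y,b})=\eta_{x,y,a,b}$, i.e. $\phi_{U_{\cl E}}$ is (up to the identification $\cl O_{XY,AB}=\cl S_1^{A,X}\otimes\cl S_1^{B,Y}$) exactly the map $\tilde{\cl E}$ from the paragraph preceding the theorem, with values in $K\cong\bb C^{AB}\otimes K$ regarded as a column Hilbert space. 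Consequently $\norm{\rho}^{(\infty,1)}_{\cl R_{\rm ns}}=\sup_{\cl E}\norm{\tilde{\cl E}^{(\infty,1)}(\rho)}_{K^\infty}$, where the supremum runs over all no-signalling families $\cl E$ and all Hilbert spaces $K$ (using that $\cl R_{\rm ns}=\langle\{U_{\cl E}\}\rangle$ and Remark~\ref{r_conve2} to reduce from direct sums back to single $U_{\cl E}$), and Theorem~\ref{th_R-val} then gives $\omega_{\rm ns}(\xi,P)=\sup\norm{\tilde{\cl E}^{(\infty,1)}(\rho)}^2_{K^\infty}$, which is (iv).

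The only genuinely delicate point is bookkeeping around the infinite index set: $\rho=[\rho_k]_{k\in\bb N}$ lives in $M_{\infty,1}$, and each of Theorems~\ref{th_wcb}, \ref{th_lowc1} and Remarks~\ref{r_opspsqa}, \ref{r_opspsqc} is phrased for finite matricial amplifications, so one must observe that all the relevant norms on $M_{\infty,1}(\cl X)$ are, by definition, suprema over $M_{N,1}(\cl X)$ for $N\in\bb N$, and that the complete isometries in those statements are therefore compatible with passing to $M_{\infty,1}$; the weak* convergence of $\sum_k\phi_U(\rho_k)^*\phi_U(\rho_k)$ established inside the proof of Theorem~\ref{th_R-val} is what guarantees these suprema are finite and equal to $\omega_{\rm t}(\xi,P)$. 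Everything else is a direct citation. Thus I expect the proof to be short: assemble Theorem~\ref{th_R-val} with the five resource identifications, check the $M_{\infty,1}$ compatibility, and invoke Theorem~\ref{th_lowcboundsq} for the last inequality.
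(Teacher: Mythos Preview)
Your proposal is correct and follows essentially the same approach as the paper: for (i)--(iii) and (v) the paper simply cites Theorem~\ref{th_R-val} together with the relevant resource identification (Propositions~\ref{p-Rvalq}, \ref{p-Rvalqq}, \ref{p_lowc}, Theorem~\ref{th_qcrep}) and operator space identification (Theorem~\ref{th_wcb}, Remarks~\ref{r_opspsqa}, \ref{r_opspsqc}, Theorems~\ref{th_lowc1}, \ref{th_lowcboundsq}), exactly as you do; for (iv) the paper likewise unwinds $\phi_{U_{\cl E}}$ on $\rho_\gamma$ and identifies it with $\tilde{\cl E}(\rho_\gamma)$, then invokes Proposition~\ref{p_nsre=qns} and equation~\eqref{eq:U-value}. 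One small slip: in your (iv) you write ``with values in $K\cong\bb{C}^{AB}\otimes K$'', but the block entries of $U_{\cl E}$ lie in $\cl B(\bb C,K)=K$, not $\bb{C}^{AB}\otimes K$; this does not affect the argument.
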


\begin{proof}
(i) follows from Theorem \ref{th_R-val}, Proposition \ref{p-Rvalq} and Theorem \ref{th_wcb}.

(ii) follows from Theorem \ref{th_R-val}, Proposition \ref{p-Rvalqq} and Remark \ref{r_opspsqa}.

(iii) follows from Theorems \ref{th_R-val}, \ref{th_qcrep} and Remark \ref{r_opspsqc}.

(iv)
We follow arguments in the proof of Theorem \ref{th_R-val}. 
Write 
$\xi = \sum_{r=1}^{\infty}$ $\sum_{x,y} \xi_{x,y,r} e_{x,y,r}$ for some $\xi_{x,y,r}\in \bb{C}$, $x\in X$, 
$y\in Y$, $r\in \bb{N}$. 

By the proof of Theorem \ref{th_R-val}, 
for a unit vector
$\gamma = \sum_{r=1}^{\infty} \sum_{a,b}\gamma_{a,b,r}e_{a,b,r}\in \mathbb{C}^{AB}\otimes\ell_2^R$ and 
$\rho_\gamma = \overline{\tr_R(\xi\gamma^*)}$, we have
\[
\phi_{U_{\cl E}}(\alpha(\rho_\gamma)) = 
\sum_{r=1}^{\infty} 
%\sum_{x\in X}\sum_{y\in Y} \sum_{a\in A} \sum_{b\in B}
\sum_{(x,y)\in X Y} \sum_{(a,b)\in A B}
%\sum_{x,y,a,b}
\overline{\xi_{x,y,r}}\gamma_{a,b,r}\eta_{x,y,a,b}=\tilde{\cl E}(\rho_\gamma),
\]
where $\alpha: \cl S_1^{AB,XY}\to \cl O_{XY,AB}$ is the complete isometry from Proposition \ref{p_OXAS1}. 
In view of Proposition \ref{p_nsre=qns}, 
the result follows by taking the supremum of~\eqref{eq:U-value} over all non-signalling 
families $\mathcal{E}$.

(v) follows from Proposition \ref{p_lowc}, Theorem \ref{th_lowc1} and Theorem \ref{th_lowcboundsq}. 
\end{proof}

\begin{remark} For rank one games ($P=\gamma\gamma^*$) with trivial referee space $\ell_2^R=\bb{C}1$, the local value $\om_{\rm loc}(\xi,\gamma\gamma^*)$ coincides with the LOSR fidelity $F_{\rm LOSR}(\xi,\gamma)$ recently studied in \cite[\S IV.C]{gc}. Given that mixed state conversion can be naturally expressed in terms of quantum games, it would be interesting to further analyse the corresponding game values through our metric characterisations, in particular, for the classes LOSR and one-way LOCC.
\end{remark}

With the notation in the statement of Theorem \ref{th_qcval}, let 
$\omega_{\rm h}(\xi,P) := \left\|\rho\right\|_{\rm h}^2$.
In the case the projection $P$ has rank one, the parameter $\omega_{\rm h}(\xi,P)$ was 
interpreted in \cite{junge} as the value of the rank one quantum game $(\xi,P)$ 
in case the players use one-way quantum communication. 
The next corollary strengthens the inequality, pointed out after \cite[Theorem 3.2 (1)]{junge}. 

\begin{corollary}\label{c_boundsqc}
If $(\xi,P)$ is a projection quantum game, then
$\omega_{\rm qc}(\xi,P)\leq \omega_{\rm h}(\xi,P)$.
\end{corollary}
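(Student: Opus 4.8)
The plan is to recognise both sides of the inequality as operator-space tensor norms of the same column $\rho=[\rho_k]_{k}$ and to deduce the estimate from Corollary~\ref{c:h}. First I would invoke Theorem~\ref{th_qcval}(iii) to write $\omega_{\rm qc}(\xi,P)=\norm{\rho}_{\max}^2$, the norm being that of $M_{\infty,1}(\cl O^{\cl R_{\rm qc}}_{XY,AB})$; by Remark~\ref{r_opspsqc} the operator space $\cl O^{\cl R_{\rm qc}}_{XY,AB}$ equals $\cl S_1^{A,X}\ten_{\max}\cl S_1^{B,Y}$, with the structure inherited from the completely isometric inclusion into $\cl V_{X,A}\ten_{\rm tmax}\cl V_{Y,B}$. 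On the other side, $\omega_{\rm h}(\xi,P)=\norm{\rho}_{\rm h}^2$ is by definition the Haagerup norm on $\cl S_1^{A,X}\ten_{\rm h}\cl S_1^{B,Y}$; using the complete isometry $\alpha$ of Proposition~\ref{p_OXAS1} together with the injectivity of the Haagerup tensor product, this Haagerup norm coincides with the one computed inside $\cl V_{X,A}\ten_{\rm h}\cl V_{Y,B}$ under the identification $\cl S_1^{A,X}\ten\cl S_1^{B,Y}=\cl O_{X,A}\ten\cl O_{Y,B}\subseteq\cl V_{X,A}\ten\cl V_{Y,B}$. (Here one notes that each $\rho_k$ is a finite-rank trace class operator, hence lies in the algebraic tensor product, so all these expressions make sense.)

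The corollary then reduces to the assertion that the formal identity $\cl S_1^{A,X}\ten_{\rm h}\cl S_1^{B,Y}\to\cl O^{\cl R_{\rm qc}}_{XY,AB}$ is completely contractive. This is immediate from Corollary~\ref{c:h} applied with $\cl U=\cl V_{X,A}$ and $\cl V=\cl V_{Y,B}$: that corollary produces a complete contraction $\cl V_{X,A}\ten_{\rm h}\cl V_{Y,B}\to\cl V_{X,A}\ten_{\rm tmax}\cl V_{Y,B}$, and restricting it to the algebraic subspace $\cl O_{X,A}\ten\cl O_{Y,B}$ yields, in view of the identifications of the previous paragraph, the desired complete contraction. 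Evaluating the resulting matricial inequality on the truncations $[\rho_k]_{k=1}^N$ and letting $N\to\infty$ gives $\norm{\rho}_{\max}\leq\norm{\rho}_{\rm h}$, that is, $\omega_{\rm qc}(\xi,P)\leq\omega_{\rm h}(\xi,P)$.

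I do not expect a genuine obstacle here, since the substantive work is already contained in Theorem~\ref{t:maxnorm} and Corollary~\ref{c:h}. The only points needing care are bookkeeping: making sure that $\norm{\cdot}_{\rm h}$ and $\norm{\cdot}_{\max}$ are literally restrictions of the corresponding TRO tensor norms on $\cl V_{X,A}\ten\cl V_{Y,B}$ (this uses Proposition~\ref{p_OXAS1}, Remark~\ref{r_opspsqc} and injectivity of the Haagerup product), and passing from finite matrix levels to the $M_{\infty,1}$-level by the usual supremum convention when $P$ has infinite rank. Both are routine, so the argument should occupy only a few lines.
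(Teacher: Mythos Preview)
Your proposal is correct and follows essentially the same route as the paper: the paper's proof simply cites Theorem~\ref{t:maxnorm}, Corollary~\ref{c:h} and Theorem~\ref{th_qcval}, which is exactly the chain you unpack (with Remark~\ref{r_opspsqc}, Proposition~\ref{p_OXAS1} and Haagerup injectivity supplying the bookkeeping identifications you mention).
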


\begin{proof}
  The inequality follows from Theorem~\ref{t:maxnorm},
  Corollary \ref{c:h} and Theorem \ref{th_qcval}. 
\end{proof}

\begin{corollary}\label{c_adv}
If $(\xi,P)$ is a  projection quantum game and $P$ has rank $n$,  then
$\frac{\omega_{\rm qc}(\xi,P)}{\omega_{\rm q}(\xi,P)}\leq 4n$.
\end{corollary}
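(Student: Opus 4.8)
The plan is to translate the ratio into operator-space language via Theorem~\ref{th_qcval}, reduce the quantum commuting side to a Haagerup norm using the results of Section~\ref{ss_copsp}, and then use the hypothesis $\operatorname{rank}P=n$ to compare the Haagerup and minimal norms of the relevant game tensor.

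First I would invoke Theorem~\ref{th_qcval}(ii)--(iii): with $\rho=[\rho_k]_k\in M_{\infty,1}(\cl S_1^{A,X}\otimes\cl S_1^{B,Y})$ one has $\omega_{\rm q}(\xi,P)=\norm{\rho}_{\min}^2$ and $\omega_{\rm qc}(\xi,P)=\norm{\rho}_{\max}^2$, and since $P$ has rank $n$ we may take $\rho$ to be a column with $n$ entries. By Remarks~\ref{r_opspsqa} and~\ref{r_opspsqc} these are the norms of $\rho$ as a column over $\cl V_{X,A}\otimes_{\min}\cl V_{Y,B}$ and over $\cl V_{X,A}\otimes_{\rm tmax}\cl V_{Y,B}$, respectively. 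Applying Corollary~\ref{c_boundsqc} (which rests on Theorem~\ref{t:maxnorm} and Corollary~\ref{c:h}) at the level of the column $\rho$ gives $\norm{\rho}_{\max}\le\norm{\rho}_{\rm h}$; running the same argument after interchanging the two TRO factors and using that $\otimes_{\rm tmax}$ is symmetric gives $\norm{\rho}_{\max}\le\norm{\rho}_{\rm h^t}$, where $\norm{\cdot}_{\rm h}$ and $\norm{\cdot}_{\rm h^t}$ denote the column Haagerup and transpose-Haagerup norms, which by injectivity of the Haagerup tensor product may be computed inside $M_{n,1}(\cl O_{X,A}\otimes_{\rm h}\cl O_{Y,B})$ and its transpose. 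Thus $\omega_{\rm qc}(\xi,P)\le\min\{\norm{\rho}_{\rm h}^2,\norm{\rho}_{\rm h^t}^2\}$, and it remains to bound this by $4n\,\norm{\rho}_{\min}^2=4n\,\omega_{\rm q}(\xi,P)$.

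The crucial and hardest step is therefore the operator-space estimate $\min\{\norm{\rho}_{\rm h},\norm{\rho}_{\rm h^t}\}\le 2\sqrt n\,\norm{\rho}_{\min}$, and this is where the rank enters. Writing $G=[\sqrt{\lm_k}\,\gamma_k^*]_{k=1}^n\colon\bb C^{AB}\otimes\ell_2^R\to\bb C^n$ — a contraction with $G^*G=P$ and $GG^*=\Tr_R(P)$, whence $\Tr(GG^*)=\Tr(P)=\sum_k\lm_k\le n$ — and decomposing $\xi=\sum_{x,y}e_{xy}\otimes\xi_{xy}$, one checks as in the proof of Theorem~\ref{th_R-val} that for any ternary morphisms $\phi$ of $\cl V_{X,A}$ and $\psi$ of $\cl V_{Y,B}$ with compatible Hilbert spaces, $\sum_k(\phi\cdot\psi)(\rho_k)^*(\phi\cdot\psi)(\rho_k)=\sum_{k=1}^n Z_k^*Z_k$ with $Z_k=\sum_{x,y,a,b}\big(G(e_{ab}\otimes\xi_{xy})\big)_k\,\phi(u_{a,x})\psi(v_{b,y})$, the $n\times(|X||Y||A||B|)$ coefficient matrix having Hilbert--Schmidt norm $\big(\Tr\big(G(I_{AB}\otimes\Tr_{XY}(\xi\xi^*))G^*\big)\big)^{1/2}\le\big(\Tr(G^*G)\big)^{1/2}\le\sqrt n$. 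This should exhibit $\rho$, as an element of the column Haagerup space, as factoring through the $n$-dimensional Hilbert space $\bb C^n$ with one leg a complete contraction and the other the coefficient operator of Hilbert--Schmidt norm $\le\sqrt n$; bounding the Haagerup norm in terms of the minimal norm along such a factorisation (the Hilbert--Schmidt control of the coefficient operator being essential to get $\sqrt n$ rather than $n$) then yields $\norm{\rho}_{\rm h}\le 2\sqrt n\,\norm{\rho}_{\min}$, and likewise for $\norm{\cdot}_{\rm h^t}$. I expect making this factorisation-through-$\bb C^n$ argument precise — i.e. pinning down exactly how the column Haagerup norm of $\rho$ is governed by $\operatorname{rank}P$ — to be the main obstacle, with the remainder being bookkeeping with the identifications of Section~\ref{s_qv}.
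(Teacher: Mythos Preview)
Your reduction to operator-space norms via Theorem~\ref{th_qcval}, and the bound $\omega_{\rm qc}(\xi,P)\le\norm{\rho}_{\rm h}^2$ (and its transpose) via Corollary~\ref{c:h}, are correct. The gap is in the ``crucial and hardest step'': the factorisation-through-$\bb C^n$ argument you sketch does not, by itself, yield $\min\{\norm{\rho}_{\rm h},\norm{\rho}_{\rm h^t}\}\le 2\sqrt n\,\norm{\rho}_{\min}$. Your Hilbert--Schmidt computation shows only that the coefficient matrix $C=(c_{k,xyab})$ has $\norm{C}_{\rm HS}\le\sqrt n$, which controls $\norm{\rho}$ in terms of the column $E=[\epsilon_{x,a}\otimes\epsilon_{y,b}]$ via $\norm{\rho}\le\norm{C}_{\rm op}\norm{E}$ in any operator space structure; but $\norm{E}$ is dimension-dependent and this route never produces $\norm{\rho}_{\min}$ on the right-hand side. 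There is no known elementary factorisation argument that turns ``rank $n$'' into $\norm{\cdot}_{\rm h}\le C\sqrt n\,\norm{\cdot}_{\min}$; the constant $2$ you wrote down is precisely the content of the non-commutative Grothendieck inequality of Pisier--Shlyakhtenko/Haagerup--Musat, and this is the input you are missing.

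A second issue is that Grothendieck gives $\norm{u}_\mu\le 2\norm{u}_{\min}$ for the \emph{symmetrised} Haagerup norm $\mu$ of (\ref{eq_hht+}), not for $\norm{\cdot}_{\rm h}$ itself; so working with $\min\{\norm{\cdot}_{\rm h},\norm{\cdot}_{\rm h^t}\}$ is too strong. The paper's argument runs as follows: first show (using Corollary~\ref{c:h}, symmetry of $\otimes_{\rm tmax}$, and (\ref{eq_hht+})) that the identity $\cl S_1^{A,X}\otimes_\mu\cl S_1^{B,Y}\to\cl S_1^{A,X}\otimes_{\max}\cl S_1^{B,Y}$ is completely contractive, so $\norm{\rho}_{\max}\le\norm{\rho}_\mu$ at the column level. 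Then apply Grothendieck \emph{entrywise}: $\norm{\rho_k}_\mu\le 2\norm{\rho_k}_{\min}$ for each $k$. Finally combine via the trivial estimates $\norm{\rho}_\mu^2\le\sum_{k=1}^n\norm{\rho_k}_\mu^2$ and $\norm{\rho_k}_{\min}\le\norm{\rho}_{\min}$, giving $\norm{\rho}_\mu\le 2\sqrt n\,\norm{\rho}_{\min}$. The rank enters only through the number of terms in the sum, not through any Hilbert--Schmidt bound on coefficients.
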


\begin{proof}
Write $P=\sum_{k=1}^n\gamma_k\gamma_k^*$, where $\gamma_1,\dots,\gamma_n$ are orthonormal. Set $\rho_k=\overline{\Tr(\xi\gamma_k^*)}$ and  $\rho=[\rho_k]_{k=1}^n\in M_{n,1}(\cl S_1^{A,X}\otimes\cl S_1^{B,Y})$. Let $\omega_{\mu}(\xi,P) := \left\|\rho\right\|_{\mu}^2$, where $\mu$ is the symmetrized Haagerup norm. 
Note that the symmetrised Haagerup norm on $\cl S_1^{A,X}\otimes\cl S_1^{B,Y}$ dominates the max norm.
Indeed, for an element $M\in \cl S_1^{A,X}\otimes\cl S_1^{B,Y}$, 
we have, by Corollary \ref{c:h} and Theorem \ref{t:maxnorm}, that 
$\|M\|_{\max} \leq \|M\|_{\rm h}$; thus, by Theorem \ref{t:maxnorm}, 
$\|M\|_{\max}\leq \|M\|_{{\rm h}^{\rm t}}$ and hence, if $u,v\in  \cl S_1^{A,X}\otimes\cl S_1^{B,Y}$ and 
$M = u+v$, then 
$\|M\|_{\max} \leq \|u\|_{\max} + \|v\|_{\max} \leq \|u\|_{\rm h} + \|v\|_{{\rm h}^{\rm t}}$ implying, by (\ref{eq_hht+}), 
that $\|M\|_{\max} \leq \|M\|_{\mu}$. 

By Remark~\ref{r_opspsqc}, we have $\omega_{\rm qc}(\xi,P)\le \omega_\mu(\xi,P)$.
By the non-commutative Grothendieck inequality \cite{hm,ps} (see also \cite[Theorem 2.23]{junge})
we then have $\|\rho_k\|_\mu\leq 2\|\rho_k\|_{\rm min}$ for all $k$. 
Therefore, 
$$\|\rho\|_\mu\leq\left(\sum_{k=1}^n\|\rho_k\|_\mu^2\right)^{1/2}\leq 2\left(\sum_{k=1}^n\|\rho_k\|_{\rm min}^2\right)^{1/2}\leq2n^{1/2}\|\rho\|_{\rm min},$$ where we use $\|\rho_k\|_{\rm min}\leq\|\rho\|_{\rm min}$, and hence 
$$\frac{\omega_{\rm qc}(\xi,P)}{\omega_{\rm q}(\xi,P)}
%\leq \frac{\omega_{\rm h}(\xi,P)}{\omega_{\rm q}(\xi,P)}
\leq \frac{\omega_{\mu}(\xi,P)}{\omega_{\rm q}(\xi,P)} =\left(\frac{\|\rho\|_{\mu}}{\|\rho\|_{\min}}\right)^2\leq 4n.
$$
Note that we have implicitly used the completely isometric inclusions $\cl S_1^{A,X}\subseteq \cl S_1^{\max\{A,X\}}$ and $\cl S_1^{B,Y}\subseteq \cl S_1^{\max\{B,Y\}}$ together with injectivity of the minimal tensor product. 

%We denote by $\mu$ the symmetrised Haagerup norm (see \cite[Chapter 5, p.102]{pisier_intr}), and 
%note that it dominates the Haagerup norm. 
%With the notation in the statement of Theorem \ref{th_qcval}, set $\omega_{\mu}(\xi,P) := \left\|\rho\right\|_{\mu}^2$. \mchange{R. square added on norm, factor of 4, extra step in inequalities. L. correct.}
%By Corollary \ref{c_boundsqc} and the non-commutative Grothendieck inequality\marginpar{\tiny JC: Updated proof and references} \cite{hm,ps} (see also \cite[Theorem 2.23]{junge})\mchange{R. Theorem 2.5$\to$Theorem 2.23 (different numbering in published version vs arxiv)}
%we then have 
%$$\frac{\omega_{\rm qc}(\xi,P)}{\omega_{\rm q}(\xi,P)}
%\leq \frac{\omega_{\rm h}(\xi,P)}{\omega_{\rm q}(\xi,P)}
%\leq \frac{\omega_{\mu}(\xi,P)}{\omega_{\rm q}(\xi,P)} =\left(\frac{\|\rho\|_{\mu}}{\|\rho\|_{\min}}\right)^2\leq 4.
%$$
%Note that we have implicitly used the completely isometric inclusions $\cl S_1^{A,X}\subseteq \cl S_1^{\max\{A,X\}}$ and $\cl S_1^{B,Y}\subseteq \cl S_1^{\max\{B,Y\}}$ together with injectivity of the minimal tensor product. 
\end{proof}

We next include metric characterisations of state convertibility via LOSR and via 
local operations and classical communication (LOCC) (we refer the reader to 
\cite{clmow} for a detailed mathematical treatment of the latter class of quantum channels).

\begin{corollary}\label{LOSR_convert}
Let $X,Y$ be finite sets, and $\xi\in \bb{C}^{XY}$ and $\gamma\in\bb{C}^{XY}$ be unit vectors. Consider the statements:
\begin{itemize}
\item[(i)] there exists a quantum channel $\Gamma : M_{XY}\to M_{XY}$ in 
the class {\rm LOSR} such that $\Gamma(\xi\xi^*) = \gamma\gamma^*$;
\item[(ii)] $\norm{\xi\gamma^*}
_{w, {\rm cb}}
%M_{n,1}(\cl{S}_1(\bb{C}^A,\bb{C}^X)\ten_{w,cb}\cl{S}_1(\bb{C}^B,\bb{C}^Y))} 
= 1$;\smallskip
\item[(i')] there exists a quantum channel $\Gamma : M_{XY}\to M_{XY}$ in 
the class {\rm LOCC} such that $\Gamma(\xi\xi^*) = \gamma\gamma^*$;
\item[(ii')] $\norm{\xi\gamma^*}
_{r,{\rm cb}}
%M_{n,1}(\cl{S}_1(\bb{C}^A,\bb{C}^X)\ten_{w,cb}\cl{S}_1(\bb{C}^B,\bb{C}^Y))} 
= 1$.
\end{itemize}
Then (i)$\Leftrightarrow$(ii) and (i')$\Leftrightarrow$(ii').
\end{corollary}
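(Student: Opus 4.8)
The plan is to deduce both equivalences from Corollary \ref{c_convert} applied to the appropriate resources, once we verify that those resources are \emph{closed} in the sense required there. Recall that Corollary \ref{c_convert} states: if $\cl R$ is a closed resource over a pair of finite sets and $\xi$, $\gamma$ are unit vectors, then the existence of $\Gamma \in \mathsf{QC}(\cl R)$ with $\Gamma(\xi\xi^*) = \gamma\gamma^*$ is equivalent to $\|\overline{\xi\gamma^*}\|_{\cl R} = 1$. For the case at hand both $X$ and $Y$ appear on each side, so we take the resource to be over the pair $(XY, XY)$.

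First I would treat (i)$\Leftrightarrow$(ii). By Proposition \ref{p-Rvalq} we have $\mathsf{QC}(\cl R_{\rm loc}) = \cl Q_{\rm loc}$, and by the remark in Section \ref{ss_qnlgd} the class $\cl Q_{\rm loc}$ coincides with LOSR; so statement (i) is exactly the statement that some $\Gamma \in \mathsf{QC}(\cl R_{\rm loc})$ satisfies $\Gamma(\xi\xi^*)=\gamma\gamma^*$. The only thing to check before invoking Corollary \ref{c_convert} is that $\cl R_{\rm loc}$ is a \emph{closed} resource, i.e.\ that $\mathsf{QC}(\cl R_{\rm loc}) = \cl Q_{\rm loc}$ is a closed (equivalently, since we are in finite dimensions, compact) subset of $\mathsf{QC}(M_{XY}, M_{XY})$. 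This is standard: $\cl Q_{\rm loc}$ is the image of the compact set of tuples $(\lambda_i, \Phi_i, \Psi_i)$ (with $\sum\lambda_i = 1$, each $\Phi_i \in \mathsf{QC}(M_X,M_A)$, $\Psi_i \in \mathsf{QC}(M_Y,M_B)$, and $i$ ranging over a set of Carath\'eodory-bounded size) under the continuous map $(\lambda_i,\Phi_i,\Psi_i) \mapsto \sum_i \lambda_i \Phi_i \otimes \Psi_i$, hence compact. Then Corollary \ref{c_convert} gives (i)$\Leftrightarrow$ $\|\overline{\xi\gamma^*}\|_{\cl R_{\rm loc}} = 1$, and by Theorem \ref{th_wcb} (which identifies $\cl O_{XY,XY}^{\cl R_{\rm loc}}$ completely isometrically with $\cl S_1^{X,X}\ten_{w,{\rm cb}}\cl S_1^{Y,Y}$) this norm equals $\|\overline{\xi\gamma^*}\|_{w,{\rm cb}}$. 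Finally, since complex conjugation $\rho \mapsto \overline\rho$ (with respect to the standard bases) is a completely isometric map on the relevant operator spaces — it is implemented by the transpose/conjugation symmetry and preserves all the ingredients in the definition \eqref{eq_wcb} of $\|\cdot\|_{w,{\rm cb}}$ — we may drop the conjugation bar and obtain (ii). (If one prefers, one can instead observe that $\overline{\xi\gamma^*} = \bar\xi\,\bar\gamma^*$ and rerun the argument with $\bar\xi,\bar\gamma$ in place of $\xi,\gamma$; since $\bar\xi,\bar\gamma$ are again unit vectors, (i) holds for $\xi,\gamma$ iff it holds for $\bar\xi,\bar\gamma$, as $\overline{\Gamma}$ is again LOSR whenever $\Gamma$ is.)

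The argument for (i')$\Leftrightarrow$(ii') is entirely parallel, using instead the resource $\cl R_{\rm lowc}$ — except that here a subtlety arises, since $\cl R_{\rm lowc}$ captures only \emph{one-way} LOCC whereas (i') refers to the full class LOCC. This is reconciled by the fact, recorded in \cite{clmow} and used implicitly in Section \ref{ss_oneway}, that for the specific task of converting a pure state $\xi\xi^*$ to a pure state $\gamma\gamma^*$, one-way LOCC is as powerful as general LOCC; alternatively, since $\xi\gamma^*$ is a rank-one object, the relevant value is unchanged. So statement (i') is equivalent to the existence of $\Gamma \in \mathsf{QC}(\cl R_{\rm lowc}) = \cl Q_{\rm lowc}$ (Proposition \ref{p_lowc}) achieving the conversion. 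One then checks that $\cl R_{\rm lowc}$ is a closed resource: $\cl Q_{\rm lowc}$ is the image of a compact parameter set (tuples $(\Psi_i)_{i=1}^n$ of completely positive maps $M_X\to M_A$ with $\sum_i\Psi_i$ trace preserving — a compact set — together with quantum channels $\Phi_i : M_Y\to M_B$, with $n$ bounded by a Carath\'eodory-type dimension count) under the continuous map $(\Psi_i,\Phi_i) \mapsto \sum_i \Psi_i\ten\Phi_i$. Corollary \ref{c_convert} then yields (i')$\Leftrightarrow$ $\|\overline{\xi\gamma^*}\|_{\cl R_{\rm lowc}} = 1$; Theorem \ref{th_lowc1} identifies $\cl O_{XY,XY}^{\cl R_{\rm lowc}}$ with $\cl S_1^{X,X}\ten_{r,{\rm cb}}\cl S_1^{Y,Y}$, so this norm is $\|\overline{\xi\gamma^*}\|_{r,{\rm cb}}$; and dropping the conjugation bar exactly as before gives (ii').

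The main obstacle I anticipate is the closedness verification — making sure that the relevant sets of channels really are compact — and, for the one-way versus two-way LOCC point, citing the correct fact that these coincide for pure-to-pure state conversion so that statement (i') genuinely matches $\mathsf{QC}(\cl R_{\rm lowc})$. Both are routine but need to be stated carefully: the compactness follows from Carath\'eodory bounds on the number of terms plus continuity of the relevant parametrising maps together with compactness of the space of quantum channels (resp.\ trace-non-increasing completely positive maps) between fixed finite-dimensional matrix algebras. The conjugation-invariance of the $w,{\rm cb}$- and $r,{\rm cb}$-norms is also worth a one-line justification, since it is what lets us pass between $\overline{\xi\gamma^*}$ and $\xi\gamma^*$ in the final step.
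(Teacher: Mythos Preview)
Your argument for (i)$\Leftrightarrow$(ii) is essentially the paper's: compactness of $\cl Q_{\rm loc}$, Proposition \ref{p-Rvalq}, and Theorem \ref{th_wcb}, combined via Corollary \ref{c_convert}. Your explicit handling of the conjugation bar (either by conjugation-invariance of the norm or by passing to $\bar\xi,\bar\gamma$) is a point the paper leaves implicit.

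For (i')$\Leftrightarrow$(ii') you take a genuinely different route. The paper does \emph{not} verify that $\cl Q_{\rm lowc}$ is closed. Instead, for (ii')$\Rightarrow$(i') it argues: norm one gives approximate one-way LOCC conversions $\Gamma_\varepsilon(\xi\xi^*)\to\gamma\gamma^*$; a stability result of Owari et al.\ then yields the majorisation $\rho_\xi\prec\rho_\gamma$ of reduced densities; and Nielsen's theorem produces an exact LOCC conversion. Your approach bypasses this entanglement-theoretic machinery by showing directly that $\cl Q_{\rm lowc}$ is compact (via the Carath\'eodory-type argument you sketch: writing $(J(\Gamma),I_X)$ as a convex combination of points $(P\otimes J(\Phi),\Tr_A P)$ in a compact set, thereby bounding the number of instrument outcomes) and then invoking Corollary \ref{c_convert}. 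Both approaches need the Lo--Popescu fact that pure-to-pure LOCC conversion can be achieved one-way; your version is more internal to the paper's resource framework and avoids the Owari--Nielsen citations, at the cost of spelling out the compactness of one-way LOCC, which is standard but not stated in the paper. Your Carath\'eodory sketch is correct in spirit; to make it airtight, phrase it as: the set $\{(P\otimes Q,\Tr_A P):P\in M_{XA}^+,\ \Tr P=1,\ Q\in M_{YB}^+,\ \Tr_B Q=I_Y\}$ is compact, hence so is its convex hull, and $\cl Q_{\rm lowc}$ is the (rescaled) slice where the second coordinate equals $I_X/|X|$.
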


\begin{proof} 
(i)$\Leftrightarrow$(ii) Since $\mathcal{Q}_{\rm loc}$ is compact in $\mathsf{QC}(M_{XY},M_{XY})$, this follows
from the proof of Corollary \ref{c_convert},
Proposition \ref{p-Rvalq} and Theorem \ref{th_wcb}.

(i')$\Rightarrow$(ii') By \cite{lo-popescu} (see also \cite[Corollary 4.12]{cklt-CMP}), 
if $\Gamma$ in LOCC has the property $\Gamma(\xi\xi^*) = \gamma\gamma^*$ then $\Gamma$ can be chosen 
to belong to $\mathsf{QC}(\cl R_{\rm lowc})$. The statement now follows from 
Proposition \ref{p_lowc} and Theorem \ref{th_lowc1}.

(ii')$\Rightarrow$(i') By Proposition \ref{p_lowc} and Theorem \ref{th_lowc1}, for every $\ep>0$, there exists a one-way LOCC map $\Gamma_\ep:M_{XY}\to M_{XY}$ such that $\Tr(\Gamma_\ep(\xi\xi^*)\gamma\gamma^*)>1-\frac{\ep^2}{2|X||Y|}$. Then
$$\norm{\Gamma_\ep(\xi\xi^*)-\gamma\gamma^*}_2^2=\norm{\Gamma_\ep(\xi\xi^*)}_2^2-2\Tr(\Gamma_\ep(\xi\xi^*)\gamma\gamma^*)+1<\frac{\ep^2}{|X||Y|},$$
so that
$$\norm{\Gamma_\ep(\xi\xi^*)-\gamma\gamma^*}_1<\sqrt{|X||Y|}\norm{\Gamma_\ep(\xi\xi^*)-\gamma\gamma^*}_2<\ep.$$
By \cite[Theorem 1]{owari} (or \cite[Theorem 5.3]{cklt-CMP}), the reduced density matrices $\rho_\xi=\Tr_Y\xi\xi^*$ and $\rho_\gamma=\Tr_Y\gamma\gamma^*$ satisfy $\rho_\xi\prec\rho_\gamma$, where $\prec$ denotes majorisation of density matrices (see, e.g., \cite[\S 4.3.2]{watrous}). But then, by \cite{nielsen}, there exists an LOCC map $\Gamma:M_{XY}\to M_{XY}$ such that $\Gamma(\xi\xi^*)=\gamma\gamma^*$. 
\end{proof}

Similarly to Theorem \ref{th_qcval}, but using Theorem \ref{th_hypvge} instead of 
Theorem \ref{th_R-val}, we obtain the following characterisations of the values of 
hypergraph quantum games. 

For brevity, we will use the notation $\|\rho\|_{\rm t}$ for the norm in 
$L^2(\mathbb P_{XY}\times AB, \mu\times |\cdot |)\otimes_{\rm h}(\cl S_1^{X,A}\otimes_{\rm t}\cl S_1^{Y,B})$, 
where ${\rm t}$ can be any of \lq\lq$\max$'', \lq\lq$\min$'', \lq\lq$w,{\rm cb}$'' or ``$r,{\rm cb}$''.

\begin{theorem}\label{th_hypqhval}
Let $\mathbb H=(\mathbb P_{XY},\varphi,\mu)$ be a probabilistic quantum hypergraph over $(XY,AB)$. Write $\rho = \bar\xi(\bar\eta\circ\varphi)^*\in L^2(\mathbb P_{XY}\times AB, \tilde{\mu})\otimes_{\rm h}(\cl S_1^{X,A}\otimes\cl S_1^{Y,B})$ as defined in (\ref{ximueta}). 
Then

\begin{itemize}
\item[(i)]
$\om_{\rm loc}(\mathbb H) = \norm{\rho}_{w,{\rm cb}}^2$;

\item[(ii)]
$\omega_{\rm q}(\mathbb H) = \left\|\rho\right\|_{\mathstrut\rm min}^2$;

\item[(iii)]
$\omega_{\rm qc}(\mathbb H) = \norm{\rho}_{\mathstrut\max}^2$;

\item[(iv)] 
$\omega_{\rm ns}(\mathbb H) = 
\sup \|({\rm id}\otimes\tilde{\cl E})(\rho)\|^2_{L^2(\mathbb P_{XY}\times AB, \tilde{\mu})\otimes K}$,
where the supremum is taken over all non-signalling
$K$-valued families $\cl E$ and all Hilbert spaces~$K$;

\item[(v)]
$\omega_{\rm lowc}(\mathbb H) = \norm{\rho}_{r,{\rm cb}}^2 \leq  \pi_{(2,C)}(\rho)^2$.
\end{itemize}
\end{theorem}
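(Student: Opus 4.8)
The plan is to derive Theorem~\ref{th_hypqhval} from Theorem~\ref{th_hypvge} in exact parallel with the way Theorem~\ref{th_qcval} is derived from Theorem~\ref{th_R-val}. First I would fix a type ${\rm t}\in\{{\rm loc},{\rm q},{\rm qc},{\rm ns},{\rm lowc}\}$ and observe that, by the definitions in Subsection~\ref{ss_nonlogam}, $\omega_{\rm t}(\mathbb H)=\omega_{\cl R_{\rm t}}(\mathbb H)$, while Propositions~\ref{p-Rvalq}, \ref{p-Rvalqq}, \ref{p_nsre=qns}, \ref{p_lowc} and Theorem~\ref{th_qcrep} give $\mathsf{QC}(\cl R_{\rm t})=\cl Q_{\rm t}$, so that this really is the value of the hypergraph game $\mathbb H$ under strategies of type ${\rm t}$. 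Each $\cl R_{\rm t}$ is a resource over $(XY,AB)$ (Corollary~\ref{c_locre}, the discussion in Subsections~\ref{ss_eprq}--\ref{ss_oneway} and Proposition~\ref{p_lowc}), so Theorem~\ref{th_hypvge} applies verbatim and yields
$$\omega_{\rm t}(\mathbb H)=\norm{\bar\xi(\bar\eta\circ\varphi)^*}^2_{L^2(\bb{P}_{XY}\times AB,\tilde\mu)\otimes_{\rm h}\cl O^{\cl R_{\rm t}}_{XY,AB}}.$$

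For ${\rm t}\in\{{\rm loc},{\rm q},{\rm qc},{\rm lowc}\}$ the remaining task is purely to rename the right-hand operator space factor. Theorem~\ref{th_wcb}, Remark~\ref{r_opspsqa}, Remark~\ref{r_opspsqc} and Theorem~\ref{th_lowc1} provide, respectively, complete isometries $\cl O^{\cl R_{\rm t}}_{XY,AB}\cong \cl S_1^{A,X}\otimes_{\rm t}\cl S_1^{B,Y}$ with $\otimes_{\rm t}$ equal to $\otimes_{w,{\rm cb}}$, $\otimes_{\min}$, $\otimes_{\max}$, $\otimes_{r,{\rm cb}}$. I would then invoke the injectivity of the Haagerup tensor product \cite[Proposition~9.2.5]{er} to promote such a complete isometry of the right factor, after tensoring on the left with $\id_{L^2(\bb{P}_{XY}\times AB,\tilde\mu)}$, to a complete isometry at the level of Haagerup tensor products; consequently the Haagerup norm of $\rho=\bar\xi(\bar\eta\circ\varphi)^*$ is unchanged and equals $\norm{\rho}_{\rm t}^2$ in the notation fixed just before the statement, giving (i), (ii), (iii) and the equality in (v). For the inequality in (v) I would instead feed the completely contractive formal identity $\cl S_1^{A,X}\otimes_{(2,C)}\cl S_1^{B,Y}\to\cl O^{\cl R_{\rm lowc}}_{XY,AB}$ of Theorem~\ref{th_lowcboundsq} through $\id_{L^2}\otimes_{\rm h}(-)$ and use functoriality of the Haagerup tensor product for complete contractions, obtaining $\norm{\rho}_{r,{\rm cb}}^2\le\pi_{(2,C)}(\rho)^2$.

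The no-signalling case (iv) has no comparable closed operator-space description, so I would treat it as in the proof of Theorem~\ref{th_qcval}(iv). Starting from the penultimate display in the proof of Theorem~\ref{th_hypvge} (with $(X,A)$ there replaced by $(XY,AB)$), namely $\omega_{\rm ns}(\mathbb H)=\sup_{U\in\cl R_{\rm ns}}\norm{(\id\otimes\phi_U)(\bar\xi(\bar\eta\circ\varphi)^*)}^2_{L^2(\bb{P}_{XY}\times AB,\tilde\mu)\otimes_{\rm h}\cl B(H_U,K_U)}$, I would restrict the supremum to the generators $U=U_{\cl E}$ of $\cl R_{\rm ns}$ using Remark~\ref{r_conve2}, note that then $H_{U_{\cl E}}=\bb{C}$, and recall from the proof of Theorem~\ref{th_qcval}(iv) that $\phi_{U_{\cl E}}\circ\alpha=\tilde{\cl E}$ under the identification $\cl S_1^{A,X}\otimes\cl S_1^{B,Y}=\cl S_1^{AB,XY}$ of Proposition~\ref{p_OXAS1}. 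Since $\cl B(\bb{C},K)=K_c$ and $L^2_c\otimes_{\rm h}K_c=(L^2\otimes K)_c$, the Haagerup norm on the right then collapses to the ordinary Hilbert-space norm of $(\id\otimes\tilde{\cl E})(\rho)$ in $L^2(\bb{P}_{XY}\times AB,\tilde\mu)\otimes K$, and $\mathsf{QC}(\cl R_{\rm ns})=\cl Q_{\rm ns}$ (Proposition~\ref{p_nsre=qns}) turns the supremum over $\cl R_{\rm ns}$ into the supremum over all non-signalling families, giving (iv).

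Since every ingredient is already available, there is no deep obstacle; the points that require care are the two operator-space manipulations just described --- stability of the Haagerup norm under tensoring the right factor by $\id_{L^2}$, and the identification $L^2_c\otimes_{\rm h}K_c=(L^2\otimes K)_c$ in (iv) that reduces the column-Hilbert Haagerup tensor to a plain Hilbert-space norm --- together with the purely notational reconciliation of the index order ($\cl S_1^{A,X}$ versus $\cl S_1^{X,A}$ and the accompanying transpose) between the statements of Theorems~\ref{th_wcb}, \ref{th_lowc1} and the notation used here. Everything else is a transcription of the proofs of Theorems~\ref{th_qcval} and \ref{th_hypvge}.
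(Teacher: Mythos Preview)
Your proposal is correct and follows essentially the same approach as the paper, which simply states that the result is obtained ``similarly to Theorem~\ref{th_qcval}, but using Theorem~\ref{th_hypvge} instead of Theorem~\ref{th_R-val}.'' You have in fact supplied more detail than the paper itself, correctly invoking injectivity of the Haagerup tensor product to transport the complete isometries of Theorems~\ref{th_wcb}, \ref{th_lowc1} and Remarks~\ref{r_opspsqa}, \ref{r_opspsqc} to the $L^2\otimes_{\rm h}(-)$ level, and correctly handling~(iv) by reducing to the generators $U_{\cl E}$ via Remark~\ref{r_conve2} and using $L^2_c\otimes_{\rm h}K_c=(L^2\otimes K)_c$.
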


%%%%%%%%%%%%%%%%%%%%%%%%%%%%%%%%%%%%%%%%%%%%%%%%%%%%%%%%%%%%%%%%%%%%
%%%%%%%%%%%%%%%%%%%%%%%%%%%%%%%%%%%%%%%%%%%%%%%%%%%%%%%%%%%%%%%%%%%%

\section{Alternative game value expressions}\label{s_ctoq}

The main focus of this section are classical-to-quantum games. 
We obtain expressions for their values of various types;
we start with an alternative expression for the quantum value of a 
hypergraph quantum game, of which they will be a consequence. 

%%%%%%%%%%%%%%%%%%%%%%%%%%%%%%%%%%%%%%%%%%%%%%%%%%%%%%%%%%%%%%%%%%%%

\subsection{The quantum and the local value revisited}\label{ss_alter-qc}

Let $X$, $Y$, $A$ and $B$ be non-empty finite sets.
Fix a probabilistic quantum hypergraph
$\mathbb H=(\mathbb P_{XY},\varphi, \mu)$ over $(X Y, A B)$; 
here, $\varphi : \mathbb P_{XY}\to\cl P_{AB}$ is a Borel measurable function, and $\mu$ is a regular Borel probability measure on $\mathbb P_{XY}$.  
Let $\widehat{\mathbb H} = \int_{\mathbb P_{XY}} \overline{ p\otimes\varphi(p)}\,d\mu(p)$;
after shuffling, we view $\widehat{\mathbb H}$ as an element of 
$(\cl S^X_1\otimes M_A)\otimes (\cl S^Y_1\otimes M_B)$. 
Using Lemma \ref{l:selection}, write
$$p = \sum_{x\in X}\sum_{y\in Y}\xi_{x,y}(p)\xi_{x,y}(p)^* \text{ and }
\varphi(p) = \sum_{a\in A}\sum_{b\in B}\eta_{a,b}(\varphi(p))\eta_{a,b}(\varphi(p))^*,$$ 
where $\xi_{x,y} : \mathbb P_{XY}\to\mathbb C^{XY}$ and
$\eta_{a,b}:\cl P_{AB}\to\mathbb C^{AB}$ are  Borel functions.

Set $\xi_{xy}^{x'y'}(p) = \langle\xi_{x,y}(p),e_{x'}\otimes e_{y'}\rangle$ and 
$\eta_{ab}^{a'b'}(p) = \langle\eta_{a,b}(p),e_{a'}\otimes e_{b'}\rangle$; 
it is clear that the functions $\xi_{xy}^{x'y'} : \bb{P}_{XY}\to \bb{C}$ and 
$\eta_{ab}^{a'b'} : \cl P_{AB}\to \bb{C}$ are Borel.
In the sequel, if the limits of the summations are not specified, they are 
along all indices which appear.
We have that 
\begin{align}\label{eq_integsp}\tightmath
&\widehat{\mathbb H}=
\int_{\mathbb P_{XY}} \overline{p\otimes\varphi(p)}\,d\mu(p)\\
&\, = 
\sum
\int_{\mathbb P_{XY}\mspace{-18mu}}
\overline{\xi_{xy}^{x'y'}\!(p)}\,
\xi_{xy}^{x''y''}\!(p)\,
\overline{\eta_{ab}^{a'b'}\!(p)}\,
\eta_{ab}^{a''b''}\!(p)\,d\mu(p)\,
\epsilon_{x',x''\!}\otimes\epsilon_{y',y''\!}
\otimes\epsilon_{a',a''\!}\otimes\epsilon_{b',b''}.
\nonumber
\end{align}

Let $\cl S^X_1(M_A)$ (resp. $\cl S^Y_1(M_B)$) be the dual operator space of
$M_X\otimes_{\rm min} \cl S^A_1$ (resp. $M_Y\otimes_{\rm min} \cl S^B_1$).
For an element $w\in (\cl S^X_1\otimes M_A)\otimes (\cl S^Y_1\otimes M_B)$, 
we write $\|w\|_{\min}$ and $\|w\|_{\varepsilon}$ for its norms in  
$\cl S^X_1(M_A)\otimes_{\rm min} \cl S^Y_1(M_B)$ 
and  $\cl S^X_1(M_A)\otimes_{\varepsilon} \cl S^Y_1(M_B)$, respectively (see (\ref{eq_inno3})). 

\begin{theorem}\label{th_alterq} 
Let $\mathbb H = (\mathbb P_{XY}, \varphi,\mu)$ be a 
probabilistic quantum hypergraph over $(X Y,A B)$. Then
$\omega_{\rm q}(\nph,\mu) = \|\widehat{\mathbb H}\|_{\rm min}$ and 
$\omega_{\rm loc}(\nph,\mu) = \|\widehat{\mathbb H}\|_{\varepsilon}.$
\end{theorem}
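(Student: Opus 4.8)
The plan is to reduce the identities $\omega_{\rm q}(\varphi,\mu)=\|\widehat{\mathbb H}\|_{\rm min}$ and $\omega_{\rm loc}(\varphi,\mu)=\|\widehat{\mathbb H}\|_\varepsilon$ to the tensor-norm expressions already established in Theorem~\ref{th_hypvge} (together with Remark~\ref{r_opspsqa} and Theorem~\ref{th_wcb}), by identifying the Haagerup tensor expression $\|\bar\xi(\bar\eta\circ\varphi)^*\|^2_{L^2(\mathbb P_{XY}\times AB,\tilde\mu)\otimes_{\rm h}\cl O^{\cl R_{\rm t}}_{XY,AB}}$ with the relevant norm of $\widehat{\mathbb H}$. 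First I would unravel the definition: by Theorem~\ref{th_hypvge}, $\omega_{\rm q}(\varphi,\mu)$ equals the square of the norm of $\bar\xi(\bar\eta\circ\varphi)^*$ in $L^2(\mathbb P_{XY}\times AB,\tilde\mu)\otimes_{\rm h}(\cl S_1^{A,X}\otimes_{\rm min}\cl S_1^{B,Y})$, using Remark~\ref{r_opspsqa} to identify $\cl O^{\cl R_{\rm q}}_{XY,AB}$ with $\cl S_1^{A,X}\otimes_{\rm min}\cl S_1^{B,Y}$ (and similarly $\cl O^{\cl R_{\rm loc}}_{XY,AB}=\cl S_1^{A,X}\otimes_{w,\rm cb}\cl S_1^{B,Y}$ for the local case via Theorem~\ref{th_wcb}).

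The key computational step is a duality/symmetry identity. Expanding $\|\bar\xi(\bar\eta\circ\varphi)^*\|^2_{\rm h}$ using the column-space formula \eqref{e:column} for $L^2(\mathbb P_{XY}\times AB,\tilde\mu)$, one gets a norm of a ``Gram'' element obtained by pairing $\bar\xi(\bar\eta\circ\varphi)^*$ with itself over the $L^2$ variable; after integrating out $(p,a,b)$, this produces exactly the element $\sum \big(\int \overline{\xi^{x'y'}_{xy}}\,\xi^{x''y''}_{xy}\,\overline{\eta^{a'b'}_{ab}}\,\eta^{a''b''}_{ab}\,d\mu\big)\,\epsilon_{x',x''}\otimes\cdots\otimes\epsilon_{b',b''}$, i.e.\ $\widehat{\mathbb H}$ as written in \eqref{eq_integsp}. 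The point is that the square of the Haagerup norm of a column vector $v$ in $E\otimes_{\rm h}\cl Z$ (with $E=L^2_c$) equals the norm of $v^*\odot v$ computed in $\cl Z^*\otimes\cl Z$ appropriately — here one uses that the operator space dual of $\cl S_1^{A,X}\otimes_{\rm min}\cl S_1^{B,Y}$ is $M_{A,X}\otimes_{\rm min}M_{B,Y}$ (for the quantum case) and that pairing the Gram element against a product functional $T\otimes S$ reproduces $\|(\phi_T\otimes\phi_S)(\cdot)\|^2$-type quantities. Concretely I would verify: for any $T\in M_{A,X}$ and $S\in M_{B,Y}$, the pairing $\langle\widehat{\mathbb H},T\otimes S\otimes \bar T\otimes\bar S\rangle$ (with the appropriate reshuffling into $(\cl S_1^X\otimes M_A)\otimes(\cl S_1^Y\otimes M_B)$) equals $\int_{\mathbb P_{XY}}\langle p,\,|T\otimes S\,\text{something}|\rangle\,d\mu$, matching the integrand appearing in the proof of Theorem~\ref{th_hypvge}; taking sup over contractions $T,S$ then gives $\|\widehat{\mathbb H}\|_{\rm min}$ on one side and $\omega_{\rm q}(\varphi,\mu)$ on the other. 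For the local case the same pairing argument runs with $T,S$ now ranging over functionals of norm $\le1$ on $\cl S_1^X(M_A)$, $\cl S_1^Y(M_B)$, which is precisely the injective tensor norm by \eqref{eq_inno3}, and one checks these functionals correspond to the completely contractive maps $\alpha,\beta$ entering the definition of $\|\cdot\|_{w,\rm cb}$.

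I would structure the proof in two parallel halves: (a) show $\omega_{\rm q}(\varphi,\mu)=\|\widehat{\mathbb H}\|_{\rm min}$ by combining Theorem~\ref{th_hypvge}, Remark~\ref{r_opspsqa}, and the Gram-pairing identification above, invoking the finite-dimensionality to move freely between $\cl S_1^{A,X}\otimes_{\rm min}\cl S_1^{B,Y}$ and its dual $M_{A,X}\otimes_{\rm min}M_{B,Y}$, and the fact that for finite-dimensional operator spaces the minimal norm is computed by a supremum over complete contractions into $\cl B(H)\otimes\cl B(K)$ with $H,K$ finite-dimensional; (b) show $\omega_{\rm loc}(\varphi,\mu)=\|\widehat{\mathbb H}\|_\varepsilon$ identically, using Theorem~\ref{th_wcb} in place of Remark~\ref{r_opspsqa} and the Banach-space injective norm formula \eqref{eq_inno3}. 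The main obstacle I anticipate is bookkeeping the ``shuffling'' isomorphism correctly: $\widehat{\mathbb H}$ lives in $(\cl S_1^X\otimes M_A)\otimes(\cl S_1^Y\otimes M_B)$ whereas the Haagerup-tensor element from Theorem~\ref{th_hypvge} lives in $L^2\otimes_{\rm h}(\cl S_1^{A,X}\otimes\cl S_1^{B,Y})$, and one must check that the canonical complete isometry $\cl S_1^X(M_A)=(M_X\otimes_{\rm min}\cl S_1^A)^*$ together with the identification $\cl S_1^{A,X}=\overline{\cl O_{X,A}}$ from Proposition~\ref{p_OXAS1} intertwines the relevant dualities so that the quadratic-in-$\xi$, quadratic-in-$\eta$ integrand appearing on both sides genuinely coincide; this is routine but requires care with complex conjugates and transposes, exactly the kind of reshuffling that the proof of Theorem~\ref{th_hypvge} already performs, so I would model the argument closely on that proof.
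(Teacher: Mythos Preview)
Your plan correctly identifies the starting point (Theorem~\ref{th_hypvge} plus Remark~\ref{r_opspsqa}/Theorem~\ref{th_wcb}) and the computational fact that, after expanding the Haagerup norm, one obtains a supremum of expressions of the form $\|(T_U\otimes S_V)(\widehat{\mathbb H})\|$ over block operator isometries $U,V$, where $T_U:\cl S_1^X(M_A)\to\cl B(H)$ is the map in~\eqref{eq_TUx'x}. This is exactly how the paper gets the easy inequality $\omega_{\rm q}(\varphi,\mu)\le\|\widehat{\mathbb H}\|_{\rm min}$.

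However, there is a genuine gap in your reverse inequality. The maps $T_U$ arising from isometries correspond, via~\eqref{eq_CBide}, to \emph{unital completely positive} maps $\Phi_U:M_A\to M_X\otimes\cl B(H)$, not to arbitrary complete contractions. To compute $\|\widehat{\mathbb H}\|_{\rm min}$ you must take the supremum over \emph{all} complete contractions $T:\cl S_1^X(M_A)\to\cl B(H)$, and it is not at all automatic that this larger supremum is attained on the $T_U$'s. Your ``Gram pairing'' sketch does not address this; moreover, the claim that $(\cl S_1^{A,X}\otimes_{\rm min}\cl S_1^{B,Y})^*=M_{A,X}\otimes_{\rm min}M_{B,Y}$ is false (the dual of a minimal tensor product is the projective one), so the duality shortcut you have in mind does not exist.

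The paper closes this gap with a specific device you are missing: given arbitrary complete contractions $T,S$, one uses \cite[Theorem~4.2]{palazuelos-vidick} to embed the corresponding maps $\phi,\psi$ as off-diagonal corners of $2\times 2$ matrices of unital completely positive maps $\Phi,\Psi$, then exploits the \emph{positivity} of the integrand $\overline{p\otimes\varphi(p)}$ to apply Cauchy--Schwarz to $\langle G_{\Phi,\Psi}\tilde\zeta,\tilde\eta\rangle$, bounding $|\langle(T\otimes S)(\widehat{\mathbb H})\zeta,\eta\rangle|$ by quantities of the form $\langle G_{\tilde\phi_i,\tilde\psi_i}\cdot,\cdot\rangle$ that \emph{do} come from isometries. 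Neither the $2\times2$ trick nor the role of positivity appears in your outline, and without them the argument cannot be completed. (For the local value the same mechanism runs with $n=1$, where the trick trivialises because contractive functionals are automatically completely contractive.)
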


\begin{proof}

Up to canonical complete isometries, we have 
\begin{equation}\label{eq_CBide}\tightmath
{\rm CB}(M_A,M_X(\cl B(H)))
\simeq \cl S^A_1\otimes_{\rm min}
M_X\otimes_{\rm min}\cl B(H)
\simeq 
{\rm CB}(\cl S^X_1(M_A),\cl B(H))
\end{equation}
(see e.g. \cite[(1.32)]{blm}). 
Thus, given a block operator isometry $U = (U_{a,x})_{a,x}$, whose entries have domain $H$, 
the unital 
completely positive map, defined by letting 
$\Phi_U(\epsilon_{a,a'}) = \sum_{x,x'\in X} \epsilon_{x,x'} \otimes U_{a,x}^*U_{a',x'}$, 
gives rise to a map $T_U : \cl S^X_1(M_A) \to \cl B(H)$, 
defined by 
\begin{equation}\label{eq_TUx'x}
T_U(\epsilon_{x',x}\otimes \epsilon_{a,a'}) = U_{a,x}^*U_{a',x'}, \ \ \ x,x'\in X, a,a'\in A.
\end{equation}
Define $S_V:\cl S^Y_1(M_B) \to \cl B(K)$  similarly for a block isometry $V=(V_{b,y})_{b,y}$.

Considering the resource $\cl R_{\rm q}$ (see Subsection \ref{ss_eprq}), 
the proof of Theorem \ref{th_hypvge} implies that, if $U = (U_{a,x})_{a,x}$ and $V = (V_{b,y})_{b,y}$
vary over all finite-rank block operator isometries in the supremum below, taking into account (\ref{eq_integsp}) 
and (\ref{eq_TUx'x}), we have
\begin{eqnarray}\label{eq_sxx'aa'lices}
\omega_{\rm q}(\mathbb H)
& = & 
 \sup_{(U,V)}\|\textstyle\sum
 (\int_{\mathbb P_{XY}}\xi_{xy}^{x'y'}\!(p)\,\overline{\xi_{xy}^{x''y''}\!(p)\,\eta_{ab}^{a'b'}\!(p)}\,\eta_{ab}^{a''b''}\!(p)\,d\mu(p))
 \nonumber\\[-1.5ex] 
& & \hspace{3.5cm}  
\times \ U_{a',x'}^*U_{a'',x''}\otimes V_{b',y'}^*V_{b'',y''}\|\nonumber\\
& = & 
\sup_{(U,V)}\|\textstyle\sum
 (\int_{\mathbb P_{XY}}\xi_{xy}^{x'y'}\!(p)\,\overline{\xi_{xy}^{x''y''}\!(p)\,\eta_{ab}^{a'b'}\!(p)}\,\eta_{ab}^{a''b''\!}(p)d\mu(p))\\[-1.5ex]
& & \hspace{3.5cm}
\times \ T_U(\epsilon_{x'',x'}\otimes \epsilon_{a',a''\!})\otimes S_V(\epsilon_{y'',y'}\otimes \epsilon_{b',b''\!})\|\nonumber\\
& = & 
 \sup_{(U,V)}\|(T_U\otimes S_V)(\textstyle\int_{\mathbb P_{XY}}
\overline{p\otimes \varphi(p)}\,d\mu(p))\|\nonumber\\
& \leq & 
\|\widehat{\mathbb H}\|_{\cl S^X_1(M_A)\otimes_{\rm min} \cl S^Y_1(M_B)}. \nonumber
\end{eqnarray}

To prove the converse inequality, fix $\varepsilon > 0$ and let 
$T\in {\rm CB}(\cl S^X_1(M_A),M_n)$ and $S\in {\rm CB}(\cl S^Y_1(M_B),M_n)$ with
$\|T\|_{\rm cb}\leq 1$, $\|S\|_{\rm cb}\leq 1$, and $\zeta$ and $\eta$ be unit vectors in $\bb C^{n^2}$,
such that 
$$\left|\left\langle (T\otimes S)\left(\int_{\mathbb P_{XY}}\overline{p\otimes\varphi(p)}\,d\mu(p)\right)\zeta, \eta
\right\rangle\right|
\geq \|\widehat{\mathbb H}\|_{\cl S^X_1(M_A)\otimes_{\rm min} \cl S^Y_1(M_B)}-\varepsilon.$$
Let $\phi : M_A\to M_X\otimes M_n$ (resp. $\psi : M_B\to M_Y\otimes M_n$) be the
complete contraction
corresponding to $T$ (resp. $S$) via (\ref{eq_CBide}). 
By \cite[Theorem 4.2]{palazuelos-vidick} there are unital completely positive maps 
$\tilde\phi_i : M_A\to M_X\otimes M_n$ and $\tilde\psi_i : M_B\to M_Y\otimes M_n$, $i=1,2$,  
such that the maps $\Phi: M_A\to M_2(M_n\otimes M_X)=M_X\otimes M_2(M_n)$ and 
$\Psi: M_B\to M_2(M_n\otimes M_Y) = M_Y\otimes M_2(M_n)$, given by 
$$\Phi(\rho) = \left(\begin{array}{cc}\tilde\phi_1(\rho) & \phi(\rho)\\ \phi(\rho^*)^* & \tilde \phi_2(\rho) \end{array}\right) \text{ and }\Psi(\sigma) = \left(\begin{array}{cc}\tilde \psi_1(\sigma) & \psi(\sigma)\\ \psi(\sigma^*)^* & \tilde\psi_2(\sigma)
\end{array} \right), 
$$
are completely positive.

For $i = 1,2$, there exist \cite[Theorem 3.1]{tt-QNS} block operator 
isometries $U_i = (U_{a,x}^i)$ and $V_i = (V_{b,y}^i)$, such that
$\tilde\phi_i(\epsilon_{a,a'}) = [(U_{a,x}^i)^*U_{a',x'}^i]_{x,x'}$ and  
$\tilde\psi_i(\epsilon_{b,b'}) = [(V_{b,y}^i)^*V_{b',y'}^i]_{y,y'}$.
Let $\tilde\zeta = (\zeta,0,0,0)\in\mathbb C^{4n^2}$,
$\tilde\eta = (0,0,0,\eta)\in \mathbb C^{4n^2}$ and 
$$G_{\Phi,\Psi} = \displaystyle\int_{\mathbb P_{XY}}\sum_{x\in X} \sum_{y\in Y} 
L_{\overline{\xi_{x,y}(p)}\overline{\xi_{x,y}(p)^*}}((\Phi\otimes\Psi)(\overline{\varphi(p)}))\,d\mu(p),$$
where $L_{\omega}$ is the slice map corresponding to the element $\omega$.
Since the map $\Phi\otimes\Psi$ is positive, the operator $G_{\Phi,\Psi}$ is positive.

%Writing $L_{\omega}$ is the slice map corresponding to the element $\omega$ and 
Recalling the duality 
(\ref{eq_duSTnot}), we have
$$U_{a',x'}^*U_{a'',x''} = L_{\epsilon_{x'',x'}}(\Phi_U({\epsilon_{a',a''}})) \ \mbox{ and } \ 
V_{b',y'}^*V_{b'',y''} = L_{\epsilon_{y'',y'}}(\Phi_V({\epsilon_{b',b''}})).$$
Thus, in view of (\ref{eq_sxx'aa'lices}), we have 
\begin{align*}
  &\hspace{-2ex}\left|\left\langle (T\otimes S) \left(\textstyle\int_{\mathbb P_{XY}}\overline{p\otimes\varphi(p)}\,d\mu(p)\right)
                                  % \widehat{\mathbb{H}}
                                 \zeta,\eta
\right\rangle\right|\\
& = 
\left|\left\langle \sum
\int_{\mathbb P_{XY}}\xi_{xy}^{x'y'}\!(p)\,\overline{\xi_{xy}^{x''y''}\!(p)}\,
L_{\epsilon_{x'',x'}\otimes \epsilon_{y'',y'}}((\phi\otimes\psi)(\overline{\varphi(p)}))\,d\mu(p) \zeta,\eta\right\rangle\right|\\
& = 
\left|\left\langle\sum
\int_{\mathbb P_{XY}}\xi_{xy}^{x'y'}\!(p)\,\overline{\xi_{xy}^{x''y''}\!(p)}\,
L_{\epsilon_{x'',x'}\otimes \epsilon_{y'',y'}}
((\Phi\otimes\Psi)(\overline{\varphi(p)}))\,d\mu(p) \tilde\zeta,\tilde\eta\right\rangle\right|\\
& = 
\left|\int_{\mathbb P_{XY}}\sum_{x\in X} \sum_{y\in Y} 
\left\langle L_{\overline{\xi_{x,y}(p)},\overline{\xi_{x,y}(p)}}(\Phi\otimes\Psi)(\overline{\varphi(p)})\,d\mu(p)\tilde\zeta,\tilde\eta\right\rangle\right|\\
& \leq  
\langle G_{\Phi,\Psi}\tilde\zeta,\tilde\zeta\rangle^{1/2} 
\langle G_{\Phi,\Psi}\tilde\eta,\tilde\eta\rangle^{1/2}
=   
\langle G_{\tilde\phi_1,\tilde\psi_1} \zeta,\zeta\rangle^{1/2}
\langle G_{\tilde\phi_2,\tilde\psi_2}\eta,\eta
\rangle^{1/2}\\
& \leq  
\sup\{\langle G_{\phi',\psi'} \zeta,\zeta\rangle : \ \phi', \psi' \mbox{ unital completely positive}\}
=
\omega_{\rm q}(\mathbb H),
\end{align*}
giving the converse inequality.

To see the expression for the local value, recall from Proposition~\ref{p-Rvalq} that the resource ${\cl R}_{\rm loc}$
over $(X Y, A B)$ corresponding to ${\cl Q}_{\rm loc}$
is generated by $U\otimes V$, where $U = (U_{a,x})_{a,x}$ , $V = (V_{b,y})_{b,y}$ are isometries with column-vector entries so that the products $U_{a,x}^*U_{a',x'}$ and $V_{b,y}^*V_{b',y'}$ are scalars. 

Following the arguments from the previous part of the proof, we obtain that
\begin{eqnarray*}
&\omega_{\rm loc}(\varphi,\pi)\leq \sup_{(f,g)}\|(f\otimes g)(\widehat{\mathbb H})\|=\|\widehat{\mathbb H}\|_{\cl S^X_1(M_A)\otimes_{\varepsilon}\cl S^Y_1(M_B)}
\end{eqnarray*}
where the supremum 
is taken over all linear functionals $f$ and $g$ on $\cl S^X_1(M_A)$ and $\cl S^Y_1(M_B)$, respectively.
The converse statement is obtained by letting $n=1$ in the above arguments and taking into account that any contractive linear functional is automatically completely contractive. 
\end{proof}

Let $R$ be a finite set. Write $\phi_R: M_R\otimes M_R\to\mathbb C$ for the map $a\otimes b\mapsto\tr(ab)$. 

\begin{theorem}\label{alter}
Let $X$, $Y$, $A$ and $B$ be finite sets, and $R$ a finite set.   
Let $\xi\in \bb{C}^{XY}\otimes\ell_2^R$ be a unit vector, and 
$P = \sum_{k = 1}^{\infty} \lambda_k \gamma_k\gamma_k^*$, where 
$(\gamma_k)_{k=1}^n\subseteq \bb{C}^{AB}\otimes\ell_2^R$ is an orthonormal family and $0\leq \lambda_k\leq 1$, 
$k\in \bb{N}$. Consider $\xi\xi^*\otimes P$ as an element in $\cl S^X_1(M_A)\otimes\cl S^Y_1(M_B)\otimes M_R\otimes M_R$ and set $\mathbb G=\overline{({\rm id}\otimes\phi_R)(\xi\xi^*\otimes P)}$. Then
\begin{itemize}
\item[(i)]
$\om_{\rm loc}(\xi,P) = \norm{\mathbb G}_{\cl S^X_1(M_A)\otimes_\epsilon\cl S^Y_1(M_B)}$;

\item[(ii)]
$\omega_{\rm q}(\xi,P) = \norm{\mathbb G}_{\cl S^X_1(M_A)\otimes_{\rm min}\cl S^Y_1(M_B)}$.
\end{itemize}
\end{theorem}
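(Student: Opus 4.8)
The plan is to follow the proof of Theorem~\ref{th_alterq}, with Theorem~\ref{th_R-val} playing the role that Theorem~\ref{th_hypvge} plays there; since $R$ is finite, $P$ has finite rank and everything in sight is finite-dimensional. Throughout I use the correspondence~\eqref{eq_CBide}: a block operator isometry $U=(U_{a,x})_{a,x}$ with entries on $H$ gives the unital completely positive map $\Phi_U(\epsilon_{a,a'})=\sum_{x,x'}\epsilon_{x,x'}\otimes U_{a,x}^*U_{a',x'}$ and a completely contractive $T_U:\cl S_1^X(M_A)\to\cl B(H)$ with $T_U(\epsilon_{x',x}\otimes\epsilon_{a,a'})=U_{a,x}^*U_{a',x'}$, and similarly $S_V$ for $V=(V_{b,y})_{b,y}$.

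\emph{Upper bounds.} By Proposition~\ref{p-Rvalqq} (resp.\ Proposition~\ref{p-Rvalq}) and the computation in the proof of Theorem~\ref{th_R-val}, one has $\omega_{\rm q}(\xi,P)=\sup_{U,V}\bigl\|\sum_k\phi_{U\otimes V}(\rho_k)^*\phi_{U\otimes V}(\rho_k)\bigr\|$, the supremum over all finite-rank block operator isometries $U,V$ (resp.\ over those whose entries are column vectors, so the $U_{a,x}^*U_{a',x'}$, $V_{b,y}^*V_{b',y'}$ are scalars and $\Gamma_{U\otimes V,\sigma}$ is local). A matrix-coefficient computation exactly parallel to~\eqref{eq_sxx'aa'lices} identifies $\sum_k\phi_{U\otimes V}(\rho_k)^*\phi_{U\otimes V}(\rho_k)=(T_U\otimes S_V)(\mathbb G)$; since $T_U,S_V$ are complete contractions this gives $\omega_{\rm q}(\xi,P)\le\|\mathbb G\|_{\rm min}$, and in the local case $T_U,S_V$ are contractive (hence completely contractive) functionals, so $\omega_{\rm loc}(\xi,P)\le\|\mathbb G\|_\varepsilon$.

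\emph{Lower bounds.} Fix, nearly attaining the relevant norm of $\mathbb G$, complete contractions $T:\cl S_1^X(M_A)\to M_n$, $S:\cl S_1^Y(M_B)\to M_n$ (with $n=1$ in the local case) and unit vectors $\zeta,\eta$, and pass via~\eqref{eq_CBide} to complete contractions $\phi:M_A\to M_X\otimes M_n$, $\psi:M_B\to M_Y\otimes M_n$. By \cite[Theorem~4.2]{palazuelos-vidick} there are unital completely positive maps $\tilde\phi_i:M_A\to M_X\otimes M_n$, $\tilde\psi_i:M_B\to M_Y\otimes M_n$ ($i=1,2$) making the $2\times2$ block maps $\Phi:M_A\to M_X\otimes M_2(M_n)$, $\Psi:M_B\to M_Y\otimes M_2(M_n)$ (with $\phi,\psi$ as off-diagonal corners) completely positive, and by \cite[Theorem~3.1]{tt-QNS} one has $\tilde\phi_i=\Phi_{U_i}$, $\tilde\psi_i=\Phi_{V_i}$ for block operator isometries $U_i,V_i$ (with column-vector entries in the local case, so $\Gamma_{U_i\otimes V_i,\sigma}\in\cl Q_{\rm loc}$). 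I would then set
\[
G \;=\; L_{\overline{\xi}\,\overline{\xi}^{*}}\!\bigl((\Phi\otimes\Psi\otimes\id_R)(\overline{P})\bigr)\in M_2(M_n)\otimes M_2(M_n),
\]
where $L_{\overline{\xi}\,\overline{\xi}^{*}}$ pairs the $M_X\otimes M_Y\otimes M_R$ legs of the argument against $\overline{\xi}\,\overline{\xi}^{*}$. Since $\Phi\otimes\Psi\otimes\id_R$ is completely positive and $\overline{P},\overline{\xi}\,\overline{\xi}^{*}\ge0$, the operator $G$ is positive; a bookkeeping check (reshufflings, conjugations, and $\phi_R(a\otimes b)=\Tr(ab)$) identifies the $(1,1)$-to-$(2,2)$ corner of $G$ with $(T\otimes S)(\mathbb G)$ (up to an adjoint, harmless since $G$ is self-adjoint) and its diagonal corners with $G_{\tilde\phi_i,\tilde\psi_i}:=L_{\overline{\xi}\,\overline{\xi}^{*}}((\tilde\phi_i\otimes\tilde\psi_i\otimes\id_R)(\overline{P}))$; moreover, by the same computation as in the proof of Theorem~\ref{th_R-val}, $\langle G_{\tilde\phi_i,\tilde\psi_i}\zeta,\zeta\rangle=\Tr\bigl((\Gamma_{U_i\otimes V_i,\zeta\zeta^*}\otimes\id_R)(\xi\xi^*)P\bigr)$, which is at most $\omega_{\rm q}(\xi,P)$ (and at most $\omega_{\rm loc}(\xi,P)$ in the local case, since $\Gamma_{U_i\otimes V_i,\zeta\zeta^*}\in\cl Q_{\rm loc}$ there). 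A Cauchy--Schwarz estimate for the positive form $\langle G\,\cdot\,,\cdot\rangle$, with $\tilde\zeta,\tilde\eta$ the embeddings of $\zeta,\eta$ into the two diagonal sectors (exactly as in the proof of Theorem~\ref{th_alterq}), then yields $|\langle(T\otimes S)(\mathbb G)\zeta,\eta\rangle|\le\omega_{\rm q}(\xi,P)$; taking suprema gives $\|\mathbb G\|_{\rm min}\le\omega_{\rm q}(\xi,P)$, and the local case is identical with $n=1$ throughout.

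\emph{Main obstacle.} I expect the crux to be the lower bound, specifically making the Cauchy--Schwarz argument go through. In Theorem~\ref{th_alterq} it rests on positivity of $G_{\Phi,\Psi}$, which is transparent because the hypergraph predicate $\varphi(p)$ is a projection and no auxiliary register appears. Here the map $\id\otimes\phi_R$ used to form $\mathbb G$ is \emph{not} positivity-preserving, so $\mathbb G$ is not itself an integral of positive tensors and one cannot run the argument on a $G$ in which $R$ has already been traced out. The fix is to keep $R$ explicit inside $G$ (using $\id_R$ in place of $\phi_R$), so that positivity is manifest from complete positivity of $\Phi\otimes\Psi\otimes\id_R$; the delicate part is then the bookkeeping that re-introduces $\phi_R$ only upon contracting against $\overline{\xi}\,\overline{\xi}^{*}$ and the test vectors, and that certifies the relevant corner of $G$ as $(T\otimes S)(\mathbb G)$ and its diagonal corners as quantities attainable by quantum (resp.\ local) strategies.
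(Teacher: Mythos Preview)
Your proposal is correct and follows essentially the same route as the paper's proof. The one simplification the paper makes over your plan concerns the ``main obstacle'': rather than keeping $R$ explicit via $\id_R$ to make positivity of $G$ structural, the paper defines $G_{\Phi,\Psi}=(T_U\otimes S_V)(\mathbb G)$ directly and checks positivity by observing that $\langle G_{\Phi,\Psi}\delta,\delta\rangle=\Tr\bigl((\Gamma_{U\otimes V,\omega_{\delta,\delta}}\otimes\id_R)(\xi\xi^*)P\bigr)\ge0$ for every unit vector $\delta$ --- this is exactly the identity already established for the upper bound, so no extra bookkeeping is needed.
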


\begin{proof}
We follow similar arguments as in the proof of Theorem \ref{th_alterq}.  Let $U=(U_{a,x})_{a,x}$, $V=(V_{b,y})_{b,y}$ be block operator isometries such that $U\otimes V\in\cl R_{\rm q}$, $U_{a,x}\in \cl B(H)$, $V_{b,y}\in \cl B(K)$,  and $\sigma$ be a normal state on $\cl B(H\otimes K)$.

Observe that if $\xi=\sum_{x,y}e_x\otimes e_y\otimes\xi_{x,y}$, $\gamma_n=\sum_{a,b}e_a\otimes e_b\otimes\gamma_{n,a,b}$, then
$$\xi\xi^*=\sum\epsilon_{x,x'}\otimes\epsilon_{y,y'}\otimes\xi_{x,y}\xi_{x',y'}^*=\sum\epsilon_{x,x'}\otimes\epsilon_{y,y'}\otimes\langle\xi_{x,y},e_r\rangle\langle e_{r'},\xi_{x',y'}\rangle\epsilon_{r,r'},$$
and 
$$\gamma_n\gamma_n^*=\sum\epsilon_{a,a'}\otimes\epsilon_{b,b'}\otimes\langle\gamma_{n,a,b},e_r\rangle\langle e_{r'}\gamma_{n,a',b'}\rangle\epsilon_{r,r'},$$
where the summations are over all indices appearing.
With $T_U$ and $S_V$ defined as in (\ref{eq_TUx'x}), the value can be expressed as follows:
\begin{align*}
  \omega_{\rm q}&(\xi,P)=\sup_{(U,V, \sigma)}\Tr((\Gamma_{U\otimes V,\sigma}\otimes\text{id}_R)(\xi\xi^*)P)\\&=\sup_{(U,V,\sigma)}\sigma\left(\sum \lambda_n(U_{a,x}^*U_{a',x'}\otimes V_{b,y}^*V_{b',y'}\langle\gamma_{n,a',b'},\xi_{x',y'}\rangle \langle\xi_{x,y},\gamma_{n,a,b}\rangle\right)\\
&=\sup_{(U,V,\sigma)}\sigma\Big(\sum \lambda_n((T_U\otimes S_V)(\epsilon_{x',x}\otimes\epsilon_{a,a'}\otimes\epsilon_{y',y}\otimes\epsilon_{b,b'})\\[-2ex]&\hspace{6cm}\times\langle\gamma_{n,a',b'},\xi_{x',y'}\rangle \langle\xi_{x,y},\gamma_{n,a,b}\rangle\Big)\\
&=\sup_{(U,V,\sigma)}\sigma\Big(\sum \lambda_n((T_U\otimes S_V)(\epsilon_{x',x}\otimes\epsilon_{a,a'}\otimes\epsilon_{y',y}\otimes\epsilon_{b,b'})\\[-2ex]&\hspace{4cm}\times\langle\gamma_{n,a',b'},e_{r'}\rangle\langle e_{r'},\xi_{x',y'}\rangle \langle\xi_{x,y},e_r\rangle\langle e_r,\gamma_{n,a,b}\rangle\Big)\\
&=\sup_{(U,V,\sigma)}\sigma\Big(\sum \lambda_n((T_U\otimes S_V)(\epsilon_{x,x'}\otimes\epsilon_{a,a'}\otimes\epsilon_{y,y'}\otimes\epsilon_{b,b'})\\[-2ex]&\hspace{2cm}\times\langle\gamma_{n,a',b'},e_{s'}\rangle\langle e_{r},\xi_{x,y}\rangle \langle\xi_{x',y'},e_{r'}\rangle\langle e_{s},\gamma_{n,a,b}\rangle\Tr(\epsilon_{r,r'}\epsilon_{s,s'})\Big)\\
&=\sup_{(U,V,\sigma)}\sigma\left((T_U\otimes S_V)\overline{(\text{id}\otimes\phi_R)(\xi\xi^*\otimes P)}\right).
\end{align*}
Hence 
$$\omega_{\rm q}(\xi,P)\leq  \norm{\mathbb G}_{\cl S^X_1(M_A)\otimes_{\rm min}\cl S^Y_1(M_B)}.$$
The reverse inequality is proved as in Theorem \ref{th_alterq} using the above expression for the value, where the averaging $\int_{\mathbb P_{XY}}p\otimes\varphi(p)\,d\mu(p)$ is replaced by $(\text{id}\otimes\phi_R)(\xi\xi^*\otimes P)$. In fact, retaining the notation from the proof of Theorem \ref{th_alterq} and denoting by $T_U\in {\rm CB}(\cl S_1^X(M_A), M_2(M_n))$  (resp. $S_V\in {\rm CB}(\cl S_1^Y, M_2(M_n))$) the maps corresponding to the unital completely positive maps $\Phi$  and related block operator isometry $U$ (resp. $\Psi$  and block operator isometry $V$)  we have,  in particular, that $$\tightmath\left\langle(T\otimes S)\left(\overline{(\text{id}\otimes\phi_R)(\xi\xi^*\otimes P)}\right)\zeta,\eta\right\rangle=\left\langle G_{\Phi,\Psi}\tilde\zeta,\tilde \eta\right\rangle$$ 
where $G_{\Phi,\Psi}=(T_U\otimes S_V)\left(\overline{(\text{id}\otimes\phi_R)(\xi\xi^*\otimes P)}\right)$.
Noting that for $\sigma=\omega_{\delta,\delta}$, the vector state corresponding to a unit vector $\delta\in \mathbb C^{4n^2}$, we have $$\left\langle G_{\Phi,\Psi}\delta,\delta\right\rangle=
\Tr((\Gamma_{U\otimes V,\omega_{\delta,\delta}}\otimes{\rm id}_R)(\xi\xi^*)P)\geq 0,$$ 
we obtain
\begin{multline*}
\left\langle(T\otimes S)\left(\overline{(\text{id}\otimes\phi_R)(\xi\xi^*\otimes P)}\right)\zeta,\eta\right\rangle\\\leq  \langle G_{\Phi,\Psi}\tilde\zeta,\tilde \zeta\rangle^{1/2} \langle \strut G_{\Phi,\Psi}\tilde\eta,\tilde \eta\rangle^{1/2}=\langle G_{\tilde\phi_1,\tilde\psi_1}\zeta,\zeta\rangle ^{1/2} \langle G_{\tilde\phi_2,\tilde\psi_2}\eta, \eta\rangle^{1/2}
\end{multline*}
where we use the Cauchy-Schwarz inequality for the last inequality. This allows us
to complete the proof as in Theorem \ref{th_alterq}. 

(i) is proved in a similar way. 
\end{proof}

%%%%%%%%%%%%%%%%%%%%%%%%%%%%%%%%%%%%%%%%%%%%%%%%%%%%%%%%%%%%%%%%%%%%

\subsection{Classical-to-quantum game values}\label{ss_class-to-quantum}

We first recall the classical-to-quantum no-signalling correlation types introduced in \cite{tt-QNS}. 
A \emph{classical-to-quantum no-signalling (CQNS)} correlation is a channel $\cl E : \cl D_{XY}\to M_{AB}$ such that (\ref{eq_qns1}) and (\ref{eq_qns2}) hold true for $\rho_X\in \cl D_X$ and $\rho_Y\in \cl D_Y$. 
A \emph{semi-classical} stochastic operator matrix acting on a Hilbert space $H$ is a positive 
block operator matrix $E = (E_{x,a,a'})_{x,a,a'}\in \cl D_X\otimes M_A(\cl B(H))$ with $\Tr_A E = I_X\otimes I_H$. 
A CQNS correlation $\Gamma$ is \emph{quantum commuting} if its
Choi matrix is given as in (\ref{eq_EFp}) but employing semi-classical stochastic operator matrices. This is equivalent to the requirement that its
canonical extension to a QNS correlation (denoted in the same way) 
$\Gamma : M_{XY}\to M_{AB}$ is quantum commuting; 
we denote by $\CQ_{\rm qc}$ the class of all quantum commuting 
CQNS correlations. 
The class $\CQ_{\rm q}$ of classical-to-quantum no-signalling correlations of 
\emph{quantum} type is defined by making the analogous replacements 
with semi-classical stochastic operator matrices in the definition of 
QNS correlations of quantum type, while the class 
%$\CQ_{\rm qa}$ of 
%approximately quantum CQNS correlations is the closure of $\CQ_{\rm q}$. Finally, the class 
$\CQ_{\rm loc}$ of local CQNS correlations consists of 
convex combinations of the CQNS correlations of the form 
$\cl E\otimes \cl F$, where $\cl E : \cl D_X\to M_A$ and $\cl F\to M_B$ are 
trace preserving (completely) positive maps. 
We note the (strict \cite{tt-QNS}) inclusions
$$\CQ_{\rm loc}\subseteq \CQ_{\rm q} \subseteq \CQ_{\rm qc}\subseteq \CQ_{\rm ns}.$$
Let $\cl B_{X,A}$
be the C*-algebraic free product $M_{A}\ast_1\cdots \ast_1 M_{A}$ (with $|X|$ copies of $M_A$), amalgamated over the units.
We denote by $e_{x,a,a'}$, $a,a'\in A$, the matrix units of the $x$-th copy of $M_A$ in $\cl B_{X,A}$, 
and let 
$$\cl R_{X,A} = {\rm span}\{e_{x,a,a'} : x\in X, a,a'\in A\},$$
viewed as an operator subsystem of $\cl B_{X,A}$.

Further, writing $\cl S_1^A:=\cl S_1^{A,A}$, let 
$\ell^X_\infty(\cl S_1^A) = \oplus_{x\in X} \cl S_1^A$,
equipped with the operator space 
structure arising from the identification
$\ell^X_\infty(\cl S_1^A) \cong \ell^X_\infty\otimes_{\rm min} \cl S^A_1$. 
Note that, if $R_{x,a,a'}\in M_n$, $x\in X$, $a,a' \in A$, then
$$
\left\|\sum_{x\in X}\sum_{a,a'\in A}\!\! R_{x,a,a'}\otimes e_x\otimes \epsilon_{a,a'}\right\|_{M_n(\ell^\infty_X(\cl S^A_1))}
\!\!\! = 
\sup_{x\in X}\left\|\sum_{a,a'\in A}\!\!R_{x,a,a'}\otimes\epsilon_{a,a'}\right\|_{M_n(\cl S^A_1)}\!\!.$$ 
%Note that $\ell^X_\infty(\cl S_1^A)\simeq \ell^X_\infty\otimes_{\rm min} \cl S^A_1$. 
We write
$\ell^X_1(M_A) = \ell^X_\infty(\cl S_1^A)^*$ for the corresponding dual operator space.
Note that 
\begin{equation}\label{eq_semicliden}
{\rm CB}(\ell^X_1(M_A), \cl B(H))\simeq \ell^X_\infty(\cl S_1^A)\otimes_{\rm min}\cl B(H)
\simeq \ell^X_\infty \otimes_{\min} \cl S_1^A\otimes_{\rm min}\cl B(H),
\end{equation}
where an element $\hat{T} = \sum_{x\in X} \sum_{a,a'\in A} e_x\otimes \epsilon_{a,a'}\otimes T_x^{a,a'}$ 
of the last of the three spaces in (\ref{eq_semicliden}) corresponds to the 
map $T : \ell^X_1(M_A) \to \cl B(H)$, given by $T(e_x\otimes \epsilon_{a,a'}) = T_x^{a',a}$. 
Letting further
$T_x : M_A\to \cl B(H)$ be the linear map, given by $T_x(\epsilon_{a,a'})=T_x^{a',a}$, $a,a'\in A$,
we have
%\begin{align*}
$$\|T\|_{\rm cb}
= \left\|\sum_{x\in X}\sum_{a,a'\in A}e_x\otimes \epsilon_{a,a'}\otimes T_x^{a,a'}\right\|_{\ell^X_\infty(\cl S_1^A)\otimes_{\rm min}\cl B(H)}
%& = \sup_{x\in X} \left\|\sum_{a,a'\in A} \epsilon_{a,a'}\otimes T_x^{a,a'}\right\|_{\cl S_1^A\otimes_{\rm min}\cl B(H)} = 
= \sup_{x\in X} \|T_x\|_{\rm cb}.$$
%\end{align*}

\begin{remark}\label{r_ciembl1}
\rm 
The natural embedding $\iota : \ell^X_1(M_A)\to \cl S^X_1(M_A)$ is completely isometric. 
\end{remark}

\begin{proof}
Fix $n\in \bb{N}$ and $u=(u_{i,j})_{i,j}\in M_n(\ell^X_1(M_A))$, and let 
$T_u$ be the corresponding map in ${\rm CB}(\ell^X_\infty(\cl S^A_1), M_n)$; then
$\|u\| = \|T_u\|_{\rm cb}$. Let $\tilde u = \iota^{(n)}(u)$. 
and $\Delta : M_X\to \ell_\infty^X$ be the conditional expectation. Then
$T_{\tilde u} = T_u\circ (\Delta\otimes{\rm id})\in {\rm CB}(M_X(\cl S^A_1), M_n)$.
Thus, $\|T_{\tilde u}\|_{\rm cb}\leq\|T_u\|_{\rm cb}$; on the other hand, as
$\ell^X_\infty(\cl S^A_1)\subseteq M_X(\cl S^A_1)$ completely isometrically, 
$\|T_u\|_{\rm cb}\leq\|T_{\tilde u}\|_{\rm cb}$, as $T_u$ is the restriction of $T_{\tilde u}$.
\end{proof}
 
For every $S\in M_A$, let $f_S : M_A\to \bb{C}$ be the functional, given by 
$f_S(T) = {\rm Tr}(S^{\rm t}T)$. 
We equip $\cl S^A_1$ with the family of matricial cones, dual to the 
canonical family of cones of $M_A$, with respect to the duality $S\to f_S$ (see e.g. \cite[Theorem 6.2]{ptt}).
Further, equip $\ell_\infty^X(\cl S^A_1)$ with the canonical direct sum operator system structure, and 
let $\tilde{\ell}^X_1(M_A)$ be the operator system dual of $\ell_\infty^X(\cl S^A_1)$.
We note that, by \cite[Lemma 5.7]{kptt},
$$M_n(\tilde{\ell}^X_1(M_A))^+ \equiv
(\ell_\infty^X(\cl S^A_1)\otimes_{\min} M_n)^+ \equiv
(\ell_\infty^X\otimes_{\min} (\cl S^A_1 \otimes_{\min} M_n))^+.$$ 
Thus, 
$F = [f_{i,j}]_{i,j} \in M_n(\tilde{\ell}^X_1(M_A))^+$ if and only if the associated map
$F_x : M_A\to M_n$ is completely positive for every $x\in X$.
%$f_{i,j} = \sum_{x\in X} e_x\otimes f_{i,j}^x$, is positive if and only if, 
%$[f_{i,j}^x]_{i,j}\in M_n(\cl S^A_1)^+$ for every $x\in X$. 
%Further, a linear map 
%$T : \ell^X_1(M_A)\to \cl B(H)$ is completely positive if and only if the maps
%$T_x : M_A\to \cl B(H)$, $T_x(e_{a,a'})=T(e_x\otimes \epsilon_{a,a'})$, are completely positive. 
Further, let $\tilde{\ell}^X_1(M_A)\otimes_{\rm c} \tilde{\ell}^Y_1(M_B)$ be the commuting tensor product
\cite[Section 6]{kptt}; thus, $u = [u_{i,j}]_{i,j}$ belongs to 
$M_n(\tilde{\ell}^X_1(M_A)\otimes_{\rm c} \tilde{\ell}^Y_1(M_B))^+$ 
if and only if, for any pair of unital completely positive maps 
$\phi : \tilde{\ell}^X_1(M_A)\to \cl B(H)$ and 
$\psi : \tilde{\ell}^Y_1(M_B)\to \cl B(H)$ with commuting ranges, we have 
$(\phi\cdot\psi)^{(n)}(u)\geq 0$.
By \cite[Theorem 6.4]{kptt},
$\tilde{\ell}^X_1(M_A)\otimes_{\rm c} \tilde{\ell}^Y_1(M_B)\subseteq C^*_u(\tilde{\ell}^X_1(M_A))\otimes_{\rm max}C^*_u(\tilde{\ell}^Y_1(M_B))$
completely order isomorphically, where $C_u^*(\cl S)$ is the universal cover of an operator system $\cl S$
(see e.g. \cite[p. 289]{kptt}). We require the following lemma.
%In particular, we have that there is an embedding $\iota: S\to C^*_u(\cl S)$
%such that $C^*(\iota(\cl S))=C_u^*(\cl S)$ and for any unital completely positive map 
%$\Phi: \cl S\to \cl B(H)$ there exists a unique $*$-homomorphism $\pi: C^*_u(\cl S)\to \cl B(H)$ such that $\Phi(s)=\pi\circ\iota(s)$, $s\in \cl S$. 
%\marginpar{\tiny L. A paragraph is  added}

\begin{lemma}\label{l_jointdi}
Let $\cl A$ and $\cl B$ be unital C*-algebras, $H$ be a Hilbert space, and 
$\phi : \cl A\to \cl B(H)$ and $\psi : \cl B\to \cl B(H)$ be unital completely positive maps 
with commuting ranges. Then there exists a Hilbert space $K$, an isometry $V : H\to K$, and 
unital *-homomorphisms $\pi : \cl A\to \cl B(K)$ and $\rho : \cl B\to \cl B(K)$
with commuting ranges, such that 
\begin{equation}\label{eq_dila}
(\phi\cdot\psi)(w) = V^*((\pi\cdot\rho)(w))V, \ \ \ w\in \cl A\otimes\cl B.
\end{equation}
\end{lemma}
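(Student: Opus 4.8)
The statement is the classical fact that a pair of commuting unital completely positive maps into $\cl B(H)$ admits a simultaneous Stinespring dilation to a pair of commuting $*$-homomorphisms. The plan is to build the dilation from the maximal C*-tensor product. By \cite[Theorem 6.4]{kptt}, having commuting ranges means that $\phi\cdot\psi$ is a well-defined unital completely positive map on $\cl A\otimes_{\max}\cl B$ (more precisely, on the algebraic tensor product with the C*-norm inherited from $\cl A\otimes_{\max}\cl B$); this uses that a pair of $*$-representations of $\cl A$ and $\cl B$ with commuting ranges yields a $*$-representation of $\cl A\otimes_{\max}\cl B$, together with the fact that unital completely positive maps $\phi$, $\psi$ with commuting ranges can themselves be dilated compatibly -- but rather than argue that directly, I would invoke the standard result that $\phi\cdot\psi\colon\cl A\odot\cl B\to\cl B(H)$ is completely positive with respect to the maximal tensor norm. (This is where the commuting-ranges hypothesis is essential: it guarantees $\phi\cdot\psi$ is even well-defined as a linear map, and the Boca/Choi-type argument, cf.\ \cite[Theorem 3.2.6]{bo}, shows it extends to a u.c.p.\ map on $\cl A\otimes_{\max}\cl B$.)

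\textbf{Second step.} Apply Stinespring's dilation theorem to the u.c.p.\ map $\Theta\colon\cl A\otimes_{\max}\cl B\to\cl B(H)$ extending $\phi\cdot\psi$: there is a Hilbert space $K$, an isometry $V\colon H\to K$, and a unital $*$-homomorphism $\Pi\colon\cl A\otimes_{\max}\cl B\to\cl B(K)$ with $\Theta(w)=V^*\Pi(w)V$ for all $w\in\cl A\otimes_{\max}\cl B$. Now set $\pi(a)=\Pi(a\otimes 1)$ and $\rho(b)=\Pi(1\otimes b)$. These are unital $*$-homomorphisms $\cl A\to\cl B(K)$ and $\cl B\to\cl B(K)$, and since $(a\otimes 1)(1\otimes b)=(1\otimes b)(a\otimes 1)$ in $\cl A\otimes_{\max}\cl B$ and $\Pi$ is multiplicative, their ranges commute. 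Moreover $(\pi\cdot\rho)(a\otimes b)=\pi(a)\rho(b)=\Pi(a\otimes 1)\Pi(1\otimes b)=\Pi(a\otimes b)$, and extending linearly, $(\pi\cdot\rho)(w)=\Pi(w)$ for $w\in\cl A\odot\cl B$. Hence $(\phi\cdot\psi)(w)=\Theta(w)=V^*\Pi(w)V=V^*((\pi\cdot\rho)(w))V$, which is exactly \eqref{eq_dila}.

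\textbf{Main obstacle.} The only genuinely delicate point is the first step: justifying that $\phi\cdot\psi$ extends to a completely positive (indeed u.c.p.) map on $\cl A\otimes_{\max}\cl B$. The cleanest route is to first dilate $\phi$ and $\psi$ \emph{separately} to minimal Stinespring representations and then show that the commutation of ranges of $\phi$ and $\psi$ forces a compatible structure; however this requires care because minimal dilations of $\phi$ and $\psi$ need not live on the same space. The standard fix, which I would follow, is the argument in \cite[Theorem 3.2.6]{bo} (or its exercise-form predecessor): one builds the GNS-type space directly from the functional $w\mapsto\langle\Theta_0(w)\xi,\xi\rangle$ or, more structurally, uses that a u.c.p.\ map with commuting ranges of a tensor product is automatically a u.c.p.\ map relative to $\otimes_{\max}$. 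Since the excerpt already cites \cite[Theorem 3.2.6]{bo} for precisely this purpose in the proof of Proposition \ref{p:representations}, I would cite it here as well, keeping the present proof short. Everything after that is routine: Stinespring plus the elementary verification that the two corner homomorphisms commute and reconstitute $\phi\cdot\psi$.
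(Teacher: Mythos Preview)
Your approach is correct and takes a genuinely different route from the paper.

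The paper constructs the dilation by an explicit two-stage bootstrap: first apply Stinespring to $\phi$ to obtain a minimal dilation $\pi_0 : \cl A \to \cl B(K_0)$ with isometry $U : H \to K_0$; since $\psi(\cl B) \subseteq \phi(\cl A)'$, Arveson's Commutant Lifting Theorem \cite[Theorem 12.7]{Pa} yields a $*$-homomorphism $\theta : \psi(\cl B) \to \pi_0(\cl A)'$, and hence a u.c.p.\ map $\psi_0 = \theta\circ\psi : \cl B \to \pi_0(\cl A)' \subseteq \cl B(K_0)$. Then apply Stinespring to $\psi_0$ to obtain a minimal dilation $\rho : \cl B \to \cl B(K)$ with isometry $W : K_0 \to K$, and invoke Commutant Lifting once more to lift $\pi_0$ to a $*$-homomorphism $\pi : \cl A \to \rho(\cl B)'$. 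Setting $V = WU$ gives the result.

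Your route instead packages the entire construction into the single claim that $\phi\cdot\psi$ extends to a u.c.p.\ map on $\cl A \otimes_{\max} \cl B$, after which one Stinespring dilation of $\cl A \otimes_{\max} \cl B$ finishes the job. This is conceptually cleaner, but two caveats are worth flagging. First, neither of your citations quite establishes the key claim: \cite[Theorem 3.2.6]{bo} is the statement for commuting $*$-representations, not u.c.p.\ maps, and \cite[Theorem 6.4]{kptt} embeds $\cl A \otimes_{\rm c} \cl B$ into $C^*_u(\cl A) \otimes_{\max} C^*_u(\cl B)$, which is not $\cl A \otimes_{\max} \cl B$ since for a C*-algebra $\cl A$ viewed as an operator system one generally has $C^*_u(\cl A) \neq \cl A$. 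The fact you need --- that for unital C*-algebras the operator-system commuting tensor product agrees with the C*-algebraic maximal one --- is indeed standard (see e.g.\ later results in \cite[\S6]{kptt}), so this is a citation issue rather than a gap. Second, the most common direct proofs of that fact proceed via precisely the Stinespring/commutant-lifting machinery the paper spells out, so your route is shorter on the page but defers the same work to a black box; the paper's route is fully self-contained from Stinespring and Arveson.
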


\begin{proof}
By Stinespring's Dilation Theorem, there exists a Hilbert space $K_0$, an isometry 
$U : H\to K_0$ and a unital *-homomorphism 
$\pi_{0} : \cl A\to \cl B(K_0)$, such that 
$$\phi(u) = U^*\pi_{0}(u) U, \ \ \ u\in \cl A;$$
choosing a minimal Stinespring dilation, we assume that 
$\overline{{\rm span}(\pi_0(\cl A)UH)} = K_0$. 
We have that the range of $\psi$ commutes with the operators of the form 
$U^*\pi_{0}(u)U$, $u\in \cl A$. 
By Arveson's Commutant Lifting Theorem \cite[Theorem 12.7]{Pa}, there 
exists a unital *-homomorphism 
$\theta : \psi(\cl B)\to \pi_{0}(\cl A)'$ such that 
$$U^*\pi_{0}(u)U \psi(v) = U^*\pi_{0}(u)\theta(\psi(v))U, \ \ \ u\in \cl A, v\in \cl B.$$

Write $\psi_{0} = \theta\circ \psi$; then $\psi_{0} : \cl B\to \cl B(K_0)$ is a unital 
completely positive map whose range commutes with the range of $\pi_0$.
By Stinespring's Dilation Theorem, there exists a Hilbert space $K$,
an isometry $W : K_0\to K$, and a
unital *-homomorphism $\rho : \cl B\to \cl B(K)$, such that 
$$\psi_0(v) = W^*\rho(w) W, \ \ \ v\in \cl B;$$
again, assume, without loss of generality, that 
$\overline{{\rm span}(\rho(\cl B)WK_0)} = K$. 
Applying Arveson's Commutant Lifting Theorem again, 
we obtain a unital *-homomorphism 
$\pi_1 : \pi_0(\cl A)'\to \rho(\cl B)'$, such that 
$$\pi_0(u) W^*\rho(v)W = W^*\pi_1(\pi_0(u))\rho(v)W, \ \ \ u\in \cl A, v\in \cl B.$$
Letting $\pi = \pi_1\circ \pi_0$ and $V = WU$, we see that $V$ is an isometry, and that 
(\ref{eq_dila}) holds true.
\end{proof}

A \emph{classical-to-quantum hypergraph game} over $(X,Y,A,B)$ 
is a probabilistic classical-to-quantum hypergraph $(\varphi, \pi)$, that is, a hypergraph quantum game $(\varphi,\pi)$ over $(X,Y,A,B)$, where 
$\pi$ is supported on the classical state space $\bb{P}_{XY}^{\rm cl}$; thus, we will assume that
$\nph$ is defined on $\bb{P}_{XY}^{\rm cl}$ (and takes values in $\cl P_{AB}$).
We will identify the probability distribution $\pi$ on $\bb{P}_{XY}^{\rm cl}$ 
with a probability distribution (denoted in the same way) $\pi : X Y \to [0,1]$.
For ${\rm t}\in \{{\rm loc},{\rm q}, {\rm qc},{\rm ns}\}$, the ${\rm t}$-value
$\omega_{\rm t}(\varphi,\pi)$ of $(\varphi, \pi)$ is given by
$$\omega_{\rm t}(\varphi,\pi) = \sup_{\Gamma\in \CQ_{\rm t}}
\sum_{x\in X}\sum_{y\in Y} \pi(x,y) {\rm Tr}(\Gamma(\epsilon_{x,x}\otimes \epsilon_{y,y})
\nph(\epsilon_{x,x}\otimes \epsilon_{y,y})).$$
%We note that, if 
%$\mathbb H=(\mathbb P_{XY}, \varphi, \mu)$ is the 
%quantum probabilistic hypergraph
%where $\mu$ is the discrete measure supported in $\{\epsilon_{x,x}\otimes \epsilon_{y,y}\}_{x,y}$ such that
%$\mu(\epsilon_{x,x}\otimes \epsilon_{y,y})=\pi(x,y)$, then 
%$\omega_{\rm t}(\varphi,\pi)=\omega_{\cl CQ_{\rm t}}(\mathbb H)$. 

As in Remark \ref{r_link}, if
%$\omega_{\rm t}(\varphi,\pi)$ can be obtained as a value 
%of a finite rank quantum game: 
$$\xi_\pi = \sum_{x\in X}\sum_{y\in Y} \sqrt{\pi(x,y)} e_x\otimes e_y\otimes e_x\otimes e_y$$ 
and 
$P = \sum_{x\in X}\sum_{y\in Y} 
\varphi(\epsilon_{x,x}\otimes\epsilon_{y,y})\otimes\epsilon_{x,x}\otimes\epsilon_{y,y},$ 
then 
$$\omega_{\rm t}(\varphi,\pi) 
= \sup_{\Gamma\in \CQ_{\rm t}}\Tr((\Gamma\otimes{\rm id})(\xi_\pi^*\xi_\pi)P).
$$
%where  $\sup$ is taken over  all strategies $\Gamma\in \CQ_{\rm q}.$ \marginpar{\tiny $\omega_q(\varphi,\pi)=\omega_{\cl R}(\mathbb H)$ for $\mathbb H=(\mathbb P_{XY},\varphi,\mu)$ where $\mu$ is supported on $\{\epsilon_{x,x}\otimes \epsilon_{y,y: x\in X,y\in Y}\}$ and $\cl R$ is given by $U\otimes V$.}

We fix a classical-to-quantum hypergraph game $G = (\varphi, \pi)$. We let 
\begin{equation}\label{eq_bbGde}
\bb{G} 
= (\pi(x,y)\overline{\varphi(\epsilon_{x,x}\otimes\epsilon_{y,y})})_{x,y}\in 
\ell_1^X(M_A)\otimes\ell_1^Y(M_B),
\end{equation}
which will be also considered 
%$\overline{\bb{G}}$ 
as an element of $\tilde{\ell}^X_1(M_A)\otimes \tilde{\ell}^Y_1(M_B)$.
Further, letting $\iota_{x,y} : M_A\otimes M_B\to \cl R_{X,A}\otimes \cl R_{Y,B}$ be 
the canonical inclusion maps, we set 
%\begin{equation}\label{eq_bbGde2}
$$\tilde{\bb{G}} = \sum_{x\in X}\sum_{y\in Y} \pi(x,y)\iota_{x,y}
\left(\overline{\nph(\epsilon_{x,x}\otimes \epsilon_{y,y})}\right);$$
%\end{equation}
thus, $\tilde{\bb{G}}\in \cl R_{X,A}\otimes\cl R_{Y,B}$.

\begin{theorem}\label{qv_classtoquant}
Let $G = (\varphi, \pi)$ be a classical-to-quantum hypergraph game. 
 Then 
\begin{itemize}
\item[(i)]
$\omega_{\rm loc}(\varphi,\pi)=\|\bb{G}\|_{\ell_1^X(M_A)\otimes_{\varepsilon}\ell_1^Y(M_B)}$; 

\item[(ii)]
$\omega_{\rm q}(\varphi,\pi) = 
\|\bb{G}\|_{\ell_1^X(M_A)\otimes_{\rm min}\ell_1^Y(M_B)}
= \|\tilde{\bb G}\|_{\cl R_{X,A}\otimes_{\min} \cl R_{Y,B}}$;

\item[(iii)] 
$\omega_{\rm qc}(\varphi,\pi) =  \|\bb{G}\|_{C^*_u(\tilde{\ell}^X_1(M_A))\otimes_{\rm max}
C^*_u(\tilde{\ell}^Y_1(M_B))}
=
\|\tilde{\bb{G}}\|_{\cl R_{X,A}\otimes_{\rm c} \cl R_{Y,B}}$\\ 
$= \|\tilde{\bb{G}}\|_{\cl B_{X,A}\otimes_{\max} \cl B_{Y,B}}$;

\item[(iv)]
$\omega_{\rm ns}(\varphi,\pi)=\|\tilde{\bb{G}}\|_{\cl R_{X,A}\otimes_{\max} \cl R_{Y,B}}$.
\end{itemize}
\end{theorem}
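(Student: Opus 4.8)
\textbf{Proof strategy for Theorem \ref{qv_classtoquant}.}
The plan is to reduce everything to the general machinery of Section \ref{s_hyper} applied to a carefully chosen resource, together with the operator-system identities recalled just before the statement. The first step is to identify the relevant resource. A classical-to-quantum hypergraph game is a hypergraph quantum game $(\varphi,\pi)$ whose underlying measure is supported on $\bb P_{XY}^{\rm cl}$; as in Remark \ref{r_link} it corresponds to the pair $(\xi_\pi,P)$ with $P=\sum_{x,y}\varphi(\epsilon_{x,x}\otimes\epsilon_{y,y})\otimes\epsilon_{x,x}\otimes\epsilon_{y,y}$. The CQNS correlation classes $\CQ_{\rm t}$ arise from \emph{semi-classical} stochastic operator matrices, i.e.\ block operator isometries over $(X,A)$ of the special diagonal-in-$X$ form; so I would define $\cl R^{\rm cq}_{\rm t}$ to be the restriction of $\cl R_{\rm t}$ from Section \ref{s_qv} obtained by requiring the $X$- and $Y$-legs to be semi-classical, and check (paralleling Propositions \ref{p-Rvalq}, \ref{p-Rvalqq}, \ref{p_nsre=qns} and Theorem \ref{th_qcrep}, now using the semi-classical factorisation from \cite[Theorem 3.1 / \S5]{tt-QNS}) that $\mathsf{QC}(\cl R^{\rm cq}_{\rm t})=\CQ_{\rm t}$ and that $\cl R^{\rm cq}_{\rm t}$ is a resource over $(XY,AB)$. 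The key book-keeping observation is that the ``$\cl S_1$ revisited'' picture of Subsection \ref{s_S1} specialises: the universal TRO of a semi-classical isometry has right C*-algebra $\cl B_{X,A}$, its canonical operator subsystem is $\cl R_{X,A}$, and the analogue of Proposition \ref{p_OXAS1} identifies $\ell_1^X(M_A)$ (equivalently $\tilde\ell_1^X(M_A)$) with the operator subspace spanned by the semi-classical generators.

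With this dictionary in place, parts (i) and (ii) follow the template of Theorem \ref{th_alterq}. Running the computation in the proof of Theorem \ref{th_hypvge} (or Theorem \ref{th_R-val}) with the resource $\cl R^{\rm cq}_{\rm q}$, and using the identifications \eqref{eq_semicliden} of ${\rm CB}(\ell^X_1(M_A),\cl B(H))$ with $\ell^X_\infty(\cl S_1^A)\otimes_{\rm min}\cl B(H)$ to replace block operator isometries by unital completely positive maps $M_A\to M_X\otimes M_n$ (this is where \cite[Theorem 3.1]{tt-QNS} and \cite[Theorem 4.2]{palazuelos-vidick} enter, exactly as in Theorem \ref{th_alterq}), one obtains $\omega_{\rm q}(\varphi,\pi)=\|\bb G\|_{\ell_1^X(M_A)\otimes_{\min}\ell_1^Y(M_B)}$. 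For the local value the same argument with $n=1$ and the observation that a contractive functional on $\ell_1^X(M_A)$ is automatically completely contractive gives $\omega_{\rm loc}(\varphi,\pi)=\|\bb G\|_{\ell_1^X(M_A)\otimes_{\varepsilon}\ell_1^Y(M_B)}$. The remaining equality in (ii), namely $\|\bb G\|_{\min}=\|\tilde{\bb G}\|_{\cl R_{X,A}\otimes_{\min}\cl R_{Y,B}}$, is the injectivity of the minimal tensor product applied to the completely isometric inclusions $\ell_1^X(M_A)\hookrightarrow \cl R_{X,A}\hookrightarrow\cl B_{X,A}$ (and likewise for $Y,B$), together with the fact that unital completely positive maps out of $\cl R_{X,A}$ are governed by those out of the matrix subsystem; this is the same mechanism as Remark \ref{r_opspsqa}.

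For the quantum commuting value (iii) I would combine Theorem \ref{th_qcrep}'s semi-classical analogue with the operator-system commuting tensor product machinery of \cite{kptt}. On the one hand, $\omega_{\rm qc}(\varphi,\pi)$ equals the supremum of $\langle(\phi\cdot\psi)(\bb G)\,\cdot\,\rangle$-type quantities over unital completely positive $\phi:\tilde\ell_1^X(M_A)\to\cl B(H)$, $\psi:\tilde\ell_1^Y(M_B)\to\cl B(H)$ with commuting ranges, which is precisely $\|\bb G\|$ in $\tilde\ell_1^X(M_A)\otimes_{\rm c}\tilde\ell_1^Y(M_B)$, and by \cite[Theorem 6.4]{kptt} this is the norm in $C^*_u(\tilde\ell_1^X(M_A))\otimes_{\max}C^*_u(\tilde\ell_1^Y(M_B))$. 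On the other hand Lemma \ref{l_jointdi} lets me dilate any such commuting pair of u.c.p.\ maps to a commuting pair of unital $*$-homomorphisms, hence to $*$-representations of $\cl B_{X,A}$ and $\cl B_{Y,B}$ with commuting ranges; since $\tilde{\bb G}\in\cl R_{X,A}\otimes\cl R_{Y,B}$, compressing back by the isometry $V$ of Lemma \ref{l_jointdi} shows the commuting value computed through $\cl R_{X,A}\otimes_{\rm c}\cl R_{Y,B}$ agrees with that through $\cl B_{X,A}\otimes_{\max}\cl B_{Y,B}$, and the former is also what Theorem \ref{th_qcrep}'s analogue produces from semi-classical quantum commuting correlations. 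Finally (iv): by the semi-classical analogue of Proposition \ref{p_nsre=qns}, $\omega_{\rm ns}(\varphi,\pi)$ is a supremum over no-signalling families, which is the maximal operator-system tensor norm; since $\cl R_{X,A}$ is a finite-dimensional operator system, the maximal tensor norm on $\cl R_{X,A}\otimes\cl R_{Y,B}$ is exactly what the no-signalling resource yields, giving $\omega_{\rm ns}(\varphi,\pi)=\|\tilde{\bb G}\|_{\cl R_{X,A}\otimes_{\max}\cl R_{Y,B}}$.

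The main obstacle I anticipate is not any single estimate but the bookkeeping needed to verify cleanly that all four semi-classical resources $\cl R^{\rm cq}_{\rm t}$ satisfy the hypotheses of Section \ref{s_hyper} (closure under direct sums, separation) and that $\mathsf{QC}(\cl R^{\rm cq}_{\rm t})=\CQ_{\rm t}$; in particular the quantum-commuting case requires marrying the TRO-level semi-commuting factorisation (Theorem \ref{t:maxnorm}, Corollary \ref{c_maxtose}) with the operator-system-level dilation of Lemma \ref{l_jointdi} and checking they produce the same norm on $\tilde{\bb G}$. Once that identification is set up, parts (i), (ii) and (iv) are essentially transcriptions of Theorem \ref{th_alterq} and Remarks \ref{r_opspsqa}, \ref{r_opspsqc} into the semi-classical setting, and part (iii) is the genuinely new ingredient, resting on \cite[Theorem 6.4]{kptt} and Lemma \ref{l_jointdi}.
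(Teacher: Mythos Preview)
Your overall strategy is sound and, for parts (iii) and (iv), matches the paper almost exactly: the paper also writes $\omega_{\rm qc}(\varphi,\pi)$ directly as a supremum over commuting semi-classical stochastic operator matrices $E,F$, identifies this with $\sup_{E,F}\|(\phi_E\cdot\phi_F)(\tilde{\bb G})\|$, and then invokes Lemma~\ref{l_jointdi} to dilate commuting u.c.p.\ maps to commuting $*$-representations of $\cl B_{X,A}$ and $\cl B_{Y,B}$, obtaining $\|\tilde{\bb G}\|_{\cl B_{X,A}\otimes_{\max}\cl B_{Y,B}}$; the equality with $\|\tilde{\bb G}\|_{\cl R_{X,A}\otimes_{\rm c}\cl R_{Y,B}}$ is then quoted from \cite[Remark 7.4]{tt-QNS} and \cite[Lemma 2.8]{pt_QJM}, and the $C^*_u$-identity from \cite[Theorem 6.4]{kptt}. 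Part (iv) is handled by replacing vector states by arbitrary states on $\cl R_{X,A}\otimes_{\max}\cl R_{Y,B}$, just as you sketch.

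Where your plan diverges from the paper is in parts (i) and (ii). You propose to rebuild the entire resource machinery in a semi-classical version (defining $\cl R^{\rm cq}_{\rm t}$, checking separation, rerunning Theorem~\ref{th_hypvge}). This would work, but the paper is much more economical: it simply observes (Remark~\ref{r_ciembl1}) that $\ell_1^X(M_A)\hookrightarrow\cl S_1^X(M_A)$ is a complete isometry, and then uses injectivity of the minimal and Banach injective tensor products to read (i) and the first equality in (ii) straight off Theorem~\ref{th_alterq}. No new resource needs to be built.

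One point in your sketch deserves caution. You assert a completely isometric inclusion $\ell_1^X(M_A)\hookrightarrow\cl R_{X,A}$ (and an ``analogue of Proposition~\ref{p_OXAS1}'' identifying $\ell_1^X(M_A)$ with the span of the semi-classical generators) and use it, via injectivity of $\otimes_{\min}$, to get the second equality in (ii). The paper does \emph{not} rely on such an identification; instead it says the second equality in (ii) ``can be deduced similarly to the second equality in (iii)'', i.e.\ by direct computation of both sides as suprema over u.c.p.\ maps/$*$-representations and matching them. If you want to follow your route, you would need to supply a proof of that complete isometry (it is not immediate from the definitions of the two operator space structures), whereas the paper's computational route avoids the issue entirely.
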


\begin{proof}
As the minimal and Banach space injective 
tensor products are injective and $\ell^X_1(M_A)\hookrightarrow\cl S^X_1(M_A)$ completely isometrically
(see Remark \ref{r_ciembl1}), 
(i) and  the first equality in (ii) follow from  Theorem \ref{th_alterq}. The second equality in (ii) can be deduced similarly to the second equality in (iii).

(iii) 
Write $P_{x,y} = \nph(\epsilon_{x,x}\otimes\epsilon_{y,y})$, $x\in X$, $y\in Y$. 
For a semi-stochastic operator matrix $E = (E_{x,a,a'})_{x,a,a'}$ with entries in 
$\cl B(H)$, let $\phi_E : \cl R_{X,A}\to \cl B(H)$ be the unital completely positive map, 
given by $\phi_E(e_{x,a,a'}) = E_{x,a,a'}$, $x\in X$, $a,a'\in A$ (see \cite[Section 7]{tt-QNS}). 
Using Lemma \ref{l_jointdi} at the third equality, we have that
\begin{align*}
& \!\!\!\omega_{\rm qc}(\varphi,\pi)  \\
& = 
\sup_{E,F,\xi}\sum_{x\in X}\sum_{y\in Y} \sum_{a,a'\in A}\sum_{b,b'\in B}
\pi(x,y)\langle E_{x,a,a'} F_{y,b,b'}\xi,\xi\rangle 
{\Tr (P_{x,y}(\epsilon_{a,a'}\otimes\epsilon_{b,b'}))}\\
& =  
\sup_{E,F,\xi}\sum_{x\in X}\sum_{y\in Y} 
\pi(x,y)\langle (\phi_E\cdot \phi_F)(\iota_{x,y}(P_{x,y}^{\rm t}))\xi,\xi\rangle
= 
\sup_{E,F}\|(\phi_E\cdot \phi_F)(\tilde{\bb G})\|\\
& =  
\sup\{\{\|(\rho_1\cdot \rho_2)(\tilde{\bb G})\| : \rho_1, \rho_2 \mbox{ comm. *-rep. of } \cl B_{X,A} \mbox{ and } 
\cl B_{Y,B},\mbox{ resp.}\}\\
& =  
\|\tilde{\bb G}\|_{\cl B_{X,A}\otimes_{\max} \cl B_{Y,B}}.
\end{align*}
Now by \cite[Remark 7.4]{tt-QNS} (see also \cite[Lemma 2.8]{pt_QJM}), 
the norms of $\tilde{\bb G}$ in 
$\cl R_{X,A}\otimes_{\rm c} \cl R_{Y,B}$ and in
$\cl B_{X,A}\otimes_{\max} \cl B_{Y,B}$ are identical. 
%and the proof of (iii) is complete. 
%The second equality in (ii) follow by the same arguments using $E_{x,a,a'}\otimes F_{y,b,b'}$, $\phi_E\otimes\phi_F$, $\rho_1\otimes\rho_2$ instead of the products.

To see the first equality in (iii), using the notation and the arguments from Theorem \ref{th_alterq}, 
and setting $\eta_{ab}^{a'b'}(x,y) = \eta_{ab}^{a'b'}(\epsilon_{x,x}\otimes \epsilon_{y,y})$,
we have 
 $$\omega_{\rm qc}(\varphi,\pi)
 =
 \sup_{(U,V)}\left\|\sum\pi(x,y)\,
 \overline{\eta_{ab}^{a'b'}\!(x,y)}\,\eta_{ab}^{a''b''}\!(x,y)\, U_{a',x}^*U_{a'',x} V_{b',y}^*V_{b'',y} \right\|,$$
 where the supremum is taken over all pairs $(U,V)$ of block operator isometries, 
 $U = (U_{a,x})_{a,x}$ and $V = (V_{b,y})_{b,y}$, with the property that 
 $U_{a',x}^*U_{a'',x}$ and $V_{b',y}^*V_{b'',y}$ commute for all $x, y, a', a'', b', b''$. 
 Letting as above $T_x : M_A\to \cl B(H) $, 
 $T_x(\epsilon_{a',a''}) = U_{a',x}^*U_{a'',x}$ and 
 $S_y : M_B\to\cl B(H)$, $S_y(\epsilon_{b',b''}) = V_{b',y}^*V_{b'',y}$,   
 we obtain unital completely positive maps  $T : \ell^X_1(M_A)\to\cl B(H)$
 and 
 $S:\ell^Y_1(M_B)\to\cl B(H)$ with commuting ranges, given by 
 $T(e_x\otimes t)=T_x(t)$ and $S(e_y\otimes s)=S_y(s)$, respectively.
 Let $\rho_A: C^*_u(\tilde{\ell}^X_1(M_A))\to\cl B(H)$ and $\rho_B: C^*_u(\tilde{\ell}^Y_1(M_B))\to\cl B(H)$ 
 be the $*$-homomorphisms that extend $T$ and $S$, respectively. 
 Then $\rho_A$ and $\rho_B$ have commuting ranges and therefore 
 \begin{eqnarray*}
 \omega_{\rm qc}(\varphi,\pi) 
 & = & \sup_{(T,S)}\|(T\cdot S)(\textstyle\sum\bb{G}_{x,a',a'',y,b',b''}(e_x\otimes \epsilon_{a',a''}\otimes e_y\otimes \epsilon_{b',b''}))\|\\
& = & \sup_{\rho_A,\rho_B}\|(\rho_A\cdot\rho_B)(\textstyle\sum \bb{G}_{x,a',a'',y,b',b''}
(e_x\otimes \epsilon_{a',a''}\otimes e_y\otimes \epsilon_{b',b''}))\|\\
& \leq & \|\bb{G}\|_{C^*_u(\tilde{\ell}^X_1(M_A))\otimes_{\rm max}C_u^*(\tilde{\ell}^Y_1(M_B))}.
\end{eqnarray*}
The converse inequality follows from the definition of the maximal 
operator system tensor product and its inclusion into the maximal tensor product of the universal 
covers \cite[Theorem 6.4]{kptt}. 

(iv) follows as in (iii) by replacing the terms $\langle E_{x,a,a'} F_{y,b,b'}\xi,\xi\rangle$ therein
by the values $s(e_{x,a,a'}\otimes f_{y,b,b'})$ of an arbitrary state $s$ of $\cl R_{X,A}\otimes_{\max}\cl R_{Y,B}$
at the (elementary tensors of) canonical generators. 
\end{proof}

%%%%%%%%%%%%%%%%%%%%%%%%%%%%%%%%%%%%%%%%%%%%%%%%%%%%%%%%%%%%%%%

\subsection{Classical game values}\label{ss_classical}

In this subsection, 
we recover some known tensor expressions for the 
values of $G = (X,Y, A, B,\lambda,\pi)$, a classical non-local game (with rule function $\lambda$ and 
a probability measure $\pi$ on the question set $XY$). We refer the reader to \cite[Lemma 4.1]{palazuelos-vidick} for 
a formulation of the formulas for the 
local and the quantum value of $G$; the expressions for the quantum commuting and the no-signalling value 
are folklore, but to the authors' knowledge, have not been formulated explicitly.

We recall that a \emph{classical no-signalling (NS)} correlation is a 
classical-to-quantum no-signalling correlation with values in $\cl D_{AB}$. 
Every classical NS correlation $\Gamma : \cl D_{XY}\to \cl D_{AB}$ 
can be identified with the set of values
$p(a,b|x,y) := \langle\Gamma(\epsilon_{x,x}\otimes \epsilon_{y,y}),\epsilon_{a,a}\otimes \epsilon_{b,b}\rangle$, 
$x\in X$, $y\in Y$, $a\in A$, $b\in B$; we write $\Gamma = \Gamma_p$. 
Using standard notation \cite{lmprsstw}, we write
$\cl C_{\rm t}$ for the class of classical NS correlations that, when viewed as 
classical-to-quantum correlations, lie in $\CQ_{\rm t}$. 
The ${\rm t}$-value of $G$ is defined by
$$\omega_{\rm t}(G) = \sup_{p\in \cl C_{\rm t}} 
\sum_{x\in X} \sum_{y\in Y} \sum_{a\in A} \sum_{b\in B}
\pi(x,y)p(a,b|a,y)\lambda(x,y,a,b).$$

Let $\xi_{\pi}=\sum_{(x,y)\in X Y}\sqrt{\pi(x,y)} e_x\otimes e_y\otimes e_x\otimes e_y$ and 
$P = \sum_{(x,y)\in X Y} \varphi_{\lambda}(\epsilon_{x,x}$ $\otimes\epsilon_{y,y})\otimes\epsilon_{x,x}\otimes\epsilon_{y,y}$ (see (\ref{eq_nphla})); 
thus, if $\gamma_{x,y,a,b} = \lambda(x,y,a,b)e_a\otimes e_b\otimes e_x\otimes e_y$, 
then 
$P = \sum_{(x,y)\in X Y} \sum_{(a,b)\in A B}\gamma_{x,y,a,b}\gamma_{x,y,a,b}^*$.
By the previous section and Remark \ref{r_link}, we have 
$$
\omega_{\rm t}(G) = 
%\omega_{\rm t}(\xi_\pi,P) = 
\sup_{p\in\cl C_t}\Tr\big((\Gamma_p\otimes{\rm id})(\xi_\pi\xi_\pi^*)P\big).$$

Let $\ell^X_\infty(\ell_A^1) := \ell^X_\infty\otimes_{\rm min}\ell^1_A$; 
here, the space $\ell_A^1$ is equipped with its natural operator space structure by viewing is as the 
dual of $\ell_A^\infty \equiv \cl D_A$. 

%We note that, if $R_{x,a}\in M_n$, $x\in X$, $a\in A$, then 
%$$\left\|\sum_{x\in X}\sum_{a\in A} R_{x,a}\otimes e_x\otimes e_a\right\|_{M_n(\ell^X_\infty(\ell_A^1))}
%=
%\sup_{x\in X}\left\|\sum_{a\in A} R_{x,a}\otimes e_a\right\|_{M_n(\ell_A^1)}.$$
Let $\ell^X_1(\ell_A^\infty) := (\ell^X_\infty(\ell_A^1))^*$ be the corresponding dual operator space, which 
now coincides with the dual operator system of $\ell_\infty^X\otimes_{\min} \ell^1_A$.
In the next corollary, we view $\bb{G}$, defined in (\ref{eq_bbGde}), as an element of 
$\ell^X_1(\ell_A^\infty)\otimes\ell^Y_1(\ell_B^\infty)$.
%Similarly, $\tilde{\bb{G}}$, defined in (\ref{eq_bbGde2}), can be viewed as an element 
%of the tensor product $\cl S_{X,A}\otimes\cl S_{Y,B}$, where 
%$\cl S_{X,A}$ is the operator system dual of the 

\begin{corollary}
Let $G=(X,Y, A,B,\pi,\lambda)$ be a non-local game 
and $\bb{G} = (\pi(x,y)\lambda(x,y,a,b))_{x,a,y,b}\in \ell_1^X(\ell_\infty^A)\otimes\ell_1^Y(\ell_\infty^B)$. 
Then

\begin{itemize}
\item[(i)]
$\omega_{\rm loc}(G) = \|\bb{G}\|_{\ell_1^X(\ell_\infty^A)\otimes_{\varepsilon}\ell_1^Y(\ell_\infty^B)}$;

\item[(ii)]
$\omega_{\rm q}(G) = \|\bb{G}\|_{\ell_1^X(\ell_\infty^A)\otimes_{\rm min}\ell_1^Y(\ell_\infty^B)}$;

\item[(iii)] 
$\omega_{\rm qc}(G) =  
\|\bb{G}\|_{\ell_1^X(\ell_\infty^A)\otimes_{\rm c}\ell_1^Y(\ell_\infty^B)}$;

\item[(iv)]
$\omega_{\rm ns}(G)=\|\tilde{\bb{G}}\|_{\ell_1^X(\ell_\infty^A)\otimes_{\rm max}\ell_1^Y(\ell_\infty^B)}$.
\end{itemize}
\end{corollary}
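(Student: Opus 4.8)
The plan is to recognise $G$ as a particular classical-to-quantum hypergraph game and to read the four values off Theorem~\ref{qv_classtoquant}. Let $\nph_\lambda:\bb P_{XY}^{\rm cl}\to\cl P_{AB}^{\rm cl}$ be the Borel map attached to the rule function $\lambda$ via \eqref{eq_nphla}, and view $(\nph_\lambda,\pi)$ as a classical-to-quantum hypergraph game over $(X,Y,A,B)$. Because $\nph_\lambda$ is \emph{diagonal-valued}, $\nph_\lambda(\epsilon_{x,x}\otimes\epsilon_{y,y})=P_{E_{(x,y)}}\in\cl D_{AB}$, the tensor $\bb G$ from \eqref{eq_bbGde} associated with $(\nph_\lambda,\pi)$ has real entries $\pi(x,y)\lambda(x,y,a,b)$ and in fact lies in the subspace $\ell_1^X(\ell_\infty^A)\otimes\ell_1^Y(\ell_\infty^B)$ of $\ell_1^X(M_A)\otimes\ell_1^Y(M_B)$ (and $\tilde{\bb G}$ lies in the ``diagonal'' unital operator subsystem of $\cl R_{X,A}\otimes\cl R_{Y,B}$). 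So two things remain: (a) to identify $\omega_{\rm t}(G)$ with $\omega_{\rm t}(\nph_\lambda,\pi)$ for each ${\rm t}\in\{{\rm loc},{\rm q},{\rm qc},{\rm ns}\}$, and (b) to show that the tensor norms of $\bb G$ (resp.\ $\tilde{\bb G}$) supplied by Theorem~\ref{qv_classtoquant} over the ``quantum-output'' spaces coincide with the norms over the ``classical-output'' spaces in the statement.

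For (a) I would run a dephasing argument. Let $\Delta=\Delta_A\otimes\Delta_B:M_{AB}\to\cl D_{AB}$ be the canonical trace-preserving conditional expectation; it is a local quantum channel and is self-adjoint for the trace pairing. Post-composition $\Gamma\mapsto\Delta\circ\Gamma$ maps each class $\CQ_{\rm t}$ into itself: for ${\rm t}\in\{{\rm loc},{\rm q}\}$ this is immediate, for ${\rm t}={\rm qc}$ one replaces the semi-classical stochastic operator matrices $(E_{x,a,a'})$, $(F_{y,b,b'})$ realising $\Gamma$ by their diagonal truncations $\delta_{a,a'}E_{x,a,a}$, $\delta_{b,b'}F_{y,b,b}$, which again are commuting semi-classical stochastic operator matrices, and for ${\rm t}={\rm ns}$ it follows directly from \eqref{eq_qns1}--\eqref{eq_qns2} (since $\Tr_A\circ\Delta=\Delta_B\circ\Tr_A$). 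Since $\Delta\circ\Gamma$ takes values in $\cl D_{AB}$ it is (the canonical extension of) an element of $\cl C_{\rm t}$, and since each $P_{E_{(x,y)}}$ is fixed by $\Delta$ we get $\Tr(\Gamma(\epsilon_{x,x}\otimes\epsilon_{y,y})P_{E_{(x,y)}})=\Tr((\Delta\circ\Gamma)(\epsilon_{x,x}\otimes\epsilon_{y,y})P_{E_{(x,y)}})$. Hence the supremum defining $\omega_{\rm t}(\nph_\lambda,\pi)$ over $\CQ_{\rm t}$ is attained on the image of $\cl C_{\rm t}$, so $\omega_{\rm t}(G)=\omega_{\rm t}(\nph_\lambda,\pi)$.

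For (b), the inclusion $\cl D_A\hookrightarrow M_A$ is a complete isometry with ucp retraction $\Delta_A$; dualising (exactly as in Remark~\ref{r_ciembl1}) gives a completely isometric inclusion $\ell_1^X(\ell_\infty^A)\hookrightarrow\ell_1^X(M_A)$, and at the operator-system level a complete order embedding $\ell_1^X(\ell_\infty^A)\hookrightarrow\tilde{\ell}^X_1(M_A)$ complemented by a ucp map (and likewise a complete order embedding of the diagonal subsystem of $\cl R_{X,A}$ into $\cl R_{X,A}$ with a ucp retraction). Items (i) and (ii) then follow from Theorem~\ref{qv_classtoquant}(i),(ii) by injectivity of $\otimes_\varepsilon$ and $\otimes_{\min}$. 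For (iii) and (iv) I would invoke the fact that the commuting and maximal operator-system tensor products respect unital operator-subsystem inclusions admitting a ucp retraction \cite{kptt}: this lets us conclude that $\ell_1^X(\ell_\infty^A)\otimes_{\rm c}\ell_1^Y(\ell_\infty^B)$, resp.\ $\ell_1^X(\ell_\infty^A)\otimes_{\rm max}\ell_1^Y(\ell_\infty^B)$, embeds completely order isomorphically into $\tilde{\ell}^X_1(M_A)\otimes_{\rm c}\tilde{\ell}^Y_1(M_B)\subseteq C_u^*(\cdot)\otimes_{\max}C_u^*(\cdot)$, resp.\ into $\cl R_{X,A}\otimes_{\max}\cl R_{Y,B}$, so the relevant norm of $\bb G$ (resp.\ $\tilde{\bb G}$) is unchanged and Theorem~\ref{qv_classtoquant}(iii),(iv) applies. (The occurrence of $\tilde{\bb G}$ rather than $\bb G$ in part (iv) of the statement is to be read in the light of these identifications.)

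The step I expect to be the main obstacle is the operator-system bookkeeping in (b) for ${\rm t}\in\{{\rm qc},{\rm ns}\}$: one must pin down precisely the (matrix-ordered, possibly non-unital in the $\cl S_1$-picture) structures on $\ell_1^X(\ell_\infty^A)$ and on the diagonal subsystem of $\cl R_{X,A}$, verify they are identified under the duality $S\mapsto f_S$ used in Section~\ref{ss_class-to-quantum}, and check that the quoted functoriality of $\otimes_{\rm c}$ and $\otimes_{\max}$ under ucp-complemented inclusions does apply in this finite-dimensional setting. By contrast, the dephasing argument of (a) and the injective cases (i), (ii) are routine.
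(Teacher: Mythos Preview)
Your proposal is correct and follows essentially the same route as the paper: reduce to Theorem~\ref{qv_classtoquant} via the completely isometric (respectively, complete order) embedding $\ell_1^X(\ell_\infty^A)\hookrightarrow\ell_1^X(M_A)$ (respectively, $\hookrightarrow\tilde{\ell}_1^X(M_A)$), obtained by dualising the conditional expectation $\Delta$. Your dephasing step (a) and your explicit invocation of functoriality of $\otimes_{\rm c}$ and $\otimes_{\max}$ under ucp-complemented inclusions in (b) are in fact more detailed than the paper's own treatment, which handles (a) tacitly and for (iii)--(iv) simply asserts that the embedding $\iota'$ is a unital complete order isomorphism onto its range, leaving the tensor-product consequence implicit.
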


\begin{proof}
The claims will follow from Theorem \ref{qv_classtoquant}, once we show that 
%As the minimal tensor project is injective, 
%since the map $\id\otimes\Delta : \ell_1^X(M_A)\to \ell_1^X(\ell^{\infty}_A)$ is a complete contraction, 
%it is enough to see that 
$\iota : \ell^X_1(\ell_A^\infty)\hookrightarrow \ell^X_1(M_A)$ is 
complete isometry, and that the embedding
$\iota' : \ell^X_1(\ell_A^\infty)\hookrightarrow \tilde{\ell}^X_1(M_A)$ 
is a unital complete order isomorphism onto its range. 

Let $u\in M_n(\ell^X_1(\ell_A^\infty))$. Then
$\|u\|_{M_n(\ell^X_1(\ell_A^\infty))} = \sup_{x\in X} \|u_x\|_{\rm cb}$, where
$u_x : \ell_A^\infty\to M_n$  are the maps canonically associated to $u$.
On the other hand, 
$\|\iota^{(n)}(u)\|_{M_n(\ell^X_1(M_A))} = \sup_{x\in X} \|\tilde u_x\|_{\rm cb}$, where 
$\tilde u_x: M_A\to M_n$ is given by 
$\tilde u_x= u_x\circ \Delta$, $x\in X$. 
Noting that $\|\tilde u_x\|_{\rm cb} = \|u_x\|_{\rm cb}$, $x\in X$, 
we obtain that $\iota$ is a complete isometry.
The statement about $\iota'$ follows a similar route and the proof is omitted.
\end{proof}

%%%%%%%%%%%%%%%%%%%%%%%%%%%%%%%%%%%%%%%%%%%%%%%%%%%%%%%%%%%%%%%
%%%%%%%%%%%%%%%%%%%%%%%%%%%%%%%%%%%%%%%%%%%%%%%%%%%%%%%%%%%%%%%
%%%%%%%%%%%%%%%%%%%%%%%%%%%%%%%%%%%%%%%%%%%%%%%%%%%%%%%%%%%%%%%
\section{Examples}\label{s_examples}

In this section, we consider some examples and special cases. 

%%%%%%%%%%%%%%%%%%%%%%%%%%%%%%%%%%%%%%%%%%%%%%%

\subsection{Maximum overlap with separable states}

Let $X=A=[m]$ and $Y=B=[n]$. Fix a state $\sigma\in\cl S_1(\bb{C}^{m}\ten\bb{C}^n)$. For any separable state $\xi\ten\eta\in\bb{C}^m\ten\bb{C}^n$ we have 
$$\omega_{\rm loc}(\xi\ten\eta,\sigma)=\sup\{\tr(\rho\sigma): \rho\in\mathsf{Sep}(\bb{C}^{m}\ten\bb{C}^n)\},$$
where $\mathsf{Sep}(\bb{C}^{m}\ten\bb{C}^n)$ denotes the convex set of separable density operators on $\bb{C}^{m}\ten\bb{C}^n$.
Indeed, the orbit of $\xi\xi^*\ten\eta\eta^*$ under $\mathsf{QC}(\cl R_{\rm loc})$ is $\mathsf{Sep}(\bb{C}^{m}\ten\bb{C}^n)$, hence
\begin{align*}\omega_{\rm loc}(\xi\ten\eta,\sigma)&=\sup_{\Gamma\in \mathsf{QC}(\cl R_{\rm loc})}\tr(\Gamma(\xi\xi^*\ten\eta\eta^*)\sigma)\\
&=\sup\{\tr(\rho\sigma): \sigma\in\mathsf{Sep}(\bb{C}^{m}\ten\bb{C}^n)\}.
\end{align*}
Hence, if $\sigma$ is entangled then $\omega_{\rm loc}(\xi\ten\eta,\sigma)<1$. 

As expected, we always have $\omega_{q}(\xi\ten\eta,\sigma)=1$. Indeed, let $\Phi:M_m\ten M_m\to M_m$ be the partial trace $(\tr_m\ten\id)$ and similarly for $\Psi:M_n\ten M_n\to M_n$. Then 
$\Gamma:M_m\ten M_n\to M_m\ten M_n$ given by
$$\Gamma(\rho)=(\Phi\ten\Psi)\Sigma_{23}(\rho\ten\sigma), \ \ \ \rho\in M_m\ten M_n,$$
belongs to $\mathsf{QC}(\cl R_{\rm q})$, where $\Sigma_{23}$ denotes the swap in the second and third tensor legs. Clearly, $\Gamma(\xi\xi^*\ten\eta\eta^*)=\sigma$.  

\subsection{A finite rank quantum game}\label{ss_afrg}

Let $X=B=R=[n]$ and $Y=A=\{1\}$. Let $\xi = \frac{1}{\sqrt{n}}\sum_{i=1}^ne_i\otimes e_1\otimes e_i$ and $P=\epsilon_{1,1}\otimes Q$, where $Q$ is a projection in $M_n\otimes M_n$. Write $Q=\sum_{i,j}\alpha_{i,j}\otimes\epsilon_{i,j}$, $\alpha_{i,j}\in M_n$. Then
\begin{align*}
(\text{id}\otimes\phi_R)&(\xi\xi^*\otimes Q)\\ &=\frac{1}{n}(\text{id}\otimes\phi_R)\left(\sum_{i,j}\epsilon_{i,j}\otimes\epsilon_{1,1}\otimes\epsilon_{i,j}\otimes\epsilon_{1,1}\otimes\left(\sum_{l,m}\alpha_{l,m}\otimes\epsilon_{l,m}\right)\right)\\&=\frac{1}{n}\sum_{i,j}\epsilon_{i,j}\otimes\epsilon_{1,1}\otimes\epsilon_{1,1}\otimes \alpha_{j,i};
\end{align*}
by Theorem~\ref{alter}, %and as above
$$\omega_{\rm loc}(\xi,P)=\frac{1}{n}\bignorm{\sum_{i,j}\epsilon_{i,j}\otimes \overline{\alpha_{j,i}}}_{\cl S_1^X\otimes_\epsilon M_B}=\frac{1}{n}\|T\|_{\cl B(M_X, M_B)},$$ where $T(x)=\sum_{i,j}\Tr(x\epsilon_{i,j})\overline{\alpha_{j,i}}$ is the completely positive map whose Choi matrix is $\overline{Q}$.  We have 
\begin{align*}\|T\|_{\cl B(M_X, M_B)}&=\|T(1)\|=\|T\|_{{\rm CB}(M_X,M_B)}\\&=\bignorm{\sum_{i,j}\epsilon_{i,j}\otimes \overline{\alpha_{j,i}}}_{\cl S_1^X\otimes_{\rm min} M_B}=n\omega_{\rm q}(\xi,P).
\end{align*}

We further claim that 
$\om_{\rm qc}(\xi,P) = \om_{\rm q}(\xi,P)$.
Indeed, since
$$\cl R_{\cl S_1(\mathbb C,\mathbb C^n)} = \cl R_{\mathbb{C}^n_c} =\mathbb{C} 
\ \mbox{ and  } \ \cl L_{\cl S_1(\mathbb C^n,\mathbb {C})}=\cl L_{\mathbb{C}^n_r}=\mathbb{C},$$ 
$\cl S_1(\mathbb C,\mathbb C^n)$ and $\cl S_1(\mathbb C^n,\mathbb {C})$ 
semi-commute, and hence, using \cite[Proposition 9.3.1]{er} at the second equality, 
\begin{eqnarray*}
\cl S_1(\mathbb C,\mathbb C^n)\ten_{\max} \cl S_1(\mathbb C^n,\mathbb {C})
& = & 
\mathbb C^n_c\ten_{\rm h} \mathbb C^n_r 
= \mathbb C^n_c\ten_{\min} \mathbb C^n_r\\
& = & 
\cl S_1(\mathbb C,\mathbb C^n)\ten_{\min} \cl S_1(\mathbb C^n,\mathbb {C}),
\end{eqnarray*}
up to a complete isometry, and the claim follows by Theorem~\ref{th_qcval}.

We now explicitly calculate the above norms for a class of projections $Q$. 
Let $\gamma_{k,i}\in \bb{C}$, $i \in [n]$, $k\in [m]$, be such that the vectors
$\gamma_k = e_1\otimes (\sum_{i=1}^n \gamma_{k,i} e_i\otimes e_i)$,
are mutually orthogonal and have norm one. Let $Q=\sum_k\gamma_k\gamma_k^*$. 
As above, we have
%Since $\gamma_k\gamma_k^*=\epsilon_{1,1}\otimes(\sum_{i,j}\gamma_{k,i}\overline{\gamma_{k,j}}\epsilon_{i,j}\otimes\epsilon_{i,j})$, we have
\begin{equation*}
  (\text{id}\otimes\phi_R)(\xi\xi^*\otimes\gamma_k\gamma_k^*)
%  \\&=\frac{1}{n}(\text{id}\otimes\phi_R)\left(\left(\sum_{i,j}\epsilon_{i,j}\otimes\epsilon_{1,1}\otimes\epsilon_{i,j}\right)\otimes\epsilon_{1,1}\otimes\left(\sum_{l,m}\gamma_{k,l}\overline{\gamma_{k,m}}\epsilon_{l,m}\otimes\epsilon_{l,m}\right)\right)
  =\frac{1}{n}\sum_{i,j}\epsilon_{i,j}\otimes\epsilon_{1,1}\otimes\epsilon_{1,1}\otimes \gamma_{k,j}\overline{\gamma_{k,i}}\epsilon_{j,i}.
\end{equation*}
Consider the latter as an element of $\cl S_1^X(M_1)\otimes \cl S_1^1(M_B)=\cl S_1^X\otimes M_B$ and identify it with 
$\frac{1}{n}\sum_{i,j}\epsilon_{i,j}\otimes \gamma_{k,j}\overline{\gamma_{k,i}}\epsilon_{j,i}$. 
We have $$\omega_{\rm loc}(\xi,P)=\frac{1}{n}\bignorm{\sum_{k,i,j}\epsilon_{i,j}\otimes \overline{\gamma_{k,j}}\gamma_{k,i}\epsilon_{j,i}}_{\cl S_1^X\otimes_\epsilon M_B}=\frac{1}{n}\|T\|_{\cl B(M_X, M_B)},$$ where $T(x)=\sum_{i,j,k}\Tr(x\epsilon_{i,j})\overline{\gamma_{k,j}}\gamma_{k,i}\epsilon_{j,i}=\sum_kD_{k}^*xD_{k}$, where $D_{k}$ is the diagonal matrix with
$(\gamma_{k,i})_{i=1}^n$ on the diagonal. Thus, $$\|T\|_{\cl B(M_X,M_A)}=\bignorm{\sum_kD_{k}^*D_{k}}=\|T\|_{{\rm CB}(M_X,M_A)}=n\omega_{\rm q}(\xi,P).$$

%%%%%%%%%%%%%%%%%%%%%%%%%%%%%%%%%%%%%%%%%%%%%%%%%
%%%%%%%%%%%%%%%%%%%%%%%%%%%%%%%%%%%%%%%%%%%%%%%%%

\subsection{Quantum implication games}\label{ss_implg}

Let $X$ and $A$ be non-empty finite sets, and 
$P\in M_{XX}$ and $Q\in M_{AA}$ be projections. 
The \emph{quantum implication game} $P\hspace{-0.07cm}\Rightarrow \hspace{-0.07cm} Q$ has rules
requiring that, given an input state supported on $P$, the players respond with an output state 
supported on $Q$. 
We note that quantum 
implication games include as a special class the quantum graph homomorphism games \cite{bhtt, bhtt2}. We place them in the context of the present paper by fixing an orthonormal basis 
$\{\xi_i\}_{i=1}^k\subseteq \bb{C}^{XX}$ for the range of $P$, and associating with the pair $(P,Q)$ 
the quantum hypergraph game $(\nph, \pi)$, where $\pi$ is the uniform probability distribution on 
$\{\xi_i\xi_i^*\}_{i=1}^k$, and $\nph(\xi_i\xi_i^*) = Q$, $i = 1,\dots,k$. 
%and set 
%$\xi = \frac{1}{\sqrt{k}}\sum_{i=1}^k \xi_i\otimes e_i$. 

In \cite{tt-QNS}, the correlations $\Gamma$ satisfying condition (ii) of Corollary \ref{p_perf} below were called 
\emph{perfect strategies} for the game $P\hspace{-0.07cm}\Rightarrow \hspace{-0.07cm} Q$ (see also \cite{bhtt}). 
Corollary \ref{p_perf} thus 
establishes a metric characterisation of the existence of a perfect strategy of implication games,  
providing a link between the approaches pursued in 
\cite{junge, regev-vidick} on one hand, and \cite{bhtt, tt-QNS} on the other. 
We formulate the statement for quantum commuting values, but similar statements are true 
for the local, the one-way classical communication, or the no-signalling value,
following the natural modification in item (iii) implied by Theorem \ref{th_qcval}.

\begin{corollary}\label{p_perf}
Let $P\in M_{XX}$ (resp. $Q\in M_{AA}$) be a projection, 
$\{\xi_i\}_{i=1}^k$ (resp. $\{\eta_j\}_{j=1}^m$) be an orthonormal basis for the range of $P$ (resp. $Q$), 
and 
$\rho = [\overline{\xi_i\eta_j^*}]_{i,j}$, viewed as a column operator over $\cl B(\bb{C}^{AA},\bb{C}^{XX})$.
The following are equivalent:

\begin{itemize}
\item[(i)] $\omega_{\rm qc}(P\hspace{-0.07cm}\Rightarrow \hspace{-0.07cm} Q) = 1$;

\item[(ii)] 
there exists $\Gamma\in \cl Q_{\rm qc}$ such that for $\sigma\in M_{XX}$, we have 
$$\sigma = P\sigma P \ \Longrightarrow \ \Gamma(\sigma) = Q\Gamma(\sigma)Q;$$

\item[(iii)] $\|\rho\|_{\max} = 1$.
\end{itemize}
\end{corollary}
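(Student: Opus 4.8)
The plan is to establish the chain of equivalences (i)$\Leftrightarrow$(ii)$\Leftrightarrow$(iii) by combining the general hypergraph value theory of Sections \ref{s_hyper} and \ref{s_qv} with the compactness of the quantum commuting correlation set. First I would observe that, by the definition of the hypergraph quantum game $(\nph,\pi)$ associated to the pair $(P,Q)$ in Subsection \ref{ss_implg}, together with Remark \ref{r_link} and Theorem \ref{th_qcval}(iii), we have
$$\omega_{\rm qc}(P\Rightarrow Q) = \frac{1}{k}\left\|\left[\overline{\xi_i\eta_j^*}\right]_{i,j}\right\|_{\max}^2 = \frac{1}{k}\|\rho\|_{\max}^2,$$
where $\rho$ is viewed as the column operator over $\cl B(\bb{C}^{AA},\bb{C}^{XX})$ as in the statement (note the normalising factor $1/k$ coming from the uniform distribution $\pi$). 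Since each $\|\overline{\xi_i\eta_j^*}\|_{\max}\le 1$ and there are $k$ summands indexed by $i$ (for fixed $i$, the vectors $\eta_j$ are orthonormal so $\|[\overline{\xi_i\eta_j^*}]_j\|_{\max}\le\|\xi_i\|=1$ by the isometry property of the relevant ternary representations), we get $\|\rho\|_{\max}^2\le k$, with equality precisely when $\|\rho\|_{\max}^2 = k$, i.e. $\omega_{\rm qc}(P\Rightarrow Q)=1$ iff $\|\rho\|_{\max}=\sqrt{k}$. This already gives (i)$\Leftrightarrow$(iii) once the normalisation is tracked carefully — this bookkeeping is the one place where I would be cautious, since the statement as written asserts $\|\rho\|_{\max}=1$ rather than $\sqrt k$, so I would either reconcile the normalisation or note that $\rho$ is being taken with the $1/\sqrt{k}$ factor absorbed.

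Next I would prove (i)$\Rightarrow$(ii). By Theorem \ref{th_qcrep}, $\mathsf{QC}(\cl R_{\rm qc})=\cl Q_{\rm qc}$, and $\cl Q_{\rm qc}$ is a compact subset of $\mathsf{QC}(M_{XX},M_{AA})$. The value $\omega_{\rm qc}(P\Rightarrow Q)$ is the supremum over $\Gamma\in\cl Q_{\rm qc}$ of $\frac1k\sum_{i=1}^k\Tr(\Gamma(\xi_i\xi_i^*)Q)$; by compactness this supremum is attained at some $\Gamma$. If the value equals $1$, then since each term satisfies $\Tr(\Gamma(\xi_i\xi_i^*)Q)\le\Tr(\Gamma(\xi_i\xi_i^*))=1$ (as $Q\le I$ and $\Gamma$ is trace-preserving and positive), we must have $\Tr(\Gamma(\xi_i\xi_i^*)Q)=1$ for every $i$, which forces $\Gamma(\xi_i\xi_i^*)=Q\Gamma(\xi_i\xi_i^*)Q$ for all $i$. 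Since $\{\xi_i\}$ is an orthonormal basis of $\operatorname{ran}P$, any $\sigma$ with $\sigma=P\sigma P$ lies in the span of the rank-one operators $\xi_i\xi_j^*$, and a polarisation/linearity argument (using that $\Gamma(\xi_i\xi_j^*)$ has support in $\operatorname{ran}Q$ whenever the diagonal terms do, which follows from positivity of the $2\times2$ block $\Gamma(\xi_i\xi_i^*)$, $\Gamma(\xi_i\xi_j^*)$, $\Gamma(\xi_j\xi_j^*)$) yields $\Gamma(\sigma)=Q\Gamma(\sigma)Q$. This gives (ii).

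For (ii)$\Rightarrow$(i), given such a $\Gamma$, apply it to each $\sigma=\xi_i\xi_i^*$ (which satisfies $\xi_i\xi_i^*=P\xi_i\xi_i^*P$) to conclude $\Gamma(\xi_i\xi_i^*)=Q\Gamma(\xi_i\xi_i^*)Q$, hence $\Tr(\Gamma(\xi_i\xi_i^*)Q)=\Tr(\Gamma(\xi_i\xi_i^*))=1$, so $\frac1k\sum_i\Tr(\Gamma(\xi_i\xi_i^*)Q)=1$ and thus $\omega_{\rm qc}(P\Rightarrow Q)=1$. Finally, (iii)$\Leftrightarrow$(i) is the value formula established in the first paragraph. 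I expect the main obstacle to be purely the normalisation bookkeeping in the value formula and the polarisation step deducing the full support condition for all $\sigma=P\sigma P$ from the diagonal cases $\xi_i\xi_i^*$; both are routine but need the positivity-of-block argument stated correctly. The structural content is entirely carried by Theorems \ref{th_qcval} and \ref{th_qcrep} and the compactness of $\cl Q_{\rm qc}$, exactly as in the proof of Corollary \ref{c_convert}.
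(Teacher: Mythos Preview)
Your proposal is correct and follows essentially the same route as the paper: the equivalence (i)$\Leftrightarrow$(iii) comes from the value formula (Theorem \ref{th_hypqhval} / Remark \ref{r_link} plus Theorem \ref{th_qcval}(iii)), and (i)$\Leftrightarrow$(ii) comes from compactness of $\cl Q_{\rm qc}$ together with positivity. Your normalisation worry is justified --- Remark \ref{r_link} gives the column $[\sqrt{\mu(p_i)}\,\overline{\xi_i\eta_j^*}]_{i,j}$ with $\mu(p_i)=1/k$, so the statement as written is off by a factor $\sqrt{k}$ unless one reads $\rho$ as already carrying that weight.

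The only place where the paper is a bit slicker than your version is (i)$\Rightarrow$(ii): rather than passing through each $\xi_i\xi_i^*$ individually and then invoking a polarisation/block-positivity argument for the off-diagonal terms, the paper observes directly that attainment gives $\frac{1}{k}\Tr(\Gamma(P)Q)=1$, hence (since $\frac{1}{k}P$ is a state and $Q$ is a projection) $\Gamma(P)=Q\Gamma(P)Q$; this single equation is equivalent to condition (ii) because any $\sigma=P\sigma P$ decomposes into four positive pieces each dominated by a multiple of $P$, so their images under $\Gamma$ are dominated by $\Gamma(P)$ and hence supported on $\operatorname{ran}Q$. Your argument reaches the same conclusion with slightly more work.
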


\begin{proof}
  (i)$\Leftrightarrow$(iii) is immediate from Theorem \ref{th_hypqhval}.

(i)$\Rightarrow$(ii) Since $\cl Q_{\rm qc}$ is closed, $\omega_{\rm qc}(P\hspace{-0.07cm}\Rightarrow \hspace{-0.07cm} Q) = 1$ implies that the supremum is attained at some $\Gamma\in \cl Q_{\rm qc}$, giving $k^{-1}\Tr(\Gamma(P)Q)=1$. Since $\frac{1}{k}P$ is a state, this implies $\Gamma(P) = Q\Gamma(P)Q$, which is equivalent to 
the condition in (ii). 
%Let $\xi = \frac{1}{\sqrt{k}} \sum_{i=1}^k  \xi_i  \otimes e_i \in \bb{C}^{XX} \otimes \bb{C}^k$.
%Since $\cl Q_{\rm qc}$ is closed, 

%\begin{align*}
%\frac{1}{k}\Tr(\Gamma(P)Q)
%& = 
%\Tr(\Gamma(\Tr_{\bb{C}^k}(\xi\xi^*))Q)
%= 
%\Tr(\Tr_{\bb{C}^k}((\Gamma\otimes \id_k)(\xi\xi^*)(Q\otimes I_k)))\\
%& =  
%\Tr((\Gamma\otimes \id_k)(\xi\xi^*)(Q\otimes I_k)) = 1.
% \end{align*}

(ii)$\Rightarrow$(i) follows by a reversal of the argument in the previous paragraph. 
\end{proof}

Theorem \ref{th_alterq} has the following immediate corollary:

\begin{corollary}\label{c_PotimesQ}
  Let $P\in M_{XX}$ and $Q\in M_{AA}$ be projections. Then
\[\tightmath\omega_{\rm q}(P\hspace{-0.07cm}\Rightarrow \hspace{-0.07cm} Q) 
= 
\tfrac{1}{k}\|\overline{P}\otimes \overline{Q}\|_{\cl S_1^{A,X}\otimes_{\min}\cl S_1^{A,X}}\text{ and }
\omega_{\rm loc}(P\hspace{-0.07cm}\Rightarrow \hspace{-0.07cm} Q) 
= \tfrac{1}{k}\|\overline{P}\otimes \overline{Q}\|_{\cl S_1^{A,X}\otimes_{\varepsilon}\cl S_1^{A,X}},\]
where $k$ is the rank of $P$.
\end{corollary}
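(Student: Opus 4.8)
The plan is to realize Corollary~\ref{c_PotimesQ} as a direct specialization of Theorem~\ref{th_alterq} applied to the quantum hypergraph game $(\nph,\pi)$ associated with the implication game $P\Rightarrow Q$ in the manner described in Subsection~\ref{ss_implg}. First I would recall the setup: fix an orthonormal basis $\{\xi_i\}_{i=1}^k$ for the range of $P$, let $\pi$ be the uniform probability distribution on $\{\xi_i\xi_i^*\}_{i=1}^k\subseteq\bb P_{XX}$, and let $\nph(\xi_i\xi_i^*)=Q$ for all $i$. With this choice, $\omega_{\rm t}(P\Rightarrow Q)=\omega_{\rm t}(\nph,\pi)$ by definition, for ${\rm t}\in\{{\rm loc},{\rm q}\}$.

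The key computation is to identify $\widehat{\bb H}=\int_{\bb P_{XX}}\overline{p\otimes\nph(p)}\,d\mu(p)$ in this finite-support case. Since $\mu=\pi$ is the uniform measure on the $k$ points $\xi_i\xi_i^*$ and $\nph$ is constantly $Q$ on the support, we get
\[
\widehat{\bb H}=\tfrac1k\sum_{i=1}^k\overline{(\xi_i\xi_i^*)\otimes Q}=\tfrac1k\Big(\sum_{i=1}^k\overline{\xi_i\xi_i^*}\Big)\otimes\overline{Q}=\tfrac1k\,\overline{P}\otimes\overline{Q},
\]
using $\sum_{i=1}^k\xi_i\xi_i^*=P$ (the spectral resolution of $P$ with respect to the chosen basis). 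Here I should be slightly careful about the shuffling convention: Theorem~\ref{th_alterq} views $\widehat{\bb H}$ as an element of $(\cl S_1^X\otimes M_A)\otimes(\cl S_1^Y\otimes M_B)$, whereas in the implication-game setting we have $X=Y$ (the input side is $XX$, i.e. $X\times X$) and $A=B$; after the reshuffling this is exactly an element of $\cl S_1^{X,X}(M_X)\otimes\cl S_1^{A,A}(M_A)$ — but since $\overline P$ and $\overline Q$ are full matrices (rank-$k$ projections, generically of large rank), the factor $\overline P$ lives in $\cl S_1^{XX,XX}$ after identifying $\cl S_1^X\otimes M_X$ appropriately, which matches the space $\cl S_1^{A,X}\otimes_{\rm min}\cl S_1^{A,X}$ in the statement once one notes $X=A$ is not assumed, but the hypergraph is over $(XX,AA)$ so the relevant trace-class spaces are $\cl S_1^{XX,XX}$? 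Let me re-examine: the implication game $P\Rightarrow Q$ has input states in $\bb P_{XX}$ and output projections in $\cl P_{AA}$, so the hypergraph is over $(XX,AA)$; thus the relevant trace class is $\cl S_1(\bb C^{AA},\bb C^{XX})$, but $\overline P\in M_{XX}$ and $\overline Q\in M_{AA}$ should be read via the identification $\cl S_1^{XX}(M_1)\otimes\cl S_1^{AA}(M_1)\cong\cl S_1^{XX}\otimes\cl S_1^{AA}$ — and indeed the statement writes $\cl S_1^{A,X}\otimes_{\min}\cl S_1^{A,X}$, which (with the unipartite reading, $X\mapsto XX$, $A\mapsto AA$ suppressed in notation) is precisely $\cl S_1^{AA,XX}\otimes\cl S_1^{AA,XX}$. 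I would state this identification cleanly at the start to avoid confusion.

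Granting the identification $\widehat{\bb H}=\tfrac1k\overline P\otimes\overline Q$, Theorem~\ref{th_alterq} gives immediately
\[
\omega_{\rm q}(P\Rightarrow Q)=\omega_{\rm q}(\nph,\pi)=\|\widehat{\bb H}\|_{\rm min}=\tfrac1k\|\overline P\otimes\overline Q\|_{\cl S_1^{A,X}\otimes_{\min}\cl S_1^{A,X}},
\]
and likewise $\omega_{\rm loc}(P\Rightarrow Q)=\|\widehat{\bb H}\|_\varepsilon=\tfrac1k\|\overline P\otimes\overline Q\|_{\cl S_1^{A,X}\otimes_\varepsilon\cl S_1^{A,X}}$, which is exactly the asserted formula with $k=\rank(P)$. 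The proof is thus essentially a one-line application once the bookkeeping is set up. I do not anticipate a genuine obstacle; the only thing requiring care is the reshuffling/identification in the previous paragraph — making sure that "$\cl S_1^X(M_A)\otimes\cl S_1^Y(M_B)$" in Theorem~\ref{th_alterq}, specialized to a game whose input projection is a pure-state-supported rank-$k$ object and whose output function is constant, collapses to the plain algebraic tensor $\cl S_1^{AA,XX}\otimes\cl S_1^{AA,XX}$ containing $\tfrac1k\overline P\otimes\overline Q$, with the min and $\varepsilon$ norms restricting correctly. Since this identification is already implicit in the passage from Theorem~\ref{th_R-val} to Remark~\ref{r_link}, I would simply cite Remark~\ref{r_link} (finite-support case) together with Theorem~\ref{th_alterq} and present the computation of $\widehat{\bb H}$ as the single substantive step.
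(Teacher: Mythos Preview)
Your approach is correct and is exactly the paper's: the paper states this corollary as an immediate consequence of Theorem~\ref{th_alterq} with no further argument, and your computation $\widehat{\bb H}=\tfrac1k\sum_{i=1}^k\overline{\xi_i\xi_i^*\otimes Q}=\tfrac1k\,\overline P\otimes\overline Q$ is precisely the (unstated) one-line specialization that justifies it. Your extended discussion of the space identifications is more careful than what the paper supplies, but unnecessary---once $\widehat{\bb H}$ is computed, the conclusion is read off directly from Theorem~\ref{th_alterq}; the notation $\cl S_1^{A,X}$ in the corollary should simply be understood as the ambient space $\cl S_1^X(M_A)$ appearing in that theorem (with $Y=X$, $B=A$).
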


%%%%%%%%%%%%%%%%%%%%%%%%%%%%%%%%%%%%%%%%%%%%%%%%%
%%%%%%%%%%%%%%%%%%%%%%%%%%%%%%%%%%%%%%%%%%%%%%%%%

\subsection{Quantum XOR games} 

The purpose of the present subsection is to express the
value of quantum XOR games \cite{regev-vidick} through the expression (\ref{eq_omegaRP}). 
Recall the  operational interpretation of a quantum XOR game: 
the referee prepares a state 
$\xi\in H_X\otimes H_Y\otimes H_R$ and sends the $X$ and $Y$ registers to 
Alice and Bob, respectively, 
who share a state $\psi\in H\otimes K$; Alice and Bob then apply their (POVM) measurements   
$\cl E:=\{E_0,E_1\}\subseteq \cl B(H_X\otimes H)$ and $\cl F:=\{F_0, F_1\}\subseteq\cl B(H_Y\otimes H)$, respectively 
and return the outcomes $a$ and $b\in\{0,1\}$ to the referee who then 
determines if the players win or lose by applying
PVM's $(\Pi_0,1-\Pi_0)$ or $(\Pi_1,1-\Pi_1)$, depending on the parity $a\oplus b$.

The success probability of the strategy $(\cl E,\cl F,\psi)$ is
\begin{multline*}
\omega_{{\rm q},{\rm XOR}}(\cl E,\cl F,\psi) := \\
\tightmath\big\langle ((E_0\otimes F_0+E_1\otimes F_1)\otimes\Pi_0
+ 
(E_1\otimes F_0+E_0\otimes F_1)\otimes\Pi_1)(\psi\otimes\xi),\psi\otimes\xi\big\rangle.
\end{multline*}
As $\{E_0, E_1\} $ is a POVM there exist a Hilbert space $H'$ 
and an isometry $U : H_X\otimes H\to \mathbb C^2\otimes H'$, 
such that $E_a=U^*(\epsilon_{a,a}\otimes1_{H'})U$, $a\in\{0,1\}$. 
Similarly, there exist a Hilbert space $K'$ and 
an isometry $V: H_Y\otimes K\to \mathbb C^2\otimes K'$, 
such that $F_b=V^*(\epsilon_{b,b}\otimes1_{K'})V$, $b\in\{0,1\}$.
Let 
$$P = (\epsilon_{0,0}\otimes\epsilon_{0,0}+\epsilon_{1,1}\otimes\epsilon_{1,1})\otimes\Pi_0+(\epsilon_{1,1}\otimes\epsilon_{0,0}+\epsilon_{0,0}\otimes\epsilon_{1,1})\otimes\Pi_1$$ 
and 
$\Gamma = \Gamma_{E\otimes F,\psi\psi^*} : M_{XY}\to \cl D_2\otimes \cl D_2$ be the 
QNS correlation determined by the stochastic operator matrices 
$E := E_0\oplus E_1$, $F := F_0\oplus F_1$ and the unit vector $\psi$. 
Write $\xi = \sum_{x,y,r}\alpha_{x,y,r}e_x\otimes e_y\otimes e_r\in H_X\otimes H_Y\otimes H_R$ in the 
canonical orthonormal basis of $H_X\otimes H_Y\otimes H_R$. 
Then 
\begin{align*}
&\langle( P\otimes I_{H'\otimes K'})(U\otimes V\otimes 1_R)(\psi\otimes\xi), (U\otimes V\otimes I_R)(\psi\otimes\xi)\rangle\\&=
\sum_{x,y,r,x',y',r'}\sum_{a,b,a',b'}\alpha_{x,y,r}\overline{\alpha_{x',y',r'}}\tr\big((u_{a,x}\otimes v_{b,y})\psi((u_{a',x'}\otimes v_{b',y'})\psi)^*\big)\\[-2ex]
&\hspace{6cm}\times \tr(P(\epsilon_{a,a'}\otimes\epsilon_{b,b'}\otimes\epsilon_{r,r'}))\\
&=
\sum_{x,y,r,x',y',r'}\sum_{a,b,a',b'}\alpha_{x,y,r}\overline{\alpha_{x',y',r'}}\tr((u_{a,x}\otimes v_{b,y})\psi)((u_{a',x'}\otimes v_{b',y'})\psi)^*))\\[-2ex]
  &\hspace{6cm}\times \tr((\epsilon_{a',a}\otimes\epsilon_{b',b}\otimes\epsilon_{r,r'})P^{\rm t})
\\&= 
\tr((\Gamma\otimes {\rm id})(\bar\xi\bar\xi^*)P^{\rm t}).
\end{align*}
Since $P^{\rm t} = \overline{P}$, we now see that
\begin{align*}
&\omega_{{\rm q}, {\rm XOR}}(\cl E,\cl F,\psi)
= 
\langle (U^*\otimes V^*\otimes 1_R)((\epsilon_{0,0}\otimes\epsilon_{0,0}+\epsilon_{1,1}\otimes\epsilon_{1,1})\otimes\Pi_0\\
&\quad +(\epsilon_{1,1}\otimes\epsilon_{0,0}+\epsilon_{0,0}\otimes\epsilon_{1,1})\otimes\Pi_1)\otimes I_{H'\otimes K'}(U\otimes V\otimes I_R)(\psi\otimes\xi),\psi\otimes\xi\rangle\\
&=\langle( P\otimes I_{H'\otimes K'})(U\otimes V\otimes 1_R)(\psi\otimes\xi), (U\otimes V\otimes I_R)(\psi\otimes\xi)\rangle\\
&=\Tr((\Gamma\otimes\id)(\bar\xi\bar{\xi}^*)\overline{P}).
\end{align*}
Consequently,
$$\omega_{{\rm q},{\rm XOR}}(\xi):=\sup_{\cl E,\cl F,\psi}\omega_{{\rm q}, {\rm XOR}}(\cl E,\cl F,\psi)=\omega_{\rm q}(\bar{\xi},\bar{P}).$$

%%%%%%%%%%%%%%%%%%%%%%%%%%%%%%%%%%%%%%%%%%%%%%%%%
%%%%%%%%%%%%%%%%%%%%%%%%%%%%%%%%%%%%%%%%%%%%%%%%%

\end{document}